\newtheorem{property}{Property}
 \renewcommand{\O}{\mathcal{O}}
 \renewcommand{\P}{{\sf P}}
  \renewcommand{\H}{\mathcal{H}}
 \newcommand{\F}{\mathcal{F}}
 \newcommand{\A}{\mathcal{A}}
 \newcommand{\D}{\mathcal{D}}
 \newcommand{\Y}{B} %pour blocking set
 \newcommand{\B}{\mathcal{B}}
 \renewcommand{\L}{\mathcal{L}}
 \renewcommand{\S}{\mathcal{S}}
 \newcommand{\E}{\mathcal{E}}
  \newcommand{\R}{\mathcal{R}}
 \newcommand{\I}{\mathcal{I}}
 \newcommand{\fixmeperso}[1]{}
 \newcommand{\mbvfinale}[1]{}
 \definecolor{jacolor}{cmyk}{0.7, 0.1, 0.4, 0.1}
\newcommand{\bs}{{\sf bs}\xspace} %for blocking set
\newcommand{\mbs}{{\sf mbs}\xspace} %for minimal blocking set
\newcommand{\is}{{\sf is}\xspace} %used in (X,Z)-is
\newcommand{\mis}{{\sf mis}\xspace} %
\newcommand{\MMBS}{{\sc MMBS}\xspace} %for maximum minimal blocking set
\newcommand{\MMBSEG}{{\sc MMBS}$^{=}$\xspace} %for maximum minimal blocking set
\newcommand{\MMBSINFEG}{{\sc MMBS}$^{\le}$\xspace} %for maximum minimal blocking set
\newcommand{\MMVC}{{\sc MMVC}\xspace} %for maximum minimal vertex cover
\newcommand{\MMHS}{{\sc MMHS}\xspace} %for maximum minimal hitting set
\newcommand{\SIMPLEEXTMMHS}{{\sc Simple-Ext-MMHS}\xspace} %for maximum minimal hitting set
\newcommand{\EXTMMHS}{{\sc Ext-MMHS}\xspace} %for maximum minimal hitting set
\newcommand{\MIS}{{\sc IS}\xspace} %for problem maximum indept set
\newcommand{\VC}{{\sc VC}\xspace}
\newcommand{\MCIS}{{\sc CIS}\xspace} %for problem maximum colored indept set
\newcommand{\UPDOM}{{\sc Up-Dom}\xspace} %for problem maximum indept set
\newcommand{\VCdistoF}{{\sc VC}\xspace/{\sc dist}-{\sc to}-$\F$\xspace}
\newcommand{\LMMBS}{{\sc Maximum Minimal Blocking Set}\xspace} %for maximum minimal blocking set
\newcommand{\LMMVC}{{\sc Maximum Minimal Vertex Cover}\xspace} %for maximum minimal vertex cover
\newcommand{\LMMHS}{{\sc Maximum Minimal Hitting Set}\xspace} %for maximum minimal hitting set
\newcommand{\LEXTMMHS}{{\sc Extension-MMHS}\xspace} %for maximum minimal hitting set
\newcommand{\LVC}{{\sc Vertex Cover}\xspace}
\newcommand{\LUPDOM}{{\sc Upper Dominating Set}\xspace} %for problem maximum indept set
\newcommand{\tw}{{\sf tw}\xspace}
\newcommand{\mmbs}{{\sf mmbs}\xspace}
\newcommand{\mmhs}{{\sf mmhs}\xspace}
\newcommand{\FPT}{{\sf FPT}\xspace}
\newcommand{\NP}{{\sf NP}\xspace}
\newcommand{\RP}{{\sf RP}\xspace}
\newcommand{\NPh}{{\sf NP}-hard\xspace}
\newcommand{\XP}{{\sf XP}\xspace}
\newcommand{\Woneh}{{\sf W[1]}-hard\xspace}
\newcommand{\Wonehness}{{\sf W[1]}-hardness\xspace}
\newcommand{\Won}{{\sf W[1]}\xspace}
\newcommand{\ETH}{{\sf ETH}\xspace}
\newcommand{\ParaNPH}{para-{\sf NP}-hard\xspace}
\newcommand{\yes}{{\sf yes}\xspace}
\newcommand{\no}{{\sf no}\xspace}
\newcommand{\defproblem}[3]{\par
 \vspace{3mm}
\noindent\fbox{
 \begin{minipage}{0.96\textwidth}
 \begin{tabular*}{\textwidth}{@{\extracolsep{\fill}}lr} #1 &  \vspace{1mm} \\ \end{tabular*}
{\textbf{Input:}} #2
  \vspace{1mm}\\%
 {\textbf{Question:}} #3
 \end{minipage}
 }
 \vspace{3mm}\par
}
\renewenvironment{proof}[1][]{\par \noindent {\bf Proof:#1}\ }{\hfill$\Box$\medskip}
\newenvironment{proofclaim}[1][]{\par \noindent {\em Proof of the claim:#1}\ }{\hfill$\diamond$\bigskip}
\definecolor{dark-red}{rgb}{0.4,0.15,0.15}
\definecolor{dark-blue}{rgb}{0.15,0.15,0.4}
\definecolor{medium-blue}{rgb}{0,0,0.5}
\definecolor{gray}{rgb}{0.5,0.5,0.5}
\definecolor{color-Ig}{rgb}{0.15,0.7,0.15}
\theoremstyle{plain}
\title{Parameterized complexity of computing maximum minimal blocking and hitting sets}
\author{J\'ulio Ara\'ujo}{Departamento de Matem\'atica, Universidade Federal do Cear\'a, Fortaleza, Brazil}{julio@mat.ufc.br}{https://orcid.org/0000-0001-7074-2753}{CNPq-Pq 304478/2018-0, CAPES-PrInt 88887.466468/2019-00 and CAPES-STIC-AmSud 88881.569474/2020-01.}
\author{Marin Bougeret}{LIRMM, Universit\'e de Montpellier, CNRS, Montpellier, France}{marin.bougeret@lirmm.fr}{https://orcid.org/0000-0002-9910-4656}{}
\author{Victor A. Campos}{Departamento de Computa\c c\~ao, Universidade Federal do Cear\'a, Fortaleza, Brazil}{victoitor@ufc.br}{https://orcid.org/0000-0002-2730-4640}{FUNCAP - PNE-011200061.01.00/16.}
\author{Ignasi Sau}{LIRMM, Universit\'e de Montpellier, CNRS, Montpellier, France}{ignasi.sau@lirmm.fr}{https://orcid.org/0000-0002-8981-9287}{DEMOGRAPH (ANR-16-CE40-0028), ESIGMA (ANR-17-CE23-0010) and ELIT (ANR-20-CE48-0008-01).}
\authorrunning{J\'ulio Ara\'ujo, Marin Bougeret, Victor A. Campos, and Ignasi Sau} %TODO mandatory. First: Use abbreviated first/middle names. Second (only in severe cases): Use first author plus 'et al.'
\keywords{maximum minimal blocking set, maximum minimal hitting set, parameterized complexity, treewidth, kernelization, vertex cover, upper domination.\vspace{-.1cm}}
\begin{document}

\maketitle

\begin{abstract}
A \emph{blocking set} in  a graph $G$ is a subset of vertices that intersects every maximum independent set of $G$.
Let $\mmbs(G)$ be the size of a maximum (inclusion-wise) minimal blocking set of $G$.
  This parameter has recently played an important role in the kernelization of \LVC parameterized by the distance to a graph class $\F$.
  Indeed, it turns out that the existence of a polynomial kernel for this problem is closely related to the property that $\mmbs(\F)=\sup_{G \in \F}\mmbs(G)$ is bounded by a constant,
  and thus several recent results focused on determining $\mmbs(\F)$ for different classes~$\F$. We consider the parameterized complexity of computing $\mmbs$ under various parameterizations, such as the size of a maximum independent set of the input graph and the natural parameter. We provide
  a panorama of the complexity of computing both $\mmbs$ and $\mmhs$, which is the size of a maximum minimal hitting set of a hypergraph, a closely related parameter.
  Finally, we consider the problem of computing $\mmbs$ parameterized by treewidth, especially relevant in the context of kernelization.
  %, as almost all the polynomial kernels currently known are for classes $\F$ of bounded treewidth.
  Given the ``counting'' nature of $\mmbs$, it does not seem to be expressible in monadic second-order logic, hence its tractability does not follow from Courcelle's theorem. Our main technical contribution is a fixed-parameter tractable algorithm for this problem.
 %We prove that computing $\mmbs$ is Fixed Parameter Tractable (\FPT) when parameterized by the treewidth of the input graph, which is our main technical result.
  %\mb{(I don't know if you want to put the next sentence, if not, remove the \FPT notation just before)}
  %\mb{We believe that computing $\mmbs$ might not be expressible in ...,
  %and thus could be one of the few examples of such problems (I'm not very happy with "such problems" maybe "of problem not expressible in .. but that are FPT by tw", but there is
 % some repetition here.}

\end{abstract}
%%%%%%%%%%%%%%%%%%%%%%%%%%%%%%%%%%%%%%%%%%%%%%%%%%%%%%%%%%%%%%%%%%%%%%%%%%%%%%
\section{Introduction}
%%%%%%%%%%%%%%%%%%%%%%%%%%%%%%%%%%%%%%%%%%%%%%%%%%%%%%%%%%%%%%%%%%%%%%%%%%%%%%

%% \mb{Summary of notations}
%% \begin{itemize}
%% \item blocking set $\Y$
%% \item hyper edges $H$
%% \item hitting sets (for MMHS) $S$
%% \item independent $I$
%% \end{itemize}

%\mb{remaining things}
%\begin{enumerate}
%\item read my intro (make any change you want, I trust you)
%\item We present a number of results for \MMHS as well, so I wonder whether the title of the article should mention it.
%  \item unify refs
%  \item close the red comments (in particular for MSO : add this as open pb to show that not MSO ? (as u like))
%\end{enumerate}
%
%
%\mbvfinale{try googling maximum independent set transversal, because I never looked at related work that way}

Given a graph $G$, we denote by $\alpha(G)$ the maximum size of an \emph{independent set} of $G$, that is, of a set of pairwise non-adjacent vertices. % and  $\A^u(G)$ \ig{should be $\A(G)$} denotes the set of all maximum independent sets.
For the sake of conciseness, we abbreviate ``independent set''  as \is, and ``maximum independent set'' as \mis.
 A set $\Y \subseteq V(G)$ is a \emph{blocking set}, abbreviated as \bs, of $G$ if  $\alpha(G \setminus \Y) < \alpha(G)$, where $G \setminus \Y = G[V(G) \setminus \Y]$. %\mb{$B$ would be more natural insteand of $Y$,
%but in the DP I used $Y$ everywhere.. so for the moment, bs are usually denoted $Y$}.
Equivalently, $\Y$ is a blocking set of $G$ if for every \mis $I^* \subseteq V(G)$, $I^* \cap \Y \neq \emptyset$.
%A blocking set can also be seen as a \mis transversal.
In this work we are interested in (inclusion-wise) minimal blocking sets, which we abbreviate as \mbs.
We denote by $\mmbs(G)$ the maximum size of an \mbs of $G$, and by \LMMBS (\MMBS for short) the problem where, given a graph $G$ and an integer $\beta$, the objective is to decide whether $\mmbs(G) \ge \beta$.
%These notions are illustrated in Figure~\ref{fig triangles}.
The main objective of this paper is to study the parameterized complexity of \MMBS. As discussed below, this problem is strongly related to the \LMMHS (\MMHS) problem, for which we also present several results.

%% \begin{itemize}
%%   \item \mb{finally notation $\A^a(G)$ (for set of all maximal indepedent sets) and $\A^a_{\beta}(G)$, $\A^u_{\beta}(G)$, were only used one time to comment the clutter thm, so I removed it.
%%   I kept $\A^u(G)$ (renamed $\A(G)$) as it may be usefull}
%%   \item \mb{also removed $f$-bs for arbitrary $f$}
%% \end{itemize}

%\subsection{Motivation}
%\label{sec:motivation}

\bigskip
\noindent\textbf{Role of maximum minimal blocking sets in kernelization.}
%Let us now explain the role of the $\mmbs$ parameter in the field of kernelization.
Given a graph $G$, a set of vertices $S \subseteq V(G)$ is a \emph{vertex cover} if it contains at least one endpoint of every edge.
The \LVC (\VC for short) problem consists in, given a graph $G$ and an integer $k$, decide if there is a vertex cover $S$ of $G$ such that $|S| \le k$. 
For a fixed graph class $\F$, the \VCdistoF parameterized problem is defined as follows. The input is a triple $(G,X,k)$ where $G$ is a graph, $X \subseteq V(G)$, and $G \setminus X$
%(i.e., the graph obtained from $G$ by removing the vertices of $X$) 
belongs to $\F$. Set $X$ is often referred to as a \emph{modulator} to $\F$, and $|X|$ as the \emph{distance} of $G$ to $\F$.
The objective of the problem is to decide whether $G$ admits a vertex cover of size at most $k$, and the parameter is $|X|$.
A \emph{kernel} of vertex size $f$ for this problem is a polynomial-time algorithm that, given an input $(G,X,k)$, outputs an equivalent instance $(G',X',k')$ with $|V(G')| \le f(|X|)$.
Informally, such a kernel compresses the input graph $G$ to a smaller graph $G'$ whose size is bounded by a function $f$ depending only on $|X|$. If $f$ is a polynomial (resp. linear) function, we speak of a \emph{polynomial} (resp. \emph{linear}) kernel. We refer the reader to Section~\ref{sec:prelim} for formal definitions.
The \VCdistoF problem has been defined by Jansen and Bodlaender~\cite{JansenB13} for $\F$ being the class of forests as a way to improve the linear kernel for  {\sc Vertex Cover} parameterized by the standard parameter $k$, and the main result of~\cite{JansenB13} is a polynomial kernel for \VCdistoF (for $\F$ being the forests).
%may beby consider smaller parameter.

This result triggered a long line of follow-up research,
which aimed to find the most general graph families $\F$ such that \VCdistoF admits a polynomial kernelization~\cite{FellowsJKRW18}.
%We point out that \VCdistoF is equivalent to \ISdistoF (where the objective is to decide if $\alpha(G)\ge k$), and thus existing results that we mention are sometimes equivalently formulated for vertex cover or independent set.
Several results where proved for specific families $\F$ such as those of degree at most two, of bounded treedepth, pseudo-forests (see~\cite{FellowsJKRW18,DBLP:conf/icalp/BougeretJS20} for a complete
list of references), and a major open question in this area
is to find a characterization of the families $\F$ for which \VCdistoF\ admits a polynomial kernel~\cite{DBLP:conf/icalp/BougeretJS20}. This is where parameter $\beta$ comes into play.

Kernelization algorithms for \VCdistoF\ usually proceed in two steps. In step~1, they reduce the number of connected components of $G \setminus X$ to a polynomial in $|X|$,
and in  step~2 they reduce the size of each connected component of $G \setminus X$ to a polynomial in $|X|$ as well.
Minimal blocking sets have been introduced in the seminal paper of Jansen and Bodlaender~\cite{JansenB13} for the case of $\F$ being the class of forests as a handy tool to achieve step~1. After that, this notion
has been generalized and reused for example in~\cite{DBLP:journals/algorithmica/BougeretS19,DBLP:conf/stacs/HolsKP20,DBLP:conf/icalp/BougeretJS20}, finally leading to the following black box tool for step~$1$, where $\mmbs(\F)=\sup_{G \in \F}\mmbs(G)$.
\begin{theorem}[Hols et al.~\cite{DBLP:conf/stacs/HolsKP20}]\label{thm:hols2}
Let $\F$ be a hereditary graph class on which \VC can be solved in polynomial time.
There is a polynomial-time algorithm that, given an instance $(G,X,k)$ of \VCdistoF, returns an equivalent instance $(G_0,X,k_0)$
of \VCdistoF such that $G_0 \setminus X \in \F$ and has at most $\O(|X|^{\mmbs(\F)})$ connected components.
\end{theorem}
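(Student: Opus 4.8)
The plan is to devise a polynomial-time \emph{reduction rule} that deletes one connected component of $G\setminus X$ (decreasing $k$ accordingly) whenever $G\setminus X$ has more than $f(|X|):=\sum_{i\le \mmbs(\F)}\binom{|X|}{i}=\O(|X|^{\mmbs(\F)})$ connected components, while keeping the instance equivalent; applying it exhaustively produces $(G_0,X,k_0)$. Two facts will be used repeatedly. First, since $\F$ is hereditary, every connected component $C$ of $G\setminus X$ lies in $\F$, so $\mmbs(C)\le\mmbs(\F)$ and a minimum vertex cover of $C$ (equivalently, $\alpha(C)$) can be computed in polynomial time. Second, since $X$ separates the components $C_1,\dots,C_t$ of $G\setminus X$, every independent set of $G$ decomposes along the modulator, giving $\alpha(G)=\max_{J}\bigl(|J|+\sum_{i=1}^{t}\alpha(C_i-N_G(J))\bigr)$, where $J$ ranges over independent sets of $G[X]$.

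Writing $\mathrm{loss}_{C}(J)=\alpha(C)-\alpha(C-N_G(J))\ge 0$, one checks by the bookkeeping $\tau(H)=|V(H)|-\alpha(H)$ that deleting a component $C_j$ and setting $k_0=k-(|V(C_j)|-\alpha(C_j))$ yields an equivalent instance if and only if $\alpha(G)=\alpha\bigl(G-V(C_j)\bigr)+\alpha(C_j)$, and that this holds if and only if \emph{some} independent set $J$ of $G[X]$ that is optimal for $G-V(C_j)$ satisfies $\mathrm{loss}_{C_j}(J)=0$, i.e.\ does not turn $N_G(J)\cap V(C_j)$ into a blocking set of $C_j$ (one inequality is automatic because $\alpha(C_j-N_G(J))\le\alpha(C_j)$; the converse fails precisely when every such optimal $J$ blocks $C_j$). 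Call such a $C_j$ \emph{redundant}. It therefore suffices to show that if $t>f(|X|)$ then some component is redundant.

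The link with $\mmbs(\F)$ is the following. For $S\subseteq X$, say $C$ is \emph{$S$-blocked} if $N_G(S)\cap V(C)$ is a blocking set of $C$. If $\mathrm{loss}_{C}(J)>0$ then $N_G(J)\cap V(C)$ is a blocking set of $C$, hence contains an inclusion-wise minimal blocking set $B$ with $|B|\le\mmbs(C)\le\mmbs(\F)$; picking for each vertex of $B$ one of its neighbours in $J$ gives a set $S\subseteq J$ with $|S|\le\mmbs(\F)$ and $B\subseteq N_G(S)\cap V(C)$, so $C$ is $S$-blocked. Conversely, as a superset of a blocking set is a blocking set, $C$ being $S$-blocked for some $S\subseteq J$ already forces $\mathrm{loss}_{C}(J)>0$. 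Thus every component affected by some independent set of $G[X]$ is $S$-blocked for at least one of the $f(|X|)=\O(|X|^{\mmbs(\F)})$ subsets $S\subseteq X$ of size at most $\mmbs(\F)$. I would then \emph{mark}, for each such $S$, a constant number of $S$-blocked components, chosen so that every unmarked component is redundant; the number of marked components is then $\O(|X|^{\mmbs(\F)})$, and the reduction rule deletes (with the offset above) any unmarked component.

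The crux is proving that the marking really leaves every unmarked component redundant. Given an unmarked $C_j$ and an independent set $J_0$ of $G[X]$ optimal for $G-V(C_j)$, if $J_0$ blocks $C_j$ then it does so through a small witness $S\subseteq J_0$ with $|S|\le\mmbs(\F)$; one must then \emph{exchange}, modifying $J_0$ on $\O(\mmbs(\F))$ vertices of $X$ — trading the blocking of $C_j$ for the blocking of the marked $S$-representatives that were kept — without decreasing $|J_0|+\sum_{i\ne j}\alpha(C_i-N_G(J_0))$. Making this consistent simultaneously over all components blocked by $J_0$, and over the possibly several witnesses of each, is the delicate point, and is exactly where the hereditariness of $\F$ (so that $\mmbs(C_i)\le\mmbs(\F)$ for all $i$) and the precise choice of how many representatives to keep per bucket are used. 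Once the reduction rule is shown correct, the bound on the number of connected components and the polynomial running time follow immediately, and $G_0\setminus X\in\F$ since $G_0\setminus X$ is an induced subgraph of $G\setminus X\in\F$ and $\F$ is hereditary.
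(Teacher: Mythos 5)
Your framework is the right one and most of the routine steps are correct: the bookkeeping showing that deleting a component $C_j$ with offset $\tau(C_j)=|V(C_j)|-\alpha(C_j)$ is safe whenever $\alpha(G)=\alpha(G\setminus V(C_j))+\alpha(C_j)$, the decomposition of $\alpha$ along the modulator, the extraction from any blocking trace $J$ of a witness $S\subseteq J$ with $|S|\le \mmbs(\F)$ such that $N(S)\cap V(C)$ blocks $C$, and the polynomial-time claims (which indeed use heredity of $\F$ and polynomial solvability of \VC on $\F$). But the theorem is not proved by this, because its entire content lies in the step you explicitly defer: how many $S$-blocked components to mark per subset $S$, and why every unmarked component is then redundant. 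The local exchange you gesture at does not work as described: if an optimal $J_0$ for $G\setminus V(C_j)$ blocks $C_j$ through a witness $S\subseteq J_0$, then modifying $J_0$ on $S$ (or on $\O(\mmbs(\F))$ vertices) neither guarantees that $C_j$ becomes unblocked (other vertices of $J_0$ may still block it) nor that the marked $S$-blocked representatives become unblocked (they may be blocked by $J_0\setminus S$ as well), so there is no identified gain to pay for the vertices you remove and no argument that the modified set is still optimal. Calling this ``the delicate point'' is accurate, but it is exactly the lemma the theorem consists of; as written, the proposal is a correct setup wrapped around an unproven core. (Note the paper itself does not prove this statement either; it imports it from Hols et al.)

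It is worth separating what is easy from what is hard here. Marking $|X|+1$ components per subset $S$ of size at most $\mmbs(\F)$ admits a short correctness proof: if an unmarked $C_j$ is blocked by some optimal trace $J$ via a witness $S\subseteq J$, then all $|X|+1$ marked $S$-blocked components are also blocked by $J$, so the total loss exceeds $|J|\le |X|$ and the independent set would be beaten by taking local optima in the components with nothing from $X$, contradicting optimality for $G\setminus V(C_j)$. But this only yields $\O(|X|^{\mmbs(\F)+1})$ components. Pushing the exponent down to $\mmbs(\F)$ -- which is the point of the statement, since it matches the lower bound of Theorem~\ref{thm:hols} -- is precisely the refinement your constant-per-bucket marking presupposes but does not supply. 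A further, minor, overclaim: deleting $C_j$ with the offset is safe \emph{if} $\alpha(G)=\alpha(G\setminus V(C_j))+\alpha(C_j)$; the ``only if'' at a fixed threshold $k$ is neither needed nor quite true.
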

Informally,  Theorem~\ref{thm:hols2} states that, if $\mmbs(\F)$ is bounded, then ``half'' of the kernelization algorithm can be done automatically. % (and step $2$ is theis thus the only challen
Moreover, it has been shown that $\mmbs(\F)$ being bounded by a constant is necessary in order to obtain a polynomial kernel:
\begin{theorem}[Hols et al.~\cite{DBLP:conf/stacs/HolsKP20}]\label{thm:hols}
Unless \NP $\subseteq$ {\sf coNP/poly}, \VCdistoF does not admit a kernel of size $\O(|X|^{\mmbs(\F)-\varepsilon})$ for any $\varepsilon > 0$.
 \end{theorem}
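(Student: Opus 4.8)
The plan is to construct a \emph{degree-$d$ cross-composition} (in the sense of Bodlaender, Jansen, and Kratsch) into \VCdistoF, where $d:=\mmbs(\F)$. Since the statement is vacuous unless $\mmbs(\F)$ is finite, and $\mmbs$ is integer-valued, the supremum defining $\mmbs(\F)$ is attained: we fix a graph $H\in\F$ having a minimal blocking set $B=\{b_1,\dots,b_d\}$ of size exactly $d$, and set $a:=\alpha(H)$. Once the cross-composition is in place, the standard machinery yields that \VCdistoF admits no generalized kernelization of bitsize $\O(|X|^{d-\varepsilon})$ for any $\varepsilon>0$ unless \NP $\subseteq$ {\sf coNP/poly}; in particular it admits no kernel with $\O(|X|^{d-\varepsilon})$ vertices, which is the claim. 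Concretely, we are given $t$ instances $x_1,\dots,x_t$ of a conveniently chosen \NPh problem $L$ (for instance a suitable variant of \LMIS or of {\sc SAT}), all belonging to one equivalence class of a polynomial equivalence relation (by routine padding we may assume they share all relevant size parameters); padding further with trivial \no-instances we may assume $t=N^d$ for $N=\lceil t^{1/d}\rceil$ and index the instances by the grid $[N]^d$, writing $x_{\vec{i}}$ for $\vec{i}=(i_1,\dots,i_d)$. We must output $(G,X,k)$ with $G\setminus X\in\F$ and $|X|=\O(N\cdot n^{\O(1)})=\O(t^{1/d}\cdot n^{\O(1)})$, where $n=\max_\ell|x_\ell|$, such that $G$ has a vertex cover of size at most $k$ iff some $x_\ell$ is a \yes-instance.

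We may assume $\F$ is closed under disjoint union, as holds for the classes of interest in this area (forests, bounded-treedepth graphs, graphs of maximum degree two, \dots), so that a disjoint union of copies of $H$ still lies in $\F$. Two facts about $B$ drive the construction. First, $\alpha(H\setminus B)=a-1$ \emph{exactly}: it is at most $a-1$ because $B$ is a blocking set, and at least $a-1$ because, by minimality of $B$, for each $j$ the set $B\setminus\{b_j\}$ is not a blocking set, so $\alpha(H\setminus(B\setminus\{b_j\}))=a$, and deleting the single vertex $b_j$ decreases $\alpha$ by at most one. Second, every proper subset $B'\subsetneq B$ satisfies $\alpha(H\setminus B')=a$, since a proper subset of a minimal blocking set is never a blocking set. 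The graph $G$ is then built from: (a) a modulator $X=X_1\cup\dots\cup X_d$, where each \emph{coordinate selector} $X_j$ is a small gadget of size $\O(N\cdot n^{\O(1)})$ that in any economical vertex cover is forced into one of $N$ ``states'' $s_j\in[N]$, all of equal cost, so $|X|=\O(t^{1/d}\cdot n^{\O(1)})$; (b) for each $\vec{i}\in[N]^d$ a copy $H^{\vec{i}}$ of $H$ with blocking set $B^{\vec{i}}=\{b^{\vec{i}}_1,\dots,b^{\vec{i}}_d\}$, so that $G\setminus X$ is a disjoint union of copies of $H$ and lies in $\F$; and (c) edges between $X$ and the copies (the copies being connected only through $X$) such that whenever $X_j$ is in state $s_j$, every copy $H^{\vec{i}}$ with $i_j=s_j$ has $b^{\vec{i}}_j$ forced into the cover, and, simultaneously, the instances $x_{\vec{i}}$ are encoded in this bipartite structure, distributed over the copies.

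The point of (c), combined with the two facts about $B$, is that the unique copy $H^{\vec{i}^{\ast}}$ whose index matches the joint selector state $\vec{i}^{\ast}=(s_1,\dots,s_d)$ has \emph{all} of $B^{\vec{i}^{\ast}}$ forced into the cover, hence costs $|V(H)|-(a-1)$ to cover --- one more than every other copy, which has only a proper subset of its blocking set forced in and costs $|V(H)|-a$. A short bookkeeping argument over the selectors and the copies shows that, disregarding the instance encoding, the minimum vertex cover of $G$ has a fixed value $P$, attained exactly when all selectors are assigned; we then set $k:=P-1$. The instance encoding is designed so that it adds a fixed cost to each copy plus an extra saving of precisely $1$ that is available \emph{only} on a fully matched copy $H^{\vec{i}^{\ast}}$ and \emph{only} when $x_{\vec{i}^{\ast}}$ is a \yes-instance. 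Therefore $G$ has a vertex cover of size at most $k=P-1$ if and only if some $x_{\vec{i}}$ is a \yes-instance, as required; together with the bound on $|X|$ this gives the desired degree-$d$ cross-composition.

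The hard part is item (c): realizing the conditional $\pm 1$ saving using \emph{only} edges between the small modulator $X$ and the constant-size copies of $H$ --- keeping $G\setminus X$ a disjoint union of copies of $H$, hence in $\F$ --- while faithfully encoding membership $x_{\vec{i}}\in L$. This is what dictates the choice of $L$ (it must admit such an edge-only, ``distributed over many copies'' encoding) and is tightly intertwined with the second delicate ingredient, the precise budget analysis proving \emph{soundness}: one must rule out that a vertex cover of size at most $k$ arises from a degenerate configuration (a selector left unassigned, a partially matched copy, under-spending somewhere else, \dots), which is exactly where the exact identity $\alpha(H\setminus B)=\alpha(H)-1$ and the non-blocking-ness of proper subsets of $B$ are used. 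Conceptually, this is the mirror image of Theorem~\ref{thm:hols2}: there, a connected component of $\F$ interacts with $X$ in only $\O(|X|^{\mmbs(\F)})$ essentially different ways because a ``relevant'' blocking set inside it has size at most $\mmbs(\F)$; here we plant a component --- a copy of $H$ --- attached to $X$ through its size-$\mmbs(\F)$ blocking set and make genuine use of all $N^{\mmbs(\F)}$ of these interaction patterns to pack $N^{\mmbs(\F)}$ instances while keeping $|X|=\O(N\cdot n^{\O(1)})$.
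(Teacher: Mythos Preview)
The paper does not prove this theorem at all: it is stated as a known result of Hols, Kratsch, and Pieterse~\cite{DBLP:conf/stacs/HolsKP20} and used only as background motivation in the introduction. There is therefore no ``paper's own proof'' to compare against.

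That said, your sketch is on the right track conceptually --- a degree-$d$ OR-cross-composition where $d=\mmbs(\F)$, using copies of a witness graph $H\in\F$ with a size-$d$ minimal blocking set to realise $d$-dimensional selection, is exactly the mechanism behind the Hols et al.\ lower bound. Your two structural observations (that $\alpha(H\setminus B)=\alpha(H)-1$ exactly, and that no proper subset of $B$ blocks) are precisely the properties that make the budget arithmetic work. However, what you have written is an outline rather than a proof: you explicitly flag ``the hard part is item~(c)'' and then describe \emph{what} must be achieved (the instance encoding via edges from $X$ to the copies, and the soundness argument ruling out degenerate covers) without actually carrying it out. You also need the source problem $L$ to be chosen concretely so that its instances can be encoded purely through the $X$-to-$H$ adjacencies while keeping $G\setminus X$ a disjoint union of copies of $H$; this choice and the accompanying budget analysis are the substance of the proof, not afterthoughts. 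Finally, your assumption that $\F$ is closed under disjoint union is an extra hypothesis not present in the statement; the Hols et al.\ argument handles this more carefully.
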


These two theorems might suggest that $\mmbs$ might be the right candidate to characterize graph classes $\F$ for which \VCdistoF admits a polynomial kernel.
However, it turns out that there exists a class $\F$ where $\mmbs(\F)$ is bounded but for which there is no polynomial kernel for \VCdistoF under standard complexity assumptions~\cite{DBLP:conf/stacs/HolsKP20}.
Nevertheless, for minor-closed families\footnote{A graph class is \emph{minor-closed} if any minor of a member in the class also belongs to it, and a graph $H$ is a \emph{minor} of a graph $G$ is $H$ can be obtained from a subgraph of $G$ by contracting edges.}, the following theorem shows that $\mmbs$ is indeed the correct parameter in order to characterize the existence of polynomial kernels for \VCdistoF.
\begin{theorem}[Bougeret et al.~\cite{DBLP:conf/icalp/BougeretJS20}]
If $\F$ is a minor-closed graph class, then \VCdistoF admits a polynomial kernel if and only if $\mmbs(\F)$ is bounded by a constant.
\end{theorem}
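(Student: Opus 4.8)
The plan is to prove the two implications separately. The reverse direction does not use that $\F$ is minor-closed: it is essentially the contrapositive of Theorem~\ref{thm:hols}. Indeed, if $\mmbs(\F)$ is not bounded by a constant, then for every integer $d$ there is a graph $G_d \in \F$ with $\mmbs(G_d) \ge d$, and feeding $G_d$ into the lower-bound construction underlying Theorem~\ref{thm:hols} (which only needs a single graph of large $\mmbs$ to manufacture hard instances) rules out, unless \NP $\subseteq$ {\sf coNP/poly}, a kernel of vertex size $\O(|X|^{d-1})$ for \VCdistoF; since $d$ is arbitrary, there is no polynomial kernel. So the content of the statement is the forward direction: \emph{if $\F$ is minor-closed and $\mmbs(\F) \le c$ for a constant $c$, then \VCdistoF admits a polynomial kernel.} I would attack this in two stages, a structural dichotomy for minor-closed classes and then a kernelization exploiting it.

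\textbf{Structural stage.} The plan is to show that, for a minor-closed class $\F$, boundedness of $\mmbs(\F)$ is equivalent to boundedness of a treedepth-like width measure --- call it \emph{bridge-depth} --- in whose recursive definition the bridges of a graph may be contracted ``for free'' before recursing on the $2$-edge-connected pieces. One direction is routine: a class of bounded bridge-depth has bounded treewidth (in particular it excludes some grid as a minor), hence \VC is polynomial-time solvable on it, and $\mmbs$ can be bounded directly along the recursion. The substantial claim is the converse: a minor-closed class of \emph{unbounded} bridge-depth contains members with arbitrarily large minimal blocking sets. Here I would combine the recursive characterization of bridge-depth with the excluded-minor structure of $\F$ to locate, inside some $G \in \F$, a ``canonical obstruction'' of large depth --- morally a long chain of nested $2$-edge-connected blocks, or a large grid-like gadget --- and then build on that obstruction, explicitly, an inclusion-wise minimal set of vertices meeting every maximum independent set, whose size grows with the depth. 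The care needed is that $\mmbs$ is not minor-monotone, so one cannot merely take a minor with large $\mmbs$: the construction has to be carried out inside the actual graph, using the specific shape of the obstruction.

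\textbf{Algorithmic stage.} Assuming now that $\F$ has bridge-depth at most a constant $d$, I would produce the kernel in the usual two steps. Step~1, bounding the number of connected components of $G \setminus X$, is handed to us by Theorem~\ref{thm:hols2} (applicable since \VC is polynomial-time solvable on the hereditary class $\F$), leaving $\O(|X|^{c})$ components. Step~2 compresses each component $C$ to size polynomial in $|X|$ by induction on bridge-depth: decompose $C$ along its bridges into a tree of $2$-edge-connected blocks; only $\O(|X|)$ vertices of $C$ have a neighbour in $X$, so the interaction with the modulator is controlled; recursively replace each block (whose bridge-depth has strictly dropped) by an equivalent gadget of bounded size, using standard \VC reduction rules and the induction hypothesis; reassemble, and clean up the base case of trivial blocks. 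A bookkeeping argument then gives that $C$ shrinks to size $|X|^{\O(d)}$, and summing over the $\O(|X|^{c})$ components yields a polynomial kernel.

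\textbf{Main obstacle.} The crux is the hard half of the structural stage: proving that unbounded bridge-depth in a minor-closed class forces unbounded $\mmbs$. Pinning down the right obstruction needs graph-minors structure, and turning it into a large \emph{minimal} blocking set is delicate precisely because ``blocking'' is a global condition about all maximum independent sets rather than a local one, and minimality is easily destroyed when enlarging the set. Once the characterization is in place, the recursive kernelization of the algorithmic stage is technical but follows a familiar bridge-decomposition pattern.
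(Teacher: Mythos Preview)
The paper does not prove this theorem. It is stated in the introduction as a citation of prior work by Bougeret, Jansen, and Sau~\cite{DBLP:conf/icalp/BougeretJS20}, with no proof given here; the present paper merely uses it as motivation for studying the complexity of computing $\mmbs$. So there is no ``paper's own proof'' to compare your proposal against.

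That said, your sketch is a faithful high-level outline of the argument in the cited source: the reverse implication is indeed Theorem~\ref{thm:hols}, and the forward implication in~\cite{DBLP:conf/icalp/BougeretJS20} does proceed via the auxiliary width parameter you describe (called \emph{bridge-depth} there), first establishing that for minor-closed $\F$ bounded $\mmbs(\F)$ is equivalent to bounded bridge-depth, and then giving a polynomial kernel for classes of bounded bridge-depth by a recursive decomposition along bridges. Your identification of the hard step---that unbounded bridge-depth in a minor-closed class forces unbounded $\mmbs$, which requires building a large minimal blocking set inside an explicit obstruction rather than passing to a minor---is exactly the crux of the original proof. So while there is nothing in the present paper to grade against, your plan matches the actual proof in the referenced work.
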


To summarize this discussion, for general graph classes $\F$, having bounded $\mmbs(\F)$ is necessary but not sufficient, although having bounded $\mmbs(\F)$ yields ``half'' of the kernel. For minor-closed classes $\F$, having bounded $\mmbs(\F)$ is indeed the correct characterization.
These results explain the recent interest in computing $\mmbs(\F)$ for different classes $\F$~\cite{DBLP:conf/stacs/HolsKP20}, and thus our motivation
to study the complexity of the \MMBS problem.

Let us also mention that computing $\mmbs$ can in addition be useful when implementing any of the previously mentioned kernels.
Indeed, given an instance $(G,X,k)$ of \VCdistoF with $|V(G)|=n$, the algorithm behind Theorem~\ref{thm:hols2}
takes as additional input the value $\mmbs(\F)$ and outputs
the claimed equivalent instance in time $n^{\mmbs(\F) +   \O(1) }$.
However, when implementing this algorithm, we can rather first compute $\mmbs(G\setminus X)$, and use the algorithm
of Theorem~\ref{thm:hols2} with additional input $\mmbs(G\setminus X)$ instead of $\mmbs(\F)$ (note that $\mmbs(G\setminus X) \le \mmbs(\F)$, potentially much smaller),
and thus obtain a running time $n^{\mmbs(G \setminus X) +   \O(1) }$, and an equivalent instance $(G_0,X,k_0)$, where $G_0 \setminus  X \in \F$, with at most $\O(|X|^{\mmbs(G\setminus X)})$ connected components.

 % \subsection{Contributions and related work}
 
\medskip
\noindent\textbf{Contribution and related work.} In what follows we present our contribution and relate it to previous work, by considering each parameterization of the considered problems separately. 

\medskip
\noindent\emph{Choice of the parameters.}
As the \VCdistoF problem has only been considered for graph classes $\F$ where {\sc Maximum Independent Set} (\MIS for short) can be solved in polynomial time,
we also incorporate  this assumption in this work, hence motivating the parameterization of \MMBS by combinations of
 $\alpha$ (i.e., the  size of a \mis of the input graph) and the threshold $\beta$ (the natural parameter).
Moreover, since in all the previously mentioned cases where \VCdistoF has a polynomial kernel~\cite{FellowsJKRW18,DBLP:conf/icalp/BougeretJS20} the graphs in the class $\F$ have bounded \emph{treewidth}, we also consider the \MMBS problem parameterized by the treewidth of the input graph $G$.% (see Section~\ref{sec:prelim} for the definition),

\medskip
\noindent\emph{Problems related to computing $\mmbs$.} %, \MMVC.}
We denote by \LMMHS (\MMHS for short) the problem where, given a hypergraph $\H$ and an integer $\beta$, the objective
is to decide whether $\mmhs(\H) \ge \beta$, where $\mmhs(\H)$ is the size of a largest  minimal hitting set of $\H$, that is, an inclusion-wise minimal set of vertices of $\H$ containing at least one vertex of every hyperedge. A \emph{dominating set} in a graph $G$ is a subset of vertices $S \subseteq V(G)$ such that every vertex in $V(G) \setminus S$ has a neighbor in $S$.
We denote by \LUPDOM (\UPDOM for short) the problem of computing a maximum inclusion-wise minimal dominating set in an input graph $G$.
As pointed out by Bazgan et al.~\cite{BAZGAN20182}, \UPDOM is a special case of \MMHS, as we can create a hyperedge for each closed neighborhood of the vertices in $G$, implying that the negative results
stated below for \UPDOM transfer directly to \MMHS.
As mentioned before, we only consider graph classes $\F$ where \MIS can be solved in polynomial time. A natural special case of such classes is when $\alpha$ is constant,
and in this case \MMBS reduces to \MMHS by simply generating in time $n^{\alpha + \O(1)}$ a hyperedge for each \mis of $G$.
Let us now define parameterizations for \MMBS and \MMHS. A parameterization is, in a nutshell, a function mapping instances of a parameterized problem to non-negative integers (see Section~\ref{sec:prelim}
 for the details).
For \MMBS, we define, with slight abuse of notation, the parameters $\alpha$ as $\alpha(G,\beta)=\alpha(G)$ and $\beta$ as $\beta(G,\beta)=\beta$.
Similarly, for \MMHS, we define the parameters $\alpha$ as $\alpha(\H,\beta)=\max_{H \in E(\H)}|H|$ and $\beta$ as $\beta(\H,\beta)=\beta$.

\medskip
%\noindent\textbf{Objective.}
The first objective of this paper is to obtain a complete landscape of the parameterized complexity of \MMBS and \MMHS under different combinations of $\alpha$ and $\beta$ to compare
the behavior of these two problems.
To the best of our knowledge, the parameterized complexity of \MMBS has not been considered before in the literature.
On the other hand, the \MMHS problem has received considerable attention, especially concerning the enumeration of minimal hitting sets, as it allows to construct
the so-called \emph{dual} hypergraph, that is, the hypergraph having a hyperedge for every minimal hitting set of the original hypergraph.
Let us now present our contributions together with the related work about the parameterized complexity of \MMHS, for each different parameterization that we consider.

\medskip
\noindent\emph{Parameterization by $\alpha$ or $\beta$ separately.}
When parameterizing by $\alpha$ only, both \MMBS and \MMHS are \ParaNPH, meaning \NP-hard for fixed values of the parameter. Indeed, the particular case $\alpha=2$ of \MMHS corresponds to the \LMMVC (\MMVC for short) problem, which is \NPh~\cite{BORIA201562}.
%and \MMVC can be reduced to \MMBS as shown in Proposition~\ref{prop:hardness}.
When parameterizing by $\beta$ only, we show in Proposition~\ref{prop:hardness} that \MMBS is \ParaNPH, whereas \MMHS is \Woneh~\cite{BAZGAN20182} and \XP~\cite{blasius2019efficiently}.
As discussed in Section~\ref{section:alphabeta}, the \Wonehness proof of~\cite{BAZGAN20182} implies that, unless the Exponential Time Hypothesis (\ETH for short) fails, \UPDOM cannot be solved in time $f(k) \cdot n^{o(\sqrt{k})}$ on $n$-vertex graphs for any computable function $f$.
We improve this lower bound by showing in Theorem~\ref{thm:updom} that \UPDOM cannot be solved in time $f(k) \cdot n^{o(k)}$ for any computable function $f$, implying the same result
(replacing $k$ by $\beta$) for \MMHS.
We point out that very recently and independently from our work, this improved lower bound for \UPDOM  has also been proved by Dublois et al.~\cite{dublois2021upper}.

\medskip
\noindent\emph{Parameterization by one parameter while fixing the other.}
When fixing $\alpha$ and parameterizing by $\beta$, \MMBS reduces to \MMHS, which was known to be \FPT~\cite{DBLP:journals/disopt/Damaschke11},
and for which we provide in Proposition~\ref{prop:fixedalpha_beta_kernel} a polynomial kernel with $\O(\beta^\alpha)$ vertices,
generalizing the known quadratic kernel for \MMVC~\cite{FernauHDR,BORIA201562}.
When fixing $\beta$ and parameterizing by $\alpha$, we show in Proposition~\ref{prop:hardness} that \MMBS is \Woneh, whereas \MMHS is \FPT (at it is even \FPT parameterized by the sum as explained
in the next paragraph).

\medskip
\noindent\emph{Parameterization by the sum.}
Finally, when parameterizing by $\alpha+\beta$, the hardness result given in Proposition~\ref{prop:hardness} (i.e., parameterizing by $\alpha$ for fixed $\beta$) implies that \MMBS is \Woneh, whereas \MMHS is \FPT
for the following reasons.
We first provide in Corollary~\ref{cor:mmhs_xp_fpt} a simple \FPT algorithm for \MMHS that reduces to an extension problem considered by Bläsius et al.~\cite{blasius2019efficiently}, and then design in Theorem~\ref{thm:fptv2} a more involved
an ad-hoc algorithm to improve the running time to $\O^*(2^{\alpha \beta})$, where the $\O^*$-notation hides multiplicative polynomial terms (see Section~\ref{sec:prelim}).  

%The existence of a faster \FPT algorithm for \MMHS, typically running in time $poly(\alpha,\beta)^\beta$,
%seems a challenging and interesting question, and is left open.

\medskip
Our results considering parameters $\alpha$  and $\beta$ are summarized in Table~\ref{fig:array}, where $\Pi/\kappa$ denotes problem $\Pi$ parameterized by function $\kappa$ (see again Section~\ref{sec:prelim}).

\begin{figure}
%\begin{center}
\begin{tabular}{p{0.2\textwidth}|p{0.3\textwidth}|p{0.4\textwidth}}
Parameter & \MMBS & \MMHS \\
\hline
$\beta$ &  \ParaNPH (Prop.~\ref{prop:hardness}) & \XP (\!\!\cite{blasius2019efficiently}, Corollary~\ref{cor:mmhs_xp_fpt}) \newline \Woneh~\cite{BAZGAN20182} \newline  $\nexists$ $f(\beta)\cdot (|V(\H)|+|E(\H)|)^{o(\beta)}$ (Cor.~\ref{cor:mmhs}) \\
\hline
$\alpha$ &  \ParaNPH (Prop.~\ref{prop:hardness}) &  \ParaNPH (\MMVC for $\alpha=2$) \\
\hline
$\beta$ ($\alpha$ fixed) & Reducible to \MMHS  in time $n^{\alpha + \O(1)}$ & Kernel with $\O(\beta^{\alpha})$ vertices (Prop.~\ref{prop:fixedalpha_beta_kernel}) \newline \FPT in $\O^*(\alpha^{\beta})$~\cite{DBLP:journals/disopt/Damaschke11} ($\O^*$ hides $n^{f(\alpha)}$) \\
\hline
$\alpha$ ($\beta$ fixed) & \Woneh (Prop.~\ref{prop:hardness}) \newline \XP (trivial) & \FPT (as \FPT by $\alpha+\beta$) \\
\hline
$\alpha+\beta$ & \Woneh (Prop.~\ref{prop:hardness}) &  \FPT (Cor.~\ref{cor:mmhs_xp_fpt} and Thm.~\ref{thm:fptv2}) \\
\hline
\end{tabular}
\caption{Parameterized complexity of \MMBS and \MMHS with parameters $\alpha$ and $\beta$. The main differences between \MMBS and \MMHS show up in the parameterizations by $\beta$ and $\alpha+\beta$.}
\label{fig:array}
\end{figure}

%\mb{no kernel for \MMHS /$\alpha$ for fixed $\beta$ is immediate by self composition, so not motivated to mention it}

%In the otherway around, there is also a reduction \cite{}\ja{same papier} from MMHS to UP that preserves the param (MMHS >= Beta iff up >= beta).
%Thus, it could be possible that some positive results (FPT or kernel) for UD transfer to MMHS.
%However, to the best of our knowledge no such pos resul imply some of the positive results we have for MMHS (in particular \cite{same} provide kernel and FPT
%algo for UP param by Degre and k, but the previous reduction from MMHS to UP, the degree is at least n.
%citer russe .

\medskip
\noindent\emph{Parameterization by treewidth.}
Let us now turn to the second objective of this paper, which is the parameterization by treewidth.
It is known that both \UPDOM~\cite{BAZGAN20182} and \MMVC~\cite{BORIA201562} are \FPT parameterized by treewidth, but none of these results implies the same result for \MMBS.
We also mention that that the problem of finding a maximum minimal set intersecting all maximum cliques of a graph is \FPT parameterized by treewidth~\cite{lee2011weighted},
implying that \MMBS is \FPT parameterized by treewidth of the complement of the input graph.
We prove in Theorem~\ref{thm:tw} that \MMBS/\tw is \FPT, which is the main technical result of this paper. Let us mention that \MMBS does not seem to be (at least, easily) expressible in monadic second-order logic, due to the fact that the a blocking set in a graph $G$ is defined so that it intersects every {\sl maximum-sized} independent set of $G$, and properties involving counting the {\sl sizes} of arbitrarily large sets are typically non-expressible in monadic second-order logic~\cite{BonnetEPT15,daglib/0030804}. This is why, in order to deduce that \MMBS/\tw is \FPT, we cannot directly apply Courcelle's theorem~\cite{Courcelle90}, and we need to design an ad-hoc algorithm that is quite involved, needing a number of technical lemmas. In \label{subsec:prelim_tw} we discuss the list of parameters used in the tables of our dynamic programming algorithm, along with several examples to illustrate the difficulties that motivate their choice.

%\ig{We should speak about MSOL and Courcelle's Theorem here, and say why we cannot express (at least, easily, \MMBS in MSOL. Marin, I can write this part if you want}
%\ig{by treewidth, specially relevant in the context of kernelization, as almost all the polynomial kernels currently known are for classes $\F$ of bounded treewidth. }

\medskip
\noindent\textbf{Organization.}
In Section~\ref{sec:prelim} we provide preliminaries about graphs, parameterized complexity, the formal statement of the considered problems, and treewidth, and we state in Lemma~\ref{lem:prelim} several useful properties of minimal blocking and hitting sets.
Section~\ref{section:alphabeta} is devoted to parameterizations by $\alpha$ and $\beta$, and Section~\ref{sec:tw} to the algorithm for \MMBS parameterized  by treewidth. We conclude the article in Section~\ref{section:conclusion}
 with some directions for further research.

\section{Preliminaries}
\label{sec:prelim}

\noindent\textbf{Graphs and functions.} We only provide here basic definitions and refer the reader to~\cite{Die10} for any missing definitions about graphs. We only consider finite simple graphs with no loops nor multiple edges. 
For a graph $G$ and a vertex $v \in V(G)$, we denote by $N_G(v)$ the set of vertices of $G$ adjacent to $v$ and, for a subset $S \subseteq V(G)$, we let $N_S(v) = N_G(v) \cap S$. When the graph $G$ is clear from the context, we may omit the subscript.
Given $B \subseteq V(G)$, we denote $G \setminus B = G[V(G)\setminus B]$ where $G[X]$ denotes the graph induced by $X \subseteq V(G)$.
We denote a \emph{triangle}, that is, a complete graph on three vertices, on vertices $u,v,w$ by $(u,v,w)$.
Given a graph $G$, we say that $X \subseteq V(G)$ is a \emph{vertex cover} if for any edge $e \in E(G)$, $e \cap X \neq \emptyset$,
and that $X$ is a \emph{dominating set} if for every $v \in V(G)\setminus X$, there exists $u \in X$ such that $\{u,v\} \in E$.
Given a hypergraph $\H$, we say that $I \subseteq V(\H)$ is an \emph{independent set} if for every $H \in E(\H)$, $H \nsubseteq I$,
and that $X \subseteq V(\H)$ is a \emph{hitting set} if for every $S \in E(\H)$, $S \cap X \neq \emptyset$. A graph class is \emph{hereditary} if it is closed under induced subgraphs.

If a set $A$ is partitioned into pairwise disjoint subsets $A_1, \ldots, A_k$, we denote it by $A = A_1 \uplus \dots \uplus A_k$.
If $A$ is a set, we denote by $2^A$ the collection containing all the subsets of $A$.
Given a function $f: A \to B$ and a subset $A' \subseteq A$, we denote by $f_{|A'}$ the restriction of $f$ to $A'$.
For a positive integer $k$, we let $[k]$ be the set containing every integer $i$ such that $1 \leq i \leq k$.

For a function $f$ mapping graphs to integers (such as the parameterizations $\alpha$ and $\beta$ discussed before) and a class of graphs $\F$, we define $f(\F)=\sup_{G \in \F}f(G)$.
Given two functions $f_1$ and $f_2$, mapping instances $I$ of a problem to $\mathbb{N}$), if there exists a polynomial $p$ such that
for every instance $I$, $f_1(I) \le f_2(I) \cdot p(|I|)$, where $|I|$ is the size of $I$, then we write $f_1 = \O^*(f_2)$.
%% Given two functions $f_1$ and $f_2$ from a set of instances to $\mathbb{N}$ and a constant $c$, we say that $f_1 = \O^*(f_2)$ if
%% there exists a polynomial $p$ such that $f_1 \le f_2 \O(p(|I|))$, where $|I|$ is the size of $I$.
%, and that $f_1 = \O_c^*(f_2)$ if
%% there exists a polynomial $p_c$ such that $f_1 \le f_2 \O(p_c(|I|)$.

\medskip
\noindent
\textbf{Parameterized complexity.} We refer the reader to~\cite{DF13,CyganFKLMPPS15} for basic background on parameterized complexity, and we recall here only some basic definitions. A \emph{parameterized problem} is a language $L \subseteq \Sigma^* \times \mathbb{N}$, where $\Sigma$ is some fixed alphabet.  For an instance $I=(x,k) \in \Sigma^* \times \mathbb{N}$, $k$ is called the \emph{parameter}.
Given a classical (non-parameterized) decision problem $L_{c} \subseteq \Sigma^*$ and a function $\kappa: \Sigma^* \rightarrow \mathbb{N}$, we denote by
$L_{c}/\kappa = \{(x,\kappa(x)\} \mid x \in L_{c}\}$ the associated parameterized problem.

A parameterized problem $L$ is \emph{fixed-parameter tractable} (\textsf{FPT}) if there exists an algorithm $\mathcal{A}$, a computable function $f$, and a constant $c$ such that given an instance $I=(x,k)$, $\mathcal{A}$   (called an \textsf{FPT} \emph{algorithm}) correctly decides whether $I \in L$ in time bounded by $f(k) \cdot |I|^c$. For instance, the \textsc{Vertex Cover} problem parameterized by the size of the solution is \textsf{FPT}.
	
A parameterized problem $L$ is \textsf{XP} if there exists an algorithm $\mathcal{A}$ (called an \emph{\textsf{XP} algorithm}) and two computable functions $f$ and $g$ such that given an instance $I=(x,k)$, $\mathcal{A}$  (called an \textsf{XP} \emph{algorithm}) correctly decides whether $I \in L$ in time bounded by $f(k) \cdot |I|^{g(k)}$. For instance,  the \textsc{Independent Set} problem parameterized by the size of the solution is \textsf{XP}.

Within parameterized problems, the \textsf{W}-hierarchy may be seen as the parameterized equivalent to the class \textsf{NP} of classical decision problems. Without entering into details (see~\cite{DF13,CyganFKLMPPS15} for the formal definitions), a parameterized problem being \textsf{W}[1]-\emph{hard} can be seen as a strong evidence that this problem is \textsl{not} \textsf{FPT}.
The canonical example of \textsf{W}[1]-hard problem is \textsc{Independent Set}  parameterized by the size of the solution.

The most common way to transfer \Wonehness is via parameterized reductions.
A \emph{parameterized reduction} from a parameterized problem $L_1$ to a parameterized problem $L_2$ is an algorithm that, given an instance $(x,k)$ of $L_1$, outputs an instance $(x',k')$ of $L_2$
such that
\begin{itemize}
\item $(x,k)$ is a yes-instance of $L_1$ if and only if $(x',k')$ is a yes-instance of $L_2$,
\item $k' \le g(k)$ for some computable function $g$, and
\item the running time is bounded by $f(k)\cdot|x|^{\O(1)}$ for some computable function $f$
\end{itemize}
If $L_1$ is \textsf{W}[1]-hard and  there is a parameterized reduction from $L_1$ to $L_2$, then $L_2$ is \textsf{W}[1]-hard as well. 

A parameterized problem is  \emph{para-\NP-hard} if it is \NPh for a fixed value of the parameter, implying in particular that the problem cannot be in \XP unless $\P=\NP$.

A \emph{kernelization algorithm} for a parameterized problem $L$ is an algorithm $\mathcal{A}$ that, given an instance $(x, k)$ of $L$, generates in polynomial time an equivalent instance $(x', k')$ of $Q$ such that $|x'| + k' \leq f(k)$, for some computable function $f : \mathbb{N} \rightarrow \mathbb{N}$. If $f(k)$ is bounded from above by a polynomial function, we say that $L$ admits a \emph{polynomial} kernel. In particular, if $f(k)$ is bounded by a linear (resp. quadratic) function, then we say that $L$ admits a \emph{linear} (resp. \emph{quadratic}) kernel.

The Exponential Time Hypothesis (\ETH for short) of Impagliazzo et al.~\cite{ImpagliazzoPZ01} is a complexity assumption implying that the $3$-SAT problem cannot be solved in time $2^{o(n)}$ restricted to formulas with $n$ variables.

\medskip
\noindent\textbf{List of considered problems.}
We denote by \MIS the {\sc Maximum Independent Set} problem where, given a graph $G$ and an integer $k$, the objective is to decide whether $\alpha(G) \ge k$,
%\ig{Why not using the macro for problems for \MIS as well?}
and by \MCIS the {\sc Multicolored Independent Set} problem, where given graph $G$ and an integer $k$ such that $V(G)$ is partitioned into $k$ cliques $\{V_i \mid i \in [k]\}$,
the goal is to decide whether $\alpha(G) \ge k$.

In the two following problems, recall that $\mmbs(G)$ (resp. $\mmhs(\H)$) denotes the size of a largest minimal blocking set of $G$ (resp. largest minimal hitting set of $\H$).
\defproblem
{\LMMBS (\MMBS)}
{A graph $G$ and a positive integer $\beta$.}
{$\mmbs(G) \ge \beta$?}

\defproblem
{\LMMHS (\MMHS)}
{A hypergraph $\H$ and a positive integer $\beta$.}
{$\mmhs(\H) \ge \beta$?}

The problem {\sc Maximum Minimal Vertex Cover} (\MMVC) corresponds to the restriction  of \MMHS to instances where all hyperedges have size two, i.e., graphs.
For a fixed positive integer $\alpha$, we also define $\alpha$-\MMHS as the \MMHS problem restricted to instances whose hypergraph $\H$ is such that $|H| \le \alpha$ for every $H \in E(\H)$,
and $\alpha$-\MMBS as the \MMBS problem restricted to instances whose graph $G$ is such that $\alpha(G) \le \alpha$.
Notice that in any \FPT or kernel algorithm for $\alpha$-\MMHS or for $\alpha$-\MMBS, as $\alpha$ is fixed, the running time given using the $\O^*$-notation might typically hide a term $n^{f(\alpha)}$, where $n$ is the number of vertices of the graph or hypergraph under consideration.
Finally, we define \MMBSEG (resp. \MMBSINFEG) as the \MMBS problem where the objective is to decide whether $\mmbs(G)=\beta$ (resp. $\mmbs(G) \le \beta$).

\defproblem
{\LEXTMMHS (\EXTMMHS)}
{A hypergraph $\H$ and a two subsets $X,Y \subseteq V(\H)$ such that $X \cap Y = \emptyset$.}
{Does there exist a minimal hitting set $S$ of $\H$ such that $X \subseteq S \subseteq V(\H) \setminus Y$?}

Problem \EXTMMHS was defined by Bläsius et al.~\cite{blasius2019efficiently}. We also define \SIMPLEEXTMMHS as the special case of the \EXTMMHS where $Y = \emptyset$. 

The last problem we define here is the ``max-min'' version of \textsc{Dominating Set}.

\defproblem
{\LUPDOM (\UPDOM)}
{A graph $G$ and an integer $k$.}
{Does $G$ contain a minimal dominating set of size at least $k$?}

\noindent\textbf{Tree decompositions and treewidth.}\label{def:tw} A \emph{tree decomposition} of a graph $G$ is a pair ${\cal D}=(T,{\cal B})$, where $T$ is a tree
and ${\cal B}=\{X^{w}\mid w\in V(T)\}$ is a collection of subsets of $V(G)$, called \emph{bags},
% indexed by the vertices of $T$,
such that:
\begin{itemize}
\item $\bigcup_{w \in V(T)} X^w = V(G)$,
\item for every edge $\{u,v\} \in E$, there is a $w \in V(T)$ such that $\{u, v\} \subseteq X^w$, and
\item for every $\{x,y,z\} \subseteq V(T)$ such that $z$ lies on the unique path between $x$ and $y$ in $T$,  $X^x \cap X^y \subseteq X^z$.
\end{itemize}
We call the vertices of $T$ {\em nodes} of ${\cal D}$ and the sets in ${\cal B}$ {\em bags} of ${\cal D}$. The \emph{width} of a  tree decomposition ${\cal D}=(T,{\cal B})$ is $\max_{w \in V(T)} |X^w| - 1$.
The \emph{treewidth} of a graph $G$, denoted by $\tw(G)$, is the smallest integer $t$ such that there exists a tree decomposition of $G$ of width at most $t$.
%For each $t \in V(T)$, we denote by $E_w$ the set $E(G[X_w])$ \ig{do we use it?}.
We need to introduce nice tree decompositions, which will make the presentation of the algorithm of Section~\ref{sec:tw} much simpler.

\medskip
\noindent
\textbf{Nice tree decompositions.} \label{def:nice-tw}Let ${\cal D}=(T,{\cal B})$
be a rooted tree decomposition of $G$ (meaning that $T$ has a special vertex $r$ called the \emph{root}).
As $T$ is rooted, we naturally define an ancestor relation among bags, and say that $X^{w'}$ is a \emph{descendant} of $X^w$ if the vertex set of the unique simple path in $T$ from $r$ to $w'$ contains
$w$. In particular, every node $w$ is a descendant of itself.
For every $w \in V(T)$, we define $G_{X^w}=G[\bigcup\{X^{w'}\mid X^{w'}\text{ is a descendant of }X^w\text{ in }T\}]$.

Such a rooted decomposition is called a \emph{nice tree decomposition} of $G$ if the following conditions hold:
\begin{itemize}

\item $X^{r} = \emptyset$,
\item every node of $T$ has at most two children in $T$,
\item for every leaf $\ell \in V(T)$, $X^{\ell} = \emptyset$. Each such a node $\ell$ is called a {\em leaf node},
\item if $w \in V(T)$ has exactly one child $w'$, then either
  \begin{itemize}
  \item $X^w = X^{w'}\cup \{v\}$ for some $v \not \in X^{w'}$.
    Each such a node is called an \emph{introduce node},%  and the vertex $v_{\rm in}$ is the {\em introduced vertex} of $X_{w}$,
    % \item $X_w = X_{w'}$ and $G_{w}=(G_{w'},E(G_{w'})\cup\{e_{\rm insert}\})$ where $e_{\rm insert}$ is an edge of $G$ with endpoints in  $X_{w}$.
    %   The node $t$ is called \emph{introduce edge} node  and the edge $e_{\rm insert}$ is the {\em insertion edge} of $X_{w}$, or
  \item $X^w = X^{w'} \setminus \{v\}$ for some $v \in X^{w'}$.
    Each uch a node is called a \emph{forget node}, and% node and $v_{\rm out}$ is the {\em forget vertex} of $X_{w}$.
  \end{itemize}
\item if $w \in V(T)$ has exactly two children $w_L$ and $w_R$, then $X^{w} = X^{w_L} = X^{w_R}$. %, $E(G_{w_1})\cap E(G_{w_2})=E(G[X_w])$, and $G_w = (V(G_{w_1})\cup V(G_{w_2}),E(G_{w_1})\cup E(G_{w_2}))$.
Notice that there is no edge in $G_{X^w}$ between $V(G_{X^{w_L}})\setminus X^w$ and $V(G_{X^{w_R}})\setminus X^w$. Each such a node is called a \emph{join node}.
\end{itemize}

Given a tree decomposition of a graph $G$, it is possible to transform it in polynomial time into a {\sl nice} one of the same width~\cite{Klo94}.

For the sake of simplicity of the (already quite heavy) notation used in the dynamic programming algorithm of Section~\ref{sec:tw}, we will drop the vertices of $V(T)$ from the notation of bags defined above. Therefore,
in the case of an introduce or forget node, the bag $X^w$ and its child $X^{w'}$ will be denoted $X$ and $X^C$, respectively,
and in the case of a join node, the bag $X^w$ and its children $X^{w_L}$ and $X^{w_R}$ will be denoted $X,X^L,$ and $X^R$ respectively.

\medskip
\medskip
\noindent
\textbf{Basic properties.} We now state some basic properties of the considered problems that will be used later.
The ones concerning minimal blocking sets have been already (explicitly or implicitly) observed  in~\cite{DBLP:conf/stacs/HolsKP20}, % \mb{Ignasi, do you remember which ones ? is it important to know which one ?} \ig{I will check it},
but for the sake of completeness we prove all of them here.
%\mb{change notation S by H in hypergraphs ? at least H used in fpt alpha beta, anyway, unify}

\begin{lemma}\label{lem:prelim}
The following properties hold.
\begin{enumerate}
%\item[$\bullet$]we can always consider clean instances where for any $v$, there exists an $\alpha$ using $v$, and another one avoiding it \ig{we can assume this only if we can compute $\alpha$ within the desired running time (polynomial, or \FPT)} \mb{true but not used anymore I think}
\item\label{prop1} For every graph $G$, $\Y \subseteq V(G)$ is an \mbs of $G$ if and only if $\Y$ is a \bs of $G$ and, for every $v \in \Y$, there is a \mis $I_v$ of $G$ such that $I_v \cap \Y = \{v\}$.
\item\label{prop1bis} For every hypergraph $\H$, $\Y \subseteq V(\H)$ is an minimal hitting set of $\H$ if and only if $\Y$ is a hitting set of $\H$ and, for every $v \in \Y$, there is a hyperedge $H_v$ of $\H$ such that $H_v \cap \Y = \{v\}$.
\item\label{prop2} For every graph $G$, there exists a unique \mis in $G$ if and only if $\mmbs(G)=1$. %(which is not possible after cleaning ;)
\item\label{prop3} Given an instance  $(\H,\beta)$  of \MMHS, we can obtain in polynomial time an equivalent instance $(\H',\beta)$
such that $V(\H') = V(\H)$, $E(\H') \subseteq E(\H)$, and no hyperedge of $\H'$ is contained in another hyperedge of $\H'$.
\item\label{prop4} Given an instance $(\H,\beta)$  of \MMHS, we can obtain in polynomial time an equivalent instance $(\H',\beta)$
such that $V(\H') \subseteq V(\H)$, and every vertex of $\H'$ belongs to at least one hyperedge of $\H'$.
%\item\label{prop5} Given an instance $(G,\beta)$ of \MMBS, we can obtain in polynomial time an equivalent instance $(G',\beta)$
%such that $V(G') \subseteq V(G)$, and every vertex of $G'$ belongs to at least one \mis of $G'$.
\end{enumerate}
\end{lemma}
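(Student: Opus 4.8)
The plan is to prove each of the five properties separately; all of them are essentially structural manipulations, so the main work is bookkeeping rather than any single hard idea.

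For~\ref{prop1}: a set $\Y$ is a minimal blocking set iff it is a blocking set and no proper subset is. I would show that $\Y\setminus\{v\}$ fails to be a blocking set precisely when there is a \mis $I_v$ avoiding $\Y\setminus\{v\}$ but (since $\Y$ is blocking) still meeting $\Y$, i.e. $I_v\cap\Y=\{v\}$. So the "for every $v\in\Y$ there is such an $I_v$" condition is exactly the statement that removing any single vertex destroys the blocking property, which by a standard exchange argument is equivalent to inclusion-wise minimality (if some $\Y'\subsetneq\Y$ is still blocking, pick $v\in\Y\setminus\Y'$ and note $\Y\setminus\{v\}\supseteq\Y'$ is blocking, contradiction; conversely if $\Y\setminus\{v\}$ is blocking for no $v$, then no proper subset is). Property~\ref{prop1bis} is the literal analogue for hypergraphs, with "\mis of $G$" replaced by "hyperedge of $\H$" and "blocking set" by "hitting set"; the same argument goes through verbatim since a minimal hitting set is one where deleting any vertex $v$ leaves some hyperedge unhit, and that hyperedge must then satisfy $H_v\cap\Y=\{v\}$.

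For~\ref{prop2}: if $G$ has a unique \mis $I^*$, then any blocking set must contain a vertex of $I^*$, and a single vertex of $I^*$ already blocks (it meets the only \mis), so it is a minimal blocking set and no minimal blocking set can be larger than $1$ by~\ref{prop1} (any second vertex could be removed); hence $\mmbs(G)=1$. Conversely, if $G$ has two distinct maximum independent sets $I_1\neq I_2$, pick $u\in I_1\setminus I_2$ and $w\in I_2\setminus I_1$; then $\{u,w\}$ is a blocking set (here I would be slightly careful: $\{u\}$ need not block, but one checks $\{u,w\}$ meets every \mis, or more simply invoke that $\mmbs(G)\ge 2$ follows from exhibiting a minimal blocking set of size $\ge2$ using~\ref{prop1} with witnesses $I_1$ for one vertex and $I_2$ for the other). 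The cleanest route is: $\mmbs(G)=1$ iff every minimal blocking set is a singleton iff every singleton $\{v\}$ lying in some \mis blocks \emph{all} \mis's, which forces all \mis's to share that vertex across all choices, which is equivalent to uniqueness. I would write this direction as a short contrapositive.

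For~\ref{prop3} and~\ref{prop4}: these are the easy reduction rules. For~\ref{prop3}, repeatedly delete any hyperedge that strictly contains another hyperedge; this terminates in polynomial time, keeps $V(\H')=V(\H)$ and $E(\H')\subseteq E(\H)$, and preserves the set of minimal hitting sets (a set hits $\H$ iff it hits the inclusion-minimal hyperedges, and "being a hyperedge witnessing $H_v\cap\Y=\{v\}$" is only easier with smaller hyperedges — one should check that removing a superset-hyperedge does not turn a non-minimal hitting set into a minimal one, which holds because the witnessing hyperedges in~\ref{prop1bis} can always be taken inclusion-minimal). For~\ref{prop4}, simply delete every vertex that lies in no hyperedge; such a vertex is in no minimal hitting set (it cannot be needed to hit anything, and by~\ref{prop1bis} it cannot belong to a minimal hitting set since it has no witnessing hyperedge), so $\mmhs$ is unchanged and $V(\H')\subseteq V(\H)$.

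The only step requiring genuine care is~\ref{prop2}, specifically the forward direction's claim that $\mmbs(G)\le 1$ and the backward direction's construction of a minimal blocking set of size $2$; both are quick once~\ref{prop1} is available, so in practice no step is a real obstacle — the lemma is a collection of warm-up observations, and the proof is a matter of stating each equivalence cleanly and invoking~\ref{prop1}/\ref{prop1bis} where convenient.
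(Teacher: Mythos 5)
Your treatment of items~\ref{prop1}, \ref{prop1bis}, \ref{prop3} and~\ref{prop4}, and of the forward direction of item~\ref{prop2}, is correct and essentially identical to the paper's (witness-based characterization of minimality, deletion of superset hyperedges using inclusion-minimal witnesses, deletion of isolated vertices). The genuine gap is the backward direction of item~\ref{prop2}: that two distinct \mis's $I_1\neq I_2$ force $\mmbs(G)\ge 2$. Your first route fails outright: for $u\in I_1\setminus I_2$ and $w\in I_2\setminus I_1$, the pair $\{u,w\}$ need not meet every \mis. Take $G$ to be a triangle $\{a,b,c\}$ plus an isolated vertex $d$; the \mis's are $\{a,d\},\{b,d\},\{c,d\}$, and with $I_1=\{a,d\}$, $I_2=\{b,d\}$ you get $u=a$, $w=b$, while the \mis $\{c,d\}$ avoids both. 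Your fallback of ``invoking Property~\ref{prop1} with witnesses $I_1$ and $I_2$'' is not an argument: Property~\ref{prop1} characterizes a set that is already known to be an \mbs, it does not let you exhibit one with prescribed witnesses. And your ``cleanest route'' buries the same unproven step: the implication ``every \mbs is a singleton $\Rightarrow$ every vertex lying in some \mis blocks all \mis's'' is precisely the contrapositive you need to prove (its converse, via Property~\ref{prop1}, is the easy half), so the chain of equivalences is circular at exactly the hard point.

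The missing idea is a minimalization argument anchored at the common intersection. Since $|I_1\cap I_2|<\alpha(G)$, no \mis is contained in $I_1\cap I_2$, so $B_0=V(G)\setminus(I_1\cap I_2)$ is a \bs; shrink it to an \mbs $Y\subseteq B_0$. Because $Y$ blocks, it meets both $I_1$ and $I_2$, and because it avoids $I_1\cap I_2$ it must contain a vertex of $I_1\setminus I_2$ and a vertex of $I_2\setminus I_1$; these two sets are nonempty (as $I_1\neq I_2$ have equal size) and disjoint, so $|Y|\ge 2$. This is exactly how the paper argues, except that it phrases the step through the derived hypergraph whose hyperedges are the \mis's of $G$ and applies the two-petal case of the sunflower bound (Lemma~\ref{lemma:sunflower}, with core $I_1\cap I_2$), which it needs anyway for Proposition~\ref{prop:fixedalpha_beta_kernel}. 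With this paragraph added, your proof of item~\ref{prop2} is complete; the other four items stand as written.
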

%\fixme{should be explain the following prop which is not directly used, but help get familiar: if $B$ mbs et $v$ mixed (there exists a mis with v and without),
%if $v$ in $B$ alors .. B monins v pas bs de G - N[v] etc
%et dire que meme genre de props dans hols, ou carrement citer hols ?}

\begin{proof}
\emph{Property~\ref{prop1}}.
For the forward implication, consider an \mbs $\Y$ of $G$. As $\Y$ is minimal, for every $v \in \Y$, there exists a \mis $I_v$ such that $I_v \cap \{\Y \setminus \{v\}\} = \emptyset$.
As $\Y$ is a \bs, $I_v \cap \Y \neq \emptyset$, implying $I_v \cap \Y = \{v\}$.
The backward implication is immediate. The proof of Property~\ref{prop1bis} is almost the same.

\emph{Property~\ref{prop2}}. For the forward implication, let $I$ be the unique \mis of $G$, $\Y$ be a \bs of $G$, and $v \in \Y \cap I$. If $|\Y| \ge 2$ then $\Y$ is not minimal as $\{v\}$ is still a \bs.
Let us now prove the contrapositive of the backward implication. Suppose that $G$ contains two distinct \mis $I_1$ and $I_2$.
Let us consider the reduction mapping $G$ to a hypergraph $\H$ on the same vertex set, and having one hyperedge for each \mis of $G$.
Let $H_1$ and $H_2$ be the hyperedges corresponding to $I_1$ and $I_2$, respectively. Observe that $\{H_1,H_2\}$ is a sunflower (see Definition~\ref{def:sunflower} in page~\pageref{def:sunflower}).
By Lemma~\ref{lemma:sunflower} (page~\pageref{lemma:sunflower}), we get that $\mmhs(\H) \ge 2$, and as $\mmhs(\H) = \mmbs(G)$, we obtain the desired result.

\emph{Property~\ref{prop3}}.
Suppose that there exist two distinct hyperedges $H_1$ and $H_2$ of $\H$ such that $H_1 \subsetneq H_2$.
Let $\tilde{\H}$ be the hypergraph obtained from $\H$ by removing $H_2$.
Let us prove that for every $\Y \subseteq V(\H)$, $\Y$ is a minimal hitting set in $\H$ if and only if $\Y$ is a minimal hitting set in $\tilde{\H}$.
The equivalence of being a hitting set is immediate, and the minimality in $\tilde{\H}$ directly implies the minimality in $\H$.
Finally, suppose that $\Y$ if minimal in $\H$. Consider an arbitrary vertex $v \in \Y$ and let $H_v \in E(\H)$ such that $\Y \cap H_v = \{v\}$, which exists by Property~\ref{prop1bis}.
Let $H'$ be an inclusion-wise minimal hyperedge of $\H$ such that $v \in H' \subseteq H_v$. We get that $H' \in E(\tilde{\H})$, and $H' \cap B = \{v\}$. 
We now repeat this operation of removing a hyperedge containing another until no hyperedge is included in another, and define $\H'$ as the obtained hypergraph. By Property~\ref{prop1}, it follows that $B$ is a minimal hitting set in $\tilde{\H}$.

\emph{Property~\ref{prop4}}.
Suppose that there exists $v \in V(\H)$ such that no hyperedge of $\H$ contains $v$.
Let $\tilde{\H}$ be the hypergraph obtained from $\H$ by removing $v$.
We immediately have that for every $\Y \subseteq V(\H)$, $\Y$ is a minimal hitting set in $\H$ if and only if $\Y$ is a minimal hitting set in $\tilde{\H}$, as
no minimal hitting set in $\tilde{\H}$ can contain vertex $v$.
We now repeat this operation until we get the claimed hypergraph $\H'$.
%The argument is the same for Property~\ref{prop5}.
\end{proof}

%%%%%%%%%%%%%%%%%%%%%%%%%%%%%%%%%%%%%%%%%%%%%%%%%%%%%%%%%%%%%%%%%%%%%%%%%%%%%%

\section{Parameterizations by $\alpha$ and $\beta$}\label{section:alphabeta}
%%%%%%%%%%%%%%%%%%%%%%%%%%%%%%%%%%%%%%%%%%%%%%%%%%%%%%%%%%%%%%%%%%%%%%%%%%%%%%
\label{section:alphabeta}
In this section we establish the results summarized in Table~\ref{fig:array}
about the parameterized complexity of \MMBS and \MMHS under several parameterizations depending on $\alpha$ and $\beta$. We present the negative and the positive results in Section~\ref{sec:negative} and Section~\ref{sec:positive}, respectively.

\subsection{Hardness results}
\label{sec:negative}

It is natural to ask, for a graph $G$, whether computing $\alpha(G)$ can help toward computing $\mmbs(G)$, and vice-versa. In fact, the parameters $\alpha$ and $\mmbs$ are linked by the duality relation discussed in what follows.

Given a ground set $S$, a \emph{clutter} is a family $\A$ of subsets of $S$ such that no set $A_1 \in \A$ contains another set $A_2 \in \A$.
Given a clutter $\A$, the family of \emph{blocking sets} of $\A$, denoted by $b(\A)$, is the set of minimal subsets $B$ of $S$ such that $B$ intersects every set $A \in \A$.
Notice that $b(\A)$ is a clutter, and thus $b(b(\A))$ is well-defined. The following theorem provides a duality relation and can be found, for instance, in~\cite{berge1984hypergraphs}.

\begin{theorem}\label{thm:duality}
$b(b(\A))=\A$.
\end{theorem}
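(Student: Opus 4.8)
\textbf{Proof plan for Theorem~\ref{thm:duality} ($b(b(\A))=\A$).}

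The plan is to prove the two inclusions $\A \subseteq b(b(\A))$ and $b(b(\A)) \subseteq \A$ separately, using the fact that both $b(\A)$ and $b(b(\A))$ are clutters by construction (no member contains another). First I would record the elementary but crucial observation that the relation ``$A$ intersects $B$'' is symmetric: a set $B$ is a blocking set of $\A$ (up to minimality) exactly when $B \cap A \neq \emptyset$ for all $A \in \A$, and the minimal such $B$ form $b(\A)$. This symmetry is what drives the whole argument.

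For the inclusion $\A \subseteq b(b(\A))$: fix $A \in \A$. By definition every member of $b(\A)$ intersects $A$, so $A$ is a ``transversal'' of the family $b(\A)$, i.e.\ $A$ hits every set in $b(\A)$. Hence $A$ contains some minimal transversal of $b(\A)$, that is, some member $A' \in b(b(\A))$ with $A' \subseteq A$. The hard part is to upgrade ``$A$ contains some $A' \in b(b(\A))$'' to ``$A \in b(b(\A))$'', i.e.\ to show $A' = A$; this is where one needs to use that $\A$ is a clutter together with a minimality argument going in the other direction. Symmetrically, for $b(b(\A)) \subseteq \A$: take $C \in b(b(\A))$; I would show $C$ contains some member of $\A$. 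Suppose not; then $C$ misses, for every $A \in \A$, at least one element, so the complement $S \setminus C$ meets every $A \in \A$, hence contains some $B \in b(\A)$; but then $B \cap C = \emptyset$, contradicting that $C$ is a transversal of $b(\A)$. So $C \supseteq A$ for some $A \in \A$. Combining the two inclusions-up-to-containment: every $A \in \A$ contains some $A' \in b(b(\A))$, and that $A'$ in turn contains some $A'' \in \A$; since $\A$ is a clutter, $A'' \subseteq A' \subseteq A$ with $A'', A \in \A$ forces $A'' = A$, hence $A' = A$, giving $\A \subseteq b(b(\A))$, and the reverse inclusion follows by the same chaining argument applied to $b(b(\A))$ (also a clutter).

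I expect the main obstacle to be purely bookkeeping: making sure every ``contains some minimal set'' step is justified (a set hitting all members of a finite clutter always contains an inclusion-minimal such set) and that the clutter hypothesis on $\A$ is invoked at exactly the right moment to collapse the sandwich $A'' \subseteq A' \subseteq A$. Since this is a classical result, I would simply cite~\cite{berge1984hypergraphs} for the details and present the short symmetric argument above as a sketch.
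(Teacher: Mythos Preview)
The paper does not actually prove Theorem~\ref{thm:duality}; it simply states it and refers the reader to~\cite{berge1984hypergraphs}. Your proposal ends the same way (cite Berge), so in that sense you match the paper exactly.

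Your sketch, however, goes further than the paper by outlining the classical Edmonds--Fulkerson argument, and it is correct. The two ``containment up to inclusion'' steps are right: every $A\in\A$ is a transversal of $b(\A)$ and hence contains some $A'\in b(b(\A))$; and if $C\in b(b(\A))$ contained no member of $\A$, then $S\setminus C$ would be a transversal of $\A$, yielding a $B\in b(\A)$ disjoint from $C$, a contradiction. The sandwich argument $A''\subseteq A'\subseteq A$ with $A'',A\in\A$ and $\A$ a clutter correctly collapses to $A=A'$, and the symmetric chaining for $b(b(\A))$ works because $b(b(\A))$ is also a clutter. The only caveat is that the ``contains a minimal transversal'' step relies on $S$ being finite (or at least on some finiteness/well-foundedness assumption), which in this paper's context is harmless but worth a word if you write it out.
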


%\mb{just between us: we can consider x : number of maximum is, and mmbs/beta is fpt by alpha+x}

%\mb{how to use this theorem to compute $\beta$ ? in a class where $\beta$ is "hereditary" ? even in an "artificial way ?" In a class where there is a poly number of maximum indep set, does it lead to anything ?}

If we apply Theorem~\ref{thm:duality} to our setting, namely with $\A$ being the set of all \mis of a graph $G$, we get that $b(\A)$ is the set of all minimal blocking sets,
and that the set of minimal sets intersecting all the sets in $b(\A)$ is the set of all \mis.
Even if this theorem gives a relation between \mbs and \mis, it seems, to the best of our knowledge, that it does not provide
a way to compute $\alpha(G)$ from $\mmbs(G)$, or $\mmbs(G)$ from $\alpha(G)$.

Let us start with the easy direction.
\begin{property}
  Let $\F$ be a hereditary graph class.
  If the problem of computing an \mbs (of any size) is polynomial on $\F$, then \MIS is polynomial on $\F$.
  This implies that if \MMBS is polynomial on $\F$, then \MIS is polynomial on $\F$.
\end{property}

%\mb{YES, STATE IT with f, and say in particular, if poly then polywe could state this in a more general way like "if \mbs can be computed in $f(\F,G)$ then ... . Thus, poly implies poly, fpt/ xxx implies fpt/xxx. Cite the second item with F heredarity as open pb ?}

\begin{proof}
  Suppose that we have an algorithm that, given a graph $G \in \F$, outputs in polynomial time an \mbs $\Y$ of $G$.
  According to Lemma~\ref{lem:prelim} (Property~\ref{prop1}), for every $v \in \Y$ there exists a \mis  $I_v$ such that $I_v \cap \Y = \{v\}$,
  implying that $\alpha(G \setminus \Y) = \alpha(G)-1$. As $G \setminus \Y \in \F$ because $\F$ is hereditary, we can repeat the same argument to $G \setminus \Y$, stopping when  we obtain an empty graph. It follows that $\alpha(G)$ is equal to the number of iterations of this procedure.
%  Remark : we could also say : take any $v \in B$, $v$ belongs to an $\alpha$, and thus $\alpha(G) = 1+\alpha(G \setminus N[v])$.
\end{proof}

Let us now show that there is no hope to get the same kind of property in the backward direction.
We point out a related result in~\cite{DBLP:conf/stacs/HolsKP20} showing that there is a graph class $\F$ where $\mmbs(\F)=1$ (as there is a unique \mis for any $G \in \F$, see Property~\ref{prop2} of Lemma~\ref{lem:prelim}),
but \MIS is not polynomial unless $\NP=\RP$.
This result is obtained through a reduction from \textsc{Unique-SAT} and guarantees that if the original instance
is a \yes-instance, then there is a unique \mis of size $k$, and otherwise a unique \mis of size $k-1$.
In the following property, the situation is different as we target a complexity result for \MMBS,  and not \MIS. For a graph class $\F$, let $\alpha(\F)=\sup_{G \in \F}\alpha(G)$.

%we we
%for the case where $\alpha(\F)$ is constant we use a different idea,
%and for the \Woness result, as we
%  and thus reduces from. In the follow property, for mis constant need other idea as mis is not constant in..,
%  and for the W1 hard we use same rmk on unique maximum is}.

\begin{proposition}\label{prop:hardness} There exist
  \begin{itemize}
  \item a (hereditary) graph class $\F$ where $\alpha(\F) \le 2$ and on which \MMBS  is \NPh (implying that $2$-\MMBS is \NPh), and
  \item a graph class $\F$ where \MIS is polynomial, and \MMBS/$\alpha$
is \Woneh, even the particular case of deciding, given an input graph $G$, whether $\mmbs(G) > 1$. This implies that \MMBS/$\beta$ is \ParaNPH, and that \MMBS/($\alpha+\beta$) is \Woneh.
  \end{itemize}
\end{proposition}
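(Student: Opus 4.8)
\textbf{Plan for the proof of Proposition~\ref{prop:hardness}.}

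For the first item, the plan is to observe that the case $\alpha = 2$ of \MMBS is very close to \MMVC, which is \NPh by~\cite{BORIA201562}. More precisely, given a graph $H$ without isolated vertices, I would let $\bar{H}$ denote its complement; then the maximum independent sets of $\bar{H}$ are exactly the maximum cliques of $H$, i.e., the edges of $H$ when $H$ is triangle-free (and one should restrict to such instances, which keeps \MMVC \NPh, or more simply work with the ``edge'' hypergraph directly). A minimal blocking set of $\bar H$ is then a minimal vertex cover of $H$. To force $\alpha(\bar H) = 2$ one takes $H$ to be triangle-free with no isolated vertex; then $\F$ is the (hereditary closure of the) class of complements of triangle-free graphs, $\alpha(\F) \le 2$, and $\mmbs(\bar H) = \mmvc(H)$, so \MMBS on $\F$ is \NPh. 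The routine verification is that this reduction is polynomial and that the $\alpha \le 2$ constraint is preserved under taking induced subgraphs (it is, since $\alpha$ is monotone under vertex deletion). One has to be slightly careful with the definition of blocking set when there may be several independent-set sizes, but here all maximal independent sets of $\bar H$ have size exactly $2$ (no isolated vertex), so every blocking set must hit every edge of $H$.

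For the second item, the plan is a parameterized reduction from \MCIS (\LMCIS), which is \Woneh parameterized by the number $k$ of color classes. The key is Property~\ref{prop2} of Lemma~\ref{lem:prelim}: $\mmbs(G) > 1$ if and only if $G$ has at least two distinct maximum independent sets. So, starting from an instance $(G,k)$ of \MCIS with $V(G) = V_1 \uplus \dots \uplus V_k$ (each $V_i$ a clique), I would build a graph $G'$ such that $G'$ has a maximum independent set of a prescribed size $s$, and such that $G'$ has \emph{two} maximum independent sets if and only if $(G,k)$ is a \yes-instance, while ensuring that the class $\F$ generated by all such $G'$ has polynomial \MIS. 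The natural construction: attach to $G$ a fixed ``default'' independent set gadget (e.g., a set $D$ of $k$ pairwise non-adjacent vertices, made adjacent to nothing, or adjacent to all of $V(G)$ so it is forced) of size exactly $k$, together with enough padding so that $\alpha(G') = k$ is achieved by $D$ always, and is achieved by a second set precisely when $G$ has a multicolored independent set of size $k$. One then argues that on the resulting class, $\alpha$ is trivially computable because $\alpha(G')$ is always the hard-wired value $k$ (read off from the gadget), so \MIS is polynomial on $\F$, whereas deciding $\mmbs(G') > 1$ is \Woneh. Since the reduction has $\beta = 2$ fixed and parameter $\alpha = k$, this gives \Wonehness of \MMBS/$\alpha$ (even for $\beta$ fixed), hence \ParaNPhardness of \MMBS/$\beta$ (as \MCIS, equivalently \MIS, is \NPh for unbounded $k$) and \Wonehness of \MMBS/$(\alpha+\beta)$.

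The main obstacle is the second construction: one must simultaneously (i) pin down $\alpha(G')$ to a value that is \emph{easy} to read off (so that \MIS stays polynomial on $\F$ — this is what makes the statement nontrivial, cf.\ the discussion of the \textsc{Unique-SAT} reduction of~\cite{DBLP:conf/stacs/HolsKP20} which did \emph{not} achieve polynomial \MIS), and (ii) guarantee that the ``second'' maximum independent set exists exactly when $(G,k)$ is a \yes-instance, with no spurious maximum independent sets of the target size coming from the gadget itself or from mixing gadget and original vertices. Controlling condition (ii) is where the multicolored structure is used: making the padding interact with the color classes so that any independent set of size $k$ other than the default one must pick exactly one vertex per $V_i$ and be independent in $G$. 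I would handle this by a careful choice of adjacencies between the default set $D$, the clique classes $V_i$, and possibly $k$ additional ``switch'' vertices, and then check by a short case analysis that the only maximum independent sets of $G'$ are $D$ and (the extensions of) multicolored independent sets of $G$; the rest is routine bookkeeping of sizes and of hereditariness of $\F$.
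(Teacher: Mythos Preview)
Your proposal is correct and follows essentially the same approach as the paper on both items: complementation of triangle-free instances of \MMVC for the first, and a reduction from \MCIS exploiting Property~\ref{prop2} of Lemma~\ref{lem:prelim} for the second.

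One remark on the second item: you overestimate the difficulty of the construction. The simplest version you mention already works cleanly --- take $G'$ to be the \emph{complete join} of $G$ with an independent set $D$ of size $k$ (i.e., every vertex of $D$ is adjacent to every vertex of $G$). Then no independent set of $G'$ can mix vertices of $D$ and of $G$, so the maximum independent sets of $G'$ are exactly $D$ together with the independent sets of size $k$ in $G$; since each $V_i$ is a clique, any such set is automatically multicolored. No ``switch'' vertices or further case analysis are needed, and $\alpha(G') = k$ can be read off from $|D|$, making \MIS polynomial on the resulting class. The paper uses exactly this construction.
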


\begin{proof}
We consider the reduction of Boria et al.~\cite[Theorem 1]{BORIA201562} from \MIS to \MMVC showing an inapproximability result for \MMVC.
  Given an input graph $G_0$ of the \MIS problem, this reduction produces a graph $G$ by starting from $G_0$, and adding for every $v \in V(G_0)$ a new private vertex $v_p$
  with $N_{G}(v_p) = \{v\}$. As it is known that \MIS remains \NPh on triangle-free graphs (just by subdividing every edge twice), % \ig{add ref?},
  and as this reduction does not create triangles, we get that \MMVC is \NPh on triangle-free graphs.

  To prove our first statement, we reduce from \MMVC on triangle-free graphs, and given an input $G$ of \MMVC, we define our input $G'$ of \MMBS as the complement of $G$
  (that is, the graph obtained from $G$ by swapping edges and non-edges). We may also assume that $G$ contains at least one edge.
  Observe that as $G$ is triangle-free and $G$ contains at least one edge, $\alpha(G') = 2$. Moreover, there is a bijection between edges of $G$ and \mis of $G'$.
  This implies that for every subset $\Y \subseteq V(G)$, $\Y$ is a vertex cover of $G$ if and only if $\Y$ is a \bs of $G'$, and thus that $\Y$ is a minimal vertex cover of $G$ if and only if $\Y$ is an \mbs of $G'$.

%\mb{in a previous version we added an extra vertex w on G1, and created a stable set of size k+1 in G2. Seems useless}
  To prove our second statement, we reduce from the \textsc{Multicolored Independent Set} (\MCIS) problem.
  %Recall that given an integer $k$ and a
  %graph $G$ such that $V(G)$ is partitioned into $k$ cliques $V_i$, $i \in [k]$, the goal of \MCIS is to decide if $\alpha(G) \ge k$.
  It is known~\cite{CyganFKLMPPS15} that \MCIS/$k$ is \Woneh. Given an input $(G,k)$ of \MCIS, where $V(G)$ is partitioned into $k$ cliques $\{V_i \mid i \in [k]\}$,  let $G_1$ be a copy of $G$ and
  let $G_2$ be the graph composed of an \is of size $k$. We define $G'$ as the graph obtained by taking the disjoint union of $G_1$ and $G_2$, and adding all edges between $V(G_1)$ and $V(G_2)$.
  Observe that $|V(G')|=|V(G)|+k$, $\alpha(G') \le k$ as $\alpha(G_i) \le k$, and $\alpha(G')=k$ as $V(G_2)$ is an \is.

  If $(G,k)$ is a \yes-instance, there are  two distinct (even disjoint) \mis $I_i$ in
  $G'$: we can define $I_1 \subseteq V(G_1)$ as an \is of size $k$ in $G$, and $I_2 = V(G_2)$. This implies by Lemma~\ref{lem:prelim} (Property~\ref{prop2})
  that $\mmbs(G') \ge 2$.

  Conversely, if $(G,k)$ is a \no-instance, the unique \mis of $G'$ is $V(G_2)$, implying by Lemma~\ref{lem:prelim} (Property~\ref{prop2}) that $\mmbs(G')=1$.
  Finally, this is a parameterized reduction as $\alpha(G') \le k$, and \MIS is clearly polynomial restricted to the family of graphs produced by the reduction.
\end{proof}

Let us now turn to lower bounds for \MMHS/$\beta$.
It is known that \MMHS/$\beta$ is \Woneh~\cite{BAZGAN20182}
and that, unless the \ETH fails, \SIMPLEEXTMMHS cannot be solved in time $f(|X|) \cdot (n+m)^{o(|X|)}$ for any computable function $f: \Bbb{N} \to \Bbb{N}$, where $n$ and $m$ are the number of vertices and hyperedges of the input hypergraph, respectively~\cite{blasius2019efficiently}.
In our next theorem we prove the same lower bound for \UPDOM, transferring the result to \MMHS as well (recall that \UPDOM is a special case of \MMHS).

Let us mention that the reduction for \MMHS/$\beta$ of Bazgan et al.~\cite{BAZGAN20182} is  a reduction from \textsc{Multicolored Independent Set} parameterized by $k$, showing that, in fact,  \UPDOM is \Woneh parameterized by the solution size,
where the parameter of the \UPDOM instance is $\O(k^2)$. While being indeed a
parameterized reduction, it only implies
%(using Chen et al.~\cite{CHEN20061346} result as in the proof of Theorem~\ref{thm:updom})
that, unless the \ETH fails, \UPDOM cannot be solved in time $f(k) \cdot (n+m)^{o(\sqrt{k})}$ for any computable function $f: \Bbb{N} \to \Bbb{N}$. We also mention that very recently and independently from our work,
Theorem~\ref{thm:updom} has also been proved by Dublois et al.~\cite{dublois2021upper}, by using a reduction quite similar to ours.

% \ig{say that our reduction is (hopefully) simpler -- I still have not checked theirs in detail, I will do it. In any case, we have to cite~\cite{dublois2021upper} also in the intro}.

%% We first need the following technical remark.
%% Let $r$ be the reduction from \EXTMMHS to \SIMPLEEXTMMHS that given an instance $(\H,X,Y)$ creates $(\H',X)$ where $V(\H')=V(\H) \setminus Y$
%% and $E(\H') = \{H \setminus Y, H \in E(\H)\}$. Observe that these two instances are equivalent, and preseve all parameters considered in the theorem below.
%% The following result was proved in~\cite{}\ja{blasius} for \EXTMMHS, but given previous reduction we can state it for \SIMPLEEXTMMHS.

\begin{theorem}\label{thm:updom}
Unless the \ETH fails, the \UPDOM problem  cannot be solved in time $f(k) \cdot |V(G)|^{o(k)}$ for any computable function $f: \Bbb{N} \to \Bbb{N}$.
\end{theorem}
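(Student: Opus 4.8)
The plan is to give a reduction from $3$-SAT (or, more conveniently, from a suitable variant of \LMCIS or a gridded/partitioned constraint satisfaction problem) that is \emph{linear} in the parameter, so that the lower bound on \UPDOM follows from \ETH via the standard sparsification-based chain of reductions. Concretely, the cleanest route is to reduce from \LMCIS: we are given a graph $G_0$ whose vertex set is partitioned into $k$ cliques $V_1,\dots,V_k$, and we want to decide whether $\alpha(G_0)\ge k$. Since, assuming \ETH, \LMCIS (equivalently \textsc{Clique} in the partitioned form) cannot be solved in time $f(k)\cdot |V(G_0)|^{o(k)}$, it suffices to build in polynomial time an instance $(G,k')$ of \UPDOM with $k'=\O(k)$ such that $G$ has a minimal dominating set of size at least $k'$ if and only if $\alpha(G_0)\ge k$. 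The key quantitative requirement, and the place where the previous reduction of Bazgan et al.~\cite{BAZGAN20182} was lossy, is that the target parameter $k'$ must be \emph{linear} in $k$ rather than quadratic: their reduction created a gadget essentially for every pair of color classes, blowing the parameter up to $\Theta(k^2)$, which is why it only yields a $(n+m)^{o(\sqrt{k})}$ lower bound.

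The heart of the construction is a selection gadget: for each color class $V_i$ we want the minimal dominating set $D$ to ``choose'' exactly one vertex of $V_i$ to be in $D$, contributing a bounded number of vertices per class so that the total budget is $\Theta(k)$. The standard trick for \UPDOM is to exploit the fact that a \emph{minimal} dominating set is forced to be large precisely when many vertices have private neighbors; so each candidate vertex $v\in V_i$ is equipped with a small private-neighbor gadget (a pendant vertex, or a short path/triangle attached only to $v$) that ``rewards'' putting $v$ into $D$ with one extra unit of budget, while a coordination vertex $c_i$ adjacent to all of $V_i$ ensures that at least one vertex of each class is selected and, combined with the cliques, that at most one is selected (if two vertices of the same clique were both in $D$, one of them would lose its private neighbor and $D$ would not be minimal). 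Edges of $G_0$ between different classes are encoded by small ``conflict'' gadgets that destroy the privateness — and hence the minimality-bonus — of a chosen pair of adjacent vertices, so that a minimal dominating set can reach the full budget $k'=ck$ exactly when the $k$ chosen vertices form an independent set in $G_0$, i.e.\ a multicolored independent set. I would set $k'$ to be $k$ times the per-class contribution (a small constant) plus $\O(1)$ global vertices.

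I would carry this out in the following order. First, fix the exact per-class gadget (clique $V_i$ plus coordinator plus one private pendant per vertex) and verify the ``choose exactly one'' behavior for minimal dominating sets, including the corner cases where a coordinator dominates itself or where some $V_i$ has few vertices. Second, fix the inter-class conflict gadget and prove the crucial equivalence: the maximum size of a minimal dominating set equals $k'$ if and only if there is a selection of one vertex per class that is independent in $G_0$; here both directions need care — the forward direction must construct, from a multicolored independent set, a genuinely minimal (not just dominating) set of the right size by also adding all the rewarded private neighbors, and the backward direction must argue that no minimal dominating set can exceed $k'$ unless all conflicts are avoided (an exchange/counting argument bounding the contribution of each gadget). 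Third, observe $k'=\O(k)$ and that the construction is polynomial-time, and invoke the \ETH-based lower bound for \LMCIS (via \cite{ImpagliazzoPZ01,CyganFKLMPPS15}) to conclude that \UPDOM admits no $f(k)\cdot|V(G)|^{o(k)}$ algorithm; since \UPDOM is a special case of \MMHS by the closed-neighborhood encoding (as recalled after \cite{BAZGAN20182}), the same bound transfers to \MMHS with parameter $\beta$, which is exactly Corollary~\ref{cor:mmhs} referenced in Table~\ref{fig:array}.

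The main obstacle I anticipate is the backward (soundness) direction of the gadget analysis: proving that when $G_0$ has no multicolored independent set, \emph{every} inclusion-minimal dominating set of $G$ has size strictly less than $k'$. Minimal dominating sets are slippery because a vertex can be ``saved'' in $D$ either by having a private neighbor or by being its own private dominator, and there is no monotonicity to lean on; one must argue gadget-by-gadget that the budget $k'$ is tight and that any selection hitting an edge of $G_0$ forces at least one unit of slack somewhere, so the total drops below $k'$. Getting the gadgets ``rigid'' enough that this counting is airtight — while keeping the per-class cost a fixed constant so the parameter stays linear — is the delicate part; the rest (the completeness direction and the \ETH bookkeeping) is routine.
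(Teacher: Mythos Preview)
Your high-level plan (reduce from \LMCIS with a linear parameter blow-up) is exactly right, and it is what the paper does. But the concrete gadget you sketch is broken in a way that kills the reduction outright, not just in the soundness direction you flag.

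You propose, for each color class $V_i$, the clique $V_i$ together with a coordinator $c_i$ and \emph{one private pendant $p_v$ for every $v\in V_i$}. The problem is that pendants make minimal dominating sets \emph{large} rather than small: the set
\[
D \;=\; \{\,p_v : v\in V(G_0)\,\}\ \cup\ \{\,c_i : i\in[k]\,\}
\]
is a minimal dominating set of your graph $G$. Indeed every $p_v$ is needed because removing it leaves $p_v$ undominated, and every $c_i$ is needed because removing it leaves $c_i$ undominated (its only neighbors are in $V_i$, none of which are in $D$). Thus $G$ \emph{always} has a minimal dominating set of size $|V(G_0)|+k$, independently of whether $G_0$ has a multicolored independent set. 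So there is no constant $c$ for which the threshold $k'=ck$ separates \yes- from \no-instances; the reduction does not distinguish them at all. Adding edge-conflict gadgets on top cannot fix this, since the set $D$ above already ignores both the selection vertices and any conflict vertices.

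The underlying issue is structural: for \UPDOM you need each per-class gadget to admit only a \emph{bounded} number of vertices in \emph{any} minimal dominating set. That forces the gadget to be dense, not sparse. The paper achieves this by taking, for each class $V_i$, \emph{three} copies $A_i,B_i,C_i$ whose union $U_i$ is a clique minus the triangles $\{(v_A,v_B,v_C):v\in V_i\}$; the $C$-copies have no edges leaving $U_i$, and inter-class edges are encoded directly between $A$- and $B$-copies. The density of $U_i$ is what makes the counting work: one first shows $|D\cap A_i|\le 1$ and $|D\cap B_i|\le 1$ (Claim~\ref{claim:-at-most-one-each}), then $|D\cap U_i|\le 3$ (Claim~\ref{claim:at-most-three}), so that $|D|\le 3k$ always; equality then forces one vertex per copy, all three being copies of the \emph{same} $v\in V_i$, and the inter-class edges force these $v$'s to be pairwise non-adjacent. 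The target is $k'=3k$. No pendants, no coordinators, no separate conflict gadgets.

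So your plan is the right one, but you need to replace the pendant idea by a dense gadget in which any minimal dominating set picks $O(1)$ vertices per class.
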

\begin{proof}
Chen et al.~\cite{CHEN20061346} proved that, unless the \ETH fails, the {\sc $k$-Clique} problem (where we have to decide if a given graph have a clique of size at least $k$) cannot be solved in time $f(k) \cdot n^{o(k)}$ on $n$-vertex graphs for any computable function $f: \Bbb{N} \to \Bbb{N}$.
%In the {\sc Multicolored $k$-Independent Set} problem, we are given a graph $G$ and an integer parameter $k$, such that $V(G)$ is partitioned into $k$ sets $V_1 \uplus \cdots \uplus V_k$, and the question is whether $G$ contains an independent set containing exactly one vertex in $V_i$, for $i \in [k]$.
Since there is a simple parameterized reduction from {\sc $k$-Clique} to te \textsc{Multicolored Independent Set} problem parameterized by $k$, namely \MCIS/$k$,
with linear dependency on the parameter (see for instance~\cite{CyganFKLMPPS15}), the result of Chen et al.~\cite{CHEN20061346} implies that
the \MCIS cannot be solved in time $f(k) \cdot n^{o(k)}$ on $n$-vertex graphs for any computable function $f: \Bbb{N} \to \Bbb{N}$.

We present a parameterized reduction from \MCIS to \UPDOM such that, given an instance $(G,k)$ of \MCIS, creates in polynomial time a graph $G'$ that contains a minimal dominating of size at least $3k$ if and only if $G$ contains a multicolored \is of size $k$. By the above discussion, such a  reduction concludes the proof of the theorem.

Given $G$, with $V(G)= V_1 \uplus \cdots \uplus V_k$, for every $i \in [k]$ we add to $G'$ three copies $A_i,B_i,C_i$ of $V_i$, and let $U_i = A_i \cup B_i \cup C_i$ be their union. We denote $A = \bigcup_{i \in [k]}A_i$, $B = \bigcup_{i \in [k]}B_i$, $C = \bigcup_{i \in [k]}C_i$, and, for a vertex $v \in V(G)$, we denote by $v_A,v_B,v_C$ its corresponding copy in $A,B,C$, respectively. For every $i \in [k]$, the set
 $U_i$ induces a clique minus the triangles $\{ (v_A,v_B,v_C) \mid v \in V_i\}$. That is, within the same color $i$, every vertex is adjacent to all other vertices except for its two other copies. For every edge $\{u,v\} \in E(G)$ such that $u \in V_i$ and $v \in V_j$ with $i \neq j$, we add to $G'$ the edges $\{u_A,v_B\}$ and $\{u_B,v_A\}$. This concludes the construction of $G'$. We claim that $G$ contains a multicolored \is of size $k$ if and only if $G'$ that contains a minimal dominating of size at least $3k$.

Let first $S \subseteq V(G)$ be a multicolored \is of size $k$. Let $D \subseteq V(G')$ contain, for every vertex $v \in S$, its three copies $v_A,v_B,v_C$. Note that $|D|=3k$. We claim that $D$ is a minimal dominating set of $G'$. Since $D$ contains a vertex in each of the $3k$ cliques into which $V(G')$ is partitioned, $D$ is clearly a dominating set. Consider a vertex $v_A \in D \cap A$ (the case $v_B \in D \cap B$ is symmetric). Then $D \setminus \{v_A\}$ is not a dominating set, since by the hypothesis that $S$ is an \is, no vertex in $D \setminus \{v_A\}$ is adjacent to $v_A$. Consider now a vertex $v_C \in D \cap C$, with $v \in V_i$. Then $D \setminus \{v_C\}$ is not a dominating set either, as $D \cap (A_i \cup B_i) = \{v_A,v_B\}$, and none of $v_A$ and $v_B$ is adjacent to $v_C$. Hence, $D$ is a minimal dominating set of $G'$ and we are done.

Conversely, let $D \subseteq V(G')$ be a minimal dominating set with $|D| \geq 3k$.

\begin{claim}\label{claim:-at-most-one-each}
For every $i \in [k]$, $|D \cap A_i| \leq 1$ and $|D \cap B_i| \leq 1$.
\end{claim}
\begin{proofclaim}
We say that an index $i \in [k]$ is \emph{abnormal} if $|D \cap A_i| \geq 2$ or $|D \cap B_i| \geq 2$ (or both), and \emph{normal} otherwise. We will construct a set $D' \subseteq V(G')$ with $|D'| = |D|$ such that if $i$ is normal, $|D' \cap U_i | \leq 3$, and if $i$ is abnormal, $|D' \cap U_i | \leq 2$. Hence, if there exists an abnormal index, it holds that
$|D| = |D'| < 3k$, contradicting the hypothesis that $|D| \geq 3k$. We now proceed to the construction of  $D'$, which is not required to be a dominating set of $G'$. We start with $D' = D$, and we update $D'$ as follows.

Let $i$ be an abnormal index. Since $|D \cap A_i| \geq 2$ or $|D \cap B_i| \geq 2$, by construction of $G'$ we have that $D \cap (A_i \cup B_i)$ dominates $U_i$, and since $N(C_i) = A_i \cup B_i$, necessarily  $|D \cap C_i| = 0$, as otherwise $D$ would not be minimal. If $|D \cap U_i | \leq 2$ we do nothing, as we already have that $|D' \cap U_i | = |D \cap U_i | \leq 2$. Assume henceforth that $|D \cap U_i | = |D \cap (A_i \cup B_i)| \geq 3$.

To simplify the presentation, suppose that $|D \cap A_i| \geq |D \cap B_i|$ (the other case is symmetric), so we have that $|D \cap A_i| \geq 2$. Note that any two vertices in the set $D \cap A_i$, say $u_A$ and $v_A$, dominate the whole set $U_i$. Hence, since $D$ is a minimal dominating set of $G'$, for every other vertex $w_A \in (D \cap A_i) \setminus \{u_A,v_A\}$ (resp. $w'_B \in D \cap B_i$) there must exist an index $j \neq i$ (resp. $\ell \neq i$) and a vertex $z_B \in B_j$ (resp. $z'_A \in A_{\ell}$) not in $D$ and dominated only by $w_A$ (resp. $w'_B$), that is, with $z_B \notin D$ (resp. $z'_A \notin D$) and $N_D(z_B) = \{w_A\}$ (resp. $N_D(z'_A) = \{w'_B\}$); see Figure~\ref{fig:fig-UD}(a) for an illustration, where the vertices in $D$ are depicted in red. Note that such an index $j$ (resp. $\ell$) is necessarily normal, as otherwise vertex $z_B$ (resp. $z'_A$) would be already dominated within $U_j$ (resp. $U_{\ell}$). Note also that, for the same reason, $D \cap B_j = \emptyset$ (resp. $D \cap A_{\ell} = \emptyset$). For each such a vertex $w_A \in D$ (resp. $w'_B \in D$), we remove vertex $w_A$ (resp. $w'_B$) from $D'$ and we add vertex $z_B$ (resp. $z'_A$) to $D'$;  see Figure~\ref{fig:fig-UD}(b) for an illustration, where the vertices in $D'$ are depicted in red. We say that vertex $z_B \in B_j$ (resp. $z'_A \in A_{\ell}$) is a \emph{sink}. This concludes the construction of $D'$. It just remains to verify that the claimed properties of $D'$ are satisfied.

\begin{center}
    \begin{figure}[htb]
      \begin{center}
        \includegraphics[width=.82\textwidth]{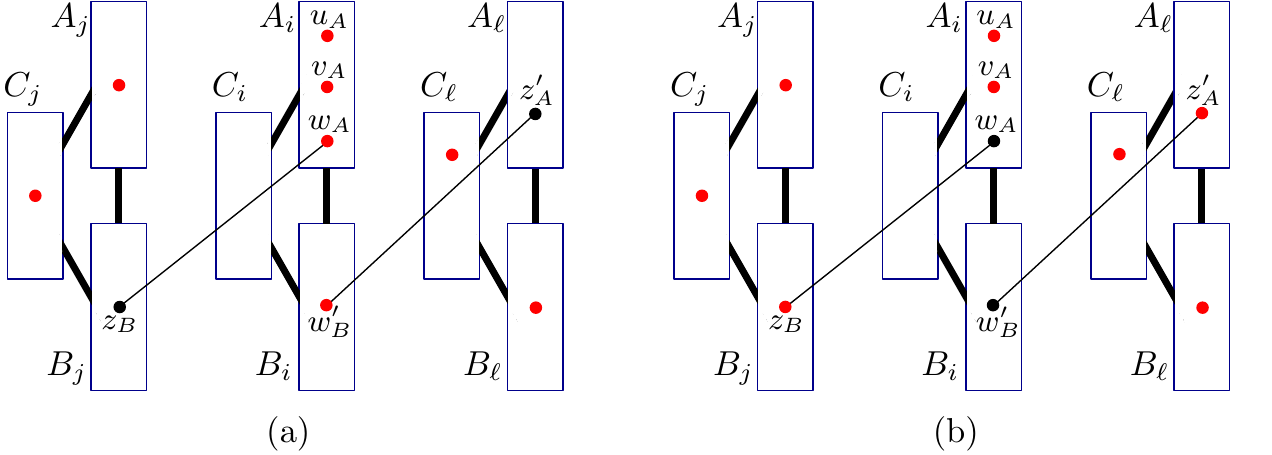}
      \end{center}
      \caption{Configuration in the proof of Claim~\ref{claim:-at-most-one-each}. Index $i$ is abnormal, while indices $j$ and $\ell$ are normal. Vertices $z_B$ and $z'_A$ are sinks. (a) The vertices in $D$ are depicted in red. (b) The vertices in $D'$ are depicted in red.}
      \label{fig:fig-UD}
    \end{figure}
  \end{center}

By construction, we clearly have that $|D'| = |D|$. Note that, if $i$ is an abnormal index as in the above paragraph, it cannot contain any sink since all the vertices of $U_i$ are already dominated by $D \cap U_i$. Hence, no vertex is added to $D' \cap U_i$ and it holds that $D' \cap U_i = \{u_A, v_A\}$, so we indeed have that $|D' \cap U_i| \leq 2$.

It remains to verify that, if $j$ is a normal index, then $|D' \cap U_j| \leq 3$.
Since the vertices in $C_j$ have neighbors only in $U_j$, necessarily $|D \cap U_j| \geq 1$. Hence, at most one vertex in $A_j$ and at most one vertex in $B_j$ are not dominated by the vertices in $D \cap U_j$. Thus, each of $A_j$ and $B_j$ contains at most one sink. We distinguish three cases according to the number of sinks in $A_j \cup B_j$.

Suppose first that $A_j \cup B_j$ contains no sink. Then $|D' \cap U_j| = |D \cap U_j| \leq 3$, where the last inequality follows easily by using that, since $j$ is normal, $|D \cap A_j| \leq 1$ and $|D \cap B_j| \leq 1$.

Suppose now that $A_j \cup B_j$ contains exactly one sink, so we have that $|D' \cap U_i| = |D \cap U_i| +1$. Suppose without loss of generality that the sink is a vertex $z_A \in A_j$, so we have $|D \cap A_j| = 0$. Since $z_A$ is not dominated by $D \cap U_i$, necessarily $|D \cap B_j| \leq 1$ and $|D \cap C_j| \leq 1$, so $|D \cap U_i| \leq 2$ and $|D' \cap U_i| \leq 3$.

Finally, suppose that $A_j \cup B_j$ contains two sinks $z_A \in A_j$ and $z'_B \in B_j$, so we have that $|D' \cap U_i| = |D \cap U_i| +2$, $|D \cap A_j| = 0$, and $|D \cap B_j| = 0$. Also, since none of $z_A$ and $z'_B$ can be dominated by $D \cap U_j$, necessarily $z=z'$ and $D \cap C_j = \{z_C\}$, so $|D \cap C_j| =1$. Thus, $|D \cap U_i| \leq 1$ and $|D' \cap U_i| \leq 3$, and the claim follows.
\end{proofclaim}

\begin{claim}\label{claim:at-most-three}
For every $i \in [k]$, $|D \cap U_i| \leq 3$.
\end{claim}
\begin{proofclaim}
Suppose for contradiction that there exists $i \in [k]$ such that $|D \cap U_i| \geq 4$. By Claim~\ref{claim:-at-most-one-each}, necessarily $|D \cap C_i| \geq 2$. If $|D \cap C_i| \geq 3$, then deleting all but any two vertices in $D \cap C_i$ results in a proper subset of $D$ that is still a dominating set of $G'$, contradicting the minimality of $D$. Hence $|D \cap C_i| =2$, $|D \cap A_i| =1$, and $|D \cap B_i| =1$. Let $u_A \in D \cap A_i$ and let $v_C,w_C \in D \cap C_i$. At least one among $v$ and $w$, say $v$, is not equal to $u$. Then the set $D \setminus \{w_C\}$ is still a dominating set of $G'$, contradicting again the minimality of $D$.
\end{proofclaim}

\begin{claim}\label{claim:one-each}
For every $i \in [k]$, $|D \cap A_i| = |D \cap B_i| = |D \cap C_i| = 1$.
\end{claim}
\begin{proofclaim}
Since by hypothesis we have that $|D| \geq 3k$ and by Claim~\ref{claim:at-most-three} it holds that $|D \cap U_i| \leq 3$ for every $i \in [k]$, necessarily $|D \cap U_i| = 3$ for every $i \in [k]$. Since by Claim~\ref{claim:-at-most-one-each} we have that
 $|D \cap A_i| \leq 1$ and $|D \cap B_i| \leq 1$ for every $i \in [k]$, we conclude that
 $|D \cap A_i| = |D \cap B_i| = |D \cap C_i| = 1$ for every $i \in [k]$.
\end{proofclaim}

We proceed to define from $D$ a multicolored \is $S \subseteq V(G)$ with $|S|= k$. Consider an arbitrary index $i \in [k]$. By Claim~\ref{claim:one-each}, $D \cap A_i = \{u_A\}$, $D \cap B_i = \{v_B\}$, and $D \cap C_i = \{w_C\}$. Note that if $u \neq v$, then $u_A$ and $v_B$ would dominate the whole set $U_i$ and $w_C$ could be removed from $D$, contradicting its minimality. Thus, we have that $u = v$, which in turn implies that $w=u$ as well. We define $S \cap V_i = \{v\}$. It remains to verify that $S$ is indeed an \is of $G$. Consider $u,v \in S$ with $u \in V_i$ and $v \in V_j$. If $\{u,v\} \in E(G)$ then $D \setminus \{u_A\}$ would still be a dominating set of $G'$. Indeed, vertex $u_A$ would be dominated by $v_B$, and the other vertices in $A_i$ would still be dominated by $u_B$. Thus, $\{u,v\} \notin E(G)$ and $S$ is an \is in $G$.
\end{proof}

Theorem~\ref{thm:updom} immediately yields the following corollary for \MMHS.

\begin{corollary}\label{cor:mmhs}
Unless the \ETH fails, \MMHS cannot be solved in time $f(\beta) \cdot (|V(\H)|+|E(\H)|)^{o(\beta)}$ for any computable function  $f: \Bbb{N} \to \Bbb{N}$.
\end{corollary}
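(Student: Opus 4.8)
The plan is to exploit the fact, already recalled in the introduction, that \UPDOM is a special case of \MMHS, and to track carefully how the instance size and the parameter behave under the corresponding reduction. Concretely, starting from an instance $(G,k)$ of \UPDOM, I would build the hypergraph $\H$ with $V(\H)=V(G)$ and one hyperedge $N_G[v]$ (the closed neighbourhood of $v$ in $G$) for each $v\in V(G)$, and set $\beta=k$. This transformation is plainly computable in polynomial time, and it satisfies $|V(\H)|+|E(\H)| = 2|V(G)|$ and $\beta=k$.

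Next I would check the correctness of the reduction, i.e.\ that $(G,k)$ is a \yes-instance of \UPDOM if and only if $(\H,\beta)$ is a \yes-instance of \MMHS. For this it suffices to observe that a set $S\subseteq V(G)$ is a minimal dominating set of $G$ if and only if it is a minimal hitting set of $\H$: a set hits every closed neighbourhood $N_G[v]$ exactly when it dominates $G$, and by Property~\ref{prop1bis} of Lemma~\ref{lem:prelim}, minimality of a hitting set $S$ of $\H$ amounts to every $u\in S$ having a private hyperedge $N_G[v]$ with $N_G[v]\cap S=\{u\}$, which is precisely the usual private-neighbour characterisation of minimality for dominating sets. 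Hence $\mmhs(\H)$ equals the maximum size of a minimal dominating set of $G$, and the two instances are equivalent.

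Finally I would argue by contradiction: if \MMHS could be solved in time $f(\beta)\cdot(|V(\H)|+|E(\H)|)^{o(\beta)}$ for some computable function $f$, then composing with the reduction above would solve \UPDOM in time $\O\!\left(f(k)\cdot(2|V(G)|)^{o(k)}\right)=f(k)\cdot|V(G)|^{o(k)}$, contradicting Theorem~\ref{thm:updom} unless the \ETH fails. I do not expect any genuine obstacle here; the only point to be mildly careful about is that the (linear) blow-up from $|V(G)|$ to $|V(\H)|+|E(\H)|=2|V(G)|$ does not spoil the exponent, which it does not, since $(2n)^{o(k)} = 2^{o(k)}\,n^{o(k)}$ and the factor $2^{o(k)}$ is absorbed into the computable function in front.
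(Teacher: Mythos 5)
Your proof is correct and follows exactly the route the paper takes: the corollary is obtained from Theorem~\ref{thm:updom} via the closed-neighbourhood reduction showing that \UPDOM is a special case of \MMHS (the paper states this transfer without detail, citing the earlier remark attributed to Bazgan et al.), and your size/parameter accounting is the same. You have merely written out explicitly the equivalence of minimal dominating sets and minimal hitting sets and the harmless linear blow-up, which the paper leaves implicit.
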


%% Using a parallel with $v2$ above, suppose we have an oracle giving us a vertex $v$ computing to a $\beta$ (to a maximum minimal $\alpha$-blocking set).
%% How can we compute $\beta$ ?
%% If we consider a $\beta$ $B$, using a refinement of the "equivalence lemma" that we stated, we can say (assuming the cleaning is done)
%% $B$ is a $\beta$ $\Rightarrow$ $B \setminus v$ is a blocking set of $G \setminus v$ and
%% $B \setminus v$ is NOT a blocking set of $G \setminus N[v]$ (otherwise we could remove $v$ from $B$ and it would not be minimal).
%% Thus (?), $B \setminus v$ is the maximum size of a minimal set $B$ such that $B$ is a blocking set of $v$ and not a blocking set of $G \setminus N[v]$.
%% To use induction we would like to return $Algo(G-v,N[v])$, where $Algo(G,Z)$ returns the max size of a minimal set which is a bs of $G$ and not of $G \setminus Z$.
%% Using an oracle for this more general version, can we do it?
%% Does an oracle computing $\alpha$ gives an oracle for this?
\subsection{Positive results}
\label{sec:positive}

Let us now turn to positive results, and consider a class where \MIS is polynomial.
As according to Proposition~\ref{prop:hardness} we cannot hope for solving \MMBS in polynomial time, we consider the parameterized complexity of the \MMBS problem.
The first results show the crucial difference between the problems of, given a graph $G$ and a positive integer $\beta$, deciding whether $\mmbs(G) = \beta$ and deciding whether $\mmbs(G) \ge \beta$.
%there exists an \mbs of size {\sl exactly} $\beta$, and deciding whether $\mmbs(G) \ge \beta$.
%% \mb{FPTness of \MMHS/($\alpha+\beta$) seems to be open! our reduction ruling out FPTness for \MMBS with same param does not apply,
%% and russian algorithms seems to be fpt/beta for fixed alpha, as the enumeration procedure is not fpt. Looks like an interesting pb, and amounts to extending the mmvc algo.
%% Do we ask the russian if this is indeed an open pb ? IN https://www.lamsade.dauphine.fr/~mlampis/papers/UD.pdf they are talking of approx of mmhs, but not fpt} \ig{Of course, it is always good to ask. And it would be even better if one always got answers!}
In any maximization problem, the first property implies the second one, but the backward implication is not always true.
In particular, for \MMBS, $\mmbs(G) \ge \beta$ does {\sl not} imply that $G$ contains an \mbs of size {\sl exactly} $\beta$,
and this is informally what makes the inequality version harder.

As observed in~\cite{FernauHDR} or in~\cite[Proposition 1]{DBLP:journals/disopt/Damaschke11}, deciding whether there exists a minimal hitting set of size exactly  $\beta$, or  at most $\beta$, in a hypergraph with hyperedges of size at most $\alpha$ can be trivially decided by a search-tree in time $\O^*(\alpha^\beta)$. However, we cannot use directly this result, as a reduction to \MMHS would require time $n^{\alpha(G)}$ to generate the hyperedges, and thus we have to define an ad-hoc algorithm, which is also based on branching.

%\ig{By the way, finally you didn't say anything about the enumeration issues of~\cite{DBLP:journals/disopt/Damaschke11} that we discussed?}
\mbvfinale{mention running time for MMHS/alpha+beta obtained avec kernel puis enum : ca fait n to alpha (a verif) imes alpha to the beta, qui est incomparable avec 2 to the alpha beta}

\begin{proposition}\label{prop:compute_beta_equal}
  Let $\F$ be a hereditary graph class on which \MIS is polynomial. Then
  \MMBSEG/$(\alpha+\beta)$ and \MMBSINFEG/$(\alpha+\beta)$ are \FPT restricted to input graphs in $\F$. More precisely, they can both be solved in time $\O^*(\alpha(G)^\beta)$.
\end{proposition}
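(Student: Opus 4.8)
The plan is to design a bounded-depth branching algorithm that directly constructs a minimal blocking set of size exactly $\beta$ (resp.\ at most $\beta$), exploiting the fact that \MIS is polynomial on $\F$ so that, for any $H \subseteq V(G)$ with $H \in \F$ (which holds by heredity), we can compute $\alpha(H)$ in polynomial time. The key structural fact, coming from Lemma~\ref{lem:prelim} (Property~\ref{prop1}), is that a set $\Y$ is an \mbs exactly when (i) $\alpha(G \setminus \Y) < \alpha(G)$ and (ii) for every $v \in \Y$ there is an \mis $I_v$ of $G$ with $I_v \cap \Y = \{v\}$; condition (ii) is equivalent to saying that for every $v \in \Y$ we have $\alpha\big(G \setminus (\Y \setminus \{v\})\big) = \alpha(G)$ and moreover $v$ lies in some \mis of $G \setminus (\Y \setminus \{v\})$, which in turn is equivalent to $\alpha\big(G \setminus (\Y \setminus \{v\}) \setminus N[v]\big) = \alpha(G) - 1$. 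All of these tests are polynomial-time checkable. So the algorithm will build $\Y$ incrementally while maintaining a partial guarantee that minimality can still be achieved.

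First I would set up the branching. Start with $\Y = \emptyset$. While $\alpha(G \setminus \Y) = \alpha(G)$, the current $\Y$ is not yet a blocking set, so there exists an \mis $I^*$ of $G$ disjoint from $\Y$; compute one such $I^*$ (possible in polynomial time: repeatedly test whether $\alpha(G \setminus \Y \setminus \{v\}) = \alpha(G)$ to greedily build it, or use self-reducibility). Since any blocking set extending $\Y$ must intersect $I^*$, branch over the choice of which vertex $v \in I^*$ to add to $\Y$; note $|I^*| = \alpha(G) \le \alpha$, so the branching factor is at most $\alpha$. After adding $v$, recurse. Each branch adds exactly one vertex to $\Y$, so the recursion depth is bounded by $\beta$ (we abort any branch in which $|\Y|$ would exceed $\beta$), giving the claimed $\O^*(\alpha^\beta)$ running time. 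When we reach a node where $\alpha(G \setminus \Y) < \alpha(G)$, we have a blocking set of size at most $\beta$; it remains to certify minimality, or to trim it.

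For \MMBSEG (size exactly $\beta$): at a leaf where $\Y$ is a blocking set with $|\Y| = s \le \beta$, we accept iff $\Y$ contains a \emph{minimal} blocking set of size exactly $\beta$. But since the only blocking sets of size at most $\beta$ we can get by trimming $\Y$ are subsets, and we want one of size exactly $\beta$, we should instead be careful: at a leaf with $|\Y| = s$, test whether some \mbs $\Y' \subseteq \Y$ has $|\Y'| = \beta$ — but since \mbs are defined by inclusion-minimality, this requires $s \ge \beta$ and, if $s = \beta$, just checking whether $\Y$ itself is minimal via the polynomial test above; if $s < \beta$ the branch fails. To make sure we do not miss the case where a branch could have been padded to reach size exactly $\beta$, I would have each branching step also allow the option of adding a vertex $v \in I^*$ only if the invariant ``for all $u \in \Y$, there is an \mis $I_u$ of $G$ with $I_u \cap \Y = \{u\}$'' can be maintained; concretely, we keep the invariant that the current $\Y$ is itself a minimal blocking set of $G[\text{something}]$ — actually the cleanest route is: maintain the invariant that $\Y$ is an \emph{independent-in-the-blocking-sense} partial solution, namely that $\Y$ is a minimal blocking set of the graph $G$ after we have already committed to hitting each chosen \mis exactly at the chosen vertex. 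Because this bookkeeping is delicate, I would instead phrase it as: a branch succeeds for \MMBSEG iff at its leaf $|\Y| = \beta$ and the polynomial minimality test passes; the correctness argument is that any \mbs $Z$ of size $\beta$ is found by the branch that, at each step, picks the (unique) vertex of $Z$ lying in the current $I^*$ — such a vertex exists because $Z$ is a blocking set, and it is unique because $Z$ is minimal so $Z$ meets the \mis $I^*$ in a single vertex once we are at the right stage (here one uses Property~\ref{prop1}). For \MMBSINFEG (size at most $\beta$): at a leaf where $\Y$ is a blocking set with $|\Y| \le \beta$, greedily remove vertices $v$ from $\Y$ as long as $\Y \setminus \{v\}$ is still a blocking set; the result is a minimal blocking set of size $\le |\Y| \le \beta$, so we accept. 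Conversely if $\mmbs(G) \le \beta$ then \emph{every} \mbs has size $\le \beta$, but we only need existence of \emph{one} \mbs of size $\le \beta$, which always holds (any graph has at least one \mbs, e.g.\ minimal subset of $V(G)$ hitting all \mis — nonempty when $\alpha(G) \ge 1$); wait — here one must be careful that \MMBSINFEG asks whether $\mmbs(G) \le \beta$, i.e.\ whether \emph{all} \mbs are small, so one instead argues: $\mmbs(G) > \beta$ iff some branch of the (un-truncated, depth-$(\beta{+}1)$) search reaches an \mbs of size $\beta + 1$, which is exactly the \MMBSEG algorithm run with threshold $\beta+1$ augmented to also accept larger minimal sets; thus $\mmbs(G) \le \beta$ is decided by running the exact-size procedure for each target value $\beta' \in \{\beta+1\}$ (or by a single search of depth $\beta+1$ that accepts iff it finds a minimal blocking set of size exactly $\beta+1$).

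The main obstacle I expect is precisely this last point: getting the branching to correctly and efficiently certify \emph{inclusion-minimality} of the constructed set, since $\mmbs(G) \ge \beta$ does not yield an \mbs of size exactly $\beta$ (as the excerpt emphasizes), so for the ``$=\beta$'' and ``$\le\beta$'' versions we cannot simply trim — we must ensure that the search explores solutions of size exactly $\beta$ (or exactly $\beta+1$), and this requires a branching invariant guaranteeing that the partial $\Y$ is extendable to a minimal blocking set of the desired size, together with a proof (via Property~\ref{prop1}) that every target \mbs is discovered along some root-to-leaf path. The polynomial-time computability of $\alpha$ on induced subgraphs (via heredity of $\F$) is what makes every test in the invariant and in the acceptance condition feasible, and counting one vertex added per level against the threshold gives the $\O^*(\alpha(G)^\beta)$ bound.
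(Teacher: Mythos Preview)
Your branching algorithm for \MMBSEG is correct and is essentially the paper's proof: start with $X=\emptyset$; whenever $\alpha(G\setminus X)=\alpha(G)$, compute (using that $\F$ is hereditary and \MIS is polynomial on $\F$) a \mis $I$ of $G\setminus X$ and branch on which $v\in I$ to add; when $|X|=\beta$, check in polynomial time whether $X$ is an \mbs. Two remarks. First, the paper handles the case where $X$ becomes a \bs before reaching $|X|=\beta$ by returning ``\no'' immediately, since no proper superset of a \bs can be a \emph{minimal} \bs; you reach the same conclusion, but only after the digression about ``trimming'' and ``padding''. Second, your claim that an \mbs $Z$ meets each \mis $I^*$ in a \emph{single} vertex is false---e.g.\ in $P_4=a\text{--}b\text{--}c\text{--}d$ the set $Z=\{a,d\}$ is an \mbs and $I^*=\{a,d\}$ is a \mis with $|Z\cap I^*|=2$---but this is harmless: correctness only uses that $Z\cap I^*\neq\emptyset$ (since $Z$ is a \bs) and $I^*\cap X=\emptyset$, so some new vertex of $Z$ is always available in $I^*$ to branch on.

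Your argument for \MMBSINFEG, however, has a genuine gap. After (correctly) noticing that the question is whether \emph{every} \mbs has size at most $\beta$, you propose to decide it by negating \MMBSEG with target $\beta+1$, i.e.\ you assert that $\mmbs(G)>\beta$ iff there exists an \mbs of size exactly $\beta+1$. This is false---indeed the paper stresses just before the proposition that $\mmbs(G)\ge\beta$ does \emph{not} guarantee an \mbs of size exactly $\beta$. A concrete counterexample: for $G=K_3$ the only \mbs is $V(G)$ of size $3$, so with $\beta=1$ we have $\mmbs(G)=3>1$ yet there is no \mbs of size $2$; your test would wrongly report $\mmbs(G)\le 1$. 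The paper itself does not give a separate argument for \MMBSINFEG, stating only that the result for \MMBSEG ``directly implies'' it.
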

\begin{proof}
  %Informally, we compute a list of $\alpha(G)^\beta$ candidate sets (all of size $\beta$) that contains an \mbs if it exists, and check if one of these candidates is
  %is at each of the $\beta$ steps, we compute a maximum independent is sufficient
 We only prove the result for \MMBSEG, as it directly implies the result for
 \MMBSINFEG. Consider an input graph $G \in \F$.
Let us define an algorithm $A(X)$ that, given a set $X \subseteq V(G)$ with $|X| \le \beta$, answers ``\yes'' if and only if there exists an \mbs $\Y$ of $G$ such that $X \subseteq \Y$ and $|\Y|=\beta$,
  in which case we say that $X$ is a \emph{\yes-set}.
  %We point out that a classical search-tree would also provide the same complexity \ig{I wonder whether this clear for the reader. We should provide this kind of remark outside the proof}.
  The algorithm starts with $X= \emptyset$, and calls itself recursively for a larger set $X$ obtained from branching on vertices of a \mis of $G \setminus X$, as detailed below.
  Note that if $|X|=\beta$, then $A(X)$ answers ``\yes'' if and only if $X$ is an \mbs.
  This can be checked in polynomial time, as it is equivalent to the properties that $\alpha(G \setminus X) < \alpha(G)$ and $\alpha(G \setminus (X \setminus \{v\})) = \alpha(G)$ for every $v \in X$.
  Let us now consider the cases where $0 \leq |X| < \beta$. If $\alpha(G \setminus X) < \alpha(G)$, then we answer ``\no'' as $X$ is already a \bs, and thus no superset $X' \supsetneq X$ can be an \mbs.
  Otherwise, we have that $\alpha(G \setminus X) = \alpha(G)$. Since $G \setminus X \in \F$ as $\F$ is hereditary and  \MIS is polynomial on $\F$, we can compute in polynomial time a \mis $I$ of $G \setminus X$
  (notice that beeing able to solve \MIS implies that we can construct such a \mis). Observe that
  $I$ is also a \mis of $G$, and that $I \cap X = \emptyset$. In this case, $A(X)$ returns $\bigvee_{v \in I}A(X \cup \{v\})$.

Let us now prove the correctness of this latter case.
  Suppose first that $X$ is a \yes-set, and let $\Y$ be an \mbs in $G$ of size $\beta$ such that $X \subseteq \Y$. As $\Y$ is a \bs, there exists $v \in I \cap \Y$.
  As $I \cap X = \emptyset$, we get $v \notin X$, and thus $X \cup \{v\}$ is a \yes-set and $A(X)$ returns ``\yes''.
  We prove the other direction by reverse induction on $|X|$, the case $|X|=\beta$ being correct as discussed above. Consider a set $X$ with $|X|<\beta$, and suppose inductively that the claimed property is correct for sets of size $|X|+1$. Thus, if $A(X \cup \{v\})$ returns ``\yes'' for some $v \in I$, then by induction $(X \cup \{v\})$ is a \yes-set, implying by definition of $A$ that $X$ is also a \yes-set.

  Let us finally discuss the running time of the algorithm, given an input graph $G$. Starting with $X= \emptyset$, for every set $X$ the algorithm performs a polynomial number of operations,
  and then branches on a set $I$ of size at most $\alpha(G)$, as such a  set $I$ is always a \mis of a subgraph of $G$). As the depth of the branching tree corresponding to the algorithm has depth at most $\beta$, the claimed running time follows.
\end{proof}

Observe that,  unless \FPT = \Won, we cannot obtain results similar to Proposition~\ref{prop:compute_beta_equal} to decide whether $\mmbs(G) \ge \beta$, and even in time $f(\alpha(G),\beta)$ for any computable function $f$, as it would imply that \MMBS/$(\alpha+\beta)$ is \FPT, contradicting the fact that \MMBS/$\beta$ is \ParaNPH by Proposition~\ref{prop:hardness}.
Thus, we need consider a stronger assumption than assuming that \MIS is polynomial on $\F$. Namely, in what follows we consider the $\alpha$-\MMBS problem, that is, the case where $\alpha$ is fixed.
Recall that, according to Proposition~\ref{prop:hardness}, even $2$-\MMBS remains \NPh, motivating the study of the parameterized complexity of $\alpha$-\MMBS.

\begin{proposition}\label{prop:fixedalpha_beta_fpt}
  For every fixed positive integer $\alpha$,
  $\alpha$-\MMBS/$\beta$ and $\alpha$-\MMHS/$\beta$ are \FPT. More precisely,
  both problems can be solved in time $\O^*(\alpha^{\beta})$.
  %\mb{it is not clear if the running time is in $\O^*(n^{\alpha} \alpha^\beta)$ as the enumeration used in Lemma 6 of the Russian paper is polynomial incremental,
  %  and not polynomial delay} \ig{Marin, you asked someone about this, right? Did you get an answer?}
\end{proposition}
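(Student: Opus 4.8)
The plan is to first reduce the blocking-set variant to the hitting-set one, and then design a bounded-depth search tree for $\alpha$-\MMHS/$\beta$. For the reduction, given a graph $G$ with $\alpha(G)\le\alpha$, I would enumerate in time $n^{\alpha+\O(1)}$ all maximum independent sets of $G$ (there are at most $n^{\alpha}$ vertex subsets of size at most $\alpha$, and for each one we can check in polynomial time whether it is independent and of size $\alpha(G)$) and build the hypergraph $\H$ on vertex set $V(G)$ whose hyperedges are precisely the maximum independent sets of $G$; then, by definition, $\Y\subseteq V(G)$ is an \mbs\ of $G$ if and only if it is a minimal hitting set of $\H$, so $\mmbs(G)=\mmhs(\H)$, and every hyperedge of $\H$ has size at most $\alpha$. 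Since $\alpha$ is fixed, the $n^{\alpha+\O(1)}$ overhead is absorbed by the $\O^*$-notation, so it suffices to solve $\alpha$-\MMHS/$\beta$ in time $\O^*(\alpha^{\beta})$, and by Lemma~\ref{lem:prelim} (Properties~\ref{prop3} and~\ref{prop4}) I may assume that the input hypergraph $\H$ is a clutter in which every vertex belongs to some hyperedge.

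For $\alpha$-\MMHS/$\beta$ I would run a search tree that incrementally grows a set $P\subseteq V(\H)$, meant to be a subset of a minimal hitting set, of size at most $\beta$, together with a disjoint ``forbidden'' set $F$ and, for every $v\in P$, a designated hyperedge $H_v$ with $v\in H_v$ and $H_v\setminus\{v\}\subseteq F$, maintaining the invariant that no hyperedge of $\H$ is contained in $F$. Starting from $P=F=\emptyset$: if $|P|=\beta$ (and the invariant holds) the branch answers \yes; otherwise, if $P$ is already a hitting set of $\H$ the branch is a dead end (the only minimal hitting set containing $P$ would be $P$ itself, of size $<\beta$); otherwise we pick a hyperedge $H$ not hit by $P$ — which, by the invariant, is not contained in $F$ — and branch on the at most $\alpha$ choices of $v\in H\setminus F$, where in the branch for $v$ we add $v$ to $P$, add $H\setminus\{v\}$ to $F$, set $H_v:=H$, and abandon the branch if some hyperedge of $\H$ becomes contained in $F$. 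Every branching step increases $|P|$ by one, so the tree has depth at most $\beta$ and at most $\alpha^{\beta}$ leaves, each doing polynomial work. Answering \yes\ is sound: if a leaf has $|P|=\beta$ and no hyperedge lies in $F$, then we can extend $P$ to a hitting set $S\subseteq V(\H)\setminus F$ by adding vertices greedily and then discarding redundant ones from $S\setminus P$; the resulting $S$ is a minimal hitting set of $\H$, since every $v\in P$ keeps $H_v$ as a private hyperedge (because $H_v\setminus\{v\}\subseteq F$ and $F$ is disjoint from $S$) and every remaining vertex of $S$ has a private hyperedge by minimality, so $\mmhs(\H)\ge|S|\ge\beta$.

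The delicate part, which I expect to be the main obstacle, is to prove that the search is complete: if $\mmhs(\H)\ge\beta$, then some leaf with $|P|=\beta$ is reached. Fixing a minimal hitting set $S^{*}$ with $|S^{*}|\ge\beta$, the goal is to follow a branch maintaining $P\subseteq S^{*}$ and $F\cap S^{*}=\emptyset$; while $P\subsetneq S^{*}$, the set $P$ cannot be a hitting set (a proper subset of a minimal hitting set is not one), so the algorithm does pick a not-yet-hit hyperedge $H$, and by Lemma~\ref{lem:prelim} (Property~\ref{prop1bis}) every $u\in S^{*}\setminus P$ has a private hyperedge $H_{u}^{*}$ with $H_{u}^{*}\cap S^{*}=\{u\}$, which is automatically not hit by $P$ and meets $S^{*}$ in a single vertex. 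The subtlety is that the algorithm commits to one specific not-yet-hit hyperedge $H$ (it cannot afford to branch over the choice of $H$ without ruining the $\alpha^{\beta}$ bound), so I must argue that for that $H$ there is an owner $v\in H\setminus F$ for which $(P\cup\{v\},\,F\cup(H\setminus\{v\}))$ is still extendable to a minimal hitting set of size at least $\beta$; the way I would handle this is to always select $H$ to be inclusion-minimal among the not-yet-hit hyperedges and, when $H$ meets the current witness in more than one vertex, reroute that witness to another minimal hitting set meeting $H$ in exactly one vertex, crucially exploiting that $\H$ is a clutter (so a private hyperedge of a vertex of $H$ is never contained in $H$). Once completeness is in place, the algorithm runs in time $\O^*(\alpha^{\beta})$, and together with the reduction of the first paragraph this settles both $\alpha$-\MMBS/$\beta$ and $\alpha$-\MMHS/$\beta$.
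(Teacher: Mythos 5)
Your first paragraph (enumerate all \mis of $G$ in time $n^{\alpha+\O(1)}$, pass to the hypergraph of \mis, so that \mbs of $G$ correspond exactly to minimal hitting sets) is exactly the paper's reduction. The gap is in the second half: the paper does not prove the $\O^*(\alpha^{\beta})$ bound for $\alpha$-\MMHS/$\beta$ at all, it cites Damaschke's Lemma~6, whose $\O^*$ hides a factor $|V(\H)|^{f(\alpha)}$; you instead propose a plain depth-$\beta$ search tree with polynomial work per node, whose correctness rests on the completeness claim you yourself flag as ``the delicate part'': that when the algorithm commits a chosen unhit hyperedge $H$ as the private hyperedge of the chosen vertex $v$ (forbidding all of $H\setminus\{v\}$ forever), one can ``reroute'' the witness to a minimal hitting set of size at least $\beta$ meeting $H$ in exactly one vertex. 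That claim is not only unproven, it is false, and no selection rule of the form ``take $H$ inclusion-minimal among the unhit hyperedges'' can rescue it, since in a clutter every hyperedge is inclusion-minimal. (Had the argument worked, it would in fact give an $\O^*(\alpha^{\beta})$ algorithm for \MMHS/$(\alpha+\beta)$ with only polynomial overhead, which the paper explicitly states as an open problem in the conclusion, its best bound being $\O^*(2^{\alpha\beta})$ in Theorem~\ref{thm:fptv2}; the conclusion also explains why exactly this style of per-hyperedge vertex guessing is delicate for the ``at least $\beta$'' version.)

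Concretely, take $\alpha=2$ (so \MMHS is \MMVC) and the graph $G$ with vertices $u,v,w_1,w_2$, pendant vertices $p_{1,1},\dots,p_{1,m}$ adjacent to $w_1$ and $p_{2,1},\dots,p_{2,m}$ adjacent to $w_2$ (with $m\ge 2$), and edges $\{u,v\}$, $\{u,w_1\}$, $\{v,w_2\}$ plus the pendant edges. Minimal vertex covers are complements of maximal independent sets, and the only maximal \is of size at most $2$ is $\{w_1,w_2\}$; hence the unique minimal vertex cover of size at least $\beta:=2m+2$ is $V(G)\setminus\{w_1,w_2\}$, which contains \emph{both} $u$ and $v$. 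So for the hyperedge $H=\{u,v\}$ there is no minimal hitting set $S$ with $|S|\ge\beta$ and $|S\cap H|=1$, i.e., there is nothing to reroute to. If at the root your algorithm picks $H=\{u,v\}$ (allowed by your rule, since all edges are inclusion-minimal), then in either branch any \yes leaf would, by your own soundness argument, yield a minimal vertex cover of size at least $\beta$ containing exactly one of $u,v$, which does not exist; so both branches fail and the algorithm answers \no on a \yes-instance. To repair the proof you would either have to prove that a ``good'' unhit hyperedge and vertex always exist and can be found (this is precisely the hard content, and deeper in the recursion the constraint set $F$ makes it harder still), or follow the paper and fall back on the known extension/enumeration machinery (Damaschke, or Propositions~\ref{prop:fixedalpha_beta_kernel} and~\ref{prop:algommhs}), accepting the $|V(\H)|^{f(\alpha)}$ factor inside the $\O^*$.
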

\begin{proof}
  Given an input graph $G$ of $\alpha$-\MMBS, we compute in time $\O^*(n^{\alpha})$ the hypergraph $\H$ where $V(\H)=V(G)$, and $H$ is a hyperedge in $\H$ if and only if $H$ is a \mis in $G$.
  By definition of $\alpha$-\MMBS, all hyperedges of $\H$ have size exactly $\alpha$, and for every $\Y \subseteq V(G)$, $\Y$ is an \mbs in $G$ if and only if $\Y$ is a minimal hitting set in $\H$.
  Then, according to~\cite[Lemma 6]{DBLP:journals/disopt/Damaschke11}, as $\alpha$ is fixed we can decide whether there is a minimal hitting set of $\H$ of size at least $\beta$ in time $\O^*(\alpha^{\beta})$.
\end{proof}

We point out that in both Proposition~\ref{prop:fixedalpha_beta_fpt} and~\cite[Lemma 6]{DBLP:journals/disopt/Damaschke11}, in order to decide whether there is a minimal hitting set of $\H$ of size at least $\beta$, there is term $|V(\H)|^{f(\alpha)}$ hidden inside the $\O^*$-notation.
This means that~\cite[Lemma 6]{DBLP:journals/disopt/Damaschke11} does not imply that \MMHS/$(\alpha+\beta$) is \FPT (recall that function $\alpha$ in the parameterization of \MMHS denotes
the size of a largest hyperedge of $\H$).
However, according to the two following propositions, it turns out that \MMHS/$(\alpha+\beta$) is indeed \FPT. This highlights a difference between
 \MMHS and \MMBS, as according to Proposition~\ref{prop:hardness} \MMBS/$(\alpha+\beta$) is unlikely to be \FPT.

Let us start with a kernelization result, using the well-known notion of sunflower.
\begin{definition}\label{def:sunflower}
Let $\beta \in \Bbb{N}$. Given a hypergraph $\H$, a \emph{sunflower} in $\H$ with $\beta$ \emph{petals} and \emph{core} $C \subseteq V(\H)$ is a collection of $\beta$ hyperedges  $H_1,\dots,H_\beta$ of $\H$ such that
$H_i \cap H_j = C$ for all $i \neq j$, and for every $i \in [\beta]$ the so-called petal $H_i \setminus C$ is not empty.
We say that a function $s: \Bbb{N}^2 \to \Bbb{N}$ is a \emph{sunflower function} if, for every hypergraph $\H$ whose hyperedges have size at most $\alpha$,
if $|E(\H)| > s(\alpha,\beta)$ then $\H$ admits a sunflower with $\beta$ petals.
\end{definition}

It is known (see for instance~\cite{CyganFKLMPPS15}) that $s_1(\alpha,\beta):=(\alpha^2) !\alpha (\beta-1)^\alpha$ is a sunflower function.
Even if this is not relevant for our next proposition, where $\alpha$ is fixed, we point out that
this bound has been recently been improved by Rao~\cite{rao2020coding} to
\begin{equation}\label{eq:sunflower}
 s_2(\alpha,\beta):=(c\beta \cdot \log(\alpha \beta))^{\alpha}.
\end{equation}
%\ig{Why don't we cite the best paper of STOC 2020 that I sent? \url{https://arxiv.org/abs/1908.08483}}
%\mb{this last result hold for
%sets of size exactly $\alpha$, but I guess this can be trivially extended to set of size at most by adding dummy vertices in every hyperedge that needs some.}
\vspace{-.6cm}
\begin{lemma}\label{lemma:sunflower}
Let $\beta \in \Bbb{N}$, and let $\H$ be a hypergraph such that no hyperedge is included in another hyperedge.
If $\H$ has a sunflower with $\beta$ petals, then $\mmhs(\H) \ge \beta$.
\end{lemma}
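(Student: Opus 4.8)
The plan is to exhibit an explicit minimal hitting set of size at least $\beta$ using the sunflower structure. Let $H_1, \dots, H_\beta$ be a sunflower in $\H$ with core $C$ and nonempty petals $P_i = H_i \setminus C$. Since no hyperedge is included in another, and in particular $C$ itself cannot be a hyperedge (as it would be contained in every $H_i$), the core $C$ is a \emph{proper} subset of each $H_i$; this is exactly why each petal is nonempty, but we will actually need a slightly stronger consequence, namely that $C$ alone does not hit every hyperedge of $\H$ in a way that blocks us — more precisely, we want to build a hitting set that meets each $H_i$ only in its petal.

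The key step: pick one vertex $v_i \in P_i$ from each petal (possible since petals are nonempty), and set $Y_0 = \{v_1, \dots, v_\beta\}$. The $v_i$ are pairwise distinct because the petals $P_i$ are pairwise disjoint (they intersect only in $C$, but $v_i \in P_i = H_i \setminus C$), so $|Y_0| = \beta$. Now $Y_0$ is not necessarily a hitting set of all of $\H$, so I would greedily extend $Y_0$ to a hitting set $Y \supseteq Y_0$ by adding, for each not-yet-hit hyperedge, an arbitrary vertex of it, and then take an inclusion-wise minimal hitting set $Y'$ with $Y_0 \subseteq$ ... wait, that last containment is not guaranteed. The correct approach is: extend $Y_0$ to \emph{any} hitting set $Y$, then shrink $Y$ to a minimal hitting set $Y'$ while \emph{keeping each $v_i$}, which is possible provided each $v_i$ is "necessary" in the sense of Property~\ref{prop1bis} — i.e., for each $i$ there is a hyperedge meeting $Y$ only in $v_i$. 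Here is where the sunflower pays off: the hyperedge $H_i$ satisfies $H_i \cap Y_0 = (C \cup P_i) \cap \{v_1,\dots,v_\beta\} = \{v_i\}$, since $v_j \notin C$ and $v_j \notin P_i$ for $j \ne i$. So if we are careful to only add to $Y_0$ vertices that avoid all the $H_i$'s, then $H_i \cap Y = \{v_i\}$ will be preserved.

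So the refined construction is: first extend $Y_0$ to a hitting set by repeatedly picking, for each hyperedge $H$ not yet hit, a vertex of $H$; but whenever possible we should pick such a vertex outside $\bigcup_i H_i$. A hyperedge $H$ not hit by $Y_0$ in particular is not one of the $H_i$'s and is not contained in $\bigcup_i H_i$ — actually we need: any hyperedge disjoint from $Y_0$ must contain a vertex outside $C \cup \bigcup_i P_i$? That is not obvious. A cleaner route: restrict attention to the sub-hypergraph and argue directly. Let me instead take $Y$ to be an inclusion-wise minimal hitting set of $\H$ that contains $Y_0$; such a minimal-over-supersets-of-$Y_0$ set exists but need not be a minimal hitting set of $\H$. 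However, $Y$ \emph{is} a minimal hitting set if every $v_i$ is private, and for the other vertices of $Y$ minimality over supersets of $Y_0$ gives privateness. To make the $v_i$ private we modify: when extending $Y_0$, never add a vertex lying in any $H_i$; this is always possible because any hyperedge $H$ with $H \cap Y_0 = \emptyset$ satisfies $H \ne H_j$ for all $j$, hence $H \not\subseteq C \cup \bigcup_j P_j$ would suffice — and indeed if $H \subseteq C \cup \bigcup_j P_j$ with $H \cap Y_0 = \emptyset$ then... this still needs an argument, so the \textbf{main obstacle} is precisely ensuring we can extend $Y_0$ to a hitting set while keeping the sets $H_i \cap Y = \{v_i\}$ intact.

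The clean fix, which I would adopt: add to $Y_0$ \emph{all} of $C$ is wrong (destroys privateness); instead, for each hyperedge $H$ not hit by $Y_0$, note $H \ne H_i$ for every $i$, so by the no-containment hypothesis $H \not\supseteq H_i$ and $H \not\subseteq H_i$; pick any vertex $u_H \in H$. If $u_H \in H_i$ for some $i$ then $u_H \in P_i$ (it is not in $C$ since $C \subseteq H_i$ and $u_H \in H$ with $H \cap Y_0 = \emptyset$ does not force $u_H \notin C$ — hmm). I will resolve this by choosing, at the outset, $Y_0$ to additionally include nothing from $C$, and handle a hyperedge $H$ with $H \cap Y_0 = \emptyset$ and $H \cap (\bigcup_i H_i) \subseteq C$ separately: in that case $H \cap \bigcup_i H_i \subseteq C$, and since $H \ne$ any $H_i$ and no containment holds, $H$ has a vertex outside $\bigcup_i H_i$, which we add. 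In all remaining cases $H$ has a vertex in some $P_i \setminus \{v_i\}$ or outside everything, either of which we add; adding a vertex of $P_i \setminus \{v_i\}$ keeps $H_j \cap Y = \{v_j\}$ for all $j$ since $P_i \cap H_j \subseteq C$ for $j \ne i$ and the new vertex is not in $C$. After this extension we obtain a hitting set $Y$ with $H_i \cap Y = \{v_i\}$ for all $i$; finally take an inclusion-minimal subset $Y' \subseteq Y$ that is still a hitting set and contains all $v_i$ (each $v_i$ cannot be removed since $H_i$ is its private hyperedge). Then $Y'$ is a minimal hitting set (each of its vertices has a private hyperedge — the $v_i$ by $H_i$, the others by minimality) with $|Y'| \ge |Y_0| = \beta$, so $\mmhs(\H) \ge \beta$, as desired. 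I would present the write-up by stating $Y_0$, proving $|Y_0| = \beta$ and $H_i \cap Y_0 = \{v_i\}$, then doing the careful extension, then invoking Property~\ref{prop1bis}/Property~\ref{prop1} of Lemma~\ref{lem:prelim} to conclude.
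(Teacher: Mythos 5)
Your construction has a genuine gap at the extension step, and the final claim built on it is false in general. When you extend $Y_0=\{v_1,\dots,v_\beta\}$ to a hitting set, you allow yourself to add a vertex $u\in P_i\setminus\{v_i\}$; your parenthetical justification only checks that this does not disturb $H_j\cap Y$ for $j\neq i$, but for $j=i$ it does: now $H_i\cap Y\supseteq\{v_i,u\}$, so $H_i$ is no longer a private hyperedge for $v_i$. Such additions cannot always be avoided: a hyperedge $H$ disjoint from $Y_0$ may be entirely contained in $C\cup\bigcup_j P_j$ without being contained in $C$, and then every vertex you could add to hit $H$ lies in the core or in some petal. Concretely, take $H_1=\{a_1,b_1\}$, $H_2=\{a_2,b_2\}$ (a sunflower with empty core, $\beta=2$) together with the hyperedge $H=\{b_1,b_2\}$, and choose $v_1=a_1$, $v_2=a_2$. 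To hit $H$ you must add $b_1$ or $b_2$, say $b_1$; the resulting $Y'=\{a_1,a_2,b_1\}$ is inclusion-minimal among hitting sets containing $Y_0$, yet it is \emph{not} a minimal hitting set of $\H$, since $\{a_2,b_1\}$ already hits everything. So the step ``each $v_i$ has $H_i$ as a private hyperedge, hence $Y'$ is minimal'' does not go through, and the lemma is not proved by this route as written.

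The paper sidesteps all of this bookkeeping: it does not prescribe representatives $v_i$ at all. Let $C$ be the core and set $S=V(\H)\setminus C$. Since no hyperedge is contained in another, no hyperedge can be contained in $C$ (it would then be properly contained in each $H_i$), so $S$ is a hitting set; now shrink $S$ to any minimal hitting set $S'\subseteq S$. Because $S'$ avoids $C$, to hit each $H_i$ it must contain a vertex of the petal $H_i\setminus C$, and the $\beta$ petals are pairwise disjoint and nonempty, so $|S'|\ge\beta$. The moral is that you only need the final minimal set to avoid the core, not to contain prearranged petal vertices with prearranged private hyperedges; if you want to salvage your write-up, replacing ``extend $Y_0$'' by ``take any minimal hitting set inside $V(\H)\setminus C$'' collapses it to the paper's argument.
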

\begin{proof}
Let $\{H_i \mid i \in [\beta]\}$ be a sunflower of $\H$ with $\beta$ petals.
  Let $C = \bigcap_{i \in [\beta]}H_i$ and $S = V(\H) \setminus C$. As there is no $H \in E(\H)$ such
  that $H \subseteq C$, it follows that $S$ is a hitting set of $\H$. Let $S' \subseteq S$ be a minimal hitting set of $\H$.
  For every $i \in [\beta]$, $H_i \setminus C \neq \emptyset$ by definition of a sunflower, hence $S'$ must contain at least one vertex in $H_i \setminus C$. This implies that $|S'| \ge \beta$,
  and thus that $\mmhs(H) \ge \beta$.
\end{proof}

\begin{proposition}\label{prop:fixedalpha_beta_kernel}
Let $s(\alpha,\beta)$ be a sunflower function that is polynomial in $\beta$ for fixed $\alpha$.
For every fixed integer $\alpha$,
  \begin{itemize}
  \item $\alpha$-\MMBS/$\beta$ admits a polynomial kernel with at most $\alpha \cdot s(\alpha,\beta)$ vertices, which can be constructed in time $\O^*(|V(G)|^\alpha)$, and
  \item $\alpha$-\MMHS/$\beta$ admits a polynomial kernel with at most $\alpha \cdot s(\alpha,\beta)$ vertices, which can be constructed in time $\O^*(|V(\H)|)$ (that is, not depending on $\alpha$).
  %time uniform in $\alpha$ polynomial kernel (whose running time is in )
  \end{itemize}
\end{proposition}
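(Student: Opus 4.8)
The plan is to exhibit classical sunflower-based kernelization, treating the two bullets almost in parallel. The key tool is Lemma~\ref{lemma:sunflower}: once a hypergraph has a sunflower with $\beta$ petals (and no hyperedge contained in another), it has a minimal hitting set of size at least $\beta$, so the instance is a trivial \yes-instance.

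\medskip
\noindent\emph{The $\alpha$-\MMHS case.} First I would apply Property~\ref{prop3} of Lemma~\ref{lem:prelim} to obtain, in polynomial time, an equivalent instance $(\H',\beta)$ where no hyperedge is contained in another; since all hyperedges have size at most $\alpha$, this can be done in time $\O^*(|V(\H)|)$ by a naive pairwise comparison (the hidden polynomial does not depend on $\alpha$). Now I distinguish two cases according to $|E(\H')|$. If $|E(\H')| > s(\alpha,\beta)$, then by the definition of a sunflower function $\H'$ admits a sunflower with $\beta$ petals, so by Lemma~\ref{lemma:sunflower} we have $\mmhs(\H') \ge \beta$ and we return a trivial \yes-instance of constant size. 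Otherwise, $|E(\H')| \le s(\alpha,\beta)$, and since each hyperedge has at most $\alpha$ vertices, the union of all hyperedges has at most $\alpha \cdot s(\alpha,\beta)$ vertices; I delete every vertex not appearing in any hyperedge, which is safe by Property~\ref{prop4} of Lemma~\ref{lem:prelim}. This yields an equivalent instance on at most $\alpha \cdot s(\alpha,\beta)$ vertices, which is polynomial in $\beta$ for fixed $\alpha$ by hypothesis on $s$. Note that \emph{finding} the sunflower explicitly is not needed for the kernel to be valid: it suffices to know it exists, which is what the counting bound gives; if one prefers a constructive version, a sunflower with $\beta$ petals can be found greedily in polynomial time when $|E(\H')|$ exceeds the bound.

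\medskip
\noindent\emph{The $\alpha$-\MMBS case.} Here the only extra ingredient is the reduction to hypergraphs used in Proposition~\ref{prop:fixedalpha_beta_fpt}: given $G$ with $\alpha(G) \le \alpha$, build the hypergraph $\H$ with $V(\H) = V(G)$ whose hyperedges are exactly the maximum independent sets of $G$. This takes time $\O^*(|V(G)|^\alpha)$, and all hyperedges have size exactly $\alpha(G) \le \alpha$; moreover, since distinct maximum independent sets have the same size, no hyperedge is contained in another, so Property~\ref{prop3} is automatically satisfied and we may skip it. By construction, for every $\Y \subseteq V(G)$, $\Y$ is an \mbs of $G$ if and only if $\Y$ is a minimal hitting set of $\H$, hence $\mmbs(G) = \mmhs(\H)$. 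Now apply exactly the same case analysis as above to $(\H,\beta)$: if $|E(\H)| > s(\alpha,\beta)$ we output a trivial \yes-instance by Lemma~\ref{lemma:sunflower}; otherwise we keep only the vertices covered by some hyperedge (safe by Property~\ref{prop4}), obtaining at most $\alpha \cdot s(\alpha,\beta)$ vertices. One subtlety: the kernelized instance is naturally an instance of $\alpha$-\MMHS, not of $\alpha$-\MMBS; to present it as an \MMBS kernel one should either allow the output to be a (bounded-size) hypergraph, or realize the reduced hypergraph as the maximum-independent-set hypergraph of a graph. The cleanest route is to simply observe that the reduced hypergraph has bounded size and is \yes-equivalent, which is all that a kernel requires; alternatively, since $\mmbs$ and $\mmhs$ coincide on the reduced instance via the construction, one can view the output as a graph on the same bounded vertex set.

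\medskip
I expect the main obstacle to be bookkeeping rather than mathematics: one must be careful that after applying Properties~\ref{prop3} and~\ref{prop4} the instance is still an instance of the \emph{bounded-arity} problem (hyperedge sizes do not increase, which is clear) and still \yes-equivalent, and one must state the running times carefully so that the term $|V|^{f(\alpha)}$ hidden in the $\O^*$-notation is accounted for — this is the reason the $\alpha$-\MMBS construction costs $\O^*(|V(G)|^\alpha)$ (generating all maximum independent sets) while the $\alpha$-\MMHS construction only costs $\O^*(|V(\H)|)$. The only genuinely delicate point is the one flagged above about whether the kernel output must be in the same problem; I would resolve it by phrasing the kernel in terms of the bounded-size reduced instance, which is standard.
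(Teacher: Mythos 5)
Your proposal is correct and follows essentially the same route as the paper's proof: bound $|E(\H')|$ by the sunflower function and invoke Lemma~\ref{lemma:sunflower} to declare a \yes-instance when it is exceeded, otherwise use Property~\ref{prop4} to shrink to at most $\alpha\cdot s(\alpha,\beta)$ vertices, and for $\alpha$-\MMBS build the hypergraph of maximum independent sets in time $\O^*(|V(G)|^\alpha)$ and resolve the output-format subtlety exactly as the paper does, by returning the graph induced on the surviving vertices (which preserves all maximum independent sets, hence $\mmbs$). Your extra invocation of Property~\ref{prop3} in the \MMHS case is a harmless (indeed careful) addition, since Lemma~\ref{lemma:sunflower} formally assumes no hyperedge contains another.
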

\begin{proof}
  Let us start with $\alpha$-\MMHS. Consider an instance $(\H,\beta)$ of $\alpha$-\MMHS. By Lemma~\ref{lem:prelim} (Property~\ref{prop4}), we can compute in polynomial time
  an equivalent instance $(\H',\beta)$ where every vertex belongs to a hyperedge. If $|E(\H')|\le s(\alpha,\beta)$,
  then we get $|V(\H')| \le \alpha \cdot |E(\H')| \leq \alpha \cdot  s(\alpha,\beta)$ and we are done. Otherwise, as $s$ is a sunflower function, it follows  that $\H'$ contains a sunflower with $\beta$ petals.
   According to Lemma~\ref{lemma:sunflower}, we get that $(\H,\beta)$ is a yes-instance.

  The result for $\alpha$-\MMBS is now straightforward, but we provide the details as we cannot directly say that any kernel for $\alpha$-\MMHS implies a kernel for \MMBS. For example,
  removing a hyperedge in  $\alpha$-\MMHS cannot necessarily be translated to  $\alpha$-\MMBS.
  Let $(G,\beta)$ be an instance of $\alpha$-\MMBS. In time $\O^*(|V(G)|^\alpha)$ we can compute an equivalent instance $(\H,\beta)$ of $\alpha$-\MMHS
  by creating a hyperedge for every \mis of $G$.
  As all hyperedges have size exactly $\alpha(G)$, no hyperedge can be included in another hyperedge.
  Now, if the previous kernel detects a \yes-instance, we are done, and otherwise we output $(G',\beta)$ where $G'=G[V(\H')]$.
  As all vertices of $V(G) \setminus V(G')$ are vertices that do not belong to any \mis of $G$, by the arguments of the proof of Lemma~\ref{lem:prelim} (Property~\ref{prop4}),
  we get an equivalent instance.
\end{proof}

%\mb{this looks like thm2 of the russian paper, but it is different. Mention it ? no}

%\mb{on rmq donc que diff est que mmbs pas fpt/alpha + beta (even not ..), but mmhs fpt by alpha + beta as kernel (and not from .. where hidden n to alpha)}.

Even if Proposition~\ref{prop:fixedalpha_beta_kernel} implies that \MMHS/$(\alpha+\beta$) is \FPT, the running obtaining by applying brute force to the kernelized instance
is $\O^*(2^{\alpha \cdot s(\alpha,\beta)})$, and thus doubly exponential in $\alpha$.
This motivates the question of obtaining a faster \FPT algorithm for \MMHS/$(\alpha+\beta$).
We point out that trying to improve the running time by considering separated parameters, instead of the aggregated parameter $\alpha+\beta$,
is not possible as \MMHS/$\beta$ is \Woneh~\cite{BAZGAN20182}, and \MMHS/$\alpha$ is already \NPh for $\alpha=2$, as it corresponds to \MMVC.
A first way to get a faster \FPT algorithm is to reduce to the \SIMPLEEXTMMHS problem.

%% \mb{mentionner
%% blasius2019efficiently  Efficiently Enumerating Hitting Sets of Hypergraphs Arising in Data Profiling pour dire qu'on peut en faire un alpha to beta
%% times (m/beta) to beta.  Algo A(X1..Xx) decide si existe of size at least beta qui contient Xi.
%% Voir leurs args car si meme que nous alors ça apporte rien,
%% voir si dans le res ya aussi Delta gratos}

%% mentionner russe ?
%% However question better fpt than bruteforcing the reduced instance, which would give a ..
%% Moreover, for the case alpha=2 (MMVC), ya aussi kernel et question fpt faster, et hdr hernau and paschos provided resp 2 k et c k algo.
%% expliquer que 2 to k pas obvious, et rely sur fait que nice cara de mmvc qui permet de construire que des sol partielles qui pourront etre completees en un mmvc.
%% citer russe qui dit que would be nice to extend
%% dire que guess que v dans sol dur, mais obtenir hyperarete taille 1 facile.
%% Piste : guess vertex not in a sol (permet de contracter toutes les hyp qui le touche, et on peut recommencer en choissant un sommet dans plus petite arete,
%% mais pb si vertex choisi

Bläsius et al.~\cite{blasius2019efficiently} proved that \EXTMMHS (and thus \SIMPLEEXTMMHS) can be solved in time $\O^*(\lambda^{|X|})$, where $\lambda=\min\left(\frac{|E(\H)|}{|X|},\Delta(\H)\right)$
and $\Delta(\H)= \max_{v \in V(\H)}$ $|\{H \in E(\H) \mid v \in H\}|$ is the maximum degree of $\H$. Informally, this algorithm, in the simplified setting of \SIMPLEEXTMMHS, just
guesses for each $x \in X$ its ``private'' hyperedge $H_x$ such that $H_x \cap X = \{x\}$, and checks that there is no $H \in E(\H)$ such that $H \subseteq (\bigcup_{x \in X}H_x) \setminus X$.
Thus, guessing a hyperedge for every $x \in X$ yields the claimed running time. In the next proposition we formalize these ideas, using ideas similar to the proof
of Proposition~\ref{prop:compute_beta_equal}.

\begin{proposition}\label{prop:algommhs}
We can decide an instance $(\H,\beta)$ of \MMHS in time $\O^*((\alpha(\H) \cdot \lambda)^\beta)$, where
$\lambda=\min\left(\frac{|E(\H)|}{\beta},\Delta(\H)\right)$ and $\Delta(\H)= \max_{v \in V(\H)}$ $|\{H \in E(\H) \mid v \in H\}|$.% is the maximum degree of $\H$.
\end{proposition}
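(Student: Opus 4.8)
The plan is to combine a branching strategy over the vertices of $\H$, analogous to the algorithm $A(X)$ of Proposition~\ref{prop:compute_beta_equal}, with the ``guess a private hyperedge'' idea behind the \SIMPLEEXTMMHS algorithm of Bläsius et al.~\cite{blasius2019efficiently}. First I would preprocess the instance using Lemma~\ref{lem:prelim} (Properties~\ref{prop3} and~\ref{prop4}) so that no hyperedge contains another and every vertex lies in some hyperedge; this is done in polynomial time and does not change the answer. I would then describe a recursive procedure $A(X)$ that, given a partial set $X \subseteq V(\H)$ with $|X| \le \beta$ that has been ``committed'' to the solution, decides whether there is a minimal hitting set $\Y \supseteq X$ of $\H$ with $|\Y| = \beta$ (as in Proposition~\ref{prop:compute_beta_equal}, deciding the ``$=\beta$'' version suffices, and then $\mmhs(\H) \ge \beta$ is obtained by also accepting when a superset works; more simply, one runs it for the exact value and notes that a minimal hitting set of size $\ge \beta$ always contains a minimal hitting set of a smaller hypergraph of size exactly $\beta$, or one directly branches as below without fixing the size and checks minimality at the leaves).

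The key point is how $A(X)$ branches. For each vertex $x$ currently in $X$, to certify that $x$ can be kept in a minimal hitting set we must eventually exhibit a ``private'' hyperedge $H_x$ with $H_x \cap \Y = \{x\}$. So the algorithm picks a vertex $x \in X$ whose private hyperedge has not yet been fixed and branches on the choice of $H_x$ among all hyperedges containing $x$: there are at most $\Delta(\H)$ such hyperedges, and also at most $|E(\H)|$ of them, but since over the whole run we fix at most $\beta$ private hyperedges and they must be pairwise distinct (each $H_x \cap X = \{x\}$ forces distinctness), a counting/pigeonhole argument bounds the relevant branching factor for the private-hyperedge choices by $\lambda = \min(|E(\H)|/\beta, \Delta(\H))$ — this is exactly the refinement used in~\cite{blasius2019efficiently}, and formalizing it is the step I expect to be the main obstacle, since one has to argue carefully that restricting to at most $|E(\H)|/\beta$ candidates per vertex in the ``dense'' regime does not lose solutions. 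Having fixed $H_x$, every vertex of $H_x \setminus \{x\}$ is forbidden from $\Y$, which shrinks the search space.

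The second type of branching handles the hitting-set (covering) requirement: if there is a hyperedge $H$ not yet hit by $X$, then some vertex of $H$ must be added to $\Y$, and we branch on which one — at most $\alpha(\H)$ choices — adding it to $X$. When no such hyperedge remains, $X$ is a hitting set; if additionally every $x \in X$ has its private hyperedge fixed and $|X| = \beta$ (equivalently, by Property~\ref{prop1bis}, $X$ is minimal), we answer \yes. The correctness follows by induction on the recursion depth exactly as in the proof of Proposition~\ref{prop:compute_beta_equal}: if $\Y$ is a witnessing minimal hitting set, then whenever we branch on an unhit hyperedge $H$ it contains some vertex of $\Y$, and whenever we branch on a private hyperedge for $x$ the set $\Y$ does supply one, so the correct branch is always explored. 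For the running time, each node of the recursion tree does polynomial work and has at most $\max(\alpha(\H), \lambda) \le \alpha(\H)\cdot\lambda$ children (or one can interleave the two branchings so that each of the $\beta$ ``levels'' contributes one covering branch of degree $\le \alpha(\H)$ and one private-hyperedge branch of degree $\le \lambda$), and the depth is $\O(\beta)$ since both kinds of branching strictly increase either $|X|$ or the number of fixed private hyperedges, each capped at $\beta$. Hence the total time is $\O^*((\alpha(\H)\cdot\lambda)^{\beta})$, as claimed.
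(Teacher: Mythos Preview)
Your proposal has the right two ingredients (branch on an unhit hyperedge, then certify extendability via private hyperedges), but it mismatches them in a way that creates a genuine gap. The paper's proof is much simpler: it branches on vertices of an unhit hyperedge until $|X|=\beta$ (producing at most $\alpha^\beta$ leaves), and then at each leaf it \emph{calls the Bl\"asius et al.\ algorithm for \SIMPLEEXTMMHS as a black box}, which runs in time $\O^*(\lambda^{|X|})=\O^*(\lambda^\beta)$. The product of the two gives $\O^*((\alpha\lambda)^\beta)$. There is no interleaving, no re-derivation of the $\lambda$ bound, and crucially no requirement that $X$ itself be a hitting set at the leaves.

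Your version differs in two damaging ways. First, you phrase $A(X)$ as deciding whether there is a minimal hitting set $\Y\supseteq X$ with $|\Y|=\beta$, and your acceptance condition requires that at a leaf ``$X$ is a hitting set \dots\ and $|X|=\beta$''. But $\mmhs(\H)\ge\beta$ does \emph{not} imply the existence of a minimal hitting set of size exactly $\beta$; the paper explicitly discusses this. If every minimal hitting set has size $>\beta$, then after $\beta$ covering branches you will typically reach an $X$ of size $\beta$ that is \emph{not} yet a hitting set, and your algorithm rejects, incorrectly. The patches you suggest in the parenthetical do not fix this: ``also accepting when a superset works'' is precisely the \SIMPLEEXTMMHS question, which you have not solved; the claim that a large minimal hitting set ``contains a minimal hitting set of a smaller hypergraph of size exactly $\beta$'' is unsubstantiated; and ``branch without fixing the size'' destroys the $\O(\beta)$ depth bound. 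The correct leaf test (and what the black box effectively does) is: with $|X|=\beta$ and chosen private hyperedges $H_x$ satisfying $H_x\cap X=\{x\}$, accept if no hyperedge is contained in $(\bigcup_x H_x)\setminus X$ --- this certifies that $X$ extends to \emph{some} minimal hitting set (possibly larger than $\beta$), without ever requiring $X$ to be a hitting set.

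Second, your attempt to obtain the factor $\lambda=\min(|E(\H)|/\beta,\Delta(\H))$ by a ``distinctness plus pigeonhole'' argument is not an argument; distinct private hyperedges do not by themselves bound the per-vertex branching factor. The paper simply cites~\cite{blasius2019efficiently} for this bound; you should too. If you replace your leaf check by a call to the \SIMPLEEXTMMHS algorithm at $|X|=\beta$, your proof collapses to the paper's and becomes correct.
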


\begin{proof}
Let $(\H,\beta)$ be an instance of \MMHS.
Let us define an algorithm $A(\H,\beta,X)$ such that, given a set $X \subseteq V$ such that $|X| \le \beta$, decides whether there exists
a minimal hitting set $S$ of $\H$ such that $X \subseteq S$ and $|S| \ge \beta$. As in the proof of Proposition~\ref{prop:compute_beta_equal}, the algorithm starts with $X=\emptyset$.
If $|X|=\beta$ then we return $B(\H,X)$, where $B$ is the algorithm of Bläsius et al.~\cite{blasius2019efficiently}.
Let us now consider the cases where $0 \leq |X| < \beta$. If there is no $H \in E(\H)$ such that $H \cap X = \emptyset$ then we answer ``\no''.
Otherwise, let $H \in E(\H)$ such that $H \cap X \neq \emptyset$, and in this case $A(\H,\beta,X)$ returns $\bigvee_{v \in H}A(\H,\beta,X \cup \{v\})$.

Let us prove that $A$ is correct by induction on $\beta-|X|$. The only non-trivial case is when $A$ returns  $\bigvee_{v \in H}A(\H,\beta,X \cup \{v\})$. If $A(\H,\beta,X)$ returns ``\yes''
then there exists $v \in H$ such that $A(\H,\beta,X \cup \{v\})$ returns ``\yes'', implying by induction that $(\H,\beta,X \cup \{v\})$ is a \yes-instance,
and thus that $(\H,\beta,X)$ is a \yes-instance as well.
Conversely, if $(\H,\beta,X)$ is a \yes-instance certified by a solution $S^*$, then there exists $v^* \in S^* \cap H$ since $S^*$ is a hitting set, implying
that $(\H,\beta,X \cup \{v^*\})$ is a \yes-instance.

As at each step we branch on all vertices of a hyperedge $\H$, the running time is bounded by $\O^*(\alpha^\beta \cdot f(\alpha,\beta))$ where
$\O^*(f(\alpha,\beta))$ is the running time of algorithm $B$, implying $f(\alpha,\beta)=\O^*(\lambda^{\beta})$ and $\lambda=\min\left(\frac{|E(\H)|}{\beta},\Delta(\H)\right)$.
\end{proof}

Proposition~\ref{prop:algommhs} implies the following corollary, where the \XP algorithm follows directly from the algorithm of Proposition~\ref{prop:algommhs}. On the other hand,  the \FPT algorithm
is obtained by first applying the kernel for \MMHS of Proposition~\ref{prop:fixedalpha_beta_kernel} to ensure that $\Delta(\H) \le |E(\H)| \le s_2(\alpha,\beta)$, where
$s_2$ is the sunflower function of Equation~(\ref{eq:sunflower}).

\begin{corollary}\label{cor:mmhs_xp_fpt} The following claims hold:
\begin{itemize}
\item \MMHS/$\beta$ is \XP.
\item \MMHS/$(\alpha+\beta)$ is \FPT. More precisely, it can be solved in time $\O^*(\alpha^\beta(c \beta \cdot \log(\alpha \beta))^{\alpha \beta})$, where $c$ is the constant in the $s_2$ sunflower function of Equation~(\ref{eq:sunflower}).
\end{itemize}
\end{corollary}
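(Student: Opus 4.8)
The plan is to derive both claims from Proposition~\ref{prop:algommhs}, using for the \FPT part an additional kernelization step provided by Proposition~\ref{prop:fixedalpha_beta_kernel}; no new combinatorial idea is required, the work is in checking that the bounds compose correctly.

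For the \XP claim, I would simply invoke Proposition~\ref{prop:algommhs}: given an instance $(\H,\beta)$, it decides \MMHS in time $\O^*\big((\alpha(\H)\cdot\lambda)^\beta\big)$ where $\lambda=\min\big(\frac{|E(\H)|}{\beta},\Delta(\H)\big)$. Since $\alpha(\H)=\max_{H\in E(\H)}|H|\le |V(\H)|$ and $\lambda\le |E(\H)|$, the factor $(\alpha(\H)\cdot\lambda)^\beta$ is bounded by $\big(|V(\H)|\cdot|E(\H)|\big)^\beta$, i.e. by $|(\H,\beta)|^{g(\beta)}$ for a computable function $g$. This is exactly the running-time shape required for membership in \XP, so \MMHS/$\beta$ is \XP.

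For the \FPT claim with parameter $\alpha+\beta$, I would first run the kernelization underlying Proposition~\ref{prop:fixedalpha_beta_kernel} for \MMHS, instantiated with the sunflower function $s_2$ of Equation~(\ref{eq:sunflower}). Although that proposition is stated for a fixed $\alpha$, its construction (Lemma~\ref{lem:prelim}, Property~\ref{prop4}, followed by the sunflower argument of Lemma~\ref{lemma:sunflower}) runs in time $\O^*(|V(\H)|)$ independently of $\alpha$, and its correctness only needs $s_2$ to be a valid sunflower function. Hence in polynomial time we either already certify a yes-instance and stop, or obtain an equivalent instance $(\H',\beta)$ with $E(\H')\subseteq E(\H)$ and $|E(\H')|\le s_2(\alpha,\beta)$; in particular $\alpha(\H')\le\alpha$ and $\Delta(\H')\le |E(\H')|\le s_2(\alpha,\beta)=(c\beta\cdot\log(\alpha\beta))^{\alpha}$. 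Applying Proposition~\ref{prop:algommhs} to $(\H',\beta)$ then runs in time $\O^*\big((\alpha(\H')\cdot\lambda')^\beta\big)$ with $\lambda'\le\Delta(\H')\le s_2(\alpha,\beta)$, which after substitution gives $\O^*\big((\alpha\cdot(c\beta\log(\alpha\beta))^{\alpha})^\beta\big)=\O^*\big(\alpha^\beta(c\beta\log(\alpha\beta))^{\alpha\beta}\big)$, as claimed.

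Since all the technical content is already packaged in Propositions~\ref{prop:algommhs} and~\ref{prop:fixedalpha_beta_kernel}, there is no genuine obstacle; the only points deserving a sentence of care are (i) that the kernelization for \MMHS only deletes vertices and hyperedges, so the maximum hyperedge size does not grow and the parameter $\alpha$ is preserved, and (ii) that, unlike the $\alpha$-\MMBS kernel, the \MMHS kernel of Proposition~\ref{prop:fixedalpha_beta_kernel} carries \emph{no} hidden $|V(\H)|^{f(\alpha)}$ factor in its running time, which is precisely what makes the composition \FPT rather than merely \XP.
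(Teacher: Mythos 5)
Your proposal is correct and follows essentially the same route as the paper: the \XP claim is read off directly from Proposition~\ref{prop:algommhs}, and the \FPT claim is obtained by first applying the kernelization of Proposition~\ref{prop:fixedalpha_beta_kernel} with the sunflower bound $s_2$ so that $\Delta(\H) \le |E(\H)| \le s_2(\alpha,\beta)$, and then running the branching algorithm, yielding the stated $\O^*\big(\alpha^\beta(c\beta\log(\alpha\beta))^{\alpha\beta}\big)$ bound. Your two remarks on parameter preservation and the absence of a hidden $|V(\H)|^{f(\alpha)}$ factor in the \MMHS kernel are exactly the points the paper relies on implicitly.
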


%We point out that the XP algorithm for \MMHS/$\beta$ cannot be obtained by simply guessing (in $|V(\H)|^\beta$) a set $X^*$ of $\beta$ vertices belonging to an optimal solution.
%Indeed, once again we recall that $\mmhs(\H) \ge \beta$ does not imply that there exists a minimal hitting set of size exactly $\beta$, implying that we should then decide if
%there exists a minimal hitting set containing $X^*$, which is an \NPh problem even when $\alpha=2$~\cite{Extension of vertex cover and independent set in some classes
%of graphs and generalizations}.

Even if the algorithm of Proposition~\ref{prop:algommhs} gives a running time matching the lower bound of Corollary~\ref{cor:mmhs} for the dependency on $|E(\H)|$,
we can get a faster \FPT algorithm for \MMHS/$(\alpha+\beta)$ using an ad-hoc algorithm that  does not reduce to the extension problem. Namely, we present in Theorem~\ref{thm:fptv2} an algorithm for
\MMHS/$(\alpha+\beta)$ running in time $\O^*(2^{\alpha\beta})$.
We first need some preliminaries.

\mbvfinale{idee algo final : imaginer hypergraph avec alpha=3.
collect S un mhs. si plus petit que beta:
guess l'intersection d'un opt avec S,  appellons là X.
Soit EX les hyperaretes qui touchent pas X.
Soit S2 un mhs de EX.
Point clef 1, si S2 plus grand que beta, la ausssi on peut dire oui. Donc supposons S2 plsu petit beta.
La aussi on gess X2 la trace de l'opt dans S2.
Et ben là, il reste plus rien à decider, quand on a X et X2 (et les hyperaretes de taille 1), faut just verif que c'est bien un minimal hitting set,
car anyway on a pas le droit de rajouter encore des sommets ailleurs (car on a déjà un hitting set ).
On evite de se retrouver avec un X de taille beta à etendre}

\begin{definition}
  Let $\H$ be a hypergraph, let $I \subseteq V(\H)$ be an \is in $\H$, and let $X \subseteq V(\H)$. Let
  \begin{itemize}
    \item $\H_{I}$ such that $V(\H_{I})=V(\H) \setminus I$, and $E(\H_{I})=\{H \setminus I \mid H \in E(\H)\}$,
    \item $E^{\bar{X}} =\{ H \in E(\H) \mid H \cap X= \emptyset \}$, and
    \item $\H^{\bar{X}}$ such that $V(\H^{\bar{X}})=V(\H)$ and $E(\H^{\bar{X}})= E^{\bar{X}}$.
  \end{itemize}
\end{definition}

\begin{lemma}\label{lemma:fptv2:1}
  Let $\H$ be a hypergraph and let $I \subseteq V(\H)$ be an \is in $\H$.
  \begin{enumerate}
    \item For every minimal hitting set $S$ of $\H_I$, $S$ is also a minimal hitting set of $\H$. This implies $\mmhs(\H) \ge \mmhs(\H_{I})$.
    \item For every minimal hitting set $S^*$ of $\H$ such that $S^* \cap I = \emptyset$, $S^*$ is also a minimal hitting set of $\H_I$.
  \end{enumerate}
  \end{lemma}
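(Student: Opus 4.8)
The plan is to unwind the definitions of $\H_I$, of hitting sets, and of minimality, using the hypothesis that $I$ is an independent set of $\H$ (so that $H \not\subseteq I$ for every $H \in E(\H)$, equivalently $H \setminus I \neq \emptyset$). A useful preliminary observation is that, since $I$ is independent, the map $H \mapsto H \setminus I$ sends each hyperedge of $\H$ to a \emph{nonempty} subset of $V(\H_I) = V(\H) \setminus I$, so that $E(\H_I)$ contains no empty hyperedge; hence a set $S \subseteq V(\H) \setminus I$ is a hitting set of $\H_I$ if and only if $S \cap (H \setminus I) \neq \emptyset$ for every $H \in E(\H)$, which (as $S \cap I = \emptyset$) is equivalent to $S \cap H \neq \emptyset$ for every $H \in E(\H)$, i.e.\ to $S$ being a hitting set of $\H$. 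This ``hitting set equivalence'' for subsets of $V(\H) \setminus I$ is the common core of both parts.

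\textbf{Part 1.} Let $S$ be a minimal hitting set of $\H_I$; in particular $S \subseteq V(\H_I) = V(\H) \setminus I$. By the equivalence above, $S$ is a hitting set of $\H$. For minimality, I would invoke Lemma~\ref{lem:prelim} (Property~\ref{prop1bis}): for every $v \in S$ there is a hyperedge of $\H_I$, necessarily of the form $H_v \setminus I$ with $H_v \in E(\H)$, such that $(H_v \setminus I) \cap S = \{v\}$. Since $S \cap I = \emptyset$, this gives $H_v \cap S = (H_v \setminus I) \cap S = \{v\}$, so again by Property~\ref{prop1bis} (applied in the other direction) $S$ is a minimal hitting set of $\H$. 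Taking $S$ to be a minimal hitting set of $\H_I$ of maximum size yields $\mmhs(\H) \ge |S| = \mmhs(\H_I)$.

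\textbf{Part 2.} Let $S^*$ be a minimal hitting set of $\H$ with $S^* \cap I = \emptyset$, so $S^* \subseteq V(\H) \setminus I = V(\H_I)$. By the hitting set equivalence, $S^*$ is a hitting set of $\H_I$. For minimality, apply Property~\ref{prop1bis} to $S^*$ in $\H$: for each $v \in S^*$ there is $H_v \in E(\H)$ with $H_v \cap S^* = \{v\}$; then $H_v \setminus I \in E(\H_I)$ and $(H_v \setminus I) \cap S^* = H_v \cap S^* = \{v\}$ (using $S^* \cap I = \emptyset$), so $S^*$ is a minimal hitting set of $\H_I$ by Property~\ref{prop1bis}.

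This lemma is essentially a bookkeeping argument and I do not expect a genuine obstacle; the only point requiring a little care is the role of independence of $I$, which is precisely what guarantees that the contracted hyperedges $H \setminus I$ are nonempty, so that no hyperedge of $\H$ becomes impossible to hit in $\H_I$ — without this, a hitting set of $\H_I$ need not be a hitting set of $\H$, and the statement would fail.
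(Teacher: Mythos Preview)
Your proof is correct and follows essentially the same approach as the paper. The only stylistic difference is that you phrase minimality via the ``private hyperedge'' characterization of Property~\ref{prop1bis}, whereas the paper uses the equivalent direct formulation (for each $v$ there is a hyperedge missed by $S\setminus\{v\}$); the underlying arguments are the same, and you correctly isolate the role of independence of $I$ in ensuring $H\setminus I\neq\emptyset$.
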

\begin{proof}
  For the first property,  let $S$ be a minimal hitting set of $\H_I$.
  Let us first prove that $S$ is a hitting set of $\H$. Consider an arbitrary hyperedge $H \in E(\H)$. As $I$ is an \is, $H \setminus I \neq \emptyset$, and as $S$ is a hitting set of $\H_I$ and $H \setminus I \in E(\H_I)$, we get $S \cap (H \setminus I) \neq \emptyset$.
  Let us now prove that $S$ is minimal. Consider an arbitrary vertex $v \in S$. By the minimality in $\H_I$, there exists $H \in E(\H_I)$ such that $(S \setminus \{v\}) \cap H = \emptyset$,
  implying that $(S \setminus \{v\}) \cap (H \cup I) = \emptyset$ as $S \cap I = \emptyset$, where $H \cup I \in E(\H)$.

  For the second property, let $H' \in E(\H_I)$, where $H' = H \setminus I$, $H \in E(\H)$.
  As $S^*$ is a hitting set of $\H$, $S^* \cap H \neq \emptyset$, and as $S^* \cap I = \emptyset$, we get $S^* \cap H' \neq \emptyset$.
  Let us now verify the minimality. Consider an arbitrary vertex $v \in S^*$. As $S^*$ is minimal in $\H$, there exists $H \in E(\H)$ such that $(S^* \setminus \{v\}) \cap H = \emptyset$,
  implying $(S^* \setminus \{v\}) \cap (H \setminus I) = \emptyset$.
\end{proof}

\begin{lemma}\label{lemma:fptv2:2}
  Let $\H$ be a hypergraph, let $X \subseteq V(\H)$, and let $S'$ be a minimal hitting set of $\H^{\bar{X}}$. There exists a minimal hitting set $S$ of $\H$ such that $S' \subseteq S$.
\end{lemma}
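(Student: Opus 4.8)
The statement to prove is Lemma~\ref{lemma:fptv2:2}: given a hypergraph $\H$, a set $X \subseteq V(\H)$, and a minimal hitting set $S'$ of $\H^{\bar X}$ (the hypergraph whose hyperedges are exactly those of $\H$ missing $X$), there exists a minimal hitting set $S$ of $\H$ with $S' \subseteq S$. The natural approach is to extend $S'$ to a hitting set of all of $\H$ by adding vertices \emph{from $X$ only}, in a greedy/minimal way, and then argue that this cannot destroy the ``privateness'' certificates of the vertices already in $S'$.

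First I would observe that $S' \cup X$ is a hitting set of $\H$: every hyperedge $H \in E(\H)$ either meets $X$ (done), or lies in $E^{\bar X}$ and hence is hit by $S'$. So the family of hitting sets of $\H$ sandwiched between $S'$ and $S' \cup X$ is nonempty; let $S$ be an inclusion-minimal such hitting set, obtained by starting from $S' \cup X$ and removing vertices of $X$ one at a time as long as the result stays a hitting set. By construction $S' \subseteq S \subseteq S' \cup X$, so $S \setminus S' \subseteq X$, and $S$ is a hitting set of $\H$. It remains to show $S$ is a \emph{minimal} hitting set of $\H$, i.e. that every $v \in S$ has a private hyperedge $H_v \in E(\H)$ with $H_v \cap S = \{v\}$ (using Lemma~\ref{lem:prelim}, Property~\ref{prop1bis}).

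For $v \in S \setminus S' \subseteq X$: minimality of $S$ in the greedy process gives that $S \setminus \{v\}$ is not a hitting set, so there is a hyperedge $H_v$ with $H_v \cap (S \setminus \{v\}) = \emptyset$, i.e. $H_v \cap S = \{v\}$ (and $v \in H_v$ since $H_v$ must be hit by $S$). For $v \in S'$: here is the one point that needs care. Since $S'$ is a minimal hitting set of $\H^{\bar X}$, by Property~\ref{prop1bis} there is a hyperedge $H'_v \in E^{\bar X}$ with $H'_v \cap S' = \{v\}$. Because $H'_v \in E^{\bar X}$ we have $H'_v \cap X = \emptyset$, and since $S \setminus S' \subseteq X$, it follows that $H'_v \cap (S \setminus S') = \emptyset$; combining, $H'_v \cap S = (H'_v \cap S') \cup (H'_v \cap (S\setminus S')) = \{v\}$. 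So $H'_v$ is the required private hyperedge for $v$ in $\H$. Hence every vertex of $S$ has a private hyperedge, so $S$ is a minimal hitting set of $\H$ containing $S'$, as desired.

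The only mild subtlety — the ``main obstacle'', though it is not a serious one — is making sure the private certificates of the old vertices survive: this is exactly where the definition of $E^{\bar X}$ as hyperedges \emph{disjoint from} $X$ is used, guaranteeing that the new vertices (all in $X$) cannot intrude into those certificate hyperedges. Everything else is the standard ``extend to a hitting set, then shrink to minimality'' argument, and the new vertices automatically get certificates from the shrinking step. No heavy computation is involved.
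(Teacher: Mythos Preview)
Your proof is correct and follows essentially the same approach as the paper's: start from $S' \cup X$, greedily remove vertices of $X$ while preserving the hitting-set property, and then verify minimality of the result by noting that vertices added from $X$ are private by the greedy stopping condition, while vertices of $S'$ keep their private hyperedges from $E^{\bar X}$ since those hyperedges are disjoint from $X$. The only cosmetic difference is that the paper partitions the final set as $S^* \cap X$ versus $S^* \cap S'$, whereas you partition as $S \setminus S'$ versus $S'$; these coincide because $S' \cap X = \emptyset$ (any $v \in X$ lies in no hyperedge of $\H^{\bar X}$, so it cannot belong to a minimal hitting set of $\H^{\bar X}$).
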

\begin{proof}
  Let $S = S' \cup X$. Observe that $S$ is a hitting set of $\H$.
  Now, as far as there exists $v \in S \cap X$ such that $S \setminus \{v\}$ is still a hitting set of $\H$, remove $v$ from $S$.
  Let $S^*$ be the obtained set, which satisfies $S' \subseteq S^* \subseteq S$, and let us verify that $S^*$ is minimal. For every $v \in S^* \cap X$,
  by definition of $S^*$ we have that $S^* \setminus \{v\}$ is not a hitting set of $\H$. For every $v \in S^* \cap S'$, as $S'$ is minimal in
  $\H^{\bar{X}}$, it follows that there exists $H \in E^{\bar{X}}$ such that $(S' \setminus \{v\}) \cap H = \emptyset$. As $H \cap X = \emptyset$, we get
  $(S^* \setminus \{v\}) \cap H = \emptyset$ as well.
\end{proof}

We are now ready to present our \FPT algorithm.

\begin{definition}\label{def:improved-algo}
  For a positive integer $\beta$, we define algorithm $A^{\beta}(\H,X)$, where $\H$ is a hypergraph without empty hyperedges and $X \subseteq V(\H)$, as follows:
  \begin{itemize}
  \item If $E^{\bar{X}} = \emptyset$,
  \begin{itemize}
    \item if $|X| \ge \beta$ and $X$ is minimal hitting set of $\H$, return ``\yes''.
    \item Otherwise,  return ``\no''.
  \end{itemize}
  \item Otherwise,  let $S$ be a minimal hitting set of $\H^{\bar{X}}$.
    \begin{itemize}
      \item If $|S| \ge \beta$, return ``\yes''.
      \item Otherwise, return $\bigvee_{S_1 \in \mathcal{L}} A^{\beta}(\H_{S_1},X \cup (S \setminus S_1))$, where $\mathcal{L}$ $= \{S_1 \subseteq S \mid$ $\mbox{$S_1$ is an \is of $\H^{\bar{X}}$}\}$.
    \end{itemize}
  \end{itemize}
\end{definition}

\noindent In order to analyze the algorithm, given an input $(\H,X)$ of $A^{\beta}$, we define the measure
$$
  m(\H,X) =
    \begin{cases}
      \max\{|H| \mid H \in E(\H^{\bar{X}})\} & \text{, if } E(\H^{\bar{X}}) \neq \emptyset\\
      0 & \text{, otherwise.}
    \end{cases}
$$
Observe that, using the notation of Definition~\ref{def:improved-algo}, as $S$ is a hitting set of $\H^{\bar{X}}$, for every $S_1 \in \mathcal{L}$ we have $m(\H_{S_1},X \cup (S \setminus S_1))<m(\H,X)$.
Indeed, a hyperedge of $\H^{\bar{X}}$ either intersects $S \setminus S_1$ and is not taken into account in the `$\max$', or intersects $S_1$ and thus its corresponding hyperedge in the hypergraph $\H_{S_1}$  has smaller size. Observe also that as $\H$ does not contain an empty hyperedge, $m(\H,X)=0$ is equivalent to $E(\H^{\bar{X}}) = \emptyset$.

\begin{lemma}\label{lem:two-properties-algo}
  The following statement hold:
  \begin{enumerate}
  \item If $A^{\beta}(\H,X)$ returns ``\yes'' then $\mmhs(\H) \ge \beta$.
  \item If there exists a minimal hitting set $S^*$ of $\H$ such that $X \subseteq S^*$ and $|S^*| \ge \beta$, then $A^{\beta}(\H,X)$ returns ``\yes''.
  \end{enumerate}
The above properties imply that, given  an instance $(\H,\beta)$ of \MMHS, $A^{\beta}(\H,\emptyset)$ returns ``\yes'' if and only if $\mmhs(\H) \ge \beta$.
\end{lemma}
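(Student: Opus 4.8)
The plan is to prove the two numbered statements separately, each by induction on the measure $m(\H,X)$, and then observe that applying them with $X=\emptyset$ yields the final ``if and only if''. Note first the base case: $m(\H,X)=0$ is equivalent to $E(\H^{\bar X})=\emptyset$ (since $\H$ has no empty hyperedge), which is exactly the first branch of the algorithm. In that branch $A^{\beta}(\H,X)$ returns ``\yes'' precisely when $|X|\ge\beta$ and $X$ is a minimal hitting set of $\H$, so statement~1 is immediate, and for statement~2 we must argue that if some minimal hitting set $S^*\supseteq X$ with $|S^*|\ge\beta$ exists, then in fact $X=S^*$: indeed $E^{\bar X}=\emptyset$ means $X$ is already a hitting set of $\H$, so by minimality of $S^*$ we get $S^*=X$, and then the test succeeds. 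For the inductive step we are in the second branch, with $S$ a minimal hitting set of $\H^{\bar X}$; recall from the discussion after Definition~\ref{def:improved-algo} that $m(\H_{S_1},X\cup(S\setminus S_1))<m(\H,X)$ for every $S_1\in\mathcal{L}$, so the induction hypothesis applies to every recursive call.

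For statement~1 in the inductive step, suppose $A^{\beta}(\H,X)$ returns ``\yes''. Either $|S|\ge\beta$, in which case $S$ is a minimal hitting set of $\H^{\bar X}$ of size at least $\beta$, and by Lemma~\ref{lemma:fptv2:2} there is a minimal hitting set of $\H$ extending $S$, hence of size at least $\beta$, giving $\mmhs(\H)\ge\beta$. Otherwise some recursive call $A^{\beta}(\H_{S_1},X\cup(S\setminus S_1))$ returned ``\yes'', so by the induction hypothesis $\mmhs(\H_{S_1})\ge\beta$; since $S_1$ is an \is of $\H^{\bar X}$, it is in particular an \is of $\H$ (its hyperedges are a subset of those of $\H$), so Lemma~\ref{lemma:fptv2:1}(1) gives $\mmhs(\H)\ge\mmhs(\H_{S_1})\ge\beta$. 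Here I should double-check the subtlety that $\H_{S_1}$ is formed by deleting $S_1$ from \emph{all} of $\H$, not from $\H^{\bar X}$, and that the running measure genuinely decreases — both are handled by the remark already in the text.

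For statement~2 in the inductive step, let $S^*$ be a minimal hitting set of $\H$ with $X\subseteq S^*$ and $|S^*|\ge\beta$, and assume $|S|<\beta$ (else ``\yes'' is returned directly). The key is to identify the right $S_1\in\mathcal{L}$: set $S_1:=S\setminus S^*$ and observe $S\setminus S_1=S\cap S^*$. First, $S_1$ is an \is of $\H^{\bar X}$ — if some $H\in E^{\bar X}$ satisfied $H\subseteq S_1=S\setminus S^*$, then $H$ would be disjoint from $S^*$, contradicting that $S^*$ hits $H\in E(\H)$; hence $S_1\in\mathcal{L}$ and the corresponding recursive call is made. Next I must check that $S^*$ witnesses a ``\yes'' for that call, i.e.\ that $S^*$ is a minimal hitting set of $\H_{S_1}$ with $X\cup(S\setminus S_1)\subseteq S^*$ and $|S^*|\ge\beta$. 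The size condition is unchanged. For the inclusion, $X\subseteq S^*$ by assumption and $S\setminus S_1=S\cap S^*\subseteq S^*$. For $S^*$ being a minimal hitting set of $\H_{S_1}$: since $S_1\cap S^*=\emptyset$ by construction, Lemma~\ref{lemma:fptv2:1}(2) applies and gives exactly that $S^*$ is a minimal hitting set of $\H_{S_1}$. So the induction hypothesis (valid since the measure dropped) yields that the recursive call returns ``\yes'', hence so does $A^{\beta}(\H,X)$. I expect this choice-of-$S_1$ step to be the main obstacle: one has to see that removing from $S$ exactly the part outside $S^*$ is simultaneously a legal branch (an \is of $\H^{\bar X}$) and keeps $S^*$ a valid minimal hitting set of the shrunk hypergraph. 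Finally, applying statement~1 and statement~2 with $X=\emptyset$ (and noting that any minimal hitting set $S^*$ of $\H$ trivially satisfies $\emptyset\subseteq S^*$) gives that $A^{\beta}(\H,\emptyset)$ returns ``\yes'' exactly when $\mmhs(\H)\ge\beta$, as claimed.
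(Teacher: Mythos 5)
Your overall route is the paper's: induction on the measure $m(\H,X)$, the same base-case analysis (using $E^{\bar{X}}=\emptyset$ and minimality of $S^*$ to force $S^*=X$), Lemma~\ref{lemma:fptv2:2} for the case $|S|\ge\beta$, Lemma~\ref{lemma:fptv2:1} for both directions, and in statement~2 the same choice $S_1=S\setminus S^*$. There is, however, one step whose justification is wrong as written, in the inductive step of statement~1: you claim that $S_1$ being an \is of $\H^{\bar{X}}$ makes it an \is of $\H$ ``because its hyperedges are a subset of those of $\H$''. The implication goes the other way: since $E(\H^{\bar{X}})\subseteq E(\H)$, independence in $\H$ implies independence in $\H^{\bar{X}}$, not conversely; a priori $S_1$ could contain a hyperedge of $\H$ that meets $X$. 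This matters, because Lemma~\ref{lemma:fptv2:1}(1) is stated only for sets that are independent in $\H$, and the same fact is what guarantees that $\H_{S_1}$ contains no empty hyperedge, i.e., that the recursive call is a legitimate input of $A^{\beta}$. Your deferral ``both are handled by the remark already in the text'' is only half true: the remark after Definition~\ref{def:improved-algo} gives the measure decrease, not this invariant.

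The gap is small and is closed exactly as the paper does at the start of its proof: since $S$ is a \emph{minimal} hitting set of $\H^{\bar{X}}$, every vertex of $S$ has a private hyperedge in $E^{\bar{X}}$, and such hyperedges avoid $X$, so $S\cap X=\emptyset$, hence $S_1\cap X=\emptyset$; every hyperedge of $\H$ outside $E^{\bar{X}}$ meets $X$ and therefore cannot be contained in $S_1$, so independence in $\H^{\bar{X}}$ does yield independence in $\H$. A milder version of the same point occurs in your statement~2: Lemma~\ref{lemma:fptv2:1}(2) also assumes $S_1=S\setminus S^*$ is an \is of $\H$, not merely disjoint from $S^*$; your own argument (a hyperedge contained in $S\setminus S^*$ would be missed by $S^*$) proves this once you run it over all of $E(\H)$ rather than only $E^{\bar{X}}$, which is what the paper does. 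With these two repairs your proof coincides with the paper's.
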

\begin{proof}
We use the notation introduced in Definition~\ref{def:improved-algo}.
Let us first argue that the prerequisite that the input hypergraph $\H$ does not contain an empty hyperedge is always satisfied.
To that end, let us consider such an input $\H$, and we shall prove that for any $S_1 \in \mathcal{L}$, $\H_{S_1}$ does not contain an empty hyperedge either.
Observe that as $S$ is minimal and hyperedges of $\H^{\bar{X}}$ do not intersect $X$, we have $S \cap X = \emptyset$, implying $S_1 \cap X = \emptyset$.
Together with the fact that $S_1$ is an \is of $\H^{\bar{X}}$, this implies that $S_1$ is an \is of $\H$.
Thus, $\H_{S_1}$ does not contain an empty hyperedge.

We prove both properties by induction on $m(\H,X)$.
  Let us start with the first property.
  If $m(\H,X)=0$, then $E(\H^{\bar{X}}) = \emptyset$, and the claimed property is true. Let is now assume that $m(\H,X) > 0$.
Suppose that $A^{\beta}(\H,X)$ returns ``\yes''. As $E(\H^{\bar{X}}) \neq \emptyset$, the algorithm goes to the second case and chooses $S$.
  If $|S| \ge \beta$, then by Lemma~\ref{lemma:fptv2:2} we get that $\mmhs(\H) \ge |S| \ge \beta$.
  Otherwise, there exists an \is $S_1$ of $\H^{\bar{X}}$ such that $A^{\beta}(\H_{S_1},X \cup (S \setminus S_1))$ returns ``\yes''.
As $m(\H_{S_1},X \cup (S \setminus S_1))<m(\H,X)$, by induction we get $\mmhs(\H_{S_1}) \ge \beta$, implying by Lemma~\ref{lemma:fptv2:1} that $\mmhs(\H) \ge \beta$.

  Let us now turn to the second property, and assume that there exists a minimal hitting set $S^*$ of $\H$ such that $X \subseteq S^*$ and $|S^*| \ge \beta$.
  Suppose first that $m(\H,X)=0$, implying that $E(\H^{\bar{X}}) = \emptyset$.
  In this case, we have that $X$ is already a hitting set of $\H$, and thus, as $S^*$ is minimal and $X \subseteq S^*$, we get that $S^*=X$,
  implying that $|X| \ge \beta$ and that the algorithm returns ``\yes''.
  Suppose now that $m(\H,X) > 0$, implying $E(\H^{\bar{X}}) \neq \emptyset$, and thus that the algorithm goes to the second case and chooses $S$.
If $|S| \ge \beta$ then we are done.
  Otherwise, let $S_2^* = S \cap S^*$ and $S_1^* = S \setminus S^*$. As $S^*$ is a hitting set of $\H$, there is no $H \in E(\H)$ such
  that $H \subseteq S_1^*$, implying that $S_1^*$ is an \is and thus that $S_1^* \in \mathcal{L}$.
  As by Lemma~\ref{lemma:fptv2:1}, $S^*$ is also a minimal hitting set of $\H_{S_1^*}$, as $ X \cup (S \setminus S_1^*) \subseteq S^*$,
  and as $m(\H_{S_1^*},X \cup (S \setminus S_1^*))< m(\H,X)$, by induction we get that $A^{\beta}(\H_{S_1^*},X \cup (S \setminus S_1^*))$ returns ``\yes'', and thus that $A(\H,X)$ returns ``\yes'' as well.
\end{proof}

\begin{theorem}\label{thm:fptv2}
\MMHS/$(\alpha+\beta)$ can be solved in time $\O^*(2^{\alpha\beta})$.
\end{theorem}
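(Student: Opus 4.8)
The plan is to prove the theorem simply by analyzing the running time of the algorithm $A^{\beta}(\H,\emptyset)$ of Definition~\ref{def:improved-algo}, since its correctness is already granted by Lemma~\ref{lem:two-properties-algo}: given an instance $(\H,\beta)$ of \MMHS, $A^{\beta}(\H,\emptyset)$ returns ``\yes'' if and only if $\mmhs(\H)\ge\beta$. First I would perform a trivial preprocessing to guarantee the prerequisite of the algorithm: if $\H$ has an empty hyperedge then it has no hitting set at all and we answer ``\no'' (recall that $\beta\ge 1$); otherwise $\H$ is a legitimate input for $A^{\beta}$, and by the first paragraph of the proof of Lemma~\ref{lem:two-properties-algo} the absence of empty hyperedges is preserved through all recursive calls.

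Next I would bound the recursion tree $T$ of $A^{\beta}(\H,\emptyset)$. For its depth, recall that the measure $m(\H,X)$ is a non-negative integer that, as observed right after Definition~\ref{def:improved-algo}, strictly decreases at every recursive call, i.e., $m(\H_{S_1},X\cup(S\setminus S_1))<m(\H,X)$. Since $m(\H,\emptyset)=\max_{H\in E(\H)}|H|\le\alpha$ (the case $E(\H)=\emptyset$ being trivial, as then $\mmhs(\H)=0$), every root-to-node path of $T$ has length at most $\alpha$; in particular, $T$ has depth at most $\alpha$ and all its internal nodes lie at depths $0,1,\dots,\alpha-1$. For the branching, the algorithm spawns recursive calls at a node only when $E^{\bar X}\neq\emptyset$ and the computed minimal hitting set $S$ of $\H^{\bar X}$ satisfies $|S|<\beta$; it then branches over the subfamily $\mathcal{L}\subseteq 2^{S}$, so such a node has at most $2^{|S|}\le 2^{\beta-1}$ children. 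Hence the number of nodes of $T$ at depth $i$ is at most $2^{(\beta-1)i}$.

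Then I would account for the work at each node. Computing $E^{\bar X}$, computing a minimal hitting set $S$ of $\H^{\bar X}$ (greedily: start from the set of all vertices occurring in the hyperedges of $E^{\bar X}$ and discard redundant vertices one by one), and testing in a base case whether $X$ is a minimal hitting set of $\H$, are all polynomial-time operations. The only super-polynomial step, arising at an internal node, is building $\mathcal{L}$, which takes time $\O^{*}(2^{|S|})=\O^{*}(2^{\beta-1})$ by listing all subsets of $S$ and checking independence of each. Summing over $T$, and using that internal nodes occur at depths at most $\alpha-1$, the total running time is at most
$$
  \O^{*}\!\left(\ \sum_{i=0}^{\alpha-1}2^{(\beta-1)i}\cdot 2^{\beta-1}\ +\ \sum_{i=0}^{\alpha}2^{(\beta-1)i}\ \right)
  \ =\ \O^{*}\!\left(2^{(\beta-1)\alpha}\right)\ =\ \O^{*}\!\left(2^{\alpha\beta}\right),
$$
where the first sum bounds the cost of building the families $\mathcal{L}$ over all internal nodes, the second bounds the total polynomial work (absorbed into $\O^{*}$) over all nodes, and in both cases the geometric sum is dominated, up to a factor of $2$, by its last term.

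No deep argument is needed here, as Lemmas~\ref{lemma:fptv2:1}, \ref{lemma:fptv2:2} and~\ref{lem:two-properties-algo} already carry the conceptual weight; the only point deserving a little care is the running-time bookkeeping. One should resist simply multiplying the number of leaves of $T$, which is $\O^{*}(2^{(\beta-1)\alpha})$, by a crude per-node cost of $2^{\beta}$, since that would yield only the weaker bound $\O^{*}(2^{(\alpha+1)\beta-\alpha})$: the key is that the $2^{\beta-1}$ cost of building $\mathcal{L}$ matches the branching factor and is paid only at internal nodes, which sit one level above the leaves. Alternatively, one may observe that $\mathcal{L}$ is downward closed (a subset of an \is is an \is), hence enumerable with polynomial delay, which makes the per-node cost polynomial and directly gives a total running time of at most $|V(T)|$ times a polynomial, i.e.\ $\O^{*}(2^{\alpha\beta})$.
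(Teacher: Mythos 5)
Your proposal is correct and follows essentially the same route as the paper: correctness is delegated to Lemma~\ref{lem:two-properties-algo}, and the running time comes from bounding the recursion depth by $\alpha$ via the measure $m$ and the branching factor by $2^{\beta-1}$ (since $|S|<\beta$), giving $\O^*(2^{(\beta-1)\alpha})=\O^*(2^{\alpha\beta})$. The only cosmetic difference is that you sum over the recursion tree explicitly, whereas the paper phrases the same count as the recurrence $f(\beta,\alpha,n)\le p(n)+2^{\beta-1}f(\beta,\alpha-1,n)$.
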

\begin{proof}
Given an instance $(\H,\beta)$ of \MMHS, we simply call $A^{\beta}(\H,\emptyset)$.
According to Lemma~\ref{lem:two-properties-algo}, this algorithm correctly decides whether $\mmhs(\H) \ge \beta$. Let us now analyze the running time.
  Let $f(\beta,\alpha,n)$ we the worst case running time of the algorithm $A^\beta(\H,X)$ when $m(\H,X) \le \alpha$ and $|V(\H)|=n$.
  We get that there exists a polynomial $p$ such that $f(\beta,0,n) \le p(n)$ (as when $m(\H,X) = 0$ we have $E(\H^{\bar{X}}) = \emptyset$ and
  the algorithm only checks that $X$ is a minimal hitting set of size at least $\beta$),
  and $f(\beta,\alpha,n) \le p(n)+2^{\beta-1}f(\beta,\alpha-1,n)$.
  To simplify the notation, let $b = 2^{\beta-1}$.
  Using a straightforward induction on $\alpha$ it follows that $f(\beta,\alpha,n) \le p(n)\cdot \left(b^{\alpha-1}+\frac{b^{\alpha-1}-1}{b-1}\right)$, implying the claimed running time.
  \end{proof}

%%%%%%%%%%%%%%%%%%%%%%%%%%%%%%%%%%%%%%%%%%%%%%%%%%%%%%%%%%%%%%%%%%%%%%%%%%%%%%
%\newpage
%\section{The special case of \MMVC or $\alpha=2$}
%%%%%%%%%%%%%%%%%%%%%%%%%%%%%%%%%%%%%%%%%%%%%%%%%%%%%%%%%%%%%%%%%%%%%%%%%%%%%%
%\input{section_mmvc}

%%%%%%%%%%%%%%%%%%%%%%%%%%%%%%%%%%%%%%%%%%%%%%%%%%%%%%%%%%%%%%%%%%%%%%%%%%%%%%

\newpage
\section{\MMBS parameterized by treewidth}
\label{sec:tw}
%%%%%%%%%%%%%%%%%%%%%%%%%%%%%%%%%%%%%%%%%%%%%%%%%%%%%%%%%%%%%%%%%%%%%%%%%%%%%%

\label{sec:tw}

%\ja{I also like the idea of changing the $Y$'s by $B$'s. I created a command and replaced all $Y$'s in this section by $\backslash Y$. In case you disagree, we can easily change back.}

%In this section we prove that the problem of computing $\beta$ is \FPT parameterized by the treewidth of the input graph.
%\fixme{cf fixme en comment}
%\mb{can we prove that the pb is not MSO* ? Or an interesting lower bound on running time by tw ?} \ig{We should discuss about that. Given that the running time of our algorithm is $\O^*(2^{\O(2^{t})})$, I don't think we can prove a similar lower bound. Probably $\O^*(2^{o(t)})$ is trivial, but a stronger hardness result very hard}

%\ja{Suggestion to this beginning of section:}

\newcommand{\pbmmbstw}{\textsc{MMBS/\tw}\xspace}
In this section we prove that \pbmmbstw is \FPT.
%\paraprobl{MMBS parameterized by treewidth (\pbmmbstw)}{Graph $G$ with treewidth $\tw(G)=t$.}{$\beta(G)$.}{$t$.}
The algorithm requires a long case analysis. As one may expect, we present a dynamic programming (DP) algorithm using a nice tree decomposition of the input graph $G$. In Section~\ref{subsec:prelim_tw}, we present the notation we need and we provide some intuition about the parameters that we store in the tables of the algorithm. In Sections~\ref{sec:join},~\ref{sec:intro}, and~\ref{sec:forget} we present how to compute the table entries for a join, introduce, and forget node, respectively.
In Section~\ref{sec:together} we combine the previous ingredients to complete the algorithm.
\mbvfinale{replace "an $(X,Z)-\bs$" by "a $(X,Z)$ ?}
%\ja{end of suggestion}
\mbvfinale{replace $\S$ by ${\cal F}$ to refer to ``forbidden'' ?}

  \subsection{Preliminaries}
  \label{subsec:prelim_tw}

  %\fixme{voir quels fixme je laisse pour eux et je garde pour moi, ajouter intro intuition, et ask them by mail to proof check}

  %\paragraph*{Definitions}
  Consider a graph $G$ and subsets $X \subseteq V(G)$, $\Y \subseteq V(G)$, and $Z \subseteq X$ such that $Z$ is an \is of $G$.
  We say that a set $I \subseteq V(G)$ is an \emph{$(X,Z)$-\is} if $I$ is an \is of $G$ such that $I \cap X = Z$.
  We denote by
  \begin{itemize}
    \item \emph{$\alpha_{(X,Z)}(G)$} the size of a largest $(X,Z)$-\is in $G$, and by
    \item \emph{$\alpha_{(X,Z)}^\Y(G)$} the size of a largest $(X,Z)$-\is $I$ in $G$ such that $I \cap \Y = \emptyset$.
  \end{itemize}
  In both cases, if such a set does not exist, we set the corresponding parameter to $-\infty$. We say that a set $\Y \subseteq V(G)$ is an \emph{$(X,Z)$-\bs} in $G$ if $\alpha_{(X,Z)}^\Y(G) < \alpha_{(X,Z)}(G)$, and we say that $(X,Z)$ is \emph{blocked} by $\Y$ in $G$.
  These concepts are illustrated in Figure~\ref{fig:Z_is}.
  Observe that if $\Y \cap Z \neq \emptyset$ then $\Y$ is an $(X,Z)$-\bs
  (as $\alpha_{(X,Z)}^\Y(G) = -\infty$), but the backward implication is not necessarily true
  as $\Y$ may contain one vertex in $X\setminus Z$ of each maximum $(X,Z)$-is of $G$.
  Observe also that an $(\emptyset,\emptyset)$-\is of $G$ is simply an \is of $G$, implying that $\alpha_{(\emptyset,\emptyset)}(G)=\alpha(G)$. Similarly, an $(\emptyset,\emptyset)$-\bs is a \bs of $G$.
  \begin{center}
    \begin{figure}[!h]
      \begin{center}
        \includegraphics[width=0.24\textwidth]{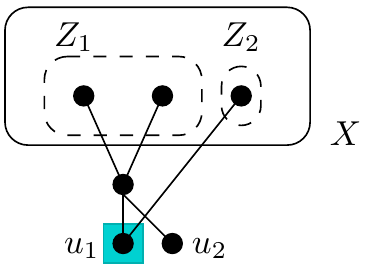}
      \end{center}
      \caption{In this example there is only one maximum $(X,Z_1)$-\is which is $I = Z_1 \cup \{u_1,u_2\}$. Note that $\Y = \{u_1\}$ is an $(X,Z_1)$-\bs, but $\Y$ is not
      an $(X,Z_2)$-\bs.}
      \label{fig:Z_is}
    \end{figure}
  \end{center}

  In what follows we assume that we are given a nice tree decomposition $\D=(T,{\cal B})$ of the input graph $G$
  as defined in Section~\ref{sec:prelim}.
  In particular, recall that
  %https://books.google.fr/books?id=PTY7ulm58XMC&pg=PA149&lpg=PA149&dq=nice+treewidth+decomposition&source=bl&ots=h-fQIyZi47&sig=ACfU3U2q02QWN19DZLl7mDHQ3y5wLvwSDw&hl=fr&sa=X&ved=2ahUKEwiFqISBrNboAhVQUxoKHQP9BJ8Q6AEwAXoECAsQLA#v=onepage&q=nice%20treewidth%20decomposition&f=false
  \begin{itemize}
    \item every node of $T$ has at most two children,
    \item if a bag $X$ corresponds to a node of $T$ having two children with bags $X^L$ and $X^R$, then $X=X^L=X^R$ (the node corresponding to $X$ is a \emph{join} node);
    \item if a bag $X$ corresponds to a node of $T$ having one children with bag $X^C$, then
    \begin{itemize}
      \item either $X \subsetneq X^C$ and $|X^C|=|X|+1$ (the node corresponding to $X$ is a \emph{forget} node), or
      \item $X^C \subsetneq X$ and $|X|=|X^C|+1$ (the node corresponding to $X$ is an \emph{introduce} node).
    \end{itemize}
  \end{itemize}

  %In the remainder of this section $G$ stands for a simple graph, and $\D=(T,{\cal B})$ is a nice tree decomposition of $G$. We keep this notation and omit $G$ and $\D$ from the notation of some of the following definitions.

  \paragraph*{Discussion on the list of parameters used in the DP algorithm}
  As usual, our dynamic programming algorithm performs a leaf-to-root traversal of a nice tree decomposition $\D=(T,{\cal B})$ of an input graph $G$ computing, for each node of the corresponding tree $T$, a set of tuples from the corresponding tuples of its children. Let us first explain the intuition behind each parameter of such tuples we shall compute and why they are needed. The formal details are presented in  Definition~\ref{def:tchack} (page~\pageref{def:tchack}).

To simplify the presentation, we call a bag of the tree decomposition \emph{join bag} (resp. \emph{forget bag}, \emph{introduce bag}) its corresponding node is a join node (resp. forget node, introduce node). We also speak about the \emph{children} of a bag, meaning the bags corresponding to the children of the considered node.

  Consider a join bag $X$ with children $X^L=X^R=X$, and suppose we look for a maximum \mbs $\Y$ of $G_X$.
  First, finding separately a maximum \mbs $\Y^L$ in $G_{X^L}$ and $\Y^R$ in $G_{X^R}$ will not guarantee that the size of $\Y^L \cup \Y^R$ (assuming
  $\Y^L \cup \Y^R$ is an \mbs in $G_X$) is maximum, and thus we introduce a parameter $\Y_0 \subseteq X$ and look for an \mbs $\Y$ of the graph $G_X$ such that $\Y \cap X = \Y_0$.

  Let $I$ be a \mis of $G_X$, $I^L = I \cap V(G_{X^L})$, and $I^R = I \cap V(G_{X^R})$. Observe that $I^L$ (resp. $I^R$)
  is not necessarily a \mis of $G_{X^L}$ (resp. $G_{X^R}$), and thus it is pointless to find an \mbs $\Y^L$ (resp. $\Y^R$) in $G_{X^L}$ (resp. $G_{X^R}$),
  as blocking maximum independent sets of $G_{X^L}$  and $G_{X^R}$ may not imply that we block maximum independent sets of $G_{X}$.
  This motivates the above notion of $(X,Z)$-\is in $G_{X}$.
  More precisely, let $\L = \{Z \subseteq X \mid$ there exists a \mis $I$ of $G_{X}$ such that $I \cap X = Z \}$% \ig{$\L$ depends on $X$, so the notation should reflect that. Maybe $\L_X$?} \mb{You are right,  but as $\L$ is always given together with $X$ in the parameters, it seems ok} \ig{I don't really like it, but ok}\ja{Other parameters depend on $X$, they all depend on $G$ and the tree decomposition. So I do not see a big problem, although I also like the idea of having the subscript $X$, but why $\L$? Is it not better to put some reference to max ind set? Like $\cal{I}$?}
  .
  Then, $\Y$ is an \mbs of $G_X$ if and only if:
  \begin{enumerate}
    \item\label{ex:1} (blocking condition) for every $Z \in \L$, $\Y$ is an $(X,Z)$-\bs in $G_{X}$, and
    \item\label{ex:2} (minimality condition) for every $v \in \Y$, there must exist $Z \in \L$ such that $\Y \setminus \{v\}$ is not an $(X,Z)$-\bs in $G_{X}$.
  \end{enumerate}
  \mbvfinale{can someone check it is really true, even with the weird case where a $(X,Z)$-\bs $\Y$ contains a vertex of $Z$ ?}
  This explains why we have, in our list of parameters of our dynamic programming (and also the input of our auxiliary problem in Definition~\ref{def:auxpb}), a list $\L$ of subsets of $X$,
  in addition to the set $\Y_0$. Notice that there may exist $Z \in \L$ with $Z = \emptyset$.
  Toward the correct notion of the operator `$\vdash$' given in Definition~\ref{def:tchack}, let us introduce some intermediate ones that we denote by `$\vdash_0$', `$\vdash_1$', `$\vdash_2$' and whose scope is limited to this preliminary discussion (as they
  will not be used in the eventual DP algorithm).
  Given $(X,\Y_0,\L)$ and a set $\Y$, we say that $\Y \vdash_0 (X,\Y_0,\L)$ if and only if
  \begin{itemize}
    \item $\Y \cap X = \Y_0$, and
    \item $\Y$ satisfies the two properties above (blocking and minimality conditions).
    %\ja{this ``above'' should be clear as it is, but we could improve. Should we also place the previous enumerate in some definition environment and refer to it? Maybe call it an $X$-\mbs?}
  \end{itemize}
  Such a set $\Y$ will be called a \emph{solution} to $(X,\Y_0,\L)$ (instead of \bs).

  Let us now argue that these three parameters are still not sufficient to design our algorithm, by exhibiting two situations where we suppose that we computed ``small solutions'',
  %(that in particular respect the minimality \jachg{p}{P}roperty~\ref{ex:2}\ja{refer to new definition, if we create}),
  but extending these small solutions to the current bag creates a solution which no longer respects the minimality condition.
  \begin{center}
    \begin{figure}[!h]
      \begin{center}
        \includegraphics[width=0.29\textwidth]{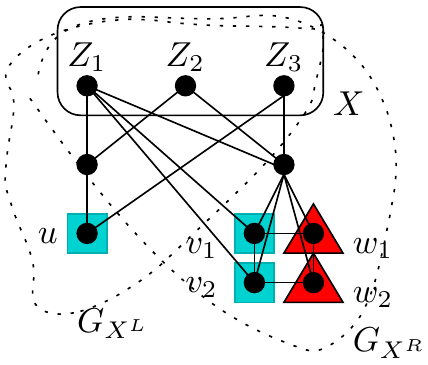}
      \end{center}
      \caption{$\Y \vdash_0 (X,\emptyset,\L)$ where $\L = \{Z_1,Z_2,Z_3\}$, $\Y=\{u,v_1,v_2\}$,
      and $\Y'=\{u,w_1,w_2\}$.}
      %Let $\L^L = \{Z_1,Z_2\}$, $\L^R = \{Z_2,Z_3\}$ and for any subset $\tilde{\Y}$ of $V(G_X)$, let $\tilde{\Y}^L = \tilde{\Y} \cap V(G_{X^L})$ and
      %$\tilde{\Y}^R = \tilde{\Y} \cap V(G_{X^R})$.
      %We have both $\Y^L \vdash \Pi(X_L,\emptyset,\L^L)$ and $\Y^{'L} \vdash \Pi(X_L,\emptyset,\L^L)$, and both
      %$\Y^R \vdash \Pi(X_,\emptyset,\L^L)$ and $\Y^{'L} \vdash \Pi(X_L,\emptyset,\L^L)$, and both
      %Observe that $\Y$ verifies \jachg{p}{P}roperty~\ref{ex:2} as $\Y\setminus \{u\}$ is a $(X,Z)$-\bs in $G_{X^L}$
      %Dans l'arbre $A_1$, le sommet contenant l'entier $3$ est de profondeur $2$, et celui contenant $6$ est de profondeur $3$.
      % L'arbre $A_1$ est de hauteur $3$. }
      \label{fig:intuition_DP_S}
    \end{figure}
  \end{center}
  %\ja{From here up to Definition 4 there are very important paragraphs to understand the algorithm. I guess we could improve the presentation. As it is, there is a mix of general descriptions of possible problems mixed together with the examples in Figs 3 and 4. Many sets are just mentioned in general, but not detailed in the figure; there is no detail on max ind sets and/or blocking sets of the examples providing explanation to everything we claim, etc.}

  %\ja{
  %I would suggest to, for each `situation': first just describe each general problem with no mention to the example; after that, present the example of the problem with the figure. We could/should describe the example in more detail, listing mis's and mbs's first, before presenting the parameters and why such problems happen.}

  Let us start with the first situation.
  Suppose first that we have a solution $\Y \vdash_0 (X,\Y_0,\L)$ for some $\L = \{Z_1,Z_2,Z_3\}$, as depicted in Figure~\ref{fig:intuition_DP_S}.
  Recall that $X=X^L=X^R$ and let $\Y^L = \Y \cap V(G_{X^L})$ and $\Y^R = \Y \cap V(G_{X^R})$.
  We prove in Lemma~\ref{prop:p2} (page~\pageref{prop:p2}) that, for every $Z \in \L$, $\Y$ is an $(X,Z)$-\bs in $G_X$ if and only if $\Y^L$ is an $(X^L,Z)$-\bs in $G_{X^L}$
  or $\Y^R$ is an $(X^R,Z)$-\bs in $G_{X^R}$. Thus, it may be the case, as in Figure~\ref{fig:intuition_DP_S}, that $\Y^L=\{u\}$ is an $(X^L,Z)$-\bs in $G_{X^L}$ for $Z \in \L^L = \{Z_1,Z_2\}$,
  and $\Y^R=\{v_1,v_2\}$ is an $(X^R,Z)$-\bs in $G_{X^R}$ for $Z \in \L^R = \{Z_2,Z_3\}$. Suppose now that we compute $\Y^{'L}$ and
  $\Y^{'R}$ such that $\Y^{'L} \vdash_0 (X^L,\Y_0,\L^L)$ and $\Y^{'R} \vdash_0 (X^R,\Y_0,\L^R)$, and let $\Y' = \Y^{'L} \cup \Y^{'R}$.
  It may be the case that $\Y'$ does not verify the previous minimality condition~\ref{ex:2}.
  Indeed, let $u \in \Y^{'L} \setminus X$ and suppose that $\Y^{'L} \setminus \{u\}$ is not an $(X^L,Z_1)$-\bs in $G_{X^L}$.
  Unfortunately, if $\Y^{'R}$ is an $(X^R,Z_1)$-\bs in $G_{X^R}$ (even if $Z_1 \notin \L^R$), we will have that $\Y' \setminus \{u\}$ is still
  a $(X,Z_1)$-\bs in $G_X$, and thus maybe still an $(X,Z)$-\bs for any $Z \in \L$. We overcome this problem by  forcing $\Y^{'R}$ {\sl not} to be an $(X^R,Z_1)$-\bs in $G_{X^R}$.
  This explains why we have in the input a list $\S$ of subsets of $X$, and we now impose that for any $Z \in \S$, $\Y$ must not be an $(X,Z)$-\bs in $G_X$.

  %There could also be examples (for some $\L = \{Z_1,Z_2,Z_3\}$) where there exists $v \in \Y_0$ such that $\Y \setminus \{v\}$ is not an $(X^R,Z_3)$-\bs in $G_{X^R}$,
  %but $\Y \setminus \{v\}$ (and thus $\Y^L \setminus \{v\}$) is still an $(X^L,Z)$-\bs in $G_{X^L}$ for any $Z \in \L^L$.
  %Informally, for set $\Y^L$, the minimality condition~\ref{ex:2} is not required on $v$. This explains why we also add in the parameters a set $\Y_F$ (for `forbidden')
  %where it is forbidden to require the minimality condition for $v \in \Y_F$. %, and we have the partition $\Y_0=\Y_1 \cup \Y_F$.
  %Condition~\ref{ex:2} is now rephrased as follows: for any $v \in \Y \setminus \Y_F$, there must exist $Z \in \L^*$ such that $\Y \setminus \{v\}$ is not an $(X,Z)$-\bs in $G_X$.

  %Let us partition $\Y_0 = \Y_1 \cup \Y_F$, and let $v \in \Y_1$.
  %There may exists $v \in \Y_1$ such that $\Y^L \setminus \{v\}$ is not an $(X^L,Z_2)$-\bs in $G_{X^L}$,
  %but $\Y^R \setminus \{v\}$ is still an $(X^L,Z)$-\bs in $G_{X^L}$ for any $Z \in \L^L$.

  Thus, now we denote by $\Y \vdash_1 (X,\Y_0,\L,\S)$ the property that $\Y\vdash_0 (X,\Y_0,\L)$ and, for any $Z \in \S$, $\Y$ is not an $(X,Z)$-\bs in $G_X$.

  \begin{center}
    \begin{figure}[!h]
      \begin{center}
        \includegraphics[width=0.35\textwidth]{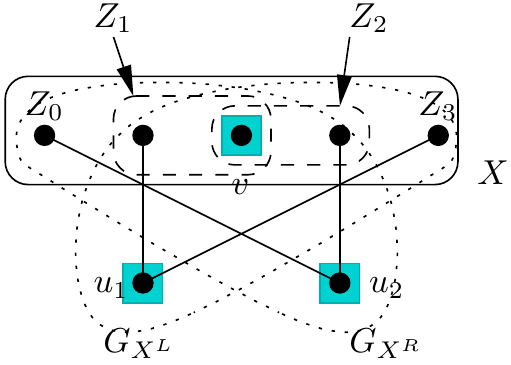}
      \end{center}
      \caption{$\Y_0 = \{v\}$, $\Y^{'L}=\{u_1,v\}$, and $\Y^{'R}=\{u_2,v\}$.}
      %Let $\L^L = \{Z_1,Z_2\}$, $\L^R = \{Z_2,Z_3\}$ and for any subset $\tilde{\Y}$ of $V(G_X)$, let $\tilde{\Y}^L = \tilde{\Y} \cap V(G_{X^L})$ and
      %$\tilde{\Y}^R = \tilde{\Y} \cap V(G_{X^R})$.
      %We have both $\Y^L \vdash \Pi(X_L,\emptyset,\L^L)$ and $\Y^{'L} \vdash \Pi(X_L,\emptyset,\L^L)$, and both
      %$\Y^R \vdash \Pi(X_,\emptyset,\L^L)$ and $\Y^{'L} \vdash \Pi(X_L,\emptyset,\L^L)$, and both
      %Observe that $\Y$ verifies \jachg{p}{P}roperty~\ref{ex:2} as $\Y\setminus \{u\}$ is a $(X,Z)$-\bs in $G_{X^L}$
      %Dans l'arbre $A_1$, le sommet contenant l'entier $3$ est de profondeur $2$, et celui contenant $6$ est de profondeur $3$.
      % L'arbre $A_1$ est de hauteur $3$. }
      \label{fig:intuition_DP_f}
    \end{figure}
  \end{center}

  Let us now turn to the second situation, which is depicted Figure~\ref{fig:intuition_DP_f}, where $\L = \{Z_0,Z_1,Z_2,Z_3\}$ and $\Y_0 = \{v\}$.
  Suppose that we compute $\Y^{'L}$ and $\Y^{'R}$ such that $\Y^{'L} \vdash_1 (X^L,\Y_0,\L^L,\S^L)$ where $\L^L = \{Z_0,Z_1,Z_2\}$, $\S^L = \{Z_3\}$ and
  $\Y^{'R} \vdash_1 (X^R,\Y_0,\L^R,\S^R)$ where $\L^R = \{Z_1,Z_2,Z_3\}$ and $\S^R = \{Z_0\}$. Let $\Y' = \Y^{'L} \cup \Y^{'R}$.
  Let $v \in \Y_0$. By minimality condition~\ref{ex:2}, there exists $Z \in \L^L$ such that $\Y^{'L} \setminus \{v\}$ is not an $(X^L,Z)$-\bs in $G_{X^L}$
  (where $Z = Z_1$ in Figure~\ref{fig:intuition_DP_f}). In the same way, there exists $Z' \in \L^R$ such that $\Y^{'R} \setminus \{v\}$ is not an $(X^R,Z')$-\bs in $G_{X^R}$
  (where $Z'= Z_2$ in Figure~\ref{fig:intuition_DP_f}).
  If $Z \neq Z'$, we may not be able to conclude that there exists a $Z'' \in \L$ such that $\Y' \setminus \{v\}$ is not an $(X,Z'')$-\bs in $G_{X}$.
  In the example depicted in Figure~\ref{fig:intuition_DP_f}, $\Y' \setminus \{v\}$ is still an $(X,Z)$-\bs in $G_{X}$ for every $Z \in \L$.
  Thus, for $v \in \Y_0$, we will keep control of the minimality condition in a more precise way by
  \begin{itemize}
    \item[$\bullet$]introducing another list $\L_2$ of subsets of $X$, and still ask that $\Y$ is an $(X,Z)$-\bs in $G_X$ for any $Z \in \L_2$,
    \item[$\bullet$]introducing a function $f: \Y_0 \to \L_2$, and
    \item[$\bullet$] (minimality condition in $\Y_0$) requiring that for every $v \in \Y_0$, $\Y \setminus \{v\}$ is not an $(X,f(v))$-\bs in $G_X$.
    %\item[$\bullet$]requiring that  $\forall v \in \Y \setminus \Y_0$, $\exists Z \in \L_1$ such that $\Y \setminus \{v\}$ is not an $(X,Z)$-\bs in $G_X$.
  \end{itemize}
  %In other words, the minimality for the "private part" (vertices in $\Y \setminus \Y_0$) is related to sets in $\L_1$, and the minimality for the "public part" (vertices in $\Y_0$)
  %is related to sets in $\L_2$.
  In the previous example, we would have to set either $f^L(v)=f^R(v)=Z_1$ or $f^L(v)=f^R(v)=Z_2$, but none of these choices leads to a feasible solution on both the left and the right hand sides. This is not surprising, as in fact there is no set $\Y$ such that $\Y \vdash_1 (X,\Y_0,\L,\S)$ where $\L = \{Z_0,Z_1,Z_2,Z_3\}$, $\Y_0 = \{v\}$, and $\S = \emptyset$.
  Indeed, being an $(X,Z_0)$-\bs in $G_{X}$ forces any $\Y$ to contain $u_1$ (as $\Y$ cannot contain the vertex of $Z_0$), and
  being a $(X,Z_3)$-\bs in $G_{X}$ forces any $\Y$ to contain $u_2$. This means that we necessarily have $\{u_1,u_2,v\} \subseteq \Y$.
  Then, observe that $\Y$ is not minimal as $\{u_1,u_2\}$ is still a $(X,Z)$-\bs in $G_{X}$ for any $Z \in \L$. The conclusion is that in this situation, forcing $v$ to be in any solution
  leads to an infeasible instance.

  Finally, even when using function $f$, and defining accordingly $\Y \vdash_2 (X,\Y_0,\L_1,\L_2,f,\S)$ if $\Y \vdash_1 (X,\Y_0,\L_1,\S)$ and $Y$ respects
  the previous minimality condition in $\Y_0$, there is a last important detail. Suppose
  $\Y^{'L} \vdash_2 (X^L,\Y_0,\L_1^L,\L_2^L,f^L,\S^L)$ where for example that $\L_1^L = \{Z_1\}$, $\L_2^L = \{Z_2\}$, and consider $v \in \Y^{'L} \setminus \Y_0$.
  We know that there exists $Z$ such that $\Y^{'L} \setminus \{v\}$ is not an $(X^L,Z)$-\bs in $G_{X^L}$, but we must even impose that
  $Z \in \L_1^L$, as otherwise if $Z = Z_2$ then $\Y' \setminus \{v\}$ would still be an $(X,Z)$-\bs in $G_X$.
  Thus, the minimality condition is finally as follows:
  \begin{enumerate}
    \item (minimality condition outside $\Y_0$, forcing $Z \in \L_1$) $\forall v \in \Y \setminus \Y_0$, $\exists Z \in \L_1$ such that $\Y \setminus \{v\}$ is not an $(X,Z)$-\bs in $G_X$.
    \item (minimality condition in $\Y_0$) $\forall v \in \Y_0$, $\Y \setminus \{v\}$ is not an $(X,f(v))$-\bs in $G_X$, where $f(v) \in \L_2$.
  \end{enumerate}

  Even if we only discussed here the case where $X$ is a join node, it turns out that this list of parameters is also enough for the introduce and forget nodes.

  \paragraph*{Defining the auxiliary problem}
  Let us now define the auxiliary problem that will be solved by our DP algorithm.%; cf. Fig.~\ref{fig:Def1} \ig{remove this}.

  \begin{definition}
    Let $G$ be a graph and let $\D=(T,\B)$ be a nice tree decomposition of $G$.
    Let $\E(G,\D)$ be the set containing all tuples  $(X,\Y_0,\L_1,\L_2,f,\S)$ such that:
    \begin{itemize}
      \item[$\bullet$] $X \in \B$,
      \item[$\bullet$] $\Y_0 \subseteq X$,
      \item[$\bullet$] $\L_1,\L_2,\S \subseteq 2^X$
      such that for every $Z \in \L_1 \cup \L_2 \cup \S$, $Z$ is an \is of $G$, and
      \item[$\bullet$] $f: \Y_0 \to \L_2$. %\jachg{ when $\Y_0 \neq \emptyset$ (otherwise we write $f$ as the null function to keep the same number of parameters).}{(we consider $f\subseteq \Y_0 \times \L_2$, thus $f=\emptyset$ if $\Y_0 = \emptyset$).}
    \end{itemize}
    \end{definition}

\begin{definition}\label{def:tchack}
Let $G$ be a graph and let $\D=(T,\B)$ be a nice tree decomposition of $G$.
For every $(X,\Y_0,\L_1,\L_2,f,\S) \in \E(G,\D)$ and $\Y \subseteq V(G_X)$, we write $\Y \vdash (X,\Y_0,\L_1,\L_2,f,\S)$ if and only if
    \begin{enumerate}[i)]
      \item\label{def:Y0} $\Y \cap X = \Y_0$,
      \item\label{def:L1L2} $\forall Z \in \L_1 \cup \L_2$, $\Y$ is an $(X,Z)$-\bs in $G_X$,
      \item\label{def:S} $\forall Z \in \S$, $\Y$ is not an $(X,Z)$-\bs in $G_X$,
      \item\label{def:min} and the following two minimality conditions are satisfied:
      \begin{enumerate}[a)]
        \item\label{def:mina} $\forall v \in \Y \setminus \Y_0$, $\exists Z \in \L_1$ such that $\Y \setminus \{v\}$ is not an $(X,Z)$-\bs in $G_X$, and
        \item\label{def:minb} $\forall v \in \Y_0$, $\Y \setminus \{v\}$ is not an $(X,f(v))$-\bs in $G_X$.
      \end{enumerate}
    \end{enumerate}
  \end{definition}

  %\mb{Ignasi : I removed your figure (which had $\Y_0$, $\Y_1$ and $\Y_F$) as now $\Y_1$ and $\Y_F$ no longer exist. I think we do not need a figure now} \ig{Ok}
  %% \begin{figure}[htb]
    %% \begin{center}
      %% \vspace{-1cm}
      %% %\includegraphics[scale=1.12]{Fig/}
      %% \end{center}
      %% \vspace{-.5cm}
      %% \caption{Illustration of some of the sets introduced in Definition~\ref{def:tchack}.}
      %% \label{fig:Def1}
      %% \end{figure}

      Let us point out that there may exist $Z \in \L_1 \cup \L_2 \cup \S$ with $Z = \emptyset$,
      and that if $\exists Z \in \S$ such that $\Y_0 \cap Z \neq \emptyset$, then there is no solution (because of Property~\ref{def:S}).

      \begin{definition}\label{def:auxpb}
        We define the optimization problem $\Pi$ as follows, where we consider that the input graph $G$ and a nice tree decomposition $\D$ of $G$ are fixed:
        \begin{itemize}
          \item[] \emph{\textbf{Input}: A tuple $(X,\Y_0,\L_1,\L_2,f,\S) \in \E(G,\D)$.}
          \item[] \emph{\textbf{Output}: A set $\Y \subseteq V(G_X)$ such that $\Y \vdash (X,\Y_0,\L_1,\L_2,f,\S)$.}
          \item[] \emph{\textbf{Objective}: Maximize $|\Y|$.}
        \end{itemize}
      \end{definition}

      We say that an instance $I$ of $\Pi$ is \emph{feasible} if there exists a set $\Y$ such that $\Y \vdash (X,\Y_0,\L_1,\L_2,f,\S)$.
      %\jachg{,
      %and point out that not all instances may be feasible.
      %$\Pi$ is formulated as an optimization problem rather than a decision one with a ``budget'' added to parameters, as our algorithm
      %can directly handle the optimization version}{}.
      Let us now show that being able to solve optimally problem $\Pi$ is sufficient for computing the parameter $\mmbs(G)$, for a given graph $G$.

      \begin{proposition}\label{prop:mbsVspb1}
        Let $G$ be a graph and $\D=(T,\B)$ be a nice tree decomposition of $G$ such that $T$ is rooted at $X_0=\{\emptyset\}$.
        For every $\Y \subseteq V(G)$,
        $$\Y \vdash (\emptyset,\emptyset,\{\emptyset\},\emptyset,\emptyset,\emptyset) \ \text{ if and only if }\ \Y \mbox{ is an \mbs of $G$}.$$ \label{prop:mbsVspb1}
        %  \item For any  minimal $\alpha$-\bs $\Y$ of $G$, $\Y \vdash \Pi(\emptyset,\emptyset,\emptyset,\emptyset,\{\emptyset\},\emptyset,null,\emptyset)$. \label{prop:mbsVspb1:2}
       % \ig{I don't like the ``$null$''. We should define a notation of the null function (such as $\varnothing$) and use it here (and everywhere)}
       \vspace{-.6cm}
      \end{proposition}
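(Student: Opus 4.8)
The plan is to unfold Definition~\ref{def:tchack} for the particular tuple $(\emptyset,\emptyset,\{\emptyset\},\emptyset,\emptyset,\emptyset)$ and observe that the resulting requirements coincide, term by term, with the definition of a minimal blocking set of $G$. First I would record that this tuple indeed belongs to $\E(G,\D)$: $X=\emptyset$ is the root bag, $\Y_0=\emptyset\subseteq X$, the lists $\L_1=\{\emptyset\}$ and $\L_2=\S=\emptyset$ are subsets of $2^X=\{\emptyset\}$ whose only possible element $\emptyset$ is an \is of $G$, and $f\colon\emptyset\to\emptyset$ is the empty function. Since the root bag is $\emptyset$, every node of $T$ is a descendant of the root, so $\bigcup\{X^{w'}\mid X^{w'}\text{ is a descendant of the root in }T\}=V(G)$, and hence $G_X=G$. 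Finally, I would invoke the observation from the preliminary discussion that an $(\emptyset,\emptyset)$-\is of $G$ is simply an \is of $G$ (so $\alpha_{(\emptyset,\emptyset)}(G)=\alpha(G)$ and $\alpha_{(\emptyset,\emptyset)}^\Y(G)=\alpha(G\setminus\Y)$), and consequently that an $(\emptyset,\emptyset)$-\bs of $G$ is exactly a \bs of $G$.

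Then I would go through the items of Definition~\ref{def:tchack} with these parameters. Item~\ref{def:Y0} reads $\Y\cap\emptyset=\emptyset$ and is trivially true. Item~\ref{def:L1L2}, applied to the unique element $Z=\emptyset$ of $\L_1\cup\L_2=\{\emptyset\}$, becomes ``$\Y$ is an $(\emptyset,\emptyset)$-\bs in $G$'', i.e.\ ``$\Y$ is a \bs of $G$''. Item~\ref{def:S} is vacuous since $\S=\emptyset$. In item~\ref{def:min}, condition~\ref{def:mina} becomes ``for every $v\in\Y$ (as $\Y\setminus\Y_0=\Y$), $\Y\setminus\{v\}$ is not an $(\emptyset,\emptyset)$-\bs in $G$'', i.e.\ ``$\Y\setminus\{v\}$ is not a \bs of $G$ for every $v\in\Y$'', while condition~\ref{def:minb} is vacuous since $\Y_0=\emptyset$. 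Hence $\Y\vdash(\emptyset,\emptyset,\{\emptyset\},\emptyset,\emptyset,\emptyset)$ holds if and only if $\Y$ is a \bs of $G$ and $\Y\setminus\{v\}$ is not a \bs of $G$ for every $v\in\Y$.

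It then remains to match this with ``$\Y$ is an \mbs of $G$'', i.e.\ an inclusion-wise minimal \bs. The one line of genuine content is the monotonicity of blocking sets: if $\Y'\subseteq\Y''$ and $\Y'$ is a \bs of $G$, then so is $\Y''$, because $I^*\cap\Y''\supseteq I^*\cap\Y'\neq\emptyset$ for every \mis $I^*$ of $G$. Given this, a \bs $\Y$ has no proper subset that is a \bs if and only if $\Y\setminus\{v\}$ fails to be a \bs for every single $v\in\Y$ (for the nontrivial direction, if some $\Y'\subsetneq\Y$ were a \bs, pick $v\in\Y\setminus\Y'$ and apply monotonicity to $\Y'\subseteq\Y\setminus\{v\}$); alternatively this equivalence follows directly from Lemma~\ref{lem:prelim}, Property~\ref{prop1}. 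Combining with the previous paragraph yields exactly the claimed equivalence. I do not anticipate a real obstacle here: the statement is essentially a consistency check that the auxiliary problem $\Pi$ restricted to the root computes $\mmbs(G)$, and the only step requiring justification beyond bookkeeping is this monotonicity reduction from inclusion-minimality to single-vertex-removal minimality.
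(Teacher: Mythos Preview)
Your proposal is correct and follows essentially the same approach as the paper: unfold Definition~\ref{def:tchack} for the root tuple, observe that items~\ref{def:Y0}, \ref{def:S}, and~\ref{def:minb} are vacuous, and that items~\ref{def:L1L2} and~\ref{def:mina} translate respectively to ``$\Y$ is a \bs of $G$'' and ``$\Y\setminus\{v\}$ is not a \bs for every $v\in\Y$''. You are in fact a bit more careful than the paper, explicitly checking that the tuple lies in $\E(G,\D)$, that $G_X=G$ at the root, and spelling out the monotonicity argument that reduces inclusion-minimality to single-vertex removal; the paper's proof leaves these points implicit.
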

      \begin{proof}
        %% Let us start with the following claim.
        %% \begin{claim}
          %%   For any maximum independent set $I$ of $G$ and any maximum $(X_0,Z)$-\is $I'$ in $G_{X_0}$ (where $Z = I \cap X_0$),
          %%   $I$ is a maximum $(X_0,Z)$-\is in $G_{X_0}$, $I'$ is a maximum independent set of $G$, and $|I'| = |I|$.
          %%   \end{claim}
          %%    \begin{proofclaim}
            %%    Notice that $G_{X_0}=G$.
            %%   By definition, $I$ is an $(X_0,Z^*)$-\is \ig{what is $Z^*$? Do you mean ``for some $Z^* \subseteq X_0$''?} in $G_{X_0}$.
            %%   As $I'$ is a stable of $G$, we get $|I| \ge |I'|$, implying that $I$ is a maximum $(X_0,Z)$-\is in $G_{X_0}$.
            %%   As $I$ is an $(X_0,Z^*)$-\is in $G_{X_0}$, we also get $|I'| \ge |I|$.
            %%   \end{proofclaim}
            Let $(X,\Y_0,\L_1,\L_2,f,\S) = (\emptyset,\emptyset,\{\emptyset\},\emptyset,\emptyset,\emptyset)$.
            Recall that being an $(\emptyset,\emptyset)$-\bs  in $G_{X_0}$ is equivalent to being a $\bs$ in $G$.

            Suppose first that $\Y \vdash (\emptyset,\emptyset,\{\emptyset\},\emptyset,\emptyset,\emptyset)$.
            By Property~\ref{def:L1L2}, $\Y$ is an $(\emptyset,\emptyset)$-\bs in $G_{X_0}$, implying that $\Y$ is a $\bs$ of $G$.
            Let us now prove that $\Y$ is minimal. Let $v \in \Y$. As $v \in \Y \setminus \Y_0$, by Property~\ref{def:mina}, there exists $Z \in \L_1$ such that $\Y \setminus \{v\}$ is not an $(X_0,Z)$-\bs in $G$.
            As $\L_1 = \{\emptyset\}$, we obtain that $\Y \setminus \{v\}$ is not an $(\emptyset,\emptyset)$-\bs in $G$, and thus not a $\bs$ in $G$.

            Suppose now that $\Y$ is an \mbs of $G$.
            Property~\ref{def:L1L2} is satisfied as $\Y$ is a $\bs$ in $G$.
            Let us now prove Property~\ref{def:mina}. Let $v \in \Y \setminus \Y_0$. As $\Y$ is minimal,
            $\Y \setminus \{v\}$ is not a $\bs$ in $G$, and thus not an $(\emptyset,\emptyset)$-\bs  in $G_{X_0}=G$,
            where $\emptyset \in \L_1$.
          \end{proof}

          The following proposition is now immediate.

          \begin{proposition}\label{prop:solvingPiOK}
            Given an $n$-vertex graph $G$ with treewidth $\tw(G)=t$, if
            \begin{itemize}
              \item[$\bullet$] $t_1(n,t)$ is the time to compute a nice tree decomposition $\D$ of $G$ of width $t$, and
              \item[$\bullet$] $t_A(n,t)$ is the time to compute an optimal solution of problem $\Pi$, %$\opt(\I)$, for an instance $\I \in \E(G,\D)$;
              %\ja{the variable $I$ for instance was not caligraphic before. Non-caligraphic $I$ is also used for independent set. Which one use? $\opt(\I)$ was not defined...}
              %     \item[$\bullet$] $t_2(n,\tw(G))$ is the time to compute a maximum independent set in an $n$-vertex graph $G$ of tree-width $\tw(G)$.
            \end{itemize}
            then one can compute $\mmbs(G)$ in time $\O(t_1(n,t)+ t_A(n,t))$.
          \end{proposition}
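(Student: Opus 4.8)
The plan is to package Proposition~\ref{prop:mbsVspb1} into an algorithm in the obvious way, so there is essentially nothing to do beyond bookkeeping. First I would run the assumed subroutine that, in time $t_1(n,t)$, produces a nice tree decomposition $\D=(T,\B)$ of $G$ of width $t$. By the definition of a nice tree decomposition given in Section~\ref{sec:prelim}, the root node $r$ of $T$ has bag $X^{r}=\emptyset$, so, writing $X_0=X^{r}$, the decomposition $\D$ satisfies the hypothesis of Proposition~\ref{prop:mbsVspb1}.

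Next I would observe that $(\emptyset,\emptyset,\{\emptyset\},\emptyset,\emptyset,\emptyset)$ is a legitimate instance of $\Pi$, i.e., that it belongs to $\E(G,\D)$: the bag $X=\emptyset$ lies in $\B$ (it is the root bag $X_0$), we have $\Y_0=\emptyset\subseteq X$, the families $\L_1=\{\emptyset\}$, $\L_2=\emptyset$ and $\S=\emptyset$ are all contained in $2^{X}=\{\emptyset\}$ and their only possible member, the empty set, is trivially an \is of $G$, and $f$ is the empty function from $\emptyset$ to $\emptyset$. Hence I would call the assumed algorithm for $\Pi$ on this instance, which in time $t_A(n,t)$ returns a set $\Y\subseteq V(G_{X_0})=V(G)$ of maximum size with $\Y\vdash(\emptyset,\emptyset,\{\emptyset\},\emptyset,\emptyset,\emptyset)$ (such a $\Y$ exists whenever $G$ has at least one vertex, since then $V(G)$ is a blocking set of $G$ and contains an inclusion-minimal one), and output the number $|\Y|$.

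It then remains to argue correctness and running time. By Proposition~\ref{prop:mbsVspb1}, the sets $\Y$ satisfying $\Y\vdash(\emptyset,\emptyset,\{\emptyset\},\emptyset,\emptyset,\emptyset)$ are precisely the minimal blocking sets of $G$; therefore the value $|\Y|$ returned equals $\max\{|\Y'| : \Y' \text{ is an \mbs of } G\}=\mmbs(G)$, as required. The total time is the time $t_1(n,t)$ to build $\D$, plus the time $t_A(n,t)$ to solve the single instance of $\Pi$, plus a constant (or at most polynomial) glue‑code overhead, all of which is $\O(t_1(n,t)+t_A(n,t))$. There is no genuine obstacle in this proof: the substantive content was carried out in Proposition~\ref{prop:mbsVspb1}, and the real difficulty — designing the algorithm that solves $\Pi$, and hence bounding $t_A(n,t)$ — is deferred to the dynamic programming over $\D$ developed in the following sections. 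The only purpose of stating this proposition separately is to make explicit that, once that algorithm is in place, computing $\mmbs(G)$ follows immediately.
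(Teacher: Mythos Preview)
Your proof is correct and is precisely the argument the paper has in mind: the paper does not even spell out a proof, saying only that the proposition ``is now immediate'' from Proposition~\ref{prop:mbsVspb1}. Your write-up just makes that immediacy explicit by verifying that $(\emptyset,\emptyset,\{\emptyset\},\emptyset,\emptyset,\emptyset)\in\E(G,\D)$ and invoking Proposition~\ref{prop:mbsVspb1}.
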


          %% \fixme{insist somewhere on the fact that this implies that we can decide if $\beta(G) \ge k$ and not only equal}.

          %% \jachg{By Proposition~\ref{prop:solvingPiOK}, we can now focus on solving problem $\Pi$.
          In what follows, namely in Sections~\ref{sec:join},~\ref{sec:intro}, and~\ref{sec:forget}, we fix an input graph  $G$ and a nice tree decomposition $\D=(\B,T)$ of $G$ of width $t$.
          %% \ig{we already said this, right?}. In Sections~\ref{sec:join},~\ref{sec:intro}, and~\ref{sec:forget} we deal separately with join, introduce, and forget nodes of $\D$, respectively, and in Section~\ref{sec:together} we put all pieces together and conclude the description of the dynamic programming algorithm.}{}

          %% \fixmeperso{voir si tous les params se vident parfois et augmentent parfois}

          \subsection{Join node}
          \label{sec:join}

          Before proving Lemma~\ref{lem:join} corresponding to the join case, let us first prove the following technical lemmas.

          \begin{lemma}\label{prop:p1}
            For every $I \subseteq V(G_X)$ and every $Z \subseteq X$ where $Z$ is an \is of $G_X$, $I$ is a maximum $(X,Z)$-\is in $G_X$ if and only if $I^L=I \cap V(G_{X^L})$ is a maximum $(X^L,Z)$-\is in $G_{X^L}$ and $I^R=I \cap V(G_{X^R})$ is a maximum $(X^R,Z)$-\is in $G_{X^R}$.
          \end{lemma}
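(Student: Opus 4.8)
The plan is to prove a single structural decomposition for $(X,Z)$-independent sets at a join node and then read off both directions of the equivalence by a routine exchange argument.

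First I would record the three facts about the join node $X=X^L=X^R$ that the argument rests on: (i) $V(G_X)=V(G_{X^L})\cup V(G_{X^R})$; (ii) $V(G_{X^L})\cap V(G_{X^R})=X$; and (iii) there is no edge of $G_X$ between $V(G_{X^L})\setminus X$ and $V(G_{X^R})\setminus X$ (the last one being part of the definition of a nice tree decomposition recalled in Section~\ref{sec:prelim}, and (ii) following from the connectivity condition of tree decompositions). I would also observe that $G_{X^L}$ and $G_{X^R}$ are \emph{induced} subgraphs of $G_X$, so independence is inherited in both directions; in particular, since $Z\subseteq X$ is an \is of $G_X$, it is also an \is of $G_{X^L}$ and of $G_{X^R}$, so that (using $Z$ itself as a witness) both $\alpha_{(X^L,Z)}(G_{X^L})$ and $\alpha_{(X^R,Z)}(G_{X^R})$ are finite and at least $|Z|$; hence no $-\infty$ cases arise.

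Next I would prove the decomposition claim: for $I\subseteq V(G_X)$, writing $I^L=I\cap V(G_{X^L})$ and $I^R=I\cap V(G_{X^R})$, the set $I$ is an $(X,Z)$-\is in $G_X$ if and only if $I^L$ is an $(X^L,Z)$-\is in $G_{X^L}$ and $I^R$ is an $(X^R,Z)$-\is in $G_{X^R}$; and in that case $I=I^L\cup I^R$, $I^L\cap I^R=Z$, and therefore $|I|=|I^L|+|I^R|-|Z|$. The forward implication is immediate: $I^L,I^R$ are independent because $G_{X^L},G_{X^R}$ are induced subgraphs, and $I^L\cap X=I^R\cap X=I\cap X=Z$. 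For the converse, $I=I^L\cup I^R$ is (i), $I^L\cap I^R=Z$ follows from (ii) together with $I^L\cap X=I^R\cap X=Z$, and $I\cap X=Z$ is then clear; independence of $I$ in $G_X$ is checked by case analysis on a hypothetical edge $uw$ of $G_X$ with $u\in I^L$, $w\in I^R$: if both endpoints lie outside $X$ it is excluded by (iii); if $u\in X$ (the case $w\in X$ being symmetric) then, since $u\in X\subseteq V(G_{X^R})$, the edge $uw$ lies in the induced graph $G_{X^R}$, contradicting independence of $I^R$ there; and edges inside $I^L$ or inside $I^R$ are excluded since these are \is in induced subgraphs.

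Finally, combining the decomposition claim with the size identity yields $\alpha_{(X,Z)}(G_X)=\alpha_{(X^L,Z)}(G_{X^L})+\alpha_{(X^R,Z)}(G_{X^R})-|Z|$, and both directions of the lemma follow. For the forward direction, if $I$ is a maximum $(X,Z)$-\is but, say, $I^L$ is not a maximum $(X^L,Z)$-\is, then replacing $I^L$ by a strictly larger $(X^L,Z)$-\is $J^L$ and forming $J=J^L\cup I^R$ gives, by the decomposition claim, an $(X,Z)$-\is with $|J|=|J^L|+|I^R|-|Z|>|I|$, a contradiction; symmetrically for $I^R$. For the backward direction, if both $I^L$ and $I^R$ are maximum then $I=I^L\cup I^R$ is an $(X,Z)$-\is of size $\alpha_{(X^L,Z)}(G_{X^L})+\alpha_{(X^R,Z)}(G_{X^R})-|Z|$, which by the displayed identity applied to the decomposition of an arbitrary competitor is the maximum possible size of an $(X,Z)$-\is in $G_X$. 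The only delicate point is the bookkeeping at the join node, namely invoking (ii) and (iii) correctly when assembling $I^L$ and $I^R$; once these are in place the rest is straightforward.
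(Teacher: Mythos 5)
Your proof is correct and takes essentially the same route as the paper's: both arguments rest on the facts that $V(G_{X^L})$ and $V(G_{X^R})$ meet exactly in $X$ and have no edges between their parts outside $X$, and both directions are handled by the same exchange argument (swapping $I^L$ for a larger $(X^L,Z)$-\is) and by decomposing an arbitrary competitor, respectively. The only difference is presentational: you package these facts as a standalone decomposition claim with the additivity identity $\alpha_{(X,Z)}(G_X)=\alpha_{(X^L,Z)}(G_{X^L})+\alpha_{(X^R,Z)}(G_{X^R})-|Z|$, whereas the paper carries out the same exchange inline.
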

          \begin{proof}
            For the forward implication, suppose $I$ is a maximum $(X,Z)$-\is in $G_X$.
            Let $I_*^L$ be a maximum $(X^L,Z)$-\is in $G_{X^L}$. Then, $(I \setminus I^L) \cup I_*^L$ is still an \is as
            there is no edge between $I_*^L \cap X$ and $I^R$ (as $I_*^L \cap X = I^L \cap X = Z$),
            and there is no edge between $I_*^L \setminus X$ and $I^R \setminus X$ as, by the properties of a tree decomposition, there is no edge even between $V(G_{X_L}) \setminus X$ and $V(G_{X_R}) \setminus X$.
            As $((I \setminus I^L) \cup I_*^L) \cap X = Z$, this implies that $(I \setminus I^L) \cup I_*^L$ is an $(X,Z)$-\is in $G_X$ and that $|I| \ge (I \setminus I^L) \cup I_*^L$.
            As $I \cap I^L = I \cap I_*^L = Z$, we get $|I^L| \ge |I_*^L|$. As $I^L \cap X^L = Z$, we obtain that $I^L$ is a maximum
            $(X^L,Z)$-\is in $G_{X^L}$. The same arguments hold for $I^R$.

            For the backward implication, suppose $I^L$ is a maximum  $(X^L,Z)$-\is in $G_{X^L}$ and $I^R$ is a maximum $(X^R,Z)$-\is in $G_{X^R}$. Observe first that $I^L \cup I^R$
            is an $(X,Z)$-\is in $G_X$, as there is no edge between $V(G_{X_L}) \setminus X$ and $V(G_{X_R}) \setminus X$.
            Let $I^*$ be a maximum $(X,Z)$-\is in $G_X$. Note that $I_*^L := I^* \cap V(G_{X^L})$ is an $(X^L,Z)$-\is  in $G_{X^L}$, and symmetrically that $I_*^R := I^* \cap V(G_{X^R})$ is an $(X^R,Z)$-\is  in $G_{X^R}$. This implies that $|I_*^L| \le |I^L|$ and $|I_*^R| \le |I^R|$. As $I_*^L \cap I_*^R = I^{L} \cap I^{R} = Z$, the previous inequalities imply $|I^*| \le |I|$, meaning that $I$ is a maximum $(X,Z)$-\is in $G_X$.
          \end{proof}

          \mbvfinale{A boring problem with the notation. In a join node, we say that $X = X_L = X_R$, but it is not true that $G_X = G_{X^L} = G_{X^R}$. It may be confusing.}

          \begin{lemma}\label{prop:p2}
            Let $Z \subseteq X$. For every $\Y \subseteq V(G_X)$, $\Y$ is an $(X,Z)$-\bs in $G_X$ if and only if
            $\Y^L = \Y \cap V(G_{X^L})$ is an $(X^L,Z)$-\bs in $G_{X^L}$ or $\Y^R = \Y \cap V(G_{X^R})$ is an $(X^R,Z)$-\bs in $G_{X^R}$.
          \end{lemma}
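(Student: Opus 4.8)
The plan is to reduce both directions to statements about maximum $(X,Z)$-\is that are already under control via Lemma~\ref{prop:p1}. Recall that $\Y$ being an $(X,Z)$-\bs in $G_X$ means $\alpha_{(X,Z)}^\Y(G_X) < \alpha_{(X,Z)}(G_X)$, i.e. every maximum $(X,Z)$-\is of $G_X$ meets $\Y$. First I would handle the degenerate case $Z \cap \Y \neq \emptyset$: if $v \in Z \cap \Y$, then $v \in X$, so $v$ lies in $X^L = X^R = X$ and hence $v \in Z \cap \Y^L = Z \cap \Y^R$; by the remark after the definition of $(X,Z)$-\bs (a set meeting $Z$ is automatically an $(X,Z)$-\bs, since $\alpha_{(X,Z)}^{\Y}=-\infty$), all three statements $\Y$, $\Y^L$, $\Y^R$ are $(X,Z)$-\bs and the equivalence is trivial. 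So henceforth assume $Z \cap \Y = \emptyset$, hence also $Z \cap \Y^L = Z \cap \Y^R = \emptyset$, so none of the ``$-\infty$'' cases interfere.

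For the main case, the key structural fact is Lemma~\ref{prop:p1}: a set $I \subseteq V(G_X)$ is a maximum $(X,Z)$-\is in $G_X$ if and only if $I^L = I \cap V(G_{X^L})$ and $I^R = I \cap V(G_{X^R})$ are maximum $(X^L,Z)$- and $(X^R,Z)$-\is in their respective graphs, and conversely any maximum $(X^L,Z)$-\is glued to any maximum $(X^R,Z)$-\is (which overlap exactly in $Z$, and have no edges between their parts outside $X$ by the tree-decomposition property) forms a maximum $(X,Z)$-\is of $G_X$. In particular $\alpha_{(X,Z)}(G_X) = \alpha_{(X^L,Z)}(G_{X^L}) + \alpha_{(X^R,Z)}(G_{X^R}) - |Z|$, and every maximum $(X,Z)$-\is of $G_X$ arises as such a gluing.

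Now I would argue the contrapositive of the ``if'': suppose \emph{neither} $\Y^L$ is an $(X^L,Z)$-\bs in $G_{X^L}$ \emph{nor} $\Y^R$ is an $(X^R,Z)$-\bs in $G_{X^R}$. Then there is a maximum $(X^L,Z)$-\is $I^L$ of $G_{X^L}$ avoiding $\Y^L$, and a maximum $(X^R,Z)$-\is $I^R$ of $G_{X^R}$ avoiding $\Y^R$. Gluing them gives a maximum $(X,Z)$-\is $I = I^L \cup I^R$ of $G_X$; since $\Y \cap V(G_{X^L}) = \Y^L$ avoids $I^L$ and $\Y \cap V(G_{X^R}) = \Y^R$ avoids $I^R$, and $V(G_X) = V(G_{X^L}) \cup V(G_{X^R})$, the set $I$ avoids $\Y$. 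Hence $\Y$ is not an $(X,Z)$-\bs in $G_X$, proving the contrapositive. For the ``only if'' direction, assume (say) $\Y^L$ is an $(X^L,Z)$-\bs in $G_{X^L}$; take any maximum $(X,Z)$-\is $I$ of $G_X$. By Lemma~\ref{prop:p1}, $I^L = I \cap V(G_{X^L})$ is a maximum $(X^L,Z)$-\is of $G_{X^L}$, so it meets $\Y^L \subseteq \Y$; hence $I$ meets $\Y$. As $I$ was arbitrary, $\Y$ is an $(X,Z)$-\bs in $G_X$; the case where $\Y^R$ is an $(X^R,Z)$-\bs is symmetric.

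I do not anticipate a serious obstacle here: the whole argument is a routine unpacking of definitions once Lemma~\ref{prop:p1} is in hand. The only point requiring mild care is the bookkeeping about the set $Z$ living in the shared part $X$ and the possibility that $\Y$ (or its restrictions) meets $Z$, handled by the degenerate-case split at the start; and the observation that a maximum $(X,Z)$-\is avoiding $\Y$ can be \emph{assembled} from independently chosen avoiding halves, which is exactly the converse half of Lemma~\ref{prop:p1} together with the no-edges-between-sides property of tree decompositions.
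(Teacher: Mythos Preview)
Your proof is correct and follows essentially the same approach as the paper: both directions are obtained from Lemma~\ref{prop:p1} by gluing/restricting maximum $(X,Z)$-independent sets across the two sides of the join. The paper argues both directions by contradiction rather than one direct and one contrapositive, and it does not split off the degenerate case $Z\cap \Y\neq\emptyset$ (its argument goes through uniformly), but these are cosmetic differences. One minor slip: you have swapped the labels ``if'' and ``only if''---the argument you call ``contrapositive of the if'' actually establishes the forward (``only if'') direction, and vice versa---but the mathematics is unaffected.
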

          \begin{proof}
            For the forward implication, suppose $\Y$ is an $(X,Z)$-\bs in $G_X$.
            Suppose by contradiction that there exists a maximum $(X^L,Z)$-\is $I^L$ in $G_{X^L}$ such that $I^L \cap \Y^L = \emptyset$,
            and a maximum $(X^R,Z)$-\is $I^R$ in $G_{X^R}$ such that $I^R \cap \Y^R = \emptyset$. Let $I = I^L \cup I^R$.
            By Lemma~\ref{prop:p1}, $I$ is a maximum $(X,Z)$-\is in $G_X$. As $I^L \cap \Y = I^L \cap \Y^L = \emptyset$, and also $I^R \cap \Y = I^R \cap \Y^R = \emptyset$, we get $I \cap (\Y^L \cup \Y^R) = I \cap \Y=\emptyset$, a contradiction to the hypothesis that $\Y$ is an $(X,Z)$-\bs in $G_X$.

            For the backward implication, suppose $\Y^L$ is an $(X^L,Z)$-\bs in $G_{X^L}$ or $\Y^R$ is an $(X^R,Z)$-\bs in $G_{X^R}$.
            Suppose by contradiction that there exists a maximum $(X,Z)$-\is $I$ in $G_X$ such that $I \cap \Y = \emptyset$.
            By Lemma~\ref{prop:p1}, $I^L=I \cap V(G_{X^L})$ is a maximum $(X^L,Z)$-\is in $G_{X^L}$ and $I^R=I \cap V(G_{X^R})$ is a maximum $(X^R,Z)$-\is in $G_{X^R}$.
            As $I^L \cap \Y = I^R \cap \Y = \emptyset$, we obtain that $\Y^L$ is not an $(X^L,Z)$-\bs in $G_{X^L}$ and that $\Y^R$ is not an $(X^R,Z)$-\bs in $G_{X^R}$, a contradiction.
          \end{proof}

          We are now ready to state the main lemma of this section.

          \begin{lemma}\label{lem:join}
            Let $(X,\Y_0,\L_1,\L_2,f,\S) \in \E(G,\D)$ where $X \in \B$ is a join node and $X^L,X^R$ are the children of $X$ (with $X=X^L=X^R$% \mb{I agree that it is redundant but I prefer to keep it} \ig{ok}
            ).
            For every $\Y \subseteq V(G_X)$, it holds that
            $\Y \vdash (X,\Y_0,\L_1,\L_2,f,\S)$
            if and only if \fixmeperso{avanton avait either $\Y=\Y_0$ or}%either $\Y=\Y_0$, or
            %\fixme{voir si besoin separer cas tout à droite ou tout a gauche} or\\
            there exist sets $\Y^L, \Y^R, \L_1^A, \L_1^B, \L_1^C,\L_2^A, \L_2^B, \L_2^C$ such that the following properties hold:

            \begin{enumerate}
              \item \label{join:1} $\Y = \Y^L \cup \Y^R$, %\ja{We need more: $\Y^L=\Y \cap V(G_{X^L})$ and $\Y^R=\Y \cap V(G_{X^R})$ to be able to apply Lemma~\ref{prop:p2} in the $\Leftarrow$ part of the proof}

             % \medskip

              \item \label{join:2} $\L_1 = \L_1^A \uplus \L_1^B \uplus \L_1^C$ and $\L_2=\L_2^A \uplus \L_2^B \uplus \L_2^C$,

              \item \label{join:2bis} for every $v \in B_0$, $f(v) \in \L_2^B$,

             % \ja{This notation for disjoint union is not universal. I do not think there is one, as there are many to symmetric difference. I would suggest to to explicitly say what $\uplus$ means before or to say that $\{\L_i^A, \L_i^B, \L_i^C\}$ is a partition of $\L_i$, for $i\in\{1,2\}$ - although some authors do not allow empty sets in a partition.}

              %  Let $\Y_1^A = \{v \in \Y_1 \mid f(v) \in \L_2^A\}$,  $\Y_1^B = \{v \in \Y_1 \mid f(v) \in \L_2^B\}$, and  $\Y_1^C = \{v \in \Y_1 \mid f(v) \in \L_2^C\}$, and note that $\Y_1=\Y_1^A \uplus \Y_1^B \uplus \Y_1^C$. %,  \fixme{for the moment we dont impose that $\L_1^A$ non empty and same for L2}

              %\medskip

              \item \label{join:3} $\Y^L \vdash (X^L,\Y_0,\L_1^L,\L_2^L,f^L,\S^L)$, where
              \begin{itemize}
                % \item[$\bullet$] $\Y_1^L = \Y_1^A \cup \Y_1^B$,
                % \item[$\bullet$]$\Y_F^L = \Y_F \cup \Y_1^C$,
                \item[$\bullet$]$\L_1^L = \L_1^A$,
                \item[$\bullet$]$\L_2^L = \L_1^B \cup \L_2^B\cup \L_2^A$,
                \item[$\bullet$]$f^L = f$, and
                \item[$\bullet$]$\S^L = \S \cup \L_1^C \cup \L_2^C$;
              \end{itemize}
              \medskip
              and $\Y^R \vdash (X^R,\Y_0,\L_1^R,\L_2^R,f^R,\S^R)$, where
              \medskip
              \begin{itemize}
                %   \item[$\bullet$]$\Y_1^R = \Y_1^C \cup \Y_1^B $,
                %   \item[$\bullet$]$\Y_F^R =  \Y_F \cup \Y_1^A$,
                \item[$\bullet$]$\L_1^R = \L_1^C$,
                \item[$\bullet$]$\L_2^R = \L_1^B\cup \L_2^B \cup \L_2^C$,
                \item[$\bullet$]$f^R =  f$, and
                \item[$\bullet$]$\S^R =  \S \cup \L_1^A \cup \L_2^A$.
              \end{itemize}

              %(where for every $\Y' \subseteq \Y_1$, $f_{|\Y'}$ denotes $f$ restricted to $\Y'$)\ig{define in prelim}
            \end{enumerate}

          \end{lemma}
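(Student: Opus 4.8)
The plan is to prove the equivalence by pushing everything through Lemma~\ref{prop:p2} (and the underlying Lemma~\ref{prop:p1}), which is exactly the local characterisation of the ``$(X,Z)$-\bs'' relation at a join node: for every $Z \subseteq X$ that is an \is and every $\Y \subseteq V(G_X)$, the set $\Y$ is an $(X,Z)$-\bs in $G_X$ if and only if $\Y^L := \Y \cap V(G_{X^L})$ is an $(X^L,Z)$-\bs in $G_{X^L}$ or $\Y^R := \Y \cap V(G_{X^R})$ is an $(X^R,Z)$-\bs in $G_{X^R}$. I will use freely that for a join node $V(G_X) = V(G_{X^L}) \cup V(G_{X^R})$ and $V(G_{X^L}) \cap V(G_{X^R}) = X$, so that a vertex of $\Y \setminus \Y_0$ (lying outside $X$) belongs to exactly one of $\Y^L, \Y^R$, whereas a vertex of $\Y_0 \subseteq X$ belongs to both; and I will apply Lemma~\ref{prop:p2} not only to $\Y$ but also to $\Y \setminus \{v\}$, noting that $(\Y \setminus \{v\}) \cap V(G_{X^L}) = \Y^L \setminus \{v\}$ and $(\Y \setminus \{v\}) \cap V(G_{X^R}) = \Y^R \setminus \{v\}$.

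\emph{Backward implication.} Given the data, set $\Y = \Y^L \cup \Y^R$. Property~\ref{def:Y0} is immediate from $\Y^L \cap X = \Y^R \cap X = \Y_0$. For Property~\ref{def:L1L2} I would go through the six blocks of the partition of Property~\ref{join:2}: the way $\L_1^L,\L_2^L$ and $\L_1^R,\L_2^R$ are defined in Property~\ref{join:3} guarantees that each such block lies in $\L_1^L \cup \L_2^L$ or in $\L_1^R \cup \L_2^R$, so the corresponding side blocks $Z$ and Lemma~\ref{prop:p2} lifts this to $G_X$. Property~\ref{def:S} is symmetric: any $Z \in \S$ belongs to both $\S^L$ and $\S^R$, so neither side blocks $Z$ and Lemma~\ref{prop:p2} gives that $\Y$ does not block $Z$. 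For minimality outside $\Y_0$ (Property~\ref{def:mina}): a vertex $v \in \Y \setminus \Y_0$ lies, say, in $\Y^L \setminus \Y_0$; minimality on the left yields $Z \in \L_1^L = \L_1^A$ with $\Y^L \setminus \{v\}$ not an $(X^L,Z)$-\bs, and since $\L_1^A \subseteq \S^R$ the set $\Y^R$ is not an $(X^R,Z)$-\bs, so Lemma~\ref{prop:p2} applied to $\Y \setminus \{v\}$ gives that $\Y \setminus \{v\}$ is not an $(X,Z)$-\bs, with $Z \in \L_1$ as required. For minimality inside $\Y_0$ (Property~\ref{def:minb}): Property~\ref{join:2bis} gives $f(v) \in \L_2^B \subseteq \L_2^L \cap \L_2^R$, so minimality on both children (recall $f^L = f^R = f$) gives that neither $\Y^L \setminus \{v\}$ nor $\Y^R \setminus \{v\}$ is an $(X^L,f(v))$-\bs, resp.\ an $(X^R,f(v))$-\bs, and Lemma~\ref{prop:p2} concludes.

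\emph{Forward implication.} Given $\Y \vdash (X,\Y_0,\L_1,\L_2,f,\S)$, set $\Y^L = \Y \cap V(G_{X^L})$ and $\Y^R = \Y \cap V(G_{X^R})$ (so Property~\ref{join:1} holds), and define the partition by the blocking status of the two restrictions: for $Z \in \L_1$, put $Z$ in $\L_1^B$ if both $\Y^L$ and $\Y^R$ block $Z$, in $\L_1^A$ if only $\Y^L$ does, and in $\L_1^C$ if only $\Y^R$ does; split $\L_2$ the same way. Since $Z \in \L_1 \cup \L_2$ implies that $\Y$ blocks $Z$, Lemma~\ref{prop:p2} shows at least one side blocks $Z$, so this is a genuine partition (Property~\ref{join:2}). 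With the blocks fixed, the tuples on the two children are exactly those prescribed by Property~\ref{join:3} and lie in $\E(G,\D)$ (they only restrict the collections of independent sets, and $f$ still maps into $\L_2^L$, resp.\ $\L_2^R$). One then checks each child's conditions of Definition~\ref{def:tchack}: Properties~\ref{def:Y0} and~\ref{def:L1L2} hold by construction of the partition; Property~\ref{def:S} holds because a set placed in the other side's $A$- or $C$-block is, by definition, not blocked by the current side, and sets of $\S$ are blocked by neither side (Lemma~\ref{prop:p2} applied to $\Y$); for Property~\ref{def:mina}, a minimality witness $Z \in \L_1$ for $v \in \Y^L \setminus \Y_0$ forces, via Lemma~\ref{prop:p2} applied to $\Y \setminus \{v\}$, that $\Y^R$ does not block $Z$, hence $Z \in \L_1^A = \L_1^L$, which is precisely what the left child needs (symmetrically on the right). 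Finally, for Property~\ref{join:2bis} and Property~\ref{def:minb} on the children, I would show that for $v \in \Y_0$ the set $Z = f(v) \in \L_2$ lands in $\L_2^B$, i.e.\ that both $\Y^L$ and $\Y^R$ block $Z$: if $\Y^L$ did not block $Z$ there would be a maximum $(X^L,Z)$-\is $J$ disjoint from $\Y^L$; as $v \in \Y^L$ we get $v \notin J$, and since $J \cap X^L = Z$ this gives $v \notin Z$; but $\Y$ blocks $Z$ while $\Y^L$ does not, so $\Y^R$ blocks $Z$, whereas Property~\ref{def:minb} for $\Y$ and Lemma~\ref{prop:p2} applied to $\Y \setminus \{v\}$ give that $\Y^R \setminus \{v\}$ does not block $Z$, forcing a maximum $(X^R,Z)$-\is meeting $\Y^R$ only in $v$, hence $v \in Z$, a contradiction; the case ``$\Y^R$ does not block $Z$'' is symmetric, so $Z \in \L_2^B$, and Lemma~\ref{prop:p2} applied to $\Y \setminus \{v\}$ yields Property~\ref{def:minb} for both children.

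\emph{Main obstacle.} The delicate part is the forward direction: one must verify that the partition read off from the blocking status is simultaneously compatible with \emph{all} of Property~\ref{join:3}, the crucial point being that the minimality witnesses of $\Y$ never ``straddle'' both sides (a $Z$ that would be a witness from the left and from the right would be blocked by neither restriction, contradicting $Z \in \L_1$), together with the degenerate cases in which a vertex of $\Y_0$ belongs to some $Z \in \L_1 \cup \L_2 \cup \S$, where the $-\infty$ convention in $\alpha_{(X,Z)}^{\Y}$ makes $(X,Z)$ trivially blocked. Everything else is bookkeeping on top of the single structural fact of Lemma~\ref{prop:p2}.
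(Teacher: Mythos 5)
Your proposal is correct and follows essentially the same route as the paper's proof: split $\L_1$ and $\L_2$ according to which of the two restrictions $\Y^L,\Y^R$ blocks each $Z$, and push every condition of Definition~\ref{def:tchack} (including the minimality conditions, applied to $\Y\setminus\{v\}$) through Lemma~\ref{prop:p2}. The only cosmetic difference is how you show $f(v)\in\L_2^B$ for $v\in\Y_0$: you argue by contradiction via the right-hand side, whereas the paper deduces directly from ``$\Y$ blocks $f(v)$ but $\Y\setminus\{v\}$ does not'' that $v\in f(v)$, which makes both restrictions block trivially; both arguments are valid.
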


          %\fixme{rmk : degenerate cases where (exactly, otherwise $\Y=\Y_0$) one of $\Y^L$ or $\Y^R$ is empty should be handled without particular problem, check ..}

          %% To give some intution about Lemma~\ref{lem:join}, $\L^L$ represents sets $Z$ such that $(X,Z)$ are an only blocked by $\Y^L$
          %% (same goes for $\L^R$), and $\L^{LR}$ represents sets $Z$ such that $(X,Z)$ is blocked by both $\Y^L$ and $\Y^R$.
          %% For the forward implication, consider for example a set $\Y \vdash (X,\Y_0,\{Z_1,Z_2,Z_3\},\emptyset,\emptyset)$, $\Y^L = \Y \cap V(G_{X_L})$ and
          %% $\Y^R = \Y \cap V(G_{X^R})$. We may have $\Y^L \vdash (X^L,\Y_0,\{Z_1\},\{Z_2\},\{Z_3\})$ ($\Y^L$ only blocks $(X^L,Z_1)$ and $(X^L,Z_2)$, but is only minimal for $Z_1$,
          %% and $\Y^R$ only blocks $(X^R,Z_2)$ and $(X^R,Z_3)$, but is only minimal for $Z_3$), implying in this case that $\L^L = \{Z_1\}, \L^R = \{Z_3\}, \L^{LR}=\{Z_2\}$.
          %% Moreover, we point out that the backward implication is the place where the last parameter of the problem $(\S)$ plays its role,
          %% as requiring that $\Y^L$ is not an $(X^L,Z)$-\bs for some $Z \in \L$ (and same for $\Y^R$) will be crucial to prove
          %% that $\Y = \Y^L \cup \Y^R$ verifies minimality condition~\ref{def:min}.

          \begin{proof}
            For the forward implication, suppose first that $\Y \vdash (X,\Y_0,\L_1,\L_2,f,\S)$.
            Let $\Y^L = \Y \cap V(G_{X^L})$ and $\Y^R = \Y \cap V(G_{X^R})$, satisfying Property~\ref{join:1} of the lemma.
            For $i \in [2]$, let

            %\ja{Maybe replace $\L_i^A \to \L_i^L$, $\L_i^C \to \L_i^R$ and $\L_i^B \to \L_i^{LR}$?}
            \begin{itemize}
              \item[$\bullet$]$\L_i^A = \{Z \in \L_i \mid $  $\Y^L$ is an $(X^L,Z)$-\bs in $G_{X^L}$ and $\Y^R$ is not an $(X^R,Z)$-\bs in $G_{X^R} \}$,
              \item[$\bullet$]$\L_i^C = \{Z \in \L_i \mid $  $\Y^L$ is not an $(X^L,Z)$-\bs in $G_{X^R}$ and $\Y^R$ is an $(X^R,Z)$-\bs in $G_{X^R} \}$, and
              \item[$\bullet$]$\L_i^B = \{Z \in \L_i \mid $  $\Y^L$ is an $(X^L,Z)$-\bs in $G_{X^L}$ and $\Y^R$ is an $(X^R,Z)$-\bs in $G_{X^R} \}$.
            \end{itemize}

            By Definition~\ref{def:tchack}, for every $Z \in \L_i$, as $\Y$ is an $(X,Z)$-\bs in $G_X$. By Lemma~\ref{prop:p2}, we obtain that $\Y^L$ is an $(X^L,Z)$-\bs in $G_{X^L}$ or $\Y^R$ is an $(X^R,Z)$-\bs in $G_{X^R}$. This
            implies that $\L_i = \L_i^A \uplus \L_i^B \uplus \L_i^C$, and thus Property~\ref{join:2} is satisfied.

            For Property~\ref{join:3}, let us only prove that $\Y^L \vdash (X^L,\Y_0,\L_1^L,\L_2^L,f^L,\S^L)$, as the proof for $\Y^R$ follows the same arguments. We verify that each of the (non-trivial) properties of Definition~\ref{def:tchack} is satisfied.

            \emph{Property~\ref{def:L1L2}}.
            We need to prove that $\Y^L$ is an $(X^L,Z)$-\bs in $G_{X^L}$ for every $Z\in \L_1^L\cup \L_2^L$. This follows from the definition  of the sets $\L^L_i$,
            and by the hypothesis that $\Y$ is an $(X,Z)$-\bs in $G_X$ for every $Z\in \L_1\cup \L_2$ (since $\Y \vdash (X,\Y_0,\L_1,\L_2,f,\S)$).

            \emph{Property~\ref{def:S}}.
            Let us prove that $\Y^L$ is not an $(X^L,Z)$-\bs in $G_{X^L}$, for every $Z\in \S^L = \S\cup \L_1^C\cup \L_2^C$.
            Let $Z \in \S^L$. If $Z \in \S$, then since $\Y$ is not an $(X,Z)$-\bs in $G_X$, because $\Y \vdash (X,\Y_0,\L_1,\L_2,f,\S)$, we have by Lemma~\ref{prop:p2} that $\Y^L$ is
            not an $(X^L,Z)$-\bs in $G_{X^L}$. If $Z \in \L_1^C \cup \L_2^C$, then the result follows from definition of $\L_i^C$.

            \emph{Property~\ref{def:mina}}. We have to prove that $\forall v\in \Y^L\setminus \Y_0$, $\exists Z\in \L_1^L = \L_1^A$ such that $\Y^L\setminus\{v\}$ is not an $(X^L,Z)$-\bs in $G_{X^L}$. If $\Y^L = \Y_0$, the statement trivially holds. Otherwise, let
            $v \in \Y^L \setminus \Y_0$. As $\Y \vdash (X,\Y_0,\L_1,\L_2,f,\S)$, there exists $Z \in \L_1$
            such that $\Y \setminus \{v\}$ is not an $(X,Z)$-\bs in $G_X$, by Definition~\ref{def:tchack}. This implies,  by Lemma~\ref{prop:p2}, that $\Y^L \setminus \{v\}$ is not an $(X^L,Z)$-\bs in $G_{X^L}$.
            As $\Y \setminus \{v\} \supseteq \Y^R$, we get that $\Y^R$ is not an $(X,Z)$-\bs in $G_X$, and thus by Lemma~\ref{prop:p2} that $\Y^R$ is not an $(X^R,Z)$-\bs in $G_{X^R}$, implying that $Z \in \L_1^A$.

            \emph{Property~\ref{def:minb}}. We finally have to prove that $\forall v\in \Y_0$, $\Y^L\setminus\{v\}$ is not an $(X_L, f^L(v))$-\bs in $G_{X^L}$. Let $v \in \Y_0$. Let $Z = f^L(v)=f(v)$, where $Z \in \L_2$.
            As $\Y \vdash (X,\Y_0,\L_1,\L_2,f,\S)$, $\Y \setminus \{v\}$ is not an $(X,Z)$-\bs in $G_X$, implying by Lemma~\ref{prop:p2} that
            $\Y^L \setminus \{v\}$ is not an $(X^L,Z)$-\bs in $G_{X^L}$.
            Moreover, as $\Y$ is an $(X,Z)$-\bs in $G_X$ and $\Y \setminus \{v\}$ is not an $(X,Z)$-\bs in $G_X$, we deduce that $v \in Z$, implying that $Z \in \L_2^B \subseteq \L_2^L$. Note also that $f^L(v) \in  \L_2^B$, as required by Property~\ref{join:2bis}.

            %% Proof of \jachg{p}{P}roperty~\ref{join:2}. Observe that $\Y \supset \Y_0$ implies that $\Y^L \neq \Y^R$ (as $\Y^L \cap \Y^R = \Y_0$), and that $\Y \neq \emptyset$.
            %% Without loss of generality let us assume that there exists $v \in \Y^L \setminus \Y^R$. Let $\Y' = \Y \setminus \{v\}$, where $\Y' \supseteq \Y^R$
            %% As $\Y \vdash (X,\Y_0,\L,\L',\S)$, by condition~\ref{def:min} we know that there exists $Z \in \L$ such that
            %% $\Y'$ is not an $(X,Z)$-\bs in $G_X$. By property $p_2$, it implies that
            %%  $\Y^{'L}$ is not an $(X^L,Z)$-\bs in $G_{X^L}$ and $\Y^{'R}$ is not an $(X^R,Z)$-\bs in $G_{X^R}$.
            %% As $\Y' \supseteq \Y^R$, $\Y^{'R} \supseteq \Y^R$, and thus for every $Z \in L^R \cup L^{LR}$, $\Y^{'R}$ is still an $(X^R,Z)$-\bs in $G_{X^R}$.
            %% This implies that $Z \in L^L$.

            %% Proof of \jachg{p}{P}roperty~\ref{join:3}. We only prove that $\Y^L \vdash (X^L,\Y_0,\L^L \cap \L, \L^L \setminus \L^{LR}, \S \cup \L^R)$, as the proof for $\Y^R$ follows using the same arguments.
            %% Property~\ref{def:Y0} follows from the definition of $\Y^L$ and condition~\ref{def:LL'} follows from the definition of $\L^L$ and $\L^{LR}$.
            %% Property~\ref{def:S}. Let $Z \in \S \cup \L^R$. If $Z \in \S$, then $\Y$ is not an $(X,Z)$-\bs in $G_X$, implying by property $p_2$ that $\Y^L$ is
            %% not an $(X^L,Z)$-\bs in $G_{X^L}$. If $Z \in \L^R$, then the result follow from definition of $\L^R$.
            %% Property~\ref{def:min}. Suppose by contradiction that there exists $\Y$

            \medskip

            For the backward implication, suppose  that there exist $\Y^L,\Y^R,\L_1^A, \L_1^B, \L_1^C,\L_2^A, \L_2^B, \L_2^C$ satisfying the lemma's conditions.
            Let us prove that $\Y \vdash (X,\Y_0,\L_1,\L_2,f,\S)$, by verifying again that each of the (non-trivial) properties of Definition~\ref{def:tchack} is satisfied.

            \emph{Property~\ref{def:L1L2}}. % \jachg{Let $Z \in \L_1 \cup \L_2$. Suppose $Z \in \L_1$. If $Z \in \L_1^A \cup \L_1^B$, then $\Y^L$ is an $(X^L,Z)$-\bs in $G_{X^L}$, implying
            %by Lemma~\ref{prop:p2} that $\Y$ is an $(X,Z)$-\bs in $G_{X}$. If $Z \in \L_1^C$, then $\Y^R$ is an $(X^R,Z)$-\bs in $G_{X^R}$, implying
            %by Lemma~\ref{prop:p2} that $\Y$ is an $(X,Z)$-\bs in $G_{X}$. The arguments for $Z \in \L_2$ are symmetric.}
            We have to prove that $\Y=\Y^L\cup \Y^R$ is an $(X,Z)$-\bs in $G_X$, for every $Z\in \L_1\cup\L_2$. By hypothesis, we know that: $\Y^L \vdash (X^L,\Y_0,\L_1^L,\L_2^L,f^L,\S^L)$ and $\Y^R \vdash (X^R,\Y_0,\L_1^R,\L_2^R,f^R,\S^R)$. By Definition~\ref{def:tchack}, we deduce that $\Y^L$ is an $(X^L,Z)$-\bs in $G_{X^L}$, for every $Z\in \L_1^L\cup \L_2^L = \L_1^A\cup \L_2^A\cup \L_1^B\cup \L_2^B$. By Lemma~\ref{prop:p2}, $\Y$ is an $(X,Z)$-\bs in $G_X$, for every $Z\in \L_1^A\cup \L_2^A\cup \L_1^B\cup \L_2^B$. Analogously, one may deduce that $\Y$ is an $(X,Z)$-\bs in $G_X$, for every $Z\in \L_1^C\cup \L_2^C\cup \L_1^B\cup \L_2^B$. Thus, $\Y$ is an $(X,Z)$-\bs in $G_X$, for every $Z\in \L_1\cup \L_2$.

            \emph{Property~\ref{def:S}}.
            We have to prove that $\Y$ is not an $(X,Z)$-\bs in $G_X$, for every $Z\in \S$.
            Let $Z \in \S$. Since $\Y^L \vdash (X^L,\Y_0,\L_1^L,\L_2^L,f^L,\S^L)$ and $\S\subseteq \S^L$, we have that $\Y^L$ is not an $(X^L,Z)$-\bs in $G_{X^L}$, by Definition~\ref{def:tchack}. As $\Y^L$ is not an $(X^L,Z)$-\bs in $G_{X^L}$ and, analogously, $\Y^R$ is not an $(X^R,Z)$-\bs in $G_{X^R}$,
            it implies by Lemma~\ref{prop:p2} that $\Y$ is not an $(X,Z)$-\bs in $G_{X}$.

            \emph{Property~\ref{def:mina}}. Let us now prove that for every $v\in \Y\setminus\Y_0$, there is $Z\in \L_1=\L_1^A\uplus\L_1^B\uplus\L_1^C$ such that $\Y\setminus\{v\}$ is not an $(X,Z)$-\bs in $G_X$. If $\Y = \Y_0$, then there is nothing to prove. Otherwise, let $v \in \Y \setminus \Y_0$, and
            suppose without loss of generality that $v \in \Y^L \setminus \Y_0$.
            As  $\Y^L \vdash (X^L,\Y_0,\L_1^L,\L_2^L,f^L,\S^L)$, and as $\L_1^L = \L_1^A$, there exists $Z \in \L_1^A$  such that $\Y^L \setminus \{v\}$ is not an $(X^L,Z)$-\bs in $G_{X^L}$.
            As $Z \in \L_1^A$, $L_1^A \subseteq \S^R$, and $\Y^R \vdash (X^R,\Y_0,\L_1^R,\L_2^R,f^R,\S^R)$,  $\Y^R$ is not an $(X^R,Z)$-\bs in $G_{X^R}$.
            Thus, by Lemma~\ref{prop:p2}, $\Y \setminus \{v\}$ is not an $(X,Z)$-\bs in $G_X$.

            \emph{Property~\ref{def:minb}}. Let us finally prove that for each $v\in \Y_0$, $\Y\setminus\{v\}$ is not an $(X,f(v))$-\bs in $G_X$. Let $v \in \Y_0$ and let $Z = f(v)$. By Property~\ref{join:2bis}, we know that $Z \in \L_2^B$, i.e. $Z$ is both in $\L_2^L$ and $\L_2^R$. Then, as $\Y^L \vdash (X^L,\Y_0,\L_1^L,\L_2^L,f^L,\S^L)$ and $f^L = f$, by Property~\ref{def:minb} we get that $\Y^L \setminus \{v\}$ is not an $(X^L,Z)$-\bs in $G_{X^L}$. Using the same arguments for $\Y^R$, we get that $\Y^R \setminus \{v\}$ is not an $(X^R,Z)$-\bs in $G_{X^R}$.
            By Lemma~\ref{prop:p2}, we obtain that $\Y \setminus \{v\}$ is not an $(X,Z)$-\bs in $G_X$.
          \end{proof}

          %\ja{Maybe we could simplify a bit the notation. In the proof it seems we just need the definitions of $\L_1^A$, $\L_2^B$ and $\L_1^C$. Can't we drop the others?}

          \subsection{Introduce node}
          \label{sec:intro}

          \begin{definition}\label{def:critical}
            Let $G$ be a graph, $X \subseteq V(G)$, $v \in X$, and $\R \subseteq 2^X$.
            We denote
            \begin{itemize}
              \item[$\bullet$]$\R(v) = \{Z \in \R \mid v \in Z \}$,
              \item[$\bullet$]$\R(\bar{v}) = \{Z \in \R \mid v \notin Z\}$, and
              \item[$\bullet$]$r_v(\R) = \{Z \setminus \{v\} \mid Z \in \R\}$.
            \end{itemize}
          \end{definition}

                Before proving Lemma~\ref{lem:intro} corresponding to the introduce case, let us first prove the following lemmas
                where we assume that $X \in \B$ is an introduce node and that $X^C$ is the child of $X$ with $X^C=X \setminus \{v\}$ for some vertex $v \in X$.

          \begin{lemma}\label{lemma:intro:I1}
            Let $Z \subseteq X$ such that $Z$ is an \is with $v \in Z$. For every $I \subseteq V(G)$ such that $v \in I$, $I$ is a maximum $(X,Z)$-\is in $G_X$ if and only if $I \setminus \{v\}$ is a maximum $(X^C,Z \setminus \{v\})$-\is in $G_{X^C}$.
          \end{lemma}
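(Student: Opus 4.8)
The plan is to exhibit a size-preserving correspondence (up to the single vertex $v$) between the two families of independent sets in question, and then observe that any such correspondence maps maxima to maxima. First I would record two structural facts about the introduce node $X$ with child $X^C = X \setminus \{v\}$: namely that $V(G_X) = V(G_{X^C}) \cup \{v\}$ with $v \notin V(G_{X^C})$, so that $G_{X^C} = G_X \setminus \{v\}$ is an induced subgraph of $G_X$; and that, by the properties of a tree decomposition, every neighbour of $v$ in $G_X$ lies in $X^C$, since $v$ occurs in no bag of the subtree rooted at $X^C$. In symbols, $N_{G_X}(v) \subseteq X^C$.

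Next I would prove the key claim: a set $J \subseteq V(G_X)$ with $v \in J$ is an $(X,Z)$-\is in $G_X$ if and only if $J \setminus \{v\}$ is an $(X^C, Z \setminus \{v\})$-\is in $G_{X^C}$. For the forward direction, $J \setminus \{v\}$ is a subset of the independent set $J$ contained in $V(G_{X^C})$, hence independent in $G_{X^C}$, and $(J \setminus \{v\}) \cap X^C = (J \cap X) \setminus \{v\} = Z \setminus \{v\}$. For the backward direction, given an $(X^C, Z \setminus \{v\})$-\is $J'$ in $G_{X^C}$, the set $J' \cup \{v\}$ is independent in $G_X$: this is the only place where the hypothesis that $Z$ is an independent set is used, since $N_{G_X}(v) \subseteq X^C$ and $J' \cap X^C = Z \setminus \{v\}$ is disjoint from $N_{G_X}(v)$ precisely because $Z$ is independent; moreover $(J' \cup \{v\}) \cap X = (Z \setminus \{v\}) \cup \{v\} = Z$ since $v \in Z$. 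In either direction the two sets differ in size by exactly one.

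Finally I would conclude by the usual swapping argument. If $I$ is a maximum $(X,Z)$-\is in $G_X$ with $v \in I$, then $I \setminus \{v\}$ is an $(X^C, Z \setminus \{v\})$-\is in $G_{X^C}$ by the claim; were there a strictly larger one $J'$, then $J' \cup \{v\}$ would be an $(X,Z)$-\is in $G_X$ of size $|J'| + 1 > |I \setminus \{v\}| + 1 = |I|$, contradicting the maximality of $I$. The converse is symmetric: starting from a maximum $(X^C, Z \setminus \{v\})$-\is $I \setminus \{v\}$, any $(X,Z)$-\is $J$ in $G_X$ that is strictly larger than $I$ must contain $v$ (because $v \in Z = J \cap X$), so $J \setminus \{v\}$ would contradict the maximality of $I \setminus \{v\}$.

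I do not expect a serious obstacle here: the statement is essentially bookkeeping once the two structural facts are in place. The single point that genuinely requires attention is the use of the hypothesis ``$Z$ is an independent set'', which is exactly what allows re-adding $v$ to an $(X^C, Z \setminus \{v\})$-\is without destroying independence, and this hinges on the standard but essential tree-decomposition observation that $N_{G_X}(v) \subseteq X^C$.
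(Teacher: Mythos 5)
Your proposal is correct and follows essentially the same route as the paper's proof: both rely on the facts that $v$ has no neighbours in $V(G_{X^C})\setminus X^C$ (so $N_{G_X}(v)\subseteq X^C$), that independence of $Z$ lets one re-add $v$ to any $(X^C,Z\setminus\{v\})$-\is, and that every $(X,Z)$-\is contains $v$ since $v\in Z$, concluding by the standard exchange argument with a hypothetical larger set. The only difference is presentational (you isolate the correspondence as a stand-alone claim before comparing maxima), so no further comment is needed.
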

          \begin{proof}
            For the forward implication, suppose that $I$ is a maximum $(X,Z)$-\is in $G_X$ such that $v\in I$. Note that $I \setminus \{v\}$ is an $(X^C,Z \setminus \{v\})$-\is in $G_{X^C}$.
            Let $I'$ be a maximum $(X^C,Z \setminus \{v\})$-\is in $G_{X^C}$.
            As $Z$ is an \is of $G$, and $N_{G_X}(v) \subseteq X$ by the properties of a tree decomposition, $I' \cup \{v\}$ is an $(X,Z)$-\is in $G_X$,
            implying $|I' \cup \{v\}| \le |I|$. Therefore $|I \setminus \{v\}| \geq |I'|$, hence $I \setminus \{v\}$ is a maximum $(X^C,Z \setminus \{v\})$-\is in $G_{X^C}$.

            For the backward implication, suppose that $I \setminus \{v\}$ is a maximum $(X^C,Z \setminus \{v\})$-\is in $G_{X^C}$. Note that $I$ is an $(X,Z)$-\is in $G_{X}$.
            Let $I'$ be a maximum $(X,Z)$-\is in $G_{X}$.
            As $I' \setminus \{v\}$ is an $(X^C,Z \setminus \{v\})$-\is in $G_{X^C}$, we get $|I' \setminus \{v\}| \le |I \setminus \{v\}|$
            and therefore, since both $I$ and $I'$ contain $v$, $|I'| \le |I|$ and the lemma follows.
          \end{proof}

          \begin{lemma}\label{lemma:intro:I2}
            Let $Z \subseteq X$ such that $Z$ is an \is with $v \notin Z$. For every $I \subseteq V(G)$ such that $v \notin I$, $I$ is a maximum $(X,Z)$-\is in $G_X$ if and only if $I$ is a maximum $(X^C,Z)$-\is in $G_{X^C}$.
          \end{lemma}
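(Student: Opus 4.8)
The statement is the ``$v\notin Z$'' counterpart of Lemma~\ref{lemma:intro:I1}, and it is the easiest of the introduce-node lemmas because there is no asymmetry between $G_X$ and $G_{X^C}$ to fix up. First I would recall the structure of an introduce node: since $X=X^C\cup\{v\}$ with $v\notin X^C$ and $v$ appears in no descendant bag, we have $V(G_X)=V(G_{X^C})\uplus\{v\}$, and $G_{X^C}=G_X-v=G_X[V(G_X)\setminus\{v\}]$ is an \emph{induced} subgraph of $G_X$. By the properties of a (nice) tree decomposition, $N_{G_X}(v)\subseteq X$, though this fact is not even needed here. The key initial observation is that, because $v\in X$ and $v\notin Z$, \emph{every} $(X,Z)$-\is $J$ of $G_X$ satisfies $v\notin J\cap X=Z$, hence $v\notin J$; so, unlike in Lemma~\ref{lemma:intro:I1}, we never have to add or delete $v$.

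The core step is to prove that the family of $(X,Z)$-\is{}s of $G_X$ coincides \emph{exactly} with the family of $(X^C,Z)$-\is{}s of $G_{X^C}$. For the bijection I would argue as follows for any $J\subseteq V(G_X)$ with $v\notin J$ (and in particular $J\subseteq V(G_{X^C})$): since $G_{X^C}$ is the subgraph of $G_X$ induced by $V(G_{X^C})$ and $J\subseteq V(G_{X^C})$, $J$ is an \is of $G_X$ if and only if $J$ is an \is of $G_{X^C}$ (two vertices of $J$ are adjacent in $G_X$ iff they are adjacent in $G_{X^C}$); and since $v\notin J$ we have $J\cap X=J\cap(X^C\cup\{v\})=J\cap X^C$. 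Combining these two equivalences gives that $J$ is an $(X,Z)$-\is in $G_X$ iff $J$ is an $(X^C,Z)$-\is in $G_{X^C}$. Together with the observation of the previous paragraph (all $(X,Z)$-\is{}s of $G_X$ omit $v$), this shows the two families are literally the same set.

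Finally I would conclude: equal families have equal maximum cardinality (with the $-\infty$ convention handled uniformly, since one family is empty iff the other is), so a given set $I$ attains this maximum in $G_X$ if and only if it attains it in $G_{X^C}$; and if $I$ belongs to neither family it is trivially a maximum $(X,Z)$-\is in neither $G_X$ nor $G_{X^C}$. I do not expect any real obstacle in this lemma — the only points requiring (minor) care are the direction ``independent in $G_{X^C}$ $\Rightarrow$ independent in $G_X$'', which relies on $G_{X^C}$ being an \emph{induced} subgraph, and the bookkeeping identity $J\cap X=J\cap X^C$ for sets avoiding $v$; the genuine difficulties in the dynamic programming are concentrated in the join node (Lemma~\ref{lem:join}), not here.
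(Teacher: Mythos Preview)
Your proposal is correct and proceeds along essentially the same lines as the paper: both hinge on the observation that, since $v\in X\setminus Z$, every $(X,Z)$-\is of $G_X$ avoids $v$, and that for any $J$ avoiding $v$ the conditions ``$(X,Z)$-\is in $G_X$'' and ``$(X^C,Z)$-\is in $G_{X^C}$'' coincide. The only difference is cosmetic: the paper argues each direction separately by comparing $I$ with a maximum from the other side, whereas you state once that the two families are literally equal and conclude immediately; your packaging is slightly cleaner but the content is the same.
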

          \begin{proof}
            For the forward implication, suppose that $I$ is a maximum $(X,Z)$-\is in $G_X$ such that $v\notin I$. Note that $I$ is an $(X^C,Z)$-\is in $G_{X^C}$.
            Let $I'$ be a maximum $(X^C,Z)$-\is in $G_{X^C}$.
            As $I$' is an $(X,Z)$-\is in $G_X$, $|I'| \le |I|$, leading to the desired result.

            For the backward implication, consider that $I$ is a maximum $(X^C,Z)$-\is in $G_{X^C}$. Note that $I$ is an $(X,Z)$-\is in $G_{X}$.
            Let $I'$ be a maximum $(X,Z)$-\is in $G_{X}$.
            As $v \notin Z$, $v \notin I'$, and $I'$ is an $(X^C,Z)$-\is in $G_{X^C}$, implying $|I'| \le |I|$.
          \end{proof}

          \begin{lemma}\label{lemma:intro:alphaprime}
            Let $Z \subseteq X$ such that $Z$ is an \is.
            For every $\Y \subseteq V(G_X)$ such that $v \notin \Y$, $\Y$ is an $(X,Z)$-\bs in $G_X$ if and only if $\Y$ is an $(X^C,Z \setminus \{v\})$-\bs in $G_{X^C}$.
          \end{lemma}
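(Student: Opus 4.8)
The plan is to reduce the whole statement to two cardinality identities and then subtract them. Since $X$ is an introduce node with child $X^C = X\setminus\{v\}$, I would first record the standard facts: $X = X^C \uplus \{v\}$, the vertex $v$ appears in no bag below $X^C$, so $V(G_X) = V(G_{X^C}) \uplus \{v\}$ and $G_{X^C} = G_X - v$, and $N_{G_X}(v) \subseteq X^C$ by the properties of a tree decomposition. Writing $Z' = Z \setminus \{v\}$ (which, as a subset of the \is $Z$, is again an \is of $G$ and is contained in $X^C$), the goal will be to prove that
$$
\alpha_{(X,Z)}(G_X) = \alpha_{(X^C,Z')}(G_{X^C}) + \varepsilon
\qquad\text{and}\qquad
\alpha_{(X,Z)}^{\Y}(G_X) = \alpha_{(X^C,Z')}^{\Y}(G_{X^C}) + \varepsilon,
$$
where $\varepsilon = 1$ if $v \in Z$ and $\varepsilon = 0$ otherwise, with the convention $-\infty + \varepsilon = -\infty$. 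I would also note that both plain parameters above are finite, since $Z$ (resp.\ $Z'$) is itself an $(X,Z)$-\is of $G_X$ (resp.\ an $(X^C,Z')$-\is of $G_{X^C}$). Given the two identities, subtracting shows that $\alpha_{(X,Z)}^{\Y}(G_X) < \alpha_{(X,Z)}(G_X)$ if and only if $\alpha_{(X^C,Z')}^{\Y}(G_{X^C}) < \alpha_{(X^C,Z')}(G_{X^C})$ — the finiteness of the plain parameters keeps this equivalence valid even when an $\alpha^{\Y}$ term equals $-\infty$ — which is exactly the claimed equivalence of $(X,Z)$-\bs in $G_X$ and $(X^C,Z')$-\bs in $G_{X^C}$.

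For the case $v \notin Z$ (so $Z' = Z$ and $\varepsilon = 0$), I would observe that every $(X,Z)$-\is $I$ of $G_X$ avoids $v$ (otherwise $v \in I \cap X = Z$), and then check, using $G_{X^C} = G_X - v$ and $v \notin X^C$, that the family of $(X,Z)$-\is of $G_X$ coincides \emph{as a family of sets} with the family of $(X^C,Z)$-\is of $G_{X^C}$; since $\Y \subseteq V(G_{X^C})$ (as $v \notin \Y$), this identification restricts to the subfamilies of sets disjoint from $\Y$. Taking the size of a largest member on each side yields both identities with $\varepsilon = 0$. This is essentially Lemma~\ref{lemma:intro:I2} together with the trivial remark that the $\Y$-avoidance condition is untouched.

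For the case $v \in Z$ (so $\varepsilon = 1$), I would use that every $(X,Z)$-\is of $G_X$ contains $v$, and argue that $I \mapsto I \setminus \{v\}$ is a bijection from the $(X,Z)$-\is of $G_X$ onto the $(X^C,Z')$-\is of $G_{X^C}$: in one direction, $I\setminus\{v\} \subseteq V(G_{X^C})$ is independent in $G_{X^C}$ and $(I\setminus\{v\})\cap X^C = (I\cap X)\setminus\{v\} = Z'$; in the other direction, for an $(X^C,Z')$-\is $J$ of $G_{X^C}$, the set $J\cup\{v\}$ is independent in $G_X$ because $N_{G_X}(v)\subseteq X^C$ while $J\cap X^C = Z' = Z\setminus\{v\}$ is non-adjacent to $v$ ($Z$ being an \is) and $J\setminus X^C$ misses $N_{G_X}(v)$, and moreover $(J\cup\{v\})\cap X = Z'\cup\{v\} = Z$. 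This bijection drops cardinality by exactly one and, since $v\notin\Y$, carries $\Y$-avoiding sets to $\Y$-avoiding sets; taking maxima over the two families and over their $\Y$-avoiding subfamilies gives both identities with $\varepsilon = 1$. This mirrors Lemma~\ref{lemma:intro:I1}.

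The only mildly delicate point, and the step I would be most careful with, is that Lemmas~\ref{lemma:intro:I1} and~\ref{lemma:intro:I2} are phrased only for \emph{maximum} $(X,Z)$-\is, whereas here I also need the analogous correspondence for maximum $\Y$-avoiding $(X,Z)$-\is. The fix is to establish the correspondences at the level of \emph{all} $(X,Z)$-\is (not just the maximum ones) and then observe that, because $v \notin \Y$, they commute with intersecting by $V(G)\setminus\Y$; everything else is routine bookkeeping on cardinalities.
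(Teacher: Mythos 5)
Your proof is correct, but it is organized differently from the paper's. The paper never computes the quantities $\alpha_{(X,Z)}(G_X)$ and $\alpha_{(X,Z)}^{\Y}(G_X)$ explicitly: it reads ``$\Y$ is an $(X,Z)$-\bs'' as ``every maximum $(X,Z)$-\is meets $\Y$'' and transports a single maximum \is across the introduce node using Lemmas~\ref{lemma:intro:I1} and~\ref{lemma:intro:I2} --- in the forward direction it takes a maximum $(X^C,Z\setminus\{v\})$-\is $I$ of $G_{X^C}$, turns it into the maximum $(X,Z)$-\is $I\cup\{v\}$ or $I$ of $G_X$ according to whether $v\in Z$, and uses $v\notin\Y$ to pull the nonempty intersection with $\Y$ back down to $I$; the backward direction is symmetric. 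You instead build a bijection between \emph{all} $(X,Z)$-\is of $G_X$ and all $(X^C,Z\setminus\{v\})$-\is of $G_{X^C}$, observe that it shifts cardinality by $\varepsilon\in\{0,1\}$ and preserves $\Y$-avoidance (because $v\notin\Y$), and deduce the two identities $\alpha_{(X,Z)}(G_X)=\alpha_{(X^C,Z\setminus\{v\})}(G_{X^C})+\varepsilon$ and $\alpha_{(X,Z)}^{\Y}(G_X)=\alpha_{(X^C,Z\setminus\{v\})}^{\Y}(G_{X^C})+\varepsilon$ before subtracting. The price is that you cannot quote Lemmas~\ref{lemma:intro:I1} and~\ref{lemma:intro:I2} as stated, since they concern only maximum \is --- a point you correctly flag and repair by reproving the correspondence for arbitrary \is; the benefit is a slightly stronger statement (exact identities, with the $-\infty$ bookkeeping handled once) that treats the $\Y$-avoiding side uniformly rather than case by case. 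Both arguments ultimately rest on the same facts about an introduce node, namely $v\notin V(G_{X^C})$ and $N_{G_X}(v)\subseteq X$, so the underlying combinatorial content is the same.
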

          \begin{proof}
            For the forward implication, assume that $\Y$ is an $(X,Z)$-\bs in $G_X$ such that $v \notin \Y$. Let $I$ be a maximum $(X^C,Z \setminus \{v\})$-\is in $G_{X^C}$.
            Suppose first that $v \in Z$.
            By Lemma~\ref{lemma:intro:I1}, we get that $I \cup \{v\}$ is a maximum $(X,Z)$-\is in $G_X$,
            implying that $\Y \cap (I \cup \{v\}) \neq \emptyset$. As $v \notin \Y$, we get $\Y \cap I \neq \emptyset$.
            Suppose now that $v \notin Z$.
            By Lemma~\ref{lemma:intro:I2}, we get that $I$ is a maximum $(X,Z)$-\is in $G_X$, implying $\Y \cap I \neq \emptyset$.

            For the backward implication, suppose that $\Y$ is an $(X^C,Z \setminus \{v\})$-\bs in $G_{X^C}$. Let $I$ be a maximum $(X,Z)$-\is in $G_X$.
            Suppose first that $v \in Z$.
            By Lemma~\ref{lemma:intro:I1}, we get that $I \setminus \{v\}$ is a maximum $(X^C,Z \setminus \{v\})$-\is in $G_{X^C}$,
            implying that $\Y \cap (I \setminus \{v\}) \neq \emptyset$.
            Suppose now that $v \notin Z$.
            By Lemma~\ref{lemma:intro:I2}, we get that $I$ is a maximum $(X^C,Z)$-\is in $G_{X^C}$, implying $\Y \cap I \neq \emptyset$.
          \end{proof}

          \begin{lemma}\label{lemma:intro:gamma}
            Let $Z \subseteq X$ such that $Z$ is an \is with $v \notin Z$. For every $\Y \subseteq V(G_X)$ such that $v \in \Y$, $\Y$ is an $(X,Z)$-\bs in $G_X$ if and only if $\Y \setminus \{v\}$ is an $(X^C,Z)$-\bs in $G_{X^C}$.
          \end{lemma}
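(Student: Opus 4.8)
The plan is to reduce the statement to a single structural observation: since $v \notin Z$ and $v$ is exactly the vertex introduced at the node corresponding to $X$, the family of $(X,Z)$-\is in $G_X$ coincides with the family of $(X^C,Z)$-\is in $G_{X^C}$, and removing $v$ from $\Y$ does not change which of these independent sets are ``forbidden'' by $\Y$. From these two facts both $\alpha_{(X,Z)}(G_X)=\alpha_{(X^C,Z)}(G_{X^C})$ and $\alpha^{\Y}_{(X,Z)}(G_X)=\alpha^{\Y\setminus\{v\}}_{(X^C,Z)}(G_{X^C})$ follow, and the lemma is then just an unfolding of the definition of an $(X,Z)$-\bs.

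First I would record the elementary facts about an introduce node: $X^C=X\setminus\{v\}$, the graph $G_{X^C}$ is an induced subgraph of $G_X$, $V(G_X)=V(G_{X^C})\uplus\{v\}$ (in particular $v\notin V(G_{X^C})$), and hence $\Y\setminus\{v\}\subseteq V(G_{X^C})$. Moreover, since $v\notin Z$ and $Z\subseteq X$, we have $Z\subseteq X^C$, so the objects $\alpha_{(X^C,Z)}(\cdot)$ and ``$(X^C,Z)$-\bs'' are well defined, and $Z$ is still an \is of $G$.

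Next I would prove the key claim: a set $I$ is an $(X,Z)$-\is in $G_X$ if and only if it is an $(X^C,Z)$-\is in $G_{X^C}$. For the forward direction, $I\cap X=Z$ together with $v\in X\setminus Z$ forces $v\notin I$, hence $I\subseteq V(G_{X^C})$; then $I$ is independent in the induced subgraph $G_{X^C}$ and $I\cap X^C=(I\cap X)\setminus\{v\}=Z$. The converse is symmetric: if $I$ is an $(X^C,Z)$-\is in $G_{X^C}$ then $I\subseteq V(G_{X^C})$ gives $v\notin I$, so $I$ is independent in $G_X$ and $I\cap X=(I\cap X^C)\cup(I\cap\{v\})=Z$. (Alternatively, for the ``maximum'' version one can simply invoke Lemma~\ref{lemma:intro:I2}.) In particular $\alpha_{(X,Z)}(G_X)=\alpha_{(X^C,Z)}(G_{X^C})$, where both sides are $-\infty$ when no such set exists.

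Finally I would transfer the ``forbidden'' condition: because $v\in\Y$ and, by the key claim, every $(X,Z)$-\is $I$ omits $v$, we have $I\cap\Y=\emptyset$ if and only if $I\cap(\Y\setminus\{v\})=\emptyset$; taking the maximum over the (common) family of such sets therefore yields $\alpha^{\Y}_{(X,Z)}(G_X)=\alpha^{\Y\setminus\{v\}}_{(X^C,Z)}(G_{X^C})$. Combining the two equalities, $\Y$ is an $(X,Z)$-\bs in $G_X$ iff $\alpha^{\Y}_{(X,Z)}(G_X)<\alpha_{(X,Z)}(G_X)$ iff $\alpha^{\Y\setminus\{v\}}_{(X^C,Z)}(G_{X^C})<\alpha_{(X^C,Z)}(G_{X^C})$ iff $\Y\setminus\{v\}$ is an $(X^C,Z)$-\bs in $G_{X^C}$, which is exactly the claim. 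There is no genuine combinatorial obstacle here; the only point requiring care is the bookkeeping with the $-\infty$ convention and keeping straight which graph ($G_X$ or $G_{X^C}$) each vertex set is considered in, the whole content being that introducing a vertex $v\notin Z$ is ``invisible'' to $(X,Z)$-independent sets.
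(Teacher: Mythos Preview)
Your proof is correct and follows essentially the same approach as the paper. The paper argues via Lemma~\ref{lemma:intro:I2} that maximum $(X,Z)$-\is in $G_X$ and maximum $(X^C,Z)$-\is in $G_{X^C}$ coincide and then uses the intersection characterization of a blocking set, whereas you prove the slightly stronger fact that the entire families of $(X,Z)$-\is coincide and then compare the quantities $\alpha$ and $\alpha^{\Y}$ directly; this is only a presentational difference.
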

          \begin{proof}
            For the forward implication, suppose that $\Y$ is an $(X,Z)$-\bs in $G_X$ such that $v \in \Y$.
            Let $I$ be a maximum $(X^C,Z)$-\is in $G_{X^C}$.
            By Lemma~\ref{lemma:intro:I2}, $I$ is a maximum $(X,Z)$-\is in $G_X$,
            implying that $\Y \cap I \neq \emptyset$. As $v \notin I$, we get $(\Y \setminus \{v\}) \cap I \neq \emptyset$.

            For the backward implication, suppose that $\Y \setminus \{v\}$ is an $(X^C,Z)$-\bs in $G_{X^C}$. Let $I$ be a maximum $(X,Z)$-\is in $G_X$.
            By Lemma~\ref{lemma:intro:I2}, $I$ is a maximum $(X^C,Z)$-\is in $G_{X^C}$,
            implying that $(\Y \setminus \{v\}) \cap I \neq \emptyset$.
          \end{proof}

          We are now ready to state the main lemma of this section.
          Let us recall that given a function $f: A \to B$ and a subset $A' \subseteq A$, we denote by $f_{|A'}$ the restriction of $f$ to $A'$.

          \begin{lemma}\label{lem:intro}
            Let $(X,\Y_0,\L_1,\L_2,f,\S) \in \E(G,\D)$ where $X \in \B$ is an introduce node and $X^C$ is the child of $X$ with $X^C=X \setminus \{v\}$.
            For every $\Y \subseteq V(G_X)$,
            $\Y \vdash (X,\Y_0,\L_1,\L_2,f,\S)$ if and only if one of the following two cases holds:\\

            %  \fixme{voir si besoin distinguer $\Y=\Y_0$}

            \noindent\textbf{Case 1}: $v \in \Y$ and there exist $\L_2^{A}, \L_2^{B}$ such that
            \begin{enumerate}
              \item \label{intro:1} $\L_2(v)=\L_2^A \uplus \L_2^B$,
              \item \label{intro:2} $f(v) \in \L_2^A$,
              \item \label{intro:3} for every $Z \in \S$, $v \notin Z$, and
              \item \label{intro:4} $\Y \setminus \{v\} \vdash (X \setminus \{v\},\Y_0 \setminus \{v\},\L_1^C,\L_2^C,f^C,\S^C)$, where
              \begin{itemize}
                \item[$\bullet$]$\L_1^C = \L_1(\bar{v})$,
                \item[$\bullet$]$\L_2^C = \L_2(\bar{v})$,
                \item[$\bullet$]$f^C = f_{|\Y\setminus \{v\}}$, and
                \item[$\bullet$]$\S^C = \S \cup r_v(\L_2^A)$.

              \end{itemize}
            \end{enumerate}

            \noindent\textbf{Case 2}: $v \notin \Y$ and
            \label{intro:1bis} $\Y \vdash (X \setminus \{v\},\Y_0 , \L_1^C,\L_2^C,f^C,\S^C)$, where
            \begin{itemize}
              \item[$\bullet$]$\L_1^C = r_v(\L_1)$,
              \item[$\bullet$]$\L_2^C = r_v(\L_2)$,
              \item[$\bullet$]$f^C(v') = f(v') \setminus \{v\}$ for every $v'  \in \Y_0$, and
              \item[$\bullet$]$\S^C = r_v(\S)$.
            \end{itemize}
          \end{lemma}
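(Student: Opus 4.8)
The plan is to split according to whether the introduced vertex $v$ belongs to $\Y$, which corresponds exactly to Case~1 and Case~2 of the statement, and then, in each case, to verify conditions~\ref{def:Y0}--\ref{def:min} of Definition~\ref{def:tchack} for the parent tuple one at a time, rewriting each occurrence of ``$\Y'$ is (or is not) an $(X,Z)$-\bs in $G_X$'' as an equivalent statement about $G_{X^C}$ by means of Lemmas~\ref{lemma:intro:I1}, \ref{lemma:intro:I2}, \ref{lemma:intro:alphaprime} and~\ref{lemma:intro:gamma}. The only structural input is that, $X$ being an introduce node with $X^C = X\setminus\{v\}$, we have $V(G_X)=V(G_{X^C})\cup\{v\}$ and $N_{G_X}(v)\subseteq X$, which is precisely what those four lemmas require. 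Two elementary observations are used repeatedly: (i) if $Z\subseteq X$ is an \is of $G$ then $Z$ is itself an $(X,Z)$-\is of $G_X$, so $\alpha_{(X,Z)}(G_X)\neq-\infty$; and (ii) hence, if $v\in\Y$ and $v\in Z$, then every $(X,Z)$-\is of $G_X$ meets $\Y$, so $\Y$ is \emph{automatically} an $(X,Z)$-\bs in $G_X$. Note also that whenever $v\in\Y$, condition~\ref{def:Y0} forces $v\in\Y_0$, which is exactly what makes $f(v)$, and hence Case~1, meaningful.

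For Case~2 ($v\notin\Y$, hence $v\notin\Y_0$) the argument is a routine ``shift down by $r_v$''. Since $v\notin\Y$ we have $\Y\cap X^C=\Y_0$, and Lemma~\ref{lemma:intro:alphaprime} states that for every $Z\subseteq X$, $\Y$ is an $(X,Z)$-\bs in $G_X$ if and only if $\Y$ is an $(X^C,Z\setminus\{v\})$-\bs in $G_{X^C}$; the same holds with $\Y$ replaced by any $\Y\setminus\{u\}$, as these also avoid $v$. Applying this with $Z$ ranging over $\L_1\cup\L_2$ gives the equivalence of condition~\ref{def:L1L2} for the parent with the corresponding condition for the child tuple having $\L_1^C=r_v(\L_1)$ and $\L_2^C=r_v(\L_2)$; letting $Z$ range over $\S$ gives the equivalence of condition~\ref{def:S} with $\S^C=r_v(\S)$; and applying it to $\Y\setminus\{u\}$ for $u\in\Y\setminus\Y_0$ (resp.\ $u\in\Y_0$ with $Z=f(u)$) yields the equivalence of conditions~\ref{def:mina} and~\ref{def:minb} with $f^C(u)=f(u)\setminus\{v\}$. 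It is immediate that the resulting tuple lies in $\E(G,\D)$.

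For Case~1 ($v\in\Y$, hence $v\in\Y_0$) the extra ingredients are the following. By observation~(ii), every $Z\in\L_1\cup\L_2$ containing $v$, i.e.\ lying in $\L_1(v)\cup\L_2(v)$, is already blocked by $\Y$, which is why the child only keeps $\L_1^C=\L_1(\bar v)$ and $\L_2^C=\L_2(\bar v)$; for the same reason the witness $Z$ in condition~\ref{def:mina} must avoid $v$, and every $Z\in\S$ must avoid $v$, giving condition~\ref{intro:3}. For $Z$ avoiding $v$, Lemma~\ref{lemma:intro:gamma} gives that $\Y$ is an $(X,Z)$-\bs in $G_X$ iff $\Y\setminus\{v\}$ is an $(X^C,Z)$-\bs in $G_{X^C}$, and combining this with the previous remarks settles conditions~\ref{def:L1L2}, \ref{def:S} and~\ref{def:mina} for the child tuple $(X^C,\Y_0\setminus\{v\},\L_1(\bar v),\L_2(\bar v),f_{|\Y\setminus\{v\}},\S^C)$ (one also checks that $v\notin f(u)$ for every $u\in\Y_0\setminus\{v\}$ — again by observation~(ii) — so that $f_{|\Y\setminus\{v\}}$ is well defined into $\L_2(\bar v)$). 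The delicate point is condition~\ref{def:minb} at the vertex $u=v$: one first argues that ``$\Y\setminus\{v\}$ is not an $(X,f(v))$-\bs in $G_X$'' forces $v\in f(v)$, since a maximum $(X,f(v))$-\is avoiding $\Y\setminus\{v\}$ must still meet $\Y$, hence contains $v$, hence $v\in f(v)\in\L_2(v)$; then, via Lemmas~\ref{lemma:intro:I1} and~\ref{lemma:intro:alphaprime}, this is equivalent to ``$\Y\setminus\{v\}$ is not an $(X^C,f(v)\setminus\{v\})$-\bs in $G_{X^C}$'', which is exactly what the component $r_v(\L_2^A)$ of $\S^C$ enforces once $f(v)$ is placed in $\L_2^A$ (condition~\ref{intro:2}). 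Accordingly, in the forward direction one takes $\L_2^A$ to be the family of those $Z\in\L_2(v)$ for which $\Y\setminus\{v\}$ is not an $(X,Z)$-\bs in $G_X$, and sets $\L_2^B=\L_2(v)\setminus\L_2^A$: condition~\ref{def:minb} at $v$ guarantees $f(v)\in\L_2^A$, and Lemma~\ref{lemma:intro:alphaprime} guarantees that no set of $r_v(\L_2^A)$ clashes with a set of $\L_1(\bar v)\cup\L_2(\bar v)$ (a clash would make $\Y\setminus\{v\}$ both be and not be an $(X^C,\cdot)$-\bs, contradicting Lemma~\ref{lemma:intro:gamma} applied to the parent tuple), so the child instance is indeed satisfied by $\Y\setminus\{v\}$. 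In the backward direction the same equivalences, read in reverse, rebuild a set $\Y$ with $\Y\vdash(X,\Y_0,\L_1,\L_2,f,\S)$ from any solution of the child instance together with the data $\L_2^A,\L_2^B$.

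I expect the main obstacle to be purely organizational rather than conceptual: one must consistently keep apart the two roles of $v$ — as a possible member of the ``demand'' set $Z$, which trivializes blocking, versus as a possible member of the candidate solution $\Y$ — and, in Case~1, check that the chosen splitting $\L_2(v)=\L_2^A\uplus\L_2^B$ is simultaneously compatible with conditions~\ref{def:L1L2}, \ref{def:S} and~\ref{def:minb} of the child. No individual step is hard; the difficulty lies in the number of small verifications and in invoking the correct one of Lemmas~\ref{lemma:intro:I1}--\ref{lemma:intro:gamma} in each sub-case, which is governed by whether $v\in Z$ and whether $v\in\Y$.
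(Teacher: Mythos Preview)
Your proposal is correct and follows essentially the same approach as the paper's proof: split on $v\in\Y$ versus $v\notin\Y$, and in each case translate every instance of ``(not) an $(X,Z)$-\bs in $G_X$'' into a statement about $G_{X^C}$ using Lemmas~\ref{lemma:intro:alphaprime} and~\ref{lemma:intro:gamma}, with the choice of $\L_2^A$ in the forward direction of Case~1 being exactly the paper's (the $Z\in\L_2$ for which $\Y\setminus\{v\}$ fails to block, which you correctly argue lie in $\L_2(v)$). Your additional remark about ``no clash'' between $r_v(\L_2^A)$ and $\L_1(\bar v)\cup\L_2(\bar v)$ is a nice consistency check that the paper leaves implicit; note that the equivalence step for the child's condition on $r_v(\L_2^A)$ needs only Lemma~\ref{lemma:intro:alphaprime} (since $v\notin\Y\setminus\{v\}$), so the reference to Lemma~\ref{lemma:intro:I1} there is superfluous.
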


          \begin{proof}
            For the forward implication, suppose that $\Y \subseteq V(G_X)$ is such that $\Y \vdash (X,\Y_0,\L_1,\L_2,f,\S)$. We distinguish the two cases considered in Lemma~\ref{lem:intro}. In both cases, we verify that each of the corresponding properties is satisfied.

            \medskip

            \textbf{Case 1}. Suppose that $v \in \Y$, and thus $v\in \Y_0$, as $v\in X$ and $\Y_0=\Y\cap X$.
            Let $\L_2^A = \{ Z \in \L_2 |$ $\Y \setminus \{v\}$ is not an $(X,Z)$-\bs in $G_X \}$.
            By Property~\ref{def:minb} applied to $v$, we get that $f(v) \in \L_2^A$, implying Property~\ref{intro:2}.
            Moreover, for every $Z \in \L_2^A$, there exists a maximum $(X,Z)$-\is $I$ in $G_X$ such that $I \cap (\Y \setminus \{v\}) = \emptyset$,
            and thus if we had $v \notin Z$, then $v \notin I$ and $I \cap \Y = \emptyset$, a contradiction.
            This implies that $\L_2^A \subseteq \L_2(v)$, and we define $\L_2^B = \L_2(v) \setminus \L_2^A$, implying Property~\ref{intro:1}.
            By Property~\ref{def:S}, as $\Y$ is not an $(X,Z)$-\bs in $G_X$ for every $Z \in \S$ and $v \in \Y$, we get Property~\ref{intro:3}.
            Let us now prove Property~\ref{intro:4}, by verifying each of the non-trivial properties of Definition~\ref{def:tchack} applied to $\Y \setminus \{v\}$.

            \emph{Property~\ref{def:L1L2}}. Recall that $X^C = X\setminus \{v\}$. We need to prove that $\Y\setminus\{v\}$ is an $(X^C, Z)$-\bs in $G_{X^C}$, for every $Z\in \L_1^C \cup \L_2^C = \L_1(\bar{v})\cup \L_2(\bar{v})$.
            Let $Z \in \L_1^C \cup \L_2^C$. As $v \notin Z$, Lemma~\ref{lemma:intro:gamma} implies that $\Y \setminus \{v\}$ is an $(X^C,Z)$-\bs in $G_{X^C}$.

            \emph{Property~\ref{def:S}}. We must prove that $\Y\setminus\{v\}$ is not an $(X^C,Z)$-\bs in $G_{X^C}$, for every $Z\in \S^C = \S\cup r_v(\L_2^A)$.
            Let $Z \in \S^C$.
            If $Z \in \S$, as $v \notin Z$ (which we know from Property~\ref{intro:3}) and $\Y$ is not an $(X,Z)$-\bs in $G_X$ (since $\Y \vdash (X,\Y_0,\L_1,\L_2,f,\S)$), Lemma~\ref{lemma:intro:gamma} implies that $\Y \setminus \{v\}$ is not an $(X^C,Z)$-\bs in $G_{X^C}$.
            If $Z \in r_v(\L_2^A)$, then let $Z'$ be such that $Z = Z' \setminus \{v\}$. We know that $\Y \setminus \{v\}$ is not an $(X,Z')$-\bs in $G_X$.
            By Lemma~\ref{lemma:intro:alphaprime}, we get that $\Y \setminus \{v\}$ is not an $(X^C,Z)$-\bs in $G_{X^C}$.

            \emph{Property~\ref{def:mina}}. Let us now prove that, for every $v'\in (\Y\setminus\{v\})\setminus(\Y_0\setminus\{v\})$ there is $Z\in \L_1^C = \L_1(\bar{v})$ such that $(\Y\setminus\{v\})\setminus\{v'\}$ is not an $(X^C,Z)$-\bs in $G_{X^C}$. Since $v\in \Y_0$, let $v'\in \Y\setminus\Y_0$.
            Since $\Y \vdash (X,\Y_0,\L_1,\L_2,f,\S)$, there exists $Z \in \L_1$ such that $\Y \setminus \{v'\}$ is not an $(X,Z)$-\bs in $G_X$.
            As $v \in \Y \setminus \{v'\}$, this implies that $Z \in \L_1(\bar{v})$.
            As $v \notin Z$, from Lemma~\ref{lemma:intro:gamma} we get that $(\Y \setminus \{v\}) \setminus \{v'\}$ is not an $(X^C,Z)$-\bs in $G_{X^C}$.

            \emph{Property~\ref{def:minb}}. We now have to prove that for every $v'\in \Y_0\setminus\{v\}$, $(\Y\setminus\{v\})\setminus\{v'\}$ is not an $(X^C, f^C(v'))$-\bs in $G_{X^C}$. If $\Y_0 \setminus \{v\}=\emptyset$, we have nothing to prove. Otherwise, let $v' \in \Y_0 \setminus \{v\}$. Since $\Y \vdash (X,\Y_0,\L_1,\L_2,f,\S)$, there exists $Z \in \L_2$ such that $Z = f^C(v')=f(v')$
             and $\Y \setminus \{v'\}$ is not an $(X,Z)$-\bs in $G_X$.
            As $v \in \Y \setminus \{v'\}$, this implies that $Z \in \L_2(\bar{v})$.
            As $v \notin Z$, Lemma~\ref{lemma:intro:gamma} implies that $(\Y \setminus \{v\}) \setminus \{v'\}$ is not an $(X^C,Z)$-\bs in $G_{X^C}$.

            \medskip

            \textbf{Case 2}. Suppose that $v \notin \Y$. Let us prove that $\Y \vdash (X \setminus \{v\},\Y_0 , r_v(\L_1)$ ,$r_v(\L_2),f^C,r_v(\S))$, where $f^C(v') = f(v') \setminus \{v\}$ for every $v'  \in \Y_0$.

            \emph{Property~\ref{def:L1L2}}. We first prove that $B$ is an $(X^C,Z)$-set in $G_{X^C}$, for every $Z\in \L_1^C\cup \L_2^C=r_v(\L_1)\cup r_v(\L_2)$.
            Let $Z \in r_v(\L_1)\cup r_v(\L_2)$ where $Z = Z' \setminus \{v\}$ and $Z' \in \L_1 \cup \L_2$.
            As $v \notin \Y$ and $\Y$ is an $(X,Z')$-\bs in $G_X$, Lemma~\ref{lemma:intro:alphaprime} implies that $\Y$ is an $(X^C,Z)$-\bs in $G_{X^C}$.

            \emph{Property~\ref{def:S}}.
            Let us prove that $\Y$ is not an $(X^C,Z)$-\bs in $G_{X^C}$, for every $Z\in \S^C= r_v(\S)$.
            Let $Z \in r_v(\S)$, where $Z = Z' \setminus \{v\}$ and $Z' \in \S$.
            As $v \notin \Y$, and as $\Y$ is not an $(X,Z')$-\bs in $G_X$, Lemma~\ref{lemma:intro:alphaprime} implies that $\Y$ is not an $(X^C,Z)$-\bs in $G_{X^C}$.

            \emph{Property~\ref{def:mina}}.
            We now prove that, for every $v'\in \Y\setminus \Y_0$, there exists $Z\in \L_1^C = r_v(\L_1)$ such that $\Y\setminus\{v'\}$ is not an $(X^C,Z)$-\bs in $G_{X^C}$. Suppose that $\Y\setminus\Y_0 \neq \emptyset$, as otherwise the statement trivially holds.
            Let $v' \in \Y \setminus \Y_0$.
            Since $\Y \vdash (X,\Y_0,\L_1,\L_2,f,\S)$, there exists $Z \in \L_1$ such that $\Y \setminus \{v'\}$ is not an $(X,Z)$-\bs in $G_X$.
            As $v \notin \Y \setminus \{v'\}$,  Lemma~\ref{lemma:intro:alphaprime} implies that $\Y \setminus \{v'\}$ is not an $(X^C,Z \setminus \{v\})$-\bs in $G_{X^C}$,
            and $Z \setminus \{v\} \in \L_1^C$.

            \emph{Property~\ref{def:minb}}. We must finally prove that, for every $v'\in \Y_0$, $\Y\setminus\{v'\}$ is not an $(X^C,f^C(v'))$-\bs in $G_{X^C}$. Let $v' \in \Y_0$. Since $\Y \vdash (X,\Y_0,\L_1,\L_2,f,\S)$, let $Z =f(v')$, where $Z \in \L_2$, such that $\Y \setminus \{v'\}$ is not an $(X,Z)$-\bs in $G_X$ (by Property~\ref{def:minb}).
            As $v \notin \Y \setminus \{v'\}$,  Lemma~\ref{lemma:intro:alphaprime} implies that $\Y \setminus \{v'\}$ is  not an $(X^C,Z \setminus \{v\})$-\bs in $G_{X^C}$,
            and $Z \setminus \{v\} = f^C(v')$.

            \medskip

            We now focus on the backward implication, and we distinguish again the two cases according to the possible hypothesis. In both cases, remind that our goal is to prove that $\Y \vdash (X,\Y_0,\L_1,\L_2,f,\S)$.

            \medskip

            \textbf{Case 1}.  Let $\Y$ with $v \in \Y$ and suppose that there exist $\L_2^{A}, \L_2^{B}$ satisfying the statement of the lemma.

            \emph{Property~\ref{def:L1L2}}. Let us prove that $\Y$ is an $(X,Z)$-\bs in $G_X$, for every $Z\in\L_1\cup \L_2$.
            Let $Z \in \L_1 \cup \L_2$. If $v\in Z$, then as $v \in \Y$ it follows that $\Y$ is an $(X,Z)$-\bs in $G_{X}$.
            Otherwise, by Property~\ref{def:L1L2}, we get that $\Y \setminus \{v\}$ is an $(X^C,Z)$-\bs in $G_{X^C}$.
            As $v \notin Z$, Lemma~\ref{lemma:intro:gamma} implies that $\Y$ is an $(X,Z)$-\bs in $G_{X}$.

            \emph{Property~\ref{def:S}}.
            We now prove that $\Y$ is not an $(X,Z)$-\bs in $G_X$, for every $Z\in \S$.
            Let $Z \in \S$. We know that $\Y \setminus \{v\}$ is not an $(X^C,Z)$-\bs in $G_{X^C}$.
            By hypothesis, $v \notin Z$, and thus Lemma~\ref{lemma:intro:gamma} implies that $\Y$ is not an $(X,Z)$-\bs in $G_{X}$.

            \emph{Property~\ref{def:mina}}.
            We have to show that, for every $v'\in \Y\setminus\Y_0$, there exists $Z\in \L_1$ such that $B\setminus \{v'\}$ is not an $(X,Z)$-\bs in $G_X$. Assume that $\Y\setminus\Y_0\neq\emptyset.$
            Let $v' \in \Y \setminus \Y_0$.
            By hypothesis, there exists $Z \in \L_1^C=\L_1(\bar{v})$ such that $(\Y \setminus \{v\}) \setminus \{v'\}$ is not an $(X^C,Z)$-\bs in $G_{X^C}$.
            As $v \notin Z$, Lemma~\ref{lemma:intro:gamma} implies that $(\Y \setminus \{v'\})$ is not an $(X,Z)$-\bs in $G_X$.

            \emph{Property~\ref{def:minb}}. We finally prove that, for every $v'\in \Y_0$, $\Y\setminus\{v'\}$ is not an $(X,f(v'))$-\bs in $G_X$. Let $v' \in \Y_0$ and let $Z = f(v')$.
            Suppose first that $v' = v$.  By Property~\ref{intro:2} we know that $Z \in \L_2^A$, and by Property~\ref{intro:1}
            we know that $v \in Z$, implying then that there exists $Z'$ such that $Z = Z' \cup \{v\}$. As $r_v(\L_2^A) \subseteq \S^C$, it follows that $\Y \setminus \{v\}$ is not an $(X^C,Z')$-\bs in $G_{X^C}$.
            By Lemma~\ref{lemma:intro:alphaprime}, $\Y \setminus \{v\}$ is not an $(X,Z)$-\bs in $G_{X}$.
            Suppose now that $v' \neq v$.
            By hypothesis, $(\Y \setminus \{v\}) \setminus \{v'\}$ is not an $(X^C,Z)$-\bs in $G_{X^C}$, and $Z \in \L_2(\bar{v})$.
            As $v \notin Z$, by Lemma~\ref{lemma:intro:gamma} $\Y \setminus \{v'\}$ is not an $(X,Z)$-\bs in $G_{X}$.

            \medskip

            \textbf{Case 2}. Let $\Y$ with $v \notin \Y$ and $\Y \vdash (X \setminus \{v\},\Y_0 , r_v(\L_1),r_v(\L_2),f^C,r_v(\S))$, where $f^C(v') = f(v') \setminus \{v\}$ for every $v'  \in \Y_0$.

            \emph{Property~\ref{def:L1L2}}.
            Let us prove that $\Y$ is an $(X,Z)$-\bs in $G_X$, for every $Z\in\L_1\cup \L_2$.
            Let $Z \in \L_1 \cup \L_2$.
            As $v \notin \Y$ and as $\Y$ is an $(X^C,Z \setminus \{v\})$-\bs in $G_{X^C}$, Lemma~\ref{lemma:intro:alphaprime} implies that $\Y$ is an $(X,Z)$-\bs in $G_X$.

            \emph{Property~\ref{def:S}}.
            We now prove that $\Y$ is not an $(X,Z)$-\bs in $G_X$, for every $Z\in \S$.
            Let $Z \in \S$.
            As $v \notin \Y$ and as $\Y$ is not an $(X^C,Z \setminus \{v\})$-\bs in $G_{X^C}$, Lemma~\ref{lemma:intro:alphaprime} implies that $\Y$ is not an $(X,Z)$-\bs in $G_{X}$.

            \emph{Property~\ref{def:mina}}.
            We have to show that, for every $v'\in \Y\setminus\Y_0$, there exists $Z\in \L_1$ such that $B\setminus \{v'\}$ is not an $(X,Z)$-\bs in $G_X$. If $\Y\setminus\Y_0=\emptyset$, then we have nothing to prove. Let $v' \in \Y \setminus \Y_0$.
            By hypothesis, there exists $Z \in \L^C_1$ such that $\Y \setminus \{v'\}$ is not an $(X^C,Z)$-\bs in $G_{X^C}$.
            Let $Z' \in \L_1$ such that $Z = Z' \setminus \{v\}$.
            As $v \notin \Y \setminus \{v'\}$,  Lemma~\ref{lemma:intro:alphaprime} implies that $\Y \setminus \{v'\}$ is not an $(X,Z')$-\bs in $G_X$.

            \emph{Property~\ref{def:minb}}.
            We finally prove that, for every $v'\in \Y_0$, $\Y\setminus\{v'\}$ is not an $(X,f(x))$-\bs in $G_X$.
            Let $v' \in \Y_0$ and let $Z =f(v')$, where $Z \in \L_2$. Recall that $f^C(v')=Z \setminus \{v\}$.
            By hypothesis, $\Y \setminus \{v'\}$ is not an $(X^C,Z \setminus \{v\})$-\bs in $G_{X^C}$.
            As $v \notin \Y \setminus \{v'\}$,  Lemma~\ref{lemma:intro:alphaprime} implies that $\Y \setminus \{v'\}$ is not an $(X,Z)$-\bs in $G_{X}$.
          \end{proof}

          \subsection{Forget node}
          \label{sec:forget}

          \mbvfinale{please check this case as in case 1 I modified the def of function f, and added condition \ref{forget1:1bisbis}}

          Let us start with some preliminaries related to the notion of \emph{criticality}.

          \begin{definition}\label{def:critical}
            Let $G$ be a graph, $X \subseteq V(G)$, $Z \subseteq X$ such that $Z$ is an \is, and $v \in V(G)$.
            We say that $(X,Z)$ is
            \begin{itemize}
              \item[$\bullet$]\emph{$v$-critical in $G$} if for every maximum $(X,Z)$-\is $I$ in $G$, $v \in I$,
              \item[$\bullet$]\emph{$\bar{v}$-critical in $G$} if for every maximum $(X,Z)$-\is $I$ in $G$, $v \notin I$, and
              \item[$\bullet$]\emph{$v$-mixed in $G$} if there exists a maximum $(X,Z)$-\is $I$ in $G$ with $v \in I$ and there exists a maximum $(X,Z)$-\is $I'$ in $G$ with $v \notin I$.
            \end{itemize}

            Given $v \in V(G)$ and a set $\R \subseteq 2^X$ such that for each $Z \in \R$,  $Z$ is an \is, we denote

        %    \ja{We already have $\R(v)$ and $\R(\bar{v})$, with a completely different meaning. Hope we do not mix notations. Although it is longer, I would prefer something like $\mathcal{C}(\R,v)$, $\overline{\mathcal{C}}(\R,v)$, and $\mathcal{M}(\R,v)$.}
            \begin{itemize}
              \item[$\bullet$]$\R^{(v,X)} = \{Z \in \R \mid $ $(X,Z)$ is $v$-critical in $G_X \}$,
              \item[$\bullet$]$\R^{(\bar{v},X)} = \{Z \in \R \mid $ $(X,Z)$ is $\bar{v}$-critical in $G_X \}$,
              \item[$\bullet$]$\R^{(*v,X)} = \{Z \in \R \mid $ $(X,Z)$ is $v$-mixed in $G_X \}$, and
              \item[$\bullet$]$a_v(\R) = \{Z \cup \{v\} \mid Z \in \R\}$.
            \end{itemize}
          \end{definition}

          \begin{lemma}\label{prop:critical}
            Let $G$ be a graph, $X \subseteq V(G)$, and $Z \subseteq X$ such that $Z$ is an \is and $v \in V(G)$.
            Deciding whether $(X,Z)$ is $v$-critical in $G$, $\bar{v}$-critical in $G$ or $v$-mixed in $G$ can be done in time $\O^*(2^{\tw(G)})$.
          \end{lemma}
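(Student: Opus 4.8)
The plan is to reduce deciding which of the three alternatives holds to the computation of two extremal independent‑set parameters, each obtained by a classical dynamic programming over a tree decomposition of $G$.

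First note that $Z$ is itself an $(X,Z)$-\is of $G$ (it is an \is by hypothesis and $Z \subseteq X$ yields $Z \cap X = Z$), so $\alpha_{(X,Z)}(G) \ge |Z| > -\infty$; in particular a maximum $(X,Z)$-\is always exists, and therefore the three properties ``$v$-critical'', ``$\bar v$-critical'' and ``$v$-mixed'' are mutually exclusive and cover all cases. Write $M = \alpha_{(X,Z)}(G)$, let $M^- = \alpha_{(X,Z)}^{\{v\}}(G)$ be the maximum size of an $(X,Z)$-\is avoiding $v$, and let $M^+$ be the maximum size of an $(X,Z)$-\is containing $v$ (setting $M^+ = -\infty$ when there is none, e.g.\ when $v \in X \setminus Z$). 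Then $M = \max(M^+,M^-)$, and unwinding the definitions gives: $(X,Z)$ is $v$-critical in $G$ if and only if $M^- < M$; it is $\bar v$-critical in $G$ if and only if $M^+ < M$; and it is $v$-mixed in $G$ if and only if $M^+ = M^- = M$. Indeed, if $M^- < M$ then every maximum $(X,Z)$-\is must contain $v$, since otherwise it would be an $(X,Z)$-\is avoiding $v$ of size $M > M^-$; the converse and the symmetric statement for $M^+$ are analogous, and $v$-mixed is exactly the remaining case. The corner cases $v \in Z$ and $v \in X \setminus Z$ are handled automatically by the $-\infty$ convention, since then $M^- = -\infty$, respectively $M^+ = -\infty$. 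Hence it suffices to compute $M^+$ and $M^-$.

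Both $M^+$ and $M^-$ are instances of the following task: given $G$ together with disjoint ``forced-in'' and ``forced-out'' vertex sets, compute the maximum size of an \is $I$ of $G$ that contains all forced-in vertices and avoids all forced-out ones (returning $-\infty$ if none exists). For $M^+$ we take forced-in $= Z \cup \{v\}$ and forced-out $= X \setminus Z$, and for $M^-$ we take forced-in $= Z$ and forced-out $= (X \setminus Z) \cup \{v\}$. This task is solved by the textbook maximum-independent-set dynamic programming over a nice tree decomposition of $G$ of width $w := \tw(G)$: for every node with bag $X'$ and every subset $S \subseteq X'$ we keep the maximum size of an \is of the graph induced by the bags below the node whose intersection with $X'$ equals $S$, where at an introduce node the introduced vertex may not belong to $S$ if it is forced out and must belong to $S$ if it is forced in (any table entry whose $S$ violates independence, or in which a forced-in vertex has been left out, is set to $-\infty$); forget and join nodes are treated exactly as for plain maximum independent set. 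Since $X$ need not be a bag of the decomposition, the forcing is applied vertex by vertex as vertices are introduced. Each such DP runs in time $\O^*(2^{w}) = \O^*(2^{\tw(G)})$, and reading the entry at the (empty) root bag yields $M^+$, respectively $M^-$; then $M = \max(M^+,M^-)$ and the case distinction above returns the answer in additional polynomial time.

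The overall running time is thus $\O^*(2^{\tw(G)})$, as claimed. There is no serious obstacle: the only points requiring some care are the consistent use of the $-\infty$ convention for infeasible forcings (which also disposes of the degenerate cases $v \in X$), the fact that $X$ is an arbitrary vertex subset rather than a bag, and having a nice tree decomposition of $G$ of width $\tw(G)$ available for the DP.
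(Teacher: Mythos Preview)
Your proof is correct and follows essentially the same strategy as the paper: reduce the three-way distinction to comparing $\alpha_{(X,Z)}(G)$ with the maximum size of an $(X,Z)$-\is that is forced to contain, respectively to avoid, the vertex $v$, and compute these quantities by a treewidth-based independent-set DP. The only cosmetic difference is that the paper first dispatches the cases $v\in Z$ and $v\in X\setminus Z$ by hand and, for $v\notin X$, observes that $\alpha_{(X',Z')}(G)=|Z'|+\alpha(G\setminus(X'\cup N(Z')))$, thereby reducing everything to \emph{plain} maximum independent set on an induced subgraph (whose treewidth does not exceed $\tw(G)$); you instead fold the ``forced-in/forced-out'' constraints directly into the DP and let the $-\infty$ convention absorb the degenerate cases. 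Both routes are standard and yield the same $\O^*(2^{\tw(G)})$ bound; the paper's version is slightly more modular (it invokes the textbook algorithm as a black box), while yours avoids the extra reduction step.
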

          \begin{proof}
            If $v \in Z$, then $(X,Z)$ is by definition $v$-critical in $G$, and if $v \in X \setminus Z$ then $(X,Z)$ is by definition $\bar{v}$-critical $G$.
            Suppose now that $v \notin X$.
            Then, observe that
            \begin{itemize}
              \item[$\bullet$] $(X,Z)$ is $v$-critical in $G$ if and only if $\alpha_{(X \cup \{v\},Z)}(G) < \alpha_{(X,Z)}(G)$,
              \item[$\bullet$] $(X,Z)$ is $\bar{v}$-critical in $G$ if and only if $\alpha_{(X \cup \{v\},Z \cup \{v\})}(G) < \alpha_{(X,Z)}(G)$, and
              \item[$\bullet$] $(X,Z)$ is $v$-mixed in $G$ if and only if $\alpha_{(X \cup \{v\},Z \cup \{v\})}(G) = \alpha_{(X \cup \{v\},Z)}(G) = \alpha_{(X,Z)}(G)$.
            \end{itemize}
            Let us now prove that, for every $(X', Z')$ such that $Z'\subseteq X'$ and $Z'$ is an \is, $\alpha_{(X',Z')}(G)$ can be computed in time $\O^*(2^{\tw(G)})$.
            Indeed, observe that $\alpha_{(X',Z')}(G) = |Z'|+\alpha(G \setminus (X' \cup N(Z')))$. As $\tw(G \setminus (X' \cup N(Z'))) \le \tw(G')$ and $\alpha(G)$ can be computed in $\O^*(2^{\tw(G)})$~\cite{CyganFKLMPPS15}, we get the desired result.
          \end{proof}

          %Given set $\R$ of subsets of $X$, for each $Z \in \R$, we define $b_{\R}(Z)$ as $Z$ if $Z \in \R^{\bar{v}} \cup \R^*$, and
          %$Z \cup \{v\}$ if $Z \in \R^v$.

          Before proving Lemma~\ref{lem:forget} corresponding to the forget case, let us first prove the following lemmas, where we assume that $X \in \B$ is a forget node and $X^C$ is the child of $X$ with $X^C=X \cup \{v\}$, for some vertex $v \in X_C$.

          \begin{lemma}\label{lemma:I1}
            Let $I \subseteq V(G)$ such that $v \in I$ and let $Z \subseteq X$ such that $Z$ is an \is. The following claims hold:
            \begin{itemize}
              \item[$\bullet$] If $I$ is a maximum $(X,Z)$-\is in $G_X$, then $I$ is a maximum $(X^C,Z \cup \{v\})$-\is in $G_{X^C}$.
              \item[$\bullet$] If $(X,Z)$ is not $\bar{v}$-critical in $G_X$, then $I$ is a maximum $(X,Z)$-\is in $G_X$ if and only if $I$ is a maximum $(X^C,Z \cup \{v\})$-\is in $G_{X^C}$.
            \end{itemize}
          \end{lemma}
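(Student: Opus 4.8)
The plan is to exploit the fact that, at a forget node, the underlying graph does not change. Since $X^C = X \cup \{v\}$ with $X \subseteq X^C$, the union of the bags in the subtree rooted at the node of $X$ equals the union of the bags in the subtree rooted at the node of $X^C$, so by definition of $G_{X^w}$ we have $G_X = G_{X^C}$ as induced subgraphs of $G$; the only thing that changes is whether $v$ is regarded as part of the ``boundary''. So the first thing I would record are the elementary consequences of the node being a forget node: $v \notin X$, hence $v \notin Z$ (as $Z \subseteq X$), and a set $J \subseteq V(G_X)$ is an $(X^C, Z \cup \{v\})$-\is in $G_{X^C}$ if and only if $J$ is an $(X,Z)$-\is in $G_X$ with $v \in J$, because the condition $J \cap X^C = Z \cup \{v\}$ unpacks exactly to ``$v \in J$ and $J \cap X = (J \cap X^C)\setminus\{v\} = Z$''. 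In particular this gives $\alpha_{(X^C, Z \cup \{v\})}(G_{X^C}) \le \alpha_{(X,Z)}(G_X)$, trivially so when the left-hand side equals $-\infty$.

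With this observation in hand, the first bullet of the lemma is a one-line squeeze argument: if $I$ is a maximum $(X,Z)$-\is in $G_X$ with $v \in I$, then $I$ is an $(X^C, Z \cup \{v\})$-\is, so $|I| \le \alpha_{(X^C, Z \cup \{v\})}(G_{X^C}) \le \alpha_{(X,Z)}(G_X) = |I|$, which forces equality throughout and hence the maximality of $I$ on the $X^C$ side.

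For the second bullet, I would first use the hypothesis that $(X,Z)$ is not $\bar{v}$-critical in $G_X$: by Definition~\ref{def:critical} this yields a maximum $(X,Z)$-\is $I_0$ in $G_X$ with $v \in I_0$, and applying the first bullet to $I_0$ upgrades the above inequality to the equality $\alpha_{(X^C, Z \cup \{v\})}(G_{X^C}) = \alpha_{(X,Z)}(G_X)$. Once this equality is available, both directions of the stated equivalence are immediate: the forward implication is precisely the first bullet, and for the backward implication, if $I$ (with $v \in I$) is a maximum $(X^C, Z \cup \{v\})$-\is in $G_{X^C}$, then by the observation $I$ is an $(X,Z)$-\is in $G_X$ whose size equals $\alpha_{(X^C, Z \cup \{v\})}(G_{X^C}) = \alpha_{(X,Z)}(G_X)$, hence $I$ is a maximum $(X,Z)$-\is in $G_X$.

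I do not expect a genuine obstacle here; the only points requiring a little care are the identification $G_X = G_{X^C}$, which I would justify directly from the definition of $G_{X^w}$ in a nice tree decomposition, and the degenerate cases where some $\alpha$-value is $-\infty$. The latter do not actually arise, since in each case the hypothesis ($I$ is a given $(X,Z)$-\is, or $(X,Z)$ is not $\bar{v}$-critical, or $I$ is a given $(X^C, Z\cup\{v\})$-\is) guarantees that the relevant family of independent sets is nonempty.
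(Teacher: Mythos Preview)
Your proposal is correct and follows essentially the same approach as the paper: both exploit that $G_X = G_{X^C}$ at a forget node and that the $(X^C,Z\cup\{v\})$-independent sets are exactly the $(X,Z)$-independent sets containing $v$, then use the not-$\bar v$-critical hypothesis to produce a witness $I_0$ with $v\in I_0$ for the backward direction. Your version states the key equality $\alpha_{(X^C,Z\cup\{v\})}(G_{X^C}) = \alpha_{(X,Z)}(G_X)$ explicitly and runs a squeeze argument, whereas the paper compares sizes against an auxiliary $I'$ directly; these are cosmetic differences only.
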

          \begin{proof}
            For the first item, let $I$ be a maximum $(X,Z)$-\is in $G_X$. As $v \in I$, $I \cap X^C = Z \cup \{v\}$, and $I$ is also an $(X^C,Z \cup \{v\})$-\is in $G_{X^C}$.
            Let $I'$ be a maximum $(X^C,Z \cup \{v\})$-\is in $G_{X^C}$. As $I'$ is also an $(X,Z)$-\is in $G_X$, $|I| \ge |I'|$, and thus $I$ is a maximum  $(X^C,Z \cup \{v\})$-\is in $G_{X^C}$.

            For the second item, the sufficiency is already proved in the first item. For the backward implication, let $I$ be a maximum $(X^C,Z \cup \{v\})$-\is in $G_{X^C}$.
            Observe first that $I$ is an $(X,Z)$-\is in $G_X$.
            Let $I'$ be a maximum $(X,Z)$-\is in $G_X$ such that $v \in I'$, which exists as $(X,Z)$ is not $\bar{v}$-critical in $G_X$. By the previous property, $I'$ is a (maximum) $(X^C,Z \cup \{v\})$-\is in $G_{X^C}$,
            implying $|I| \ge |I'|$ and the desired result.
          \end{proof}

          \begin{lemma}\label{lemma:I2}
            Let $I \subseteq V(G)$ such that $v \notin I$ and let $Z \subseteq X$ such that $Z$ is an \is. The following claims hold:
            \begin{itemize}
              \item[$\bullet$] If $I$ is a maximum $(X,Z)$-\is in $G_X$, then $I$ is a maximum $(X^C,Z)$-\is in $G_{X^C}$.
              \item[$\bullet$] If $(X,Z)$ is not $v$-critical in $G_X$, then  $I$ is a maximum $(X,Z)$-\is in $G_X$ if and only if $I$ is a maximum $(X^C,Z)$-\is in $G_{X^C}$.
            \end{itemize}
          \end{lemma}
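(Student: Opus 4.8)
The plan is to first record the observation that, for a forget node, the underlying graph does not change at all: since $X^C = X \cup \{v\}$ and $v$ already appears in some descendant bag of the node corresponding to $X$, we have $V(G_X) = V(G_{X^C})$, hence $G_X = G_{X^C}$. Thus the only thing that differs between the two tuples is the bag, and I would exploit the elementary fact that, because $Z \subseteq X$ and $v \notin X$, a set $I$ is an $(X^C,Z)$-\is in $G_{X^C}$ if and only if $I$ is an $(X,Z)$-\is in $G_X$ with $v \notin I$. In particular, every $(X^C,Z)$-\is is an $(X,Z)$-\is.

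Granting this, the first item is immediate: if $I$ is a maximum $(X,Z)$-\is in $G_X$ and $v \notin I$, then by the equivalence above $I$ is an $(X^C,Z)$-\is in $G_{X^C}$; and any other $(X^C,Z)$-\is $I'$ is in particular an $(X,Z)$-\is, so $|I'| \le |I|$ by maximality of $I$. Hence $I$ is a maximum $(X^C,Z)$-\is in $G_{X^C}$.

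For the second item, the forward direction is precisely the first item, so only the backward direction needs the hypothesis that $(X,Z)$ is not $v$-critical in $G_X$. I would first note that this hypothesis forces $\alpha_{(X,Z)}(G_X) \neq -\infty$ (otherwise $(X,Z)$ would be vacuously $v$-critical), so a maximum $(X,Z)$-\is in $G_X$ exists, and by non-$v$-criticality at least one such set, say $J$, satisfies $v \notin J$. Now let $I$ be a maximum $(X^C,Z)$-\is in $G_{X^C}$. Since $v \notin J$, the set $J$ is an $(X^C,Z)$-\is, so $|J| \le |I|$; conversely $I$ is an $(X,Z)$-\is, so $|I| \le \alpha_{(X,Z)}(G_X) = |J|$. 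Therefore $|I| = \alpha_{(X,Z)}(G_X)$ and $I$ is a maximum $(X,Z)$-\is in $G_X$, as required.

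There is no serious obstacle here: the argument is the mirror image of the proof of Lemma~\ref{lemma:I1}, with the roles of $v$-critical and $\bar{v}$-critical exchanged. The only point that requires a moment of care is the degenerate case $\alpha_{(X,Z)}(G_X) = -\infty$, which is handled by the remark that the non-$v$-criticality hypothesis already rules it out, so that in the relevant situation a maximum $(X,Z)$-\is avoiding $v$ is genuinely available.
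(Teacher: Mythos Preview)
Your proof is correct and follows essentially the same approach as the paper: both arguments use the observation that every $(X^C,Z)$-\is is an $(X,Z)$-\is avoiding $v$ (and conversely), prove the first item directly from this, and for the backward direction of the second item invoke a maximum $(X,Z)$-\is avoiding $v$ (your $J$, the paper's $I'$) whose existence is guaranteed by non-$v$-criticality. Your explicit remark that $G_X = G_{X^C}$ for a forget node and your handling of the degenerate case $\alpha_{(X,Z)}(G_X) = -\infty$ are minor additions (the latter in fact cannot occur since $Z$ itself is an $(X,Z)$-\is), but the core reasoning is identical.
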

          \begin{proof}
            For the first item, let $I$ be a maximum $(X,Z)$-\is in $G_X$. As $v \notin I$, $I \cap X^C = Z$, hence $I$ is also  an $(X^C,Z)$-\is in $G_{X^C}$.
            Let $I'$ be a maximum $(X^C,Z)$-\is in $G_{X^C}$. As $I'$ is also an $(X,Z)$-\is in $G_X$, $|I| \ge |I'|$, and thus $I$ is a maximum  $(X^C,Z)$-\is in $G_{X^C}$.

            For the second item, again we only need to prove the backward implication.
            Let $I$ be a maximum $(X^C,Z)$-\is in $G_{X^C}$.
            Observe first that $I$ is an $(X,Z)$-\is in $G_X$.
            Let $I'$ be a maximum $(X,Z)$-\is in $G_X$ such that $v \notin I'$, which exists as $(X,Z)$ is not $v$-critical in $G_X$. By the first item, $I'$ is a (maximum) $(X^C,Z)$-\is in $G_{X^C}$,
            implying $|I| \ge |I'|$ and the desired result.
          \end{proof}

          \begin{lemma}\label{lemma:Y}
            Let $Z \subseteq X$ where $Z$ is an \is. The following claims hold:
            \begin{itemize}
              \item[$\bullet$] If $(X,Z)$ is $v$-critical in $G_X$, then for each $\Y \subseteq V(G_X)$, $\Y$ is an $(X,Z)$-\bs in $G_X$ if and only if $\Y$ is an $(X^C,Z \cup \{v\})$-\bs in $G_{X^C}$.
              \item[$\bullet$] If $(X,Z)$ is $\bar{v}$-critical in $G_X$, then for each $\Y \subseteq V(G_X)$, $\Y$ is an $(X,Z)$-\bs in $G_X$ if and only if $\Y$ is an $(X^C,Z)$-\bs in $G_{X^C}$.
              \item[$\bullet$] If $(X,Z)$ is $v$-mixed in $G_X$, then for each $\Y \subseteq V(G_X)$, $\Y$ is an $(X,Z)$-\bs in $G_X$ if and only if
              $\Y$ is an $(X^C,Z \cup \{v\})$-\bs in $G_{X^C}$ and $\Y$ is an $(X^C,Z)$-\bs in $G_{X^C}$.
            \end{itemize}
          \end{lemma}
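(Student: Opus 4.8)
The plan is to exploit the fact that, at a forget node, $G_X$ and $G_{X^C}$ are literally the \emph{same} graph: both equal the graph induced by the vertices occurring in the bags of the subtree rooted at the node of $X$, and $v \in X^C$ already occurs there, so the only difference between the two settings is whether $v$ is regarded as a bag vertex. Hence I would first decompose the families of independent sets. Fix $Z \subseteq X$ with $Z$ an \is (so $v \notin Z$, since $v \notin X$). A set $I$ is an $(X,Z)$-\is of $G_X$ if and only if $I$ is an \is with $I \cap X = Z$; splitting according to whether $v \in I$, this holds if and only if either $v \in I$ and $I$ is an $(X^C, Z \cup \{v\})$-\is of $G_{X^C}$, or $v \notin I$ and $I$ is an $(X^C, Z)$-\is of $G_{X^C}$. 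Thus the $(X,Z)$-\is's of $G_X$ are exactly the disjoint union of the $(X^C, Z\cup\{v\})$-\is's and the $(X^C,Z)$-\is's of $G_{X^C}$ (disjoint because one family consists of sets containing $v$ and the other of sets avoiding $v$).

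Next I would pin down which of these $(X^C,\cdot)$-\is's are \emph{maximum}. Since $Z \subseteq X$ is an \is, $Z$ itself is an $(X,Z)$-\is of $G_X$, so $\alpha_{(X,Z)}(G_X) \ge |Z| > -\infty$ and a maximum $(X,Z)$-\is exists; consequently exactly one of the three cases of the statement holds. If $(X,Z)$ is $v$-critical in $G_X$, then every maximum $(X,Z)$-\is contains $v$ and, since $v$-critical implies not $\bar v$-critical, Lemma~\ref{lemma:I1} shows that the maximum $(X,Z)$-\is's of $G_X$ are exactly the maximum $(X^C, Z\cup\{v\})$-\is's of $G_{X^C}$. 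Symmetrically, if $(X,Z)$ is $\bar v$-critical in $G_X$, Lemma~\ref{lemma:I2} shows that the maximum $(X,Z)$-\is's of $G_X$ are exactly the maximum $(X^C, Z)$-\is's of $G_{X^C}$. If $(X,Z)$ is $v$-mixed in $G_X$, pick maximum $(X,Z)$-\is's $I^*\ni v$ and $I^{**}\not\ni v$: applying Lemma~\ref{lemma:I1}(a) to $I^*$ gives $\alpha_{(X^C,Z\cup\{v\})}(G_{X^C}) = \alpha_{(X,Z)}(G_X)$, and applying Lemma~\ref{lemma:I2}(a) to $I^{**}$ gives $\alpha_{(X^C,Z)}(G_{X^C}) = \alpha_{(X,Z)}(G_X)$; combining these with parts (b) of the two lemmas (available because $v$-mixed is neither $v$-critical nor $\bar v$-critical) refines the decomposition of the previous paragraph to the \emph{maximum} level, so the maximum $(X,Z)$-\is's of $G_X$ are exactly the disjoint union of the maximum $(X^C,Z\cup\{v\})$-\is's and the maximum $(X^C,Z)$-\is's of $G_{X^C}$.

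Finally I would translate these three statements into the claimed statements about blocking sets, using the elementary fact that, whenever at least one $(X',Z')$-\is exists, $\Y$ is an $(X',Z')$-\bs in $G'$ if and only if $\Y$ meets every maximum $(X',Z')$-\is of $G'$ (if $\Y$ meets all of them then no maximum \is avoids $\Y$, so $\alpha^{\Y}_{(X',Z')}(G') < \alpha_{(X',Z')}(G')$; the converse is immediate). In each of the three cases the relevant families of maximum independent sets over $G_{X^C}$ are nonempty, since they coincide with (or, in the mixed case, make up) the nonempty family of maximum $(X,Z)$-\is's of $G_X$, so this translation applies. In the $v$-critical (resp. $\bar v$-critical) case the condition ``$\Y$ meets every maximum $(X,Z)$-\is of $G_X$'' is literally identical to ``$\Y$ meets every maximum $(X^C,Z\cup\{v\})$-\is (resp. $(X^C,Z)$-\is) of $G_{X^C}$'', giving the first two claims. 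In the $v$-mixed case, ``$\Y$ meets every set in a disjoint union'' is equivalent to ``$\Y$ meets every set in each part'', which yields the conjunction ``$\Y$ is an $(X^C,Z\cup\{v\})$-\bs and an $(X^C,Z)$-\bs in $G_{X^C}$''.

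The main obstacle is the bookkeeping around the word ``maximum'': the decomposition of \emph{all} independent sets into the two $(X^C,\cdot)$-families is trivial, but it says nothing about the maximum ones, so the criticality hypotheses (via Lemmas~\ref{lemma:I1} and~\ref{lemma:I2}) are essential to rule out a size gap between $\alpha_{(X,Z)}(G_X)$ and $\alpha_{(X^C,Z\cup\{v\})}(G_{X^C})$ or $\alpha_{(X^C,Z)}(G_{X^C})$, and, in the $v$-mixed case, to guarantee that \emph{both} $(X^C,\cdot)$-families actually attain the maximum. Checking these points carefully (together with the non-emptiness of every family involved, where the observation that $Z$ is itself an $(X,Z)$-\is is used) is the only non-routine part of the argument.
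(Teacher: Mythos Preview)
Your proposal is correct and uses essentially the same ingredients as the paper's proof: both arguments rest on Lemmas~\ref{lemma:I1} and~\ref{lemma:I2} together with the observation that being an $(X',Z')$-\bs is equivalent to hitting every maximum $(X',Z')$-\is. The only difference is organizational: the paper verifies the six implications one by one, whereas you first identify, under each criticality hypothesis, the family of maximum $(X,Z)$-\is's of $G_X$ as (a union of) the relevant families of maximum $(X^C,\cdot)$-\is's of $G_{X^C}$, and then read off the blocking-set equivalences in one stroke. Your packaging is slightly cleaner, but the mathematical content is the same.
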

          \begin{proof}
            For the first item, let $Z \subseteq X$ such that $(X,Z)$ is $v$-critical in $G_X$.

            \smallskip

            For the forward implication, suppose that $\Y$ is an $(X,Z)$-\bs in $G_X$. Let $I$ be a maximum $(X^C,Z \cup \{v\})$-\is in $G_{X^C}$.
            As $v \in I$ and as $(X,Z)$ is not $\bar{v}$-critical in $G_X$, by Lemma~\ref{lemma:I1}, $I$ is also a maximum $(X,Z)$-\is in $G_X$, implying that $I \cap \Y \neq \emptyset$.      

            For the backward implication, let $\Y$ be an $(X^C,Z \cup \{v\})$-\bs $G_{X^C}$. Let $I$ be a maximum $(X,Z)$-\is in $G_{X}$.
            As $(X,Z)$ is $v$-critical in $G_X$, we know that $v \in I$. By Lemma~\ref{lemma:I1}, $I$ is also a maximum $(X^C,Z \cup \{v\})$-\is in $G_{X^C}$, implying that $I \cap \Y \neq \emptyset$.

            \medskip

            For the second item, let $Z \subseteq X$ such that $(X,Z)$ is $\bar{v}$-critical in $G_X$.
            \smallskip

            For the forward implication, suppose that $\Y$ is an $(X,Z)$-\bs in $G_X$. Let $I$ be a maximum $(X^C,Z)$-\is in $G_{X^C}$.
            As $v \notin I$ and as $(X,Z)$ is not $v$-critical in $G_X$, by Lemma~\ref{lemma:I2}, $I$ is also a maximum $(X,Z)$-\is in $G_X$, implying that $I \cap \Y \neq \emptyset$.

            For the backward implication, let $\Y$ be an $(X^C,Z)$-\bs $G_{X^C}$. Let $I$ be a maximum $(X,Z)$-\is in $G_{X}$.
            As $(X,Z)$ is $\bar{v}$-critical in $G_X$, we know that $v \notin I$. By Lemma~\ref{lemma:I2}, $I$ is also a maximum $(X^C,Z)$-\is in $G_{X^C}$, implying that $I \cap \Y \neq \emptyset$.

            \medskip

            For the third item, let $Z \subseteq X$ such that $(X,Z)$ is $v$-mixed in $G_X$.
            \smallskip

            For the forward implication, assume that $\Y$ is an $(X,Z)$-\bs in $G_X$. Let $I_1$ be a maximum $(X^C,Z \cup \{v\})$-\is in $G_{X^C}$ and $I_2$ be a maximum $(X^C,Z)$-\is in $G_{X^C}$.
            As $(X,Z)$ is both not $v$-critical and not $\bar{v}$-critical in $G_X$, by Lemmas~\ref{lemma:I1} and~\ref{lemma:I2} we now that both $I_1$ and $I_2$
            are maximum $(X,Z)$-\is in $G_X$, implying $\Y \cap I_1 \neq \emptyset$ and $\Y \cap I_2 \neq \emptyset$.         

            For the backward implication, let finally $\Y$ be an $(X^C,Z \cup \{v\})$-\bs in $G_{X^C}$ and an $(X^C,Z)$-\bs in $G_{X^C}$.
            Let $I$ be a maximum $(X^C,Z)$-\is in $G_{X^C}$. If $v \in I$, by Lemma~\ref{lemma:I1}, $I$ is also a maximum $(X^C,Z \cup \{v\})$-\is in $G_{X^C}$,
            implying $I \cap \Y \neq \emptyset$, and if $v \notin I$, by Lemma~\ref{lemma:I2}, $I$ is also a maximum $(X^C,Z)$-\is in $G_{X^C}$,
            implying $I \cap \Y \neq \emptyset$ as well.
          \end{proof}

%          \ja{Isn't weird to have ``if and only if'' in this next lemma and still in Case 1 we add $Z^*\cup\{v\}$? Well, as defined in the proof, $(X,Z^*)$ is either $v$-critical or $v$-mixed and $Z\in \L_1$, thus then $Z\in a_v(\L_1^{(v,X)})\cup a_v(\L_1^{(*v,X)})$. Why to emphasize in the statement of the main lemma?} \mb{sorry I don't get this comment}
          \begin{lemma}\label{lemma:liste}
            Let $\L \subseteq 2^X$ where for each $Z \in \L$, $Z$ is an \is. For every $\Y \subseteq G_X$, $\Y$ is an $(X,Z)$-\bs in $G_X$ for every
            $Z \in \L$ if and only if  $\Y$ is an $(X^C,Z)$-\bs in $G_{X^C}$ for every $Z \in a_v(\L^{(v,X)}) \cup \L^{(\bar{v},X)} \cup a_v(\L^{(*v,X)}) \cup \L^{(*v,X)}$.
          \end{lemma}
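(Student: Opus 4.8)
The plan is to deduce Lemma~\ref{lemma:liste} from Lemma~\ref{lemma:Y}, which already handles the single-set correspondence between $(X,Z)$-\bs's in $G_X$ and $(X^C,\cdot)$-\bs's in $G_{X^C}$, by splitting $\L$ according to the status of each of its members with respect to $v$. First I would record two easy observations: since $X$ is a forget node with child $X^C = X \cup \{v\}$ we have $v \notin X$, hence $v \notin Z$ for every $Z \in \L$, so $Z$ and $Z \cup \{v\}$ are always distinct and the operators $a_v, r_v$ behave as expected; and for every $Z \in \L$ the pair $(X,Z)$ is $v$-critical, $\bar v$-critical, or $v$-mixed in $G_X$, so these three possibilities cover $\L$, i.e.\ $\L \subseteq \L^{(v,X)} \cup \L^{(\bar v,X)} \cup \L^{(*v,X)}$.

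For the forward implication I would assume $\Y$ is an $(X,Z)$-\bs in $G_X$ for every $Z \in \L$, fix an arbitrary $Z'$ in $a_v(\L^{(v,X)}) \cup \L^{(\bar v,X)} \cup a_v(\L^{(*v,X)}) \cup \L^{(*v,X)}$, and trace back which set it comes from: if $Z' = Z \cup \{v\}$ with $Z \in \L^{(v,X)}$, the first item of Lemma~\ref{lemma:Y} gives that $\Y$ is an $(X^C,Z')$-\bs; if $Z' = Z \in \L^{(\bar v,X)}$, the second item does; and if $Z' \in \{Z, Z \cup \{v\}\}$ with $Z \in \L^{(*v,X)}$, the third item of Lemma~\ref{lemma:Y} yields that $\Y$ is simultaneously an $(X^C,Z)$-\bs and an $(X^C,Z \cup \{v\})$-\bs in $G_{X^C}$, hence in particular an $(X^C,Z')$-\bs. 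For the backward implication I would assume $\Y$ is an $(X^C,Z')$-\bs in $G_{X^C}$ for every $Z'$ in that union, fix an arbitrary $Z \in \L$, and handle the three cases: if $Z \in \L^{(v,X)}$ then $Z \cup \{v\}$ lies in the union, so the first item of Lemma~\ref{lemma:Y} applies; if $Z \in \L^{(\bar v,X)}$ then $Z$ itself lies in the union and the second item applies; and if $Z \in \L^{(*v,X)}$ then \emph{both} $Z$ and $Z \cup \{v\}$ lie in the union, which is exactly the hypothesis needed to invoke the third item of Lemma~\ref{lemma:Y} and conclude that $\Y$ is an $(X,Z)$-\bs in $G_X$.

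The argument is essentially bookkeeping once Lemma~\ref{lemma:Y} is in hand; the one point requiring care — and the reason the $v$-mixed class contributes \emph{both} $\L^{(*v,X)}$ and $a_v(\L^{(*v,X)})$ to the target union — is that in the $v$-mixed case a single $(X,Z)$-\bs condition in $G_X$ is equivalent to the conjunction of the two conditions ``$(X^C,Z)$-\bs'' and ``$(X^C,Z\cup\{v\})$-\bs'' in $G_{X^C}$, so the union must contain both $Z$ and $Z \cup \{v\}$ for such $Z$; I would make this matching explicit in each direction. I would also note in passing that the degenerate case in which $(X,Z)$ admits no $(X,Z)$-\is at all (which would place $Z$ in both $\L^{(v,X)}$ and $\L^{(\bar v,X)}$) is harmless, since then neither side of the claimed equivalence holds for that $Z$, so no clash arises from the overlap of the classes.
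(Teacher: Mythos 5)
Your proposal is correct and follows essentially the same route as the paper: both reduce the statement to a case analysis on whether each $Z \in \L$ is $v$-critical, $\bar v$-critical, or $v$-mixed, and then invoke the corresponding item of Lemma~\ref{lemma:Y} in each direction, with the $v$-mixed class contributing both $Z$ and $Z \cup \{v\}$ to the target union. Your extra remarks (that $v \notin Z$ since $v \notin X$, and that a $Z$ admitting no $(X,Z)$-\is is harmless) are accurate clarifications of points the paper treats only briefly or not at all.
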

          \begin{proof}
            For the forward implication, suppose that $\Y$ is an $(X,Z)$-\bs in $G_X$, for every
            $Z \in \L$.
              Let $Z \in a_v(\L^{(v,X)})$ (resp. $Z \in a_v(\L^{(*v,X)})$), implying that $Z = Z' \cup \{v\}$ with $Z' \in \L^{(v,X)}$ (resp. $Z' \in \L^{(*v,X)}$).
              Observe that for every $Z \in \L$, $v \notin Z$, implying that $v \notin Z'$ and thus that we also have $Z' = Z \setminus \{v\}$.
              \mbvfinale{il faudra regarder si, soit on bien que $v$ pas dans les $Z'$ partout, ou sinon, bien voir si on suppose pas que $Z' = Z \setminus \{v\}$ (qui n'est pas
                vrai quand $v \in Z'$. Ou plus facile : normalement, partout ou on applique ce lemme, c'est avec une liste L dont aucun ensemble ne contient v} This implies that $(X,Z')$ is $v$-critical (resp. $v$-mixed) in $G_X$.
            By hypothesis, $\Y$ is an $(X,Z')$-\bs in $G_X$, implying, as $(X,Z')$ is $v$-critical (resp. $v$-mixed) in $G_X$,  that $\Y$ is an $(X^C,Z)$-\bs in $G_{X^C}$ by Lemma~\ref{lemma:Y}.
            Let now $Z \in \L^{(\bar{v},X)}$ (resp. $Z \in \L^{(*v,X)}$), implying that $(X,Z)$ is $\bar{v}$-critical (resp. $v$-mixed) in $G_X$.
            By hypothesis, $\Y$ is an $(X,Z)$-\bs in $G_X$, implying, as $(X,Z)$ is $\bar{v}$-critical (resp. $v$-mixed) in $G_X$, that $\Y$ is an $(X^C,Z)$-\bs in $G_{X^C}$  by Lemma~\ref{lemma:Y}.

            For the backward implication, suppose that $\Y$ is an $(X^C,Z)$-\bs in $G_{X^C}$ for every $Z \in a_v(\L^{(v,X)}) \cup \L^{(\bar{v},X)} \cup a_v(\L^{(*v,X)}) \cup \L^{(*v,X)}$.
            Let $Z \in \L$. If $Z \in \L^{(v,X)}$, then there exists $Z' \in a_v(\L^{(v,X)})$ such that $Z' =Z \cup \{v\}$.
            By hypothesis, $\Y$ is an $(X^C,Z')$-\bs in $G_{X^C}$, implying, as $(X,Z)$ is $v$-critical in $G_X$, that $\Y$ is an $(X,Z)$-\bs in $G_X$  by Lemma~\ref{lemma:Y}.
            If $Z \in \L^{(\bar{v},X)}$, then by hypothesis, $\Y$ is an $(X^C,Z)$-\bs in $G_{X^C}$, implying, as $(X,Z)$ is $\bar{v}$-critical in $G_X$,  that $\Y$ is an $(X,Z)$-\bs in $G_X$  by Lemma~\ref{lemma:Y}.
            If $Z \in \L^{(*v,X)}$, then by hypothesis $\Y$ is an $(X^C,Z \cup \{v\})$-\bs in $G_{X^C}$ and $\Y$ is an $(X^C,Z)$-\bs in $G_{X^C}$,
            implying, as $(X,Z)$ is $v$-mixed in $G_X$,  that $\Y$ is an $(X,Z)$-\bs in $G_X$  by Lemma~\ref{lemma:Y}.
          \end{proof}

          We are now ready to state the main lemma of this section.

          \begin{lemma}\label{lem:forget}
            Let $(X,\Y_0,\L_1,\L_2,f,\S) \in \E$ where $X \in \B$ is a forget node and $X^C$ is the child of $X$ with $X^C=X \cup \{v\}$.
            For each $\Y \subseteq V(G_X)$,
            $\Y \vdash (X,\Y_0,\L_1,\L_2,f,\S)$ if and only if one of the following two cases holds:\\
            % \fixme{the following two conditions are implied so no need to ask them ? $v \in \Y$ and $S_v = \emptyset$}

            \noindent\textbf{Case 1}: there exists $Z^* \in \L_1$ such that
            \begin{enumerate}
              \item \label{forget1:1} $(X,Z^*)$ is not $\bar{v}$-critical in $G_X$,
              \item \label{forget1:1bis} for each $Z \in \S$, $(X,Z)$ is not $v$-critical in $G_X$,
              \item \label{forget1:1bisbis} for each $v' \in \Y_0$, $(X,f(v'))$ is not $v$-critical in $G_X$, and
              \item \label{forget1:2} $\Y \vdash (X \cup \{v\},\Y_0 \cup \{v\},\L_1^C,\L_2^C,f^C,\S^C)$, where

              \begin{itemize}
                %  \item[$\bullet$]$\Y_1^C = \Y_1 \cup \{v\}$,
                %  \item[$\bullet$]$\Y_F^C = \Y_F$,
                \item[$\bullet$]$\L_1^C = a_v(\L_1^{(v,X)}) \cup \L_1^{(\bar{v},X)} \cup a_v(\L_1^{(*v,X)}) \cup \L_1^{(*v,X)}$,
                \item[$\bullet$]$\L_2^C = a_v(\L_2^{(v,X)}) \cup \L_2^{(\bar{v},X)} \cup a_v(\L_2^{(*v,X)}) \cup \L_2^{(*v,X)} \cup \{Z^* \cup \{v\}\}$,
                \item[$\bullet$]\label{forget1:3} $f^C:\Y_0\cup\{v\} \to \L_2^C$ is such that

                %% \begin{itemize}
                %%   % \item[$\circ$]if $v' \neq v$ and $f(v') \in \L_2^v$ then $f^C(v') = f(v') \cup \{v\}$,
                %%   % \item[$\circ$]if $v' \neq v$ and $f(v') \in \L_2^{\bar{v}}$ then $f^C(v') = f(v')$,
                %%   % \item[$\circ$]if $v' \neq v$ and $f(v') \in \L_2^*$ then $f^C(v') \in \{f(v'),f(v') \cup \{v\} \}$,
                %%   \item[$\circ$]\jachg{if $v' \neq v$ then $f^C(v') = f(v')$,}{}
                %%   \item[$\circ$]\jachg{$f^C(v)=Z^* \cup \{v\}$, and}{}
                %% \end{itemize}

                $$f^C(v')=\begin{cases}
                  Z^*\cup\{v\} &\mbox{, if }v'= v,
                  \\f(v')&\mbox{, otherwise, and}
                \end{cases}$$

                \item[$\bullet$]$\S^C = \S$.

              \end{itemize}

            \end{enumerate}

            \noindent\textbf{Case 2}: there exist $\S^A, \S^B$, and $f^C$ such that
            \begin{enumerate}
              \item $\S^{(*v,X)} = \S^A \uplus \S^B$ and
              \item \label{forget2:1} $\Y \vdash (X \cup \{v\},\Y_0,\L_1^C,\L_2^C,f^C,\S^C)$, where

              \begin{itemize}
                %  \item[$\bullet$]$\Y_1^C = \Y_1$,
                %  \item[$\bullet$]$\Y_F^C = \Y_F$,
                \item[$\bullet$]$\L_1^C = a_v(\L_1^{(v,X)}) \cup \L_1^{(\bar{v},X)} \cup a_v(\L_1^{(*v,X)}) \cup \L_1^{(*v,X)}$,
                \item[$\bullet$]$\L_2^C = a_v(\L_2^{(v,X)}) \cup \L_2^{(\bar{v},X)} \cup a_v(\L_2^{(*v,X)}) \cup \L_2^{(*v,X)}$,
                \item[$\bullet$]\label{forget2:2} $f^C: B_0 \to \L_2^C$ is such that

                %% \begin{itemize}
                %%   \item[$\circ$]\jachg{if $f(v') \in \L_2^v$ then $f^C(v') = f(v') \cup \{v\}$,}{}
                %%   \item[$\circ$]\jachg{if $f(v') \in \L_2^{\bar{v}}$ then $f^C(v') = f(v')$,}{}
                %%   \item[$\circ$]\jachg{if $f(v') \in \L_2^*$ then $f^C(v') \in \{f(v'),f(v') \cup \{v\} \}$, and}{}
                %% \end{itemize}

                $$f^C(v')=\begin{cases}
                  f(v') \cup \{v\}&\mbox{, if } f(v') \in \L_2^{(v,X)},\\f(v')&\mbox{, if }f(v') \in \L_2^{(\bar{v},X)},
                \end{cases}$$
                otherwise $f^C(v') \in \{f(v'),f(v') \cup \{v\} \}$, and

                \item[$\bullet$]$\S^C = a_v(\S^{(v,X)}) \cup \S^{(\bar{v},X)} \cup a_v(\S^A) \cup \S^B$.
              \end{itemize}

            \end{enumerate}

          \end{lemma}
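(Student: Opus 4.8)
The plan is to follow exactly the template of the proofs of Lemma~\ref{lem:join} and Lemma~\ref{lem:intro}: fix $\Y \subseteq V(G_X)$ and prove the equivalence by splitting on whether $v \in \Y$, and in each case establish both directions by checking, one after another, the properties~\ref{def:Y0}--\ref{def:minb} of Definition~\ref{def:tchack}. Three preliminary remarks set the stage. First, since $X$ is a forget node with unique child $X^C = X \cup \{v\}$, the graphs $G_X$ and $G_{X^C}$ coincide, so every $(X^C,\cdot)$-\is/\bs assertion lives in the same ambient graph. Second, because $\L_1,\L_2,\S \subseteq 2^X$ and $v \notin X$, no set occurring in these lists contains $v$, so $Z \cup \{v\}$ is always a proper superset of $Z$. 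Third, the two cases are mutually exclusive: by Property~\ref{def:Y0} of the child tuple, Case~1 (child with forced set $\Y_0 \cup \{v\}$) forces $v \in \Y$ and Case~2 (child with forced set $\Y_0$) forces $v \notin \Y$, so it suffices to pair Case~1 with $v \in \Y$ and Case~2 with $v \notin \Y$.

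The engine is the criticality-based translation of blocking sets between $G_X$ and $G_{X^C}$. Lemmas~\ref{lemma:I1} and~\ref{lemma:I2} describe how maximum independent sets move between the two graphs: a maximum $(X,Z)$-\is containing $v$ becomes a maximum $(X^C,Z\cup\{v\})$-\is, one avoiding $v$ becomes a maximum $(X^C,Z)$-\is, and the converses hold under the appropriate non-criticality hypotheses. Feeding this into the definition of a blocking set yields Lemma~\ref{lemma:Y}: for a fixed $Z$, ``$\Y$ is an $(X,Z)$-\bs in $G_X$'' is equivalent to ``$\Y$ is an $(X^C,Z\cup\{v\})$-\bs in $G_{X^C}$'' if $(X,Z)$ is $v$-critical, to ``$\Y$ is an $(X^C,Z)$-\bs in $G_{X^C}$'' if it is $\bar v$-critical, and to the \emph{conjunction} of the two if it is $v$-mixed. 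Iterating this over a list is precisely Lemma~\ref{lemma:liste}, which already dictates the form of $\L_1^C$ and $\L_2^C$ in both cases of the statement, the extra element $Z^*\cup\{v\}$ in Case~1 being the only addition.

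From here the bookkeeping is routine. In \textbf{Case~1} ($v\in\Y$, hence $v\in\Y\setminus\Y_0$): Property~\ref{def:L1L2} of the child is Lemma~\ref{lemma:liste}; the minimality condition~\ref{def:mina} of the child, applied to $v'\in\Y\setminus\Y_0$ with $v'\ne v$, is again Lemma~\ref{lemma:Y} on $\Y\setminus\{v'\}$, the point being that $v\in\Y\setminus\{v'\}$ forces the witnessing $Z$ to be non-($v$-critical) so its translation lands in $\L_1^C$; the minimality condition~\ref{def:minb} on $v$ itself uses a witness $Z^*\in\L_1$ with $\Y\setminus\{v\}$ not an $(X,Z^*)$-\bs in $G_X$, and since $v\notin\Y\setminus\{v\}$ and, by condition~\ref{forget1:1}, $(X,Z^*)$ is not $\bar v$-critical, Lemma~\ref{lemma:Y} rewrites this as $\Y\setminus\{v\}$ not an $(X^C,Z^*\cup\{v\})$-\bs, which is exactly why $f^C(v)=Z^*\cup\{v\}$ is adjoined to $\L_2^C$; for $v'\in\Y_0$ one keeps $f^C(v')=f(v')$, which works and is well-defined into $\L_2^C$ precisely because of condition~\ref{forget1:1bisbis}; and Property~\ref{def:S} with $\S^C=\S$ holds because, $v\in\Y$ automatically intersecting the $v$-containing maximum independent sets, ``$\Y$ not an $(X,Z)$-\bs in $G_X$'' collapses to ``$\Y$ not an $(X^C,Z)$-\bs in $G_{X^C}$'' once $(X,Z)$ is not $v$-critical, which is forced by condition~\ref{forget1:1bis} (a $v$-critical $Z\in\S$ with $v\in\Y$ would make $\Y$ an $(X,Z)$-\bs). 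In \textbf{Case~2} ($v\notin\Y$): Properties~\ref{def:L1L2},~\ref{def:mina},~\ref{def:minb} are obtained by pushing each $G_X$-statement through Lemma~\ref{lemma:Y} ($v$-critical $\mapsto a_v(\cdot)$, $\bar v$-critical $\mapsto$ identity, $v$-mixed $\mapsto$ either choice), explaining the shape of $\L_1^C,\L_2^C$ and the case definition of $f^C$; and Property~\ref{def:S} is the only delicate point: for a $v$-mixed $Z\in\S$, ``$\Y$ not an $(X,Z)$-\bs in $G_X$'' becomes, by the third clause of Lemma~\ref{lemma:Y}, ``$\Y$ not an $(X^C,Z\cup\{v\})$-\bs \emph{or} not an $(X^C,Z)$-\bs'', and this disjunction is resolved by the guess $\S^{(*v,X)}=\S^A\uplus\S^B$, putting $Z$ into $\S^A$ (so $Z\cup\{v\}\in\S^C$) or into $\S^B$ (so $Z\in\S^C$). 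Each backward implication runs the same equivalences in reverse. The main obstacle is exactly this asymmetry created by $v$-mixed sets: a blocking requirement splits into a conjunction (handled painlessly by listing both $Z$ and $Z\cup\{v\}$ in $\L_i^C$), whereas a non-blocking requirement splits into a disjunction that cannot be encoded by a list of forbidden sets and forces the introduction of the partition $\S^A\uplus\S^B$ and the extra freedom in $f^C$; checking that these guesses are simultaneously necessary for the forward direction and sufficient for the backward one is where essentially all the content lies.
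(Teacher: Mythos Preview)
Your proposal is correct and follows essentially the same approach as the paper: split on $v\in\Y$ versus $v\notin\Y$, and in each case translate every clause of Definition~\ref{def:tchack} between $G_X$ and $G_{X^C}$ via Lemmas~\ref{lemma:I1}, \ref{lemma:I2}, \ref{lemma:Y} and~\ref{lemma:liste}, with the $\S^A\uplus\S^B$ partition and the freedom in $f^C$ absorbing the disjunction produced by $v$-mixed sets. The only point where your phrasing is slightly loose is the argument for $f^C(v)=Z^*\cup\{v\}$ in Case~1: invoking Lemma~\ref{lemma:Y} with ``$(X,Z^*)$ not $\bar v$-critical'' only yields a disjunction in the $v$-mixed subcase, and the paper instead observes directly that the witnessing maximum $(X,Z^*)$-\is $I^*$ must contain $v$ (since $\Y$ blocks it but $\Y\setminus\{v\}$ does not), then applies Lemma~\ref{lemma:I1} to $I^*$.
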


          \begin{proof}
            Observe first that in both cases, for every $Z \in \L_1^C \cup \L_2^C \cup \S^C$, $Z$ is an \is as required in the definition of $\E$.
          Indeed, for each $Z \in \L_1 \cup L_2 \cup \S$, we only add $Z \cup \{v\}$ to $\L_1^C \cup \L_2^C \cup \S^C$ when $Z$ is not $\bar{v}$-critical in $G_X$,
          implying that $Z \cup \{v\}$ is an \is.
          
          \smallskip
          
            For the forward implication, 
            suppose  that $\Y \subseteq V(G_X)$ is such that $\Y \vdash (X,\Y_0,\L_1,\L_2,f,\S)$, and let us distinguish two cases.

            \medskip

            Suppose first that $v \in \Y$. In this case, we will prove that all statements corresponding to Case 1 hold. Recall that $X=X^C\setminus \{v\}$ and $\Y_0=\Y\cap X$. Thus $v\in \Y\setminus \Y_0$.
            Since $\Y \vdash (X,\Y_0,\L_1,\L_2,f,\S)$, Property~\ref{def:mina} in Definition~\ref{def:tchack} implies that there exists $Z^* \in \L_1$ such that $\Y \setminus \{v\}$ is not
            an $(X,Z^*)$-\bs in $G_X$, implying that there exists a maximum $(X,Z^*)$-\is $I^*$ in $G_X$ such that $I^* \cap (\Y \setminus \{v\}) = \emptyset$.
            Moreover, $v \in I^*$ as otherwise, $I^* \cap \Y = \emptyset$, contradicting the fact that $\Y$ is an $(X,Z^*)$-\bs in $G_X$.
            This implies Property~\ref{forget1:1} of Case 1, i.e. $(X,Z^*)$ is not $\bar{v}$-critical in $G_X$.
            % Let us define the function $f^C:\Y_0^C \to \L_2^C$. Let $v' \neq v$ such that $f(v') \in \L_2$, and let $Z = f(v')$.
            %By \jachg{p}{P}roperty~\ref{def:minb}, $\Y \setminus \{v'\}$ is not an $(X,Z)$-\bs in $G_X$. By Lemma~\ref{lemma:Y}, as $(X,Z)$ \b{is mixed in $G_X$} \ig{I think it should be ``is not $v$-mixed in $G_X$''... but why is it the case?},
            %we get that either $\Y \setminus \{v'\}$ is not an $(X^C,Z)$-\bs in $G_{X^C}$, in which case we define $f^C(v')=Z$,
            %and if it not the case then $\Y \setminus \{v'\}$ is not an $(X^C,Z \cup \{v\})$-\bs in $G_{X^C}$, in which case we define $f^C(v')=Z \cup \{v\}$.

            Let $v' \in \Y_0$, $Z = f(v')$, and $\Y' = \Y \setminus \{v'\}$. As $\Y \vdash (X,\Y_0,\L_1,\L_2,f,\S)$, Property~\ref{def:minb} implies that $\Y'$ is not an $(X,Z)$-\bs in $G_x$.
            Thus, there exists a maximum $(X,Z)$-\is $I$ in $G_X$ such that $I\cap B'=\emptyset$. As $v \in \Y'$, we deduce that
            $(X,Z)$ is not $v$-critical in $G_X$, implying Property~\ref{forget1:1bisbis} of Case 1.
            Let us now prove Property~\ref{forget1:2} and along the proof we will verify that Property~\ref{forget1:1bis} is also satisfied.

            Thus, let us now check all properties of Definition~\ref{def:tchack} to prove that $\Y \vdash (X \cup \{v\},\Y_0 \cup \{v\},\L_1^C,\L_2^C,f^C,\S^C)$, where $\L_1^C$, $\L_2^C$,$f^C$, and $\S^C$ are defined as in Case 1.

            \emph{Property~\ref{def:L1L2}}.
            Let us prove that $\Y$ is an $(X^C,Z)$-\bs in $G_{X^C}$, for each $Z\in \L_1^C\cup\L_2^C$.
            By Lemma~\ref{lemma:liste}, for each $Z \in \L_1^C \cup (\L_2^C \setminus \{Z^* \cup \{v\}\})$,
            we know that $\Y$ is an $(X^C,Z)$-\bs in $G_{X^C}$.
            Moreover, recall that $Z^*\in \L_1$ and that $\Y \vdash (X,\Y_0,\L_1,\L_2,f,\S)$. Thus, $\Y$ is an $(X,Z^*)$-\bs in $G_X$ and, as $(X,Z^*)$ is either $v$-critical or $v$-mixed in $G_X$,
            this implies by Lemma~\ref{lemma:Y} that $\Y$ is an $(X^C,Z^* \cup \{v\})$-\bs in $G_{X^C}$.

            \emph{Property~\ref{def:S}}.
            We now prove that $\Y$ is not an $(X^C,Z)$-\bs in $G_{X^C}$, for each $Z\in \S^C = \S$.
            Let $Z \in \S$. As $\Y \vdash (X,\Y_0,\L_1,\L_2,f,\S)$, $\Y$ is not an $(X,Z)$-\bs in $G_X$. Since $v \in \Y$, it follows that $(X,Z)$ is not $v$-critical in $G_X$, implying Property~\ref{forget1:1bis} of Case 1.
            If $(X,Z)$ is $\bar{v}$-critical in $G_X$, then by Lemma~\ref{lemma:Y}, we get that $\Y$ is not an $(X^C,Z)$-\bs in $G_{X^C}$.
            If $(X,Z)$ is $v$-mixed in $G_X$, since $\Y$ is not an $(X,Z)$-\bs in $G_X$, by Lemma~\ref{lemma:Y} we get that $\Y$ is not an $(X^C,Z)$-\bs in $G_{X^C}$
            or $\Y$ is not an $(X^C,Z \cup \{v\})$-\bs in $G_{X^C}$. The latter case is not possible as $v \in \Y$, and thus we get the desired property.

            \emph{Property~\ref{def:mina}}.
            Let us prove that, for each $v'\in \Y\setminus (\Y_0\cup \{v\})$, there is $Z\in \L_1^C$ such that $\Y\setminus \{v'\}$ is not an $(X^C,Z)$-\bs in $G_{X^C}$.
            Let $v' \in \Y \setminus (\Y_0 \cup \{v\})$.
            As  $\Y \vdash (X,\Y_0,\L_1,\L_2,f,\S)$, there exists $Z \in \L_1$ such that $\Y' = \Y \setminus \{v'\}$ is not an $(X,Z)$-\bs in $G_X$.
            Notice that as $v \in \Y'$, $(X,Z)$ is not $v$-critical in $G_X$. If $(X,Z)$ is $\bar{v}$-critical in $G_X$, then by Lemma~\ref{lemma:Y},
            $\Y'$ is not an $(X^C,Z)$-\bs in $G_{X^C}$, and we are done as $Z \in \L_1^{(\bar{v},X)} \subseteq \L_1^C$. If $(X,Z)$ is $v$-mixed in $G_X$, then by Lemma~\ref{lemma:Y},
            $\Y'$ is not an $(X^C,Z)$-\bs in $G_{X^C}$ or $\Y'$ is not an $(X^C,Z \cup \{v\})$-\bs in $G_{X^C}$. Again, this latter case is not possible as $v \in \Y'$. Thus, we conclude the proof as
            $Z \in \L_1^{(*v,X)}\subseteq \L_1^C$.

            \emph{Property~\ref{def:minb}}.
            To finish this case, let us prove that for each $v'\in \Y_0\cup\{v\}$, $\Y\setminus\{v'\}$ is not an $(X^C,f^C(v'))$-\bs in $G_{X^C}$.
            Let $v' \in \Y_0^C = \Y_0\cup\{v\}$. Let $\Y' = \Y \setminus \{v'\}$.
            Let us first consider the case $v' = v$. In this case, remind that $f^C(v')= Z^*\cup \{v\}$, where, as chosen above, $Z^* \in \L_1$ such that $\Y \setminus \{v\}$ is not
            an $(X,Z^*)$-\bs in $G_X$. Then let us consider $I^*$ defined above, i.e. a maximum $(X,Z^*)$-\is in $G_X$ such that $I^* \cap (\Y \setminus \{v\}) = \emptyset$.
            As $v \in I^*$, according to Lemma~\ref{lemma:I1}, $I^*$ is a maximum $(X^C, Z^* \cup \{v\})$-\is in $G_{X^C}$, and $\Y' \cap I^* = \emptyset$.
            Suppose now that $v' \neq v$ and remind that, in this case, $f^C(v')=f(v')$. Then, since $\Y \vdash (X,\Y_0,\L_1,\L_2,f,\S)$ and $v'\in \Y_0$, we have that $\Y'$ is not an $(X,Z)$-\bs in $G_X$, where $Z = f(v') \in \L_2$.
            As $v \in \Y'$, we deduce that $(X,Z)$ is not $v$-critical in $G_x$.
            If $(X,Z)$ is $v$-mixed in $G_X$, then by Lemma~\ref{lemma:Y},
            $\Y'$ is not an $(X^C,Z)$-\bs in $G_{X^C}$ or $\Y'$ is not an $(X^C,Z \cup \{v\})$-\bs in $G_{X^C}$. This last case is again not possible as $v \in \Y'$.
            Thus, we deduce that $\Y'$ is not an $(X^C,Z)$-\bs.
            If $(X,Z)$ is $\bar{v}$-critical in $G_X$, then by Lemma~\ref{lemma:Y},
            we also get that $\Y'$ is not an $(X^C,Z)$-\bs in $G_{X^C}$.
            %  If $Z \in \L_2^{v}$, then $f^C(v')=Z \cup \{v\}$, and using Lemma~\ref{lemma:liste}, $\Y \setminus \{v'\}$ is not an $(X^C, Z \cup \{v\})$-\bs in $G_{X^C}$.
            %If $Z \in \L_2^{(\bar{v},X)}$, then $f^C(v')=Z$, and using Lemma~\ref{lemma:liste}, $\Y \setminus \{v'\}$ is not an $(X^C,Z)$-\bs in $G_{X^C}$.
            %If $Z \in \L_2^*$, then by definition of $f^C$ we know that  $\Y \setminus \{v'\}$ is not an $(X^C,f^C(v'))$-\bs in $G_{X^C}$.

            \medskip

            Suppose now $v \notin \Y$. We now prove that Case 2 of lemma's statement holds.
            Since $\Y \vdash (X,\Y_0,\L_1,\L_2,f,\S)$, we have that, for each $Z \in \S^{(*v,X)}$, $\Y$ is not an $(X,Z)$-\bs in $G_X$. By Lemma~\ref{lemma:Y} we get
            that either $\Y$ is not an $(X^C,Z)$-\bs in $G_{X^C}$, in which case we add $Z$ to $\S^B$, and if it is not the case
            then $\Y$ is not an $(X^C,Z \cup \{v\})$-\bs in $G_{X^C}$, in which case we add $Z$ to $\S^A$.
            It remains to define the function $f^C$ for $v'\neq v$ such that $f(v') \in \L_2^{(*v,X)}$.
            Let $v' \neq v$ such that $f(v') \in \L_2^{(*v,X)}$, and let $Z = f(v')$.
            Since $\Y \vdash (X,\Y_0,\L_1,\L_2,f,\S)$, $\Y \setminus \{v'\}$ is not an $(X,Z)$-\bs in $G_X$. By Lemma~\ref{lemma:Y}, as $(X,Z)$ is mixed in $G_X$,
            we get that either $\Y \setminus \{v'\}$ is not an $(X^C,Z)$-\bs in $G_{X^C}$, in which case we define $f^C(v')=Z$,
            and if it not the case then $\Y \setminus \{v'\}$ is not an $(X^C,Z \cup \{v\})$-\bs in $G_{X^C}$, in which case we define $f^C(v')=Z \cup \{v\}$.
            %\ja{If we bring the statement of the lemma closer, after Lemma~\ref{lemma:Y}, we can put in the statement the whole expression of $f^C$ in this case.}
            \mbvfinale{on pourrait enelver le il exists $f^C$ car $f^C$ est determinee par le reste, mais ca alourdit l'ennonce du lemme}
            Let us now prove that $\Y \vdash (X^C,\Y_0,\L_1^C,\L_2^C,f^C,\S^C)$ where $\L_1^C$, $\L_2^C$, and $\S^C$ are defined as in the Case 2 of the lemma's statement, by verifying that the required properties in Definition~\ref{def:tchack} are satisfied. To prove Property~\ref{def:L1L2}, one should argue that $\Y$ is an $(X^C,Z)$-\bs in $G_{X^C}$, for each $Z\in\L_1^C\cup\L_2^C$ where $\L_1^C = a_v(\L_1^{(v,X)}) \cup\L_1^{(\bar{v},X)} \cup a_v(\L_1^{(*v,X)}) \cup \L_1^{(*v,X)}$ and $\L_2^C = a_v(\L_2^{(v,X)}) \cup \L_2^{(\bar{v},X)} \cup a_v(\L_2^{(*v,X)}) \cup \L_2^{(*v,X)}$. It is immediate using Lemma~\ref{lemma:liste} and the hypothesis that $\Y \vdash (X,\Y_0,\L_1,\L_2,f,\S)$.

            \emph{Property~\ref{def:S}}. Let us now prove that $\Y$ is not an $(X^C,Z)$-\bs in $G_{X^C}$, for each $Z\in \S^C = a_v(\S^{(v,X)})\cup \S^{(\bar{v},X)}\cup a_v(S^A)\cup a_v(S^B)$.
            Let $Z \in \S^C$. If $Z \in a_v(\S^{(v,X)})$ then $Z = Z' \cup \{v\}$ where $Z' \in \S^{(v,X)}$.
            Since $\Y \vdash (X^C,\Y_0,\L_1^C,\L_2^C,f^C,\S^C)$, we know that $\Y$ is not an $(X,Z')$-\bs in $G_X$. As $(X,Z')$ is $v$-critical in $G_X$, by Lemma~\ref{lemma:Y}
            we know that $\Y$ is not an $(X^C,Z' \cup \{v\})$-\bs in $G_{X^C}$.
            If $Z \in \S^{(\bar{v},X)}$ then by the hypothesis $\Y \vdash (X^C,\Y_0,\L_1^C,\L_2^C,f^C,\S^C)$, we know that $\Y$ is not an $(X,Z)$-\bs in $G_X$. As $(X,Z)$ is $\bar{v}$-critical in $G_X$, by Lemma~\ref{lemma:Y} we know that $\Y$ is not an $(X^C,Z)$-\bs in $G_{X^C}$.
            If $Z \in a_v(\S^A) \cup \S^B$, then by definition of $\S^A$ and $\S^B$, $\Y$ is not an $(X^C,Z)$-\bs in $G_{X^C}$.

            \emph{Property~\ref{def:mina}}.
            We must now prove that, for each $v'\in\Y\setminus \Y_0$, there is $Z\in \L_1^C = a_v(\L_1^{(v,X)}) \cup\L_1^{(\bar{v},X)} \cup a_v(\L_1^{(*v,X)}) \cup \L_1^{(*v,X)}$
            such that $\Y\setminus\{v'\}$ is not an $(X^C,Z)$-\bs in $G_{X^C}$.
            Recall that $v\notin \Y$ and let $v' \in \Y \setminus (\Y_0 \cup \{v\})$.
            As  $\Y \vdash (X,\Y_0,\L_1,\L_2,f,\S)$, there exists $Z \in \L_1$ such that $\Y' = \Y \setminus \{v'\}$ is not an $(X,Z)$-\bs in $G_X$.
            By Lemma~\ref{lemma:liste} with the list $\L = \{Z\}$, there exists $Z' \in \L_1^C$ such that $\Y'$ is not an $(X^C,Z')$-\bs in $G_{X^C}$.

            \emph{Property~\ref{def:minb}}.
            Let us finally prove that, for each $v'\in \Y_0^C=\Y_0$, $\Y\setminus\{v'\}$ is not an $(X^C,f^C(v'))$-\bs in $G_{X^C}$.
            Let $v' \in \Y_0^C$. As $v' \in \Y_0$, by the hypothesis $\Y \vdash (X,\Y_0,\L_1,\L_2,f,\S)$ we know that $\Y'=\Y \setminus \{v'\}$ is not an $(X,Z)$-\bs in $G_X$, where $Z = f(v')$ with $Z \in \L_2$.
            If $(X,Z)$ is $\bar{v}$-critical in $G_X$, then by Lemma~\ref{lemma:Y},
            we also get that $\Y'$ is not an $(X^C,Z)$-\bs in $G_{X^C}$, and we are done as $Z=f^C(v')$.
            If $(X,Z)$ is $v$-critical in $G_X$, then by Lemma~\ref{lemma:Y},
            we also get that $\Y'$ is not an $(X^C,Z \cup \{v\})$-\bs in $G_{X^C}$, and we are done as $Z \cup \{v\}=f^C(v')$.
            Finally, $(X,Z)$ is $v$-mixed then $Z \in \L_2^{(*v,X)}$ and by the definition of $f^C(v')$ we get that $\Y'$ is not an $(X^C,f(v'))$-\bs in $G_{X^C}$.
            
            \medskip

            For the backward implication, let us prove that $\Y \vdash (X,\Y_0,\L_1,\L_2,f,\S)$, by distinguishing again both cases in the statement of the lemma.

            \medskip
            \noindent \textbf{Case 1}. Suppose that there exist $Z^* \in \L_1$ as required in Case 1.
            Property~\ref{def:L1L2} follows directly from Lemma~\ref{lemma:liste}. Let us verify that the other properties of Definition~\ref{def:tchack} are also verified.

            \emph{Property~\ref{def:S}}.
            Let us prove that $\Y$ is not $(X,Z)$-\bs in $G_X$, for each $Z\in\S=\S^C$.
            Let $Z \in \S$. By hypothesis, $\Y \vdash (X^C,\Y_0^C,\L_1^C,\L_2^C,f^C,\S^C)$ and thus $\Y$ is not an $(X^C,Z)$-\bs in $G_{X^C}$. Consequently, there exists a maximum $(X^C,Z)$-\is $I$ in $G_{X^C}$ such that
            $I \cap \Y = \emptyset$. In addition, since $\Y_0^C = \Y_0\cup\{v\}\subseteq B$, we have that $v \notin I$. As $(X,Z)$ is not $v$-critical in $G_X$,
            we get by Lemma~\ref{lemma:I2} that $I$ is a  maximum $(X,Z)$-\is in $G_{X}$.

            \emph{Property~\ref{def:mina}}.
            We now argue that, for each $v'\in \Y\setminus \Y_0$, there is $Z\in \L_1$ such that $\Y\setminus\{v'\}$ is not an $(X,Z)$-\bs in $G_X$.
            Let $v' \in \Y \setminus \Y_0$.
            If $v'=v$, then as $v \in \Y_0^C$, by definition of $f^C$ we get that $\Y \setminus \{v\}$ is not an $(X^C,Z^* \cup \{v\})$-\bs in $G_{X^C}$.
            As $(X,Z^*)$ is either $v$-critical or $v$-mixed in $G_X$, in both cases by Lemma~\ref{lemma:Y} we get that $\Y \setminus \{v\}$ is not an $(X,Z^*)$-\bs in $G_X$. As by hypothesis $Z^* \in \L_1$, this implies Property~\ref{def:mina}.
            Suppose now $v' \neq v$. Since $\Y \vdash (X^C,\Y_0^C,\L_1^C,\L_2^C,f^C,\S^C)$, we know that there exists $Z \in \L_1^C$ such that $\Y \setminus \{v'\}$ is not an $(X^C,Z)$-\bs in $G_{X^C}$. By Lemma~\ref{lemma:liste}, there exists $Z \in \L_1$ such that $\Y \setminus \{v'\}$ is not an $(X,Z)$-\bs in $G_X$.

            \emph{Property~\ref{def:minb}}.
            Finally, we prove that for each $v'\in \Y_0$, $\Y\setminus\{v'\}$ is not an $(X,f(v'))$-\bs in $G_X$.
            Let $v' \in \Y_0$ and let $Z = f(v')$. Recall that $\Y_0= \Y\cap X$ and thus $v\notin \Y_0$, implying that $v\neq v'$. By definition, we thus have $f^C(v')=f(v')$.
            Since $\Y \vdash (X^C,\Y_0^C,\L_1^C,\L_2^C,f^C,\S^C)$, we know that $\Y'=\Y \setminus \{v'\}$ is not an $(X^C,Z)$-\bs in $G_{X^C}$.
            By Property~\ref{forget1:1bisbis}, $(X,Z)$ is not $v$-critical in $G_{X}$.
            %thus that $Z \notin \L_2^{v}$.%, then $f^C(v')=Z \cup \{v\}$, and using Lemma~\ref{lemma:liste}, $\Y \setminus \{v'\}$ is not an $(X, Z)$-\bs in $G_{X}$.
            As $(X,Z)$ is $\bar{v}$-critical or $v$-mixed in $G_{X}$, then by Lemma~\ref{lemma:Y}, $\Y'$ is not an $(X,Z)$-\bs in $G_{X}$.

            \medskip
            \noindent \textbf{Case 2}. Suppose that there exist $\S^A, \S^B$, and $f^C$ as required in Case 2. Property~\ref{def:L1L2} follows again directly from Lemma~\ref{lemma:liste}.

            \emph{Property~\ref{def:S}}.
            Let us prove that $\Y$ is not $(X,Z)$-\bs in $G_X$, for each $Z\in\S$.
            Let $Z \in \S$. Recall that $\S^C = a_v(\S^{(v,X)})\cup \S^{(\bar{v},X)}\cup a_v(\S^A)\cup \S^B$ and that $\Y \vdash (X^C,\Y_0,\L_1^C,\L_2^C,f^C,\S^C)$ as claimed in Case 2. If $Z \in \S^{(v,X)}$, then $\Y$ is not an $(X^C,Z \cup \{v\})$-\bs in $G_{X^C}$, and by Lemma~\ref{lemma:Y},
            as $(X,Z)$ is $v$-critical in $G_X$,  $\Y$ is not an $(X,Z)$-\bs in $G_{X}$.
            If $Z \in \S^{(\bar{v},X)}$, then $\Y$ is not an $(X^C,Z)$-\bs in $G_{X^C}$, and by Lemma~\ref{lemma:Y},
            as $(X,Z)$ is $\bar{v}$-critical in $G_X$,  $\Y$ is not an $(X,Z)$-\bs in $G_{X}$.
            It remains to treat the case where $Z \in \S^{(*v,X)} = \S^A \cup \S^B$. In this case, if $Z \in \S^A$ then $\Y$ is not an $(X^C,Z \cup \{v\})$-\bs in $G_{X^C}$,
            or ($Z \in \S^B$) $\Y$ is not an $(X^C,Z)$-\bs in $G_{X^C}$. In both cases, as $(X,Z)$ is $v$-mixed in $G_X$, by Lemma~\ref{lemma:Y} $\Y$ is not an $(X,Z)$-\bs in $G_{X}$.

            \emph{Property~\ref{def:mina}}.
            We now show that, for each $v'\in \Y\setminus \Y_0$, there is $Z\in \L_1$ such that $\Y\setminus\{v'\}$ is not an $(X,Z)$-\bs in $G_X$.
            Let $v' \in \Y \setminus \Y_0$. Recall that, since $\Y \vdash (X^C,\Y_0,\L_1^C,\L_2^C,f^C,\S^C)$, $\Y\cap X^C = B_0\subseteq X$ and $\{v\} = X^C\setminus X$.
            As $v \notin \Y$, we deduce $v' \neq v$. By hypothesis, we know that there exists $Z \in \L_1^C$ such that $\Y \setminus \{v'\}$ is not an $(X^C,Z)$-\bs in $G_{X^C}$. By Lemma~\ref{lemma:liste}, there exists $Z \in \L_1$ such that $\Y \setminus \{v'\}$ is not an $(X,Z)$-\bs in $G_X$.

            \emph{Property~\ref{def:minb}}.
            Finally, we prove that for each $v'\in \Y_0$, $\Y\setminus\{v'\}$ is not an $(X,f(v'))$-\bs in $G_X$.
            Let $v' \in \Y_0$, $Z=f(v')$ and $Z' = f^C(v')$. By hypothesis, we know that $\Y'=\Y \setminus \{v'\}$ is not an $(X^C,Z')$-\bs in $G_{X^C}$.
            If $Z \in \L_2^{(v,X)}$, then by definition of $f^C$ we have $Z' = Z \cup \{v\}$, and by Lemma~\ref{lemma:Y} we get that $\Y'$ is not an $(X,Z)$-\bs in $G_{X}$.
            If $Z \in \L_2^{(\bar{v},X)}$, then by definition of $f^C$ we have $Z' = Z$, and by Lemma~\ref{lemma:Y} we get that $\Y'$ is not an $(X,Z)$-\bs in $G_{X}$.
            Finally, if $Z \in \L_2^{(*v,X)}$, then by definition of $f^C$ we have $Z' \in \{Z,Z \cup \{v\}\}$, and again by Lemma~\ref{lemma:Y} we get that $\Y'$ is not an $(X,Z)$-\bs in $G_{X}$.
          \end{proof}

          \subsection{Putting pieces together}
          \label{sec:together}

          Let us now assume once again that the input graph $G$ and the nice tree decomposition $\D$ of $G$ are provided, and let us describe a recursive algorithm $\A$
          that solves problem $\Pi$.
          We distinguish several cases as follows. In each case, namely join, introduce, or forget, we use the notations introduced in the corresponding lemma, namely Lemma~\ref{lem:join}, Lemma~\ref{lem:intro}, or Lemma~\ref{lem:forget}, respectively, and define Algorithm~$\A$ as follows.

           \renewcommand{\P}{{\cal P}}

          \begin{definition}\label{def:algo-join}
            Suppose we are given an instance $I=(X,\Y_0,\L_1,\L_2,f,\S) \in \E$ of problem $\Pi$ such that $X$ is a join node with children $X^L = X^R = X$.
            For each collection $\P = \{\L_1^A, \L_1^B, \L_1^C,\L_2^A, \L_2^B, \L_2^C\}$ such that
            \begin{itemize}
              \item $\L_1=\L_1^A \uplus \L_1^B \uplus \L_1^C$ and
              \item $\L_2=\L_2^A \uplus \L_2^B \uplus \L_2^C$,
            \end{itemize}
            we denote by $I^L(\P) = (X^L,\Y_0,\L_1^L,\L_2^L,f^L,\S^L)$ and
            $I^R(\P) = $ $(X^R,\Y_0,$ $\L_1^R,\L_2^R,f^R,\S^R)$ as defined by Lemma~\ref{lem:join}.

            In the join case, Algorithm~$\A$ enumerates all such collections, and returns $\A(I^L(\P)) \cup \A(I^R(\P))$, where $\P$ maximizes $|A(I^L(\P))|+|A(I^R(\P))|$.
          \end{definition}

          \begin{definition}\label{def:algo-intro}
            Suppose we are given an instance $I=(X,\Y_0,\L_1,\L_2,f,\S) \in \E$ of problem $\Pi$ such that $X$ is an introduce node with child $X^C = X \setminus \{v\}$.
            For each collection $\P = \{\L_2^A, \L_2^B\}$ of $\L_2 $ as required in Case $1$ of Lemma~\ref{lem:intro}, %such that $v \in \Y_F$ or $f(v) \in \L_2^A$,
            we denote by  $I^1(\P)$ the instance defined in Case $1$ of Lemma~\ref{lem:intro}, and we denote by $I^2$ the instance defined in Case $2$ of Lemma~\ref{lem:intro}.

            In the introduce case, if $v \in \Y_0$ then Algorithm~$\A$ enumerates all such collections and returns $\{v\} \cup \A(I^1(\P))$, where $\P$ maximizes $|\A(I^1(\P))|$,
            and if $v \notin \Y_0$ then Algorithm~$\A$ returns $\A(I^2)$.
          \end{definition}

          \begin{definition}\label{def:algo-forget}
            Suppose we are given an instance $I=(X,\Y_0,\L_1,\L_2,f,\S) \in \E$ of problem $\Pi$ such that $X$ is a forget node with child $X^C = X \cup \{v\}$.
            For each $Z^* \in \L_1$  as required in Case $1$ of Lemma~\ref{lem:forget}, we denote by $I^1(Z^*)$ the instance defined in Case $1$ of Lemma~\ref{lem:forget},
            and for each partition $\P = \{\S^A,\S^B\}$ of $\S^{(*v,X)}$ and function $f_2^C$ as required in Case $2$ of Lemma~\ref{lem:forget}, we denote by $I^2(\P,f_2^C)$
            the instance defined in Case $2$ of Lemma~\ref{lem:forget}.

            In the forget case, Algorithm~$\A$ enumerates all sets $Z^* \in \L_1$ as required in Case $1$, and computes $\Y^1 = \A(I^1(Z^*))$
            where $Z^*$ maximizes $|I^1(Z^*)|$. Then, Algorithm~$\A$ enumerates all sets $\S^A,\S^B$, and functions $f_2^C$ as required in Case $2$, and computes $\Y^2 = \A(I^2(\P,f_2^C))$
            where $\P,f_2^C$ maximizes $|I^2(\P,f_2^C)|$. Finally, Algorithm~$\A$ returns the largest solution among all $\Y^1$ and $\Y^2$.
          \end{definition}

          In any of the three cases (join, forget, introduce), $\A$ returns $\max_{x \in E}\A(x)$ for some appropriate set $E$, and we point out that
          it may be the case that $E=\emptyset$, when  none of the enumerated parameters respect the required conditions of Lemma~\ref{lem:join}, Lemma~\ref{lem:intro}, and Lemma~\ref{lem:forget}, and in this case
          Algorithm~$\A$ returns $-\infty$ instead of a solution.
          Concerning the base case, we can always assume that the underlying tree decomposition has leaves where $X=\emptyset$. On such a leaf, $\emptyset$ is the only candidate solution,
          and thus $\A$ returns $\emptyset$ if it is a valid solution, or $-\infty$ otherwise.

          \begin{lemma}\label{lem:algo:correct}
            $\A$ solves $\Pi$ optimally: for every instance $I \in \E$, if $I$ is feasible then $A(I)$ returns an optimal solution. Otherwise, it returns $- \infty$.
          \end{lemma}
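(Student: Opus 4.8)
The plan is to prove the statement by induction on the number of nodes in the subtree of $T$ rooted at the node corresponding to $X$ (recall that, following the convention of Section~\ref{sec:prelim}, we identify a bag with its node). Since every recursive call of $\A$ is made on a child node, hence on a strictly smaller subtree, the induction hypothesis is available for every instance on which $\A$ recurses. For the base case one considers a leaf node: then $X=\emptyset$, $G_X$ is the empty graph, the only candidate solution is $\Y=\emptyset$, and $\A$ checks directly whether $\emptyset \vdash I$ and returns $\emptyset$ or $-\infty$ accordingly, so the statement holds trivially. For the inductive step I would treat the three node types (join, introduce, forget) separately, in each case relying on the corresponding characterization lemma (Lemma~\ref{lem:join}, Lemma~\ref{lem:intro}, Lemma~\ref{lem:forget}), which expresses ``$\Y \vdash I$'' in terms of the relation $\vdash$ applied to the child instances. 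For each node type I would establish two claims: (i) $\A(I)=-\infty$ if and only if $I$ is infeasible; and (ii) if $I$ is feasible then $\A(I) \vdash I$ and $|\A(I)|$ is maximum among all $\Y$ with $\Y \vdash I$.

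I would spell out the join case, the other two being analogous. Here $X^L=X^R=X$ are the children. By Lemma~\ref{lem:join}, $\Y \vdash I$ holds if and only if there exist a collection $\P=\{\L_1^A,\L_1^B,\L_1^C,\L_2^A,\L_2^B,\L_2^C\}$ partitioning $\L_1$ and $\L_2$ as in that lemma (and such that $I^L(\P),I^R(\P)\in\E$, which amounts to Property~\ref{join:2bis}) together with sets $\Y^L,\Y^R$ such that $\Y=\Y^L\cup\Y^R$, $\Y^L \vdash I^L(\P)$ and $\Y^R \vdash I^R(\P)$; moreover $\Y^L$ and $\Y^R$ are forced to equal $\Y\cap V(G_{X^L})$ and $\Y\cap V(G_{X^R})$, and since $V(G_{X^L})\cap V(G_{X^R})=X$ and $\Y^L\cap X=\Y^R\cap X=\Y_0$, we get $\Y^L\cap\Y^R=\Y_0$, so $|\Y|=|\Y^L|+|\Y^R|-|\Y_0|$. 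From this, claim (i) follows: $I$ is feasible if and only if some admissible $\P$ makes both $I^L(\P)$ and $I^R(\P)$ feasible, which by the induction hypothesis is exactly when $\A$ finds a $\P$ with $\A(I^L(\P))$ and $\A(I^R(\P))$ both finite. For claim (ii), given an optimal solution $\Y^*$ of $I$ with decomposition $\P^*,\Y^{*L},\Y^{*R}$ as above, the induction hypothesis gives $|\A(I^L(\P^*))|\ge|\Y^{*L}|$ and $|\A(I^R(\P^*))|\ge|\Y^{*R}|$, so the collection $\P$ chosen by $\A$ (which maximizes $|\A(I^L(\P))|+|\A(I^R(\P))|$ over admissible collections with both values finite) satisfies $|\A(I^L(\P))|+|\A(I^R(\P))|\ge|\Y^{*L}|+|\Y^{*R}|$; by the backward direction of Lemma~\ref{lem:join}, $\A(I^L(\P))\cup\A(I^R(\P)) \vdash I$, and this union has size $|\A(I^L(\P))|+|\A(I^R(\P))|-|\Y_0|\ge|\Y^*|$ since the two sets overlap exactly in $\Y_0$; hence $\A(I)$ is feasible and of maximum size.

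The introduce and forget cases would follow the same scheme, only the bookkeeping changing. In the introduce case the two options of Lemma~\ref{lem:intro} are mutually exclusive --- Case~1 applies precisely when $v\in\Y_0$ and Case~2 precisely when $v\notin\Y_0$ --- so $\A$ explores only the relevant one; when $v\in\Y_0$, the condition ``$v\notin Z$ for every $Z\in\S$'' is a necessary condition for feasibility that $\A$ verifies before enumerating the collections $\P=\{\L_2^A,\L_2^B\}$; and since $v\notin V(G_{X^C})$ by the properties of the tree decomposition, $|\{v\}\cup\A(I^1(\P))|=1+|\A(I^1(\P))|$, so maximizing $|\A(I^1(\P))|$ maximizes the size of the returned set. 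In the forget case $G_{X^C}=G_X$, and every solution of $I$ either contains $v$ --- corresponding by Lemma~\ref{lem:forget} to Case~1 with some $Z^*\in\L_1$ and child objective $\Y_0\cup\{v\}$ --- or does not, corresponding to Case~2 with child objective $\Y_0$; these two families partition the solution space, $\A$ computes an optimal solution within each (taking maxima only over feasible child instances) and returns the larger, and correctness follows exactly as in the join case using the backward directions of Lemma~\ref{lem:forget}.

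Given Lemmas~\ref{lem:join},~\ref{lem:intro} and~\ref{lem:forget}, the proof is a routine structural induction, so there is no serious obstacle to expect. The two points that will require a little care are: the size accounting in the join node --- namely the observation that two partial solutions respecting the same trace $\Y_0$ on $X$ overlap exactly in $\Y_0$, so that maximizing the sum of the children's solution sizes is equivalent to maximizing the size of their union; and the $-\infty$ bookkeeping --- that in every case the maxima defining $\A(I)$ must range only over child instances that turned out to be feasible, and that $\A(I)=-\infty$ holds precisely when no such instance exists, which by the three lemmas coincides with $I$ being infeasible.
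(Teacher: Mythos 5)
Your proposal is correct and follows essentially the same route as the paper's proof: a structural induction over the nice tree decomposition, using the forward directions of Lemmas~\ref{lem:join},~\ref{lem:intro} and~\ref{lem:forget} to decompose an optimal solution into child instances, the induction hypothesis to bound the sizes of the recursively computed solutions, and the backward directions to certify that the returned set satisfies $\vdash I$, with the same $|\Y|=|\Y^L|+|\Y^R|-|\Y_0|$ accounting at join nodes and the same treatment of infeasible instances via $-\infty$. The extra care you flag (overlap exactly in $\Y_0$, and maxima ranging only over feasible child instances) is precisely what the paper's argument relies on, so no gap remains.
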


          \begin{proof}
            The proof is by induction on the number  of  remaining bags  in $G_X$ in the provided nice tree decomposition $\D$.
            Let $\Y$ be the solution returned by $\A(I)$, and let $\Y^*$ be an optimal solution of $I$. We distinguish the different types of nodes in the nice tree decomposition $\D$ of $G$.
            In the three types of nodes, if $I$ is not feasible, then by Lemma~\ref{lem:join}, Lemma~\ref{lem:intro}, and Lemma~\ref{lem:forget}, and by the inductive hypothesis, any of the recursive calls
            will output $- \infty$, and thus $\A(I)$ will return $-\infty$ as well. We suppose now that $\I$ is feasible, and let $\Y = \A(I)$ and let $\Y^*$ an optimal solution.
            \smallskip

            \textbf{Join node}. By Lemma~\ref{lem:join}, there exists a collection $\P^*$ as defined in Definition~\ref{def:algo-join} and sets $\Y^{*L},\Y^{*R}$ such that $\Y^{*L} \vdash I^L(\P^*)$ and $\Y^{*R} \vdash I^R(\P^*)$. Let $\P$ be the collection chosen by $\A$.
            We have
            $|\Y| = |A(I^L(\P))|+|A(I^L(\P))| - |\Y_0| \ge |A(I^L(\P^*))|+|A(I^L(\P^*))| - |\Y_0| \ge |\Y^{*L}|+|\Y^{*R}| - |\Y_0| = |\Y^*|$.
            Moreover, by Lemma~\ref{lem:join}, $\Y \vdash I$.

            \smallskip

            \textbf{Introduce node}.
            If $v \in \Y^*$, then according to Case 1 of Lemma~\ref{lem:intro} there exists a collection $\P^*$ as defined in Definition~\ref{def:algo-intro} such that $\Y^* \setminus \{v\} \vdash I^1(\P^*)$.
            Let $\P$ be the collection chosen by $\A$.
            As in this case we have $v \in \Y_0$, we have $|\Y| = 1+|A(I^1(\P))| \ge 1+|A(I^1(\P^*))| \ge 1+ |\Y^* \setminus \{v\}| = |\Y^*|$.
            Moreover, by Lemma~\ref{lem:intro}, $\Y \vdash I$.
            If $v \notin \Y^*$, then according to Case 2 of Lemma~\ref{lem:intro}, $\Y^* \vdash I^2$.
            As in this case we have $v \notin \Y_0$, we have $|\Y| = |A(I^2)| \ge |\Y^*|$.
            Moreover, according to Lemma~\ref{lem:intro}, $\Y \vdash I$.

            \smallskip

            \textbf{Forget node}.   If $v \in \Y^*$, then according to Case 1 of Lemma~\ref{lem:forget} there exist $Z^{**} \in \L_1$ such that $\Y^* \vdash I^1(Z^{**})$.
            Let $Z^*$ be the element chosen by $\A$ for the first case, and $(\P,f_2^C)$ the elements chosen for the second case.
            We have $|\Y| \ge |A(I^1(Z^*))| \ge  |A(I^1(Z^{**}))| \ge |\Y^*|$.
            If $v \notin \Y^*$, then according to Case 2 of Lemma~\ref{lem:forget} there exist $\P^*$ and $f_2^{*C}$ such that $\Y^* \vdash I^2(\P^*,f_2^{*C})$.
            We have $|\Y| \ge |A(I^2(\P,f_2^C))| \ge |A(I^2(\P^*,f_2^{*C}))|  \ge |\Y^*|$.
          \end{proof}

          \begin{lemma}\label{lem:algo:complexity}
            Algorithm~$\A$ runs in time $\O^*(2^{\O(2^{t})})$, where $t$ is the width of the given nice tree decomposition of the input graph.
          \end{lemma}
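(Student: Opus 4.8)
The plan is to bound the running time by multiplying three quantities: the number of nodes of the nice tree decomposition $\D$, the number of instances of the auxiliary problem $\Pi$ attached to a single bag, and the work Algorithm~$\A$ performs on one such instance. The elementary fact used everywhere is that every bag $X \in \B$ has $|X| \le t+1$, so $|2^X| \le 2^{t+1}$ and every subfamily of $2^X$ is one of at most $2^{2^{t+1}}$ objects.

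First I would count the instances. An instance $(X,\Y_0,\L_1,\L_2,f,\S) \in \E(G,\D)$ with a fixed bag $X$ is specified by $\Y_0 \subseteq X$ (at most $2^{t+1}$ choices), by the three families $\L_1,\L_2,\S \subseteq 2^X$ (at most $2^{2^{t+1}}$ choices each), and by $f \colon \Y_0 \to \L_2$ (at most $(2^{t+1})^{t+1}=2^{(t+1)^2}$ choices). The product is $2^{t+1}\cdot 2^{3\cdot 2^{t+1}}\cdot 2^{(t+1)^2} = 2^{\O(2^t)}$, since $(t+1)^2 = o(2^t)$.

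Next I would bound the work of $\A$ on a single instance, going case by case using the three combining lemmas. In the join case (Definition~\ref{def:algo-join}), $\A$ enumerates all ways of splitting each of $\L_1$ and $\L_2$ into three parts, i.e.\ at most $3^{|\L_1|+|\L_2|}\le 3^{2^{t+2}}=2^{\O(2^t)}$ collections $\P$; for each, it forms the two child instances (in time polynomial in $t$) and reads their already-computed optimal values. In the introduce case (Definition~\ref{def:algo-intro}), it enumerates at most $2^{|\L_2|}=2^{\O(2^t)}$ pairs $\{\L_2^A,\L_2^B\}$. In the forget case (Definition~\ref{def:algo-forget}), it enumerates at most $|\L_1|\le 2^{t+1}$ sets $Z^*$, at most $2^{|\S|}=2^{\O(2^t)}$ partitions $\{\S^A,\S^B\}$, and at most $2^{|\Y_0|}=2^{\O(2^t)}$ functions $f_2^C$; moreover, forming $\L_1^C,\L_2^C,\S^C$ requires deciding, for each of the $\le 2^{t+1}$ sets $Z$ involved, whether $(X,Z)$ is $v$-critical, $\bar v$-critical or $v$-mixed, which by Lemma~\ref{prop:critical} costs $\O^*(2^t)$ each, hence $\O^*(2^{\O(2^t)})$ altogether. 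So in every case the work per instance is $\O^*(2^{\O(2^t)})$.

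Finally, the delicate point --- and the one I expect to require the most care --- is the recursion itself. The algorithm $\A$ is stated recursively, but evaluated naively it would make a number of recursive calls that blows up exponentially in the \emph{depth} of $\D$, which may be $\Theta(n)$. The fix is to evaluate $\A$ with memoization, i.e.\ to realize it as a single leaf-to-root pass that, at each node with bag $X$, computes and stores $\A(I)$ for every instance $I$ with bag $X$, using only the stored values of instances on the children of $X$; this is sound precisely because Lemmas~\ref{lem:join},~\ref{lem:intro}, and~\ref{lem:forget} express $\A(I)$ solely through instances on the children. With this implementation, each of the (polynomially many) nodes of $\D$ costs at most (number of instances at its bag) $\times$ (work per instance) $= 2^{\O(2^t)}\cdot \O^*(2^{\O(2^t)}) = \O^*(2^{\O(2^t)})$, and summing over all nodes gives the claimed bound. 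The rest is bookkeeping: checking that the polynomial overhead of reading off child values, taking unions of set families, and the base (leaf) case is all dominated by $\O^*(2^{\O(2^t)})$.
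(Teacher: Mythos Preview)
Your proof is correct and follows essentially the same approach as the paper: bound the total number of instances in $\E(G,\D)$ by $2^{\O(2^t)}$, bound the work per instance (enumeration of the splittings in Lemmas~\ref{lem:join},~\ref{lem:intro},~\ref{lem:forget}, plus the criticality checks via Lemma~\ref{prop:critical}) by $\O^*(2^{\O(2^t)})$, and multiply. Your explicit treatment of memoization is a welcome addition, as the paper's proof simply writes the running time as $\O^*(x_1\cdot x_2)$ and leaves the bottom-up tabulation implicit.
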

          \begin{proof}
            The time complexity of Algorithm~$\A$ is $\O^*(x_1 \cdot x_2)$, where $x_1=|\E|$ is the number of possible inputs of $\Pi$ and $x_2$ is the maximum time necessary to compute
            $\A(I)$ for each $I \in \E$. We denote $n=|V(G)|$.
            From the definitions of the corresponding objects, it can be routinely verified that $x_1 \le n \cdot 2^t \cdot 2^t \cdot 2^{2^t} \cdot 2^{2^t} \cdot (2^t)^t \cdot 2^t \le 2^{3t+2^{t+1}+t^2} = 2^{\O(2^t)}$.

            Let us now bound $x_2$. To that end, let $\theta_1$ be an upper bound on the number of enumerated subinstances made in any of the three cases (join, introduce, or forget) and let  by $\theta_2$ be an upper bound on the 
             time complexity related to all operations like taking the minimum, and verifying that each enumerated subinstance verifies required properties
            (like for example, in Case 1 of Lemma~\ref{lem:forget}).
            In the join case, the number of subinstances is at most $3^{|\L_1|} \cdot 3^{|\L_2|} \le 3^{2^{t+1}}$ (to consider all $\L_1^A, \L_1^B, \L_1^C, \L_2^A, \L_2^B, \L_2^C$), and for each subinstance
            the complexity is polynomial in $n$.
            In the introduce case, the number of subinstances is $2^{2^t}+1$ (to consider all $\L_2^A$), and for each subinstance
            the complexity is polynomial in $n$.
            In the forget case, the number of subinstances is at most $2^{2^t}+2^{2^t} \cdot 2^{2^t}$ (to consider all $Z^* \in \L_1$ in Case 1 and all $\S^A$ and $f^C$ in Case 2), and for each subinstance
            the complexity is in $\O^*(2^t)$ as in Case 1, for every $Z \in \S$ (resp. every $v' \in \Y_0$,) we must verify if $(X,Z)$ (resp. $(X,f(v'))$) is not $v$-critical
            in $G_X$, and this verification can be done in time $\O^*(2^t)$ according to Lemma~\ref{prop:critical}.
            All in all, we can choose $\theta_1=3^{2^{t+2}}=2^{\O(2^t)}$, $\theta_2 = \O^*(2^t)$, and the lemma follows.\end{proof}

          As according to~\cite{DBLP:journals/siamcomp/Bodlaender96} we can determine whether $\tw(G) \le t$ in time $\O^*(t^{\O(t^3)})$, and construct the corresponding (nice) tree decomposition of $G$ in case of a positive answer, from Proposition~\ref{prop:solvingPiOK}
          and Lemma~\ref{lem:algo:complexity} the following theorem is now immediate.

          \begin{theorem}\label{thm:tw}
          The \MMBS/\tw problem is \FPT. More precisely, it  can we solved in time $\O^*(2^{\O(2^{\tw(G)})})$.
          \end{theorem}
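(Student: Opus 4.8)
The plan is to assemble the ingredients developed in Section~\ref{sec:tw}. Given an $n$-vertex input graph $G$, the first step is to compute a tree decomposition of width $t = \tw(G)$: by the algorithm of Bodlaender~\cite{DBLP:journals/siamcomp/Bodlaender96}, deciding whether $\tw(G) \le t$ and producing a corresponding decomposition can be done in time $\O^*(t^{\O(t^3)})$, and by~\cite{Klo94} this can be turned in polynomial time into a \emph{nice} tree decomposition $\D$ of the same width whose root bag is empty. This realises the quantity $t_1(n,t) = \O^*(t^{\O(t^3)})$ appearing in Proposition~\ref{prop:solvingPiOK}.

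Next I would reduce to the auxiliary problem $\Pi$. By Proposition~\ref{prop:mbsVspb1}, a set $\Y \subseteq V(G)$ satisfies $\Y \vdash (\emptyset,\emptyset,\{\emptyset\},\emptyset,\emptyset,\emptyset)$ exactly when $\Y$ is an \mbs of $G$, so the optimum of $\Pi$ on that particular tuple equals $\mmbs(G)$; by Proposition~\ref{prop:solvingPiOK} it then suffices to bound the time $t_A(n,t)$ needed to solve $\Pi$ optimally. Algorithm $\A$ of Definitions~\ref{def:algo-join},~\ref{def:algo-intro}, and~\ref{def:algo-forget} does exactly this: Lemma~\ref{lem:algo:correct} shows that it returns an optimal solution on feasible instances and $-\infty$ otherwise, while Lemma~\ref{lem:algo:complexity} bounds its running time by $\O^*(2^{\O(2^t)})$, giving $t_A(n,t) = \O^*(2^{\O(2^t)})$.

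Finally, plugging these two estimates into Proposition~\ref{prop:solvingPiOK} yields a running time of $\O(t_1(n,t) + t_A(n,t)) = \O^*(t^{\O(t^3)} + 2^{\O(2^t)})$; since $t^{\O(t^3)} = 2^{\O(t^3 \log t)}$ and $t^3 \log t = o(2^t)$, the second term dominates, so $\mmbs(G)$ can be computed in time $\O^*(2^{\O(2^{\tw(G)})})$, and comparing it with the input threshold $\beta$ decides \MMBS within the same bound, which is \FPT parameterized by treewidth. The genuinely hard part is not this final assembly but everything feeding into it — above all, the design of the tuple $(X,\Y_0,\L_1,\L_2,f,\S)$ and the proofs (Lemmas~\ref{lem:join},~\ref{lem:intro}, and~\ref{lem:forget}) that these parameters are rich enough to transport the blocking and minimality conditions across join, introduce, and forget nodes. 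This is delicate precisely because a blocking set must meet every \emph{maximum-size} independent set, a counting condition that escapes Courcelle's theorem and forces the heavy bookkeeping carried in the dynamic programming tables.
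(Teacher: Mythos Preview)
Your proposal is correct and follows essentially the same assembly as the paper: invoke Bodlaender's algorithm to obtain a (nice) tree decomposition of width $t=\tw(G)$, then combine Proposition~\ref{prop:solvingPiOK} with Lemmas~\ref{lem:algo:correct} and~\ref{lem:algo:complexity} to conclude. The paper is terser---it simply declares the theorem ``immediate'' from Proposition~\ref{prop:solvingPiOK} and Lemma~\ref{lem:algo:complexity} once Bodlaender's bound is recalled---while you spell out the dominance of $2^{\O(2^t)}$ over $t^{\O(t^3)}$ and add helpful commentary, but the argument is the same.
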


\section{Further research}
\label{sec:conc}
\label{section:conclusion}

We presented a number of negative and positive results for the \MMBS and \MMHS problems. Several interesting questions remain open. Concerning \MMBS, even if seems implausible that the problem could be expressed in monadic second-order logic, it would be nice to prove it formally. For that, one may try to use the framework introduced by van Bevern et al.~\cite{BevernDFGR15}. Simplifying the dynamic programming algorithm behind Theorem~\ref{thm:tw} is also worth trying.

As for \MMHS, we believe that the main challenge is trying to get a algorithm parameterized by $\alpha+\beta$ running faster than $\O^*(2^{\alpha\beta})$ (Theorem~\ref{thm:fptv2}).
Let us consider the case $\alpha=2$, corresponding to the \MMVC problem.
The parameterized complexity of \MMVC has received some attention recently, with results concerning \FPT algorithms in time $\O^*(2^\beta)$~\cite{FernauHDR}, and even in time $\O^*(c^\beta)$ for $c < 1.54$~\cite{BORIA201562},
\FPT algorithms for structural parameterizations~\cite{Zehavi17}, and kernelization~\cite{mmvc2021}.
This motivates the  problem of trying to improve the running time $\O^*(2^{\alpha \beta})$ of Theorem~\ref{thm:fptv2}, for example by solving \MMHS in
%$\O^*(|E(\H)|^\beta)$ (which would match the lower bound of Corrolary~\ref{cor:mmhs}),
time $\O^*(\alpha^\beta)$, or even  $\O^*(\alpha^{\O(\beta)})$.
%$\O^*({\sf poly}(\alpha,\beta)^\beta)$. 
Recall that the algorithm of
 Proposition~\ref{prop:fixedalpha_beta_fpt} runs in time $\O^*(\alpha^\beta)$ for {\sl fixed $\alpha$}, and that  it hides a term $|V(\H)|^{f(\alpha)}$ for some function $f$.
%Let us discuss this question here.

%In this spirit, we provide such an algorithm in Appendix~\ref{}. This algorithm has the same running time as the one of Proposition~\ref{prop:algommhs},
%but it is self contained, and we hope that it could be a first step toward improving the running time.
%This alternative FPT algorithm guesses at each step a vertex $v^*$ in the solution, together with its "private" edge $H^*$ (such that the only
%vertex of $H^*$ contained in the solution is $v^*$), implying a $\O^*((\alpha |(E(\H)|)^\beta)$ running time.
%Guessing $H^*$ may appear unecessary at first sight, but it turns out that knowing $H^*$ is crucial to prove that the subinstance we solve recursively is equivalent.
%,
%as it allows key point of only constructing partial feasable sol (meaning yes rec implique yes nous).
 
Achieving a running time of $\O^*(\alpha^\beta)$ might be typically done by  guessing, at each step, only which vertex of a given hyperedge should be added to the solution.
However, guessing only a vertex $v$ and applying recursion on a remaining instance $(\H',\beta-1)$, where $\H'$ is defined by removing $v$ and all hyperedges containing $v$, is not correct.
Indeed, $(\H',\beta-1)$ being a \yes-instance, certified by a solution $S'$, does not imply that $(\H,\beta)$ is also a \yes-instance, as $S' \cup \{v\}$ may not be minimal anymore.
Thus, we believe that, in order to solve \MMHS in time $\O^*(\alpha^\beta)$ or even $\O^*(\alpha^{\O(\beta)})$, a significantly new approach should be devised.

\medskip
\noindent \textbf{Acknowledgement}. We would like to thank Mamadou Moustapha Kanté for helpful suggestions concerning the non-expressibility of problems in monadic second-order logic.

%%%%%%%%%%%%%%%%%%%%%%%%%%%%%%%%%%%%%%%%%%%%%%%%%%%%%%%%%%%%%%%%%%%%%%%%%%%%%%
%\newpage
%\section{Appendix}
%%%%%%%%%%%%%%%%%%%%%%%%%%%%%%%%%%%%%%%%%%%%%%%%%%%%%%%%%%%%%%%%%%%%%%%%%%%%%%
%\input{section_appendix_alphabeta}

%\bibliographystyle{plain}
%\bibliographystyle{abbrv}
\bibliography{biblio}

\end{document}